\numberwithin{equation}{section}
\numberwithin{figure}{section}
\numberwithin{table}{section}
\def\Xint#1{\mathchoice
{\XXint\displaystyle\textstyle{#1}}%
{\XXint\textstyle\scriptstyle{#1}}%
{\XXint\scriptstyle\scriptscriptstyle{#1}}%
{\XXint\scriptscriptstyle\scriptscriptstyle{#1}}%
\!\int}
\def\XXint#1#2#3{{\setbox0=\hbox{$#1{#2#3}{\int}$ }
\vcenter{\hbox{$#2#3$ }}\kern-.6\wd0}}
\def\dashint{\Xint-}
\newtheorem{lem}{Lemma}
\newtheorem{cmt}{Comment}
\newtheorem{propo}{Proposition}
\numberwithin{lem}{section}
\numberwithin{thm}{section}
\numberwithin{crl}{section}
\numberwithin{defn}{section}
\numberwithin{note}{section}
\numberwithin{propo}{section}
\newcommand{\ubar}[1]{\mkern1mu\underline{\mkern-1mu #1\mkern-1mu}\mkern1mu}
\newcommand{\Reynolds}{\operatorname{Re}}
\newcommand{\Prandtl}{\operatorname{Pr}}
\newcommand{\Nusselt}{\operatorname{Nu}}
\newcommand{\Strouhal}{\operatorname{St}}
\newcommand{\pOmega}{\partial\Omega}
\newcommand{\paavg}{\partial\mathrm{avg}}
\newcommand{\dm}[1]{\ubar{#1}}
\newcommand{\deldim}{\dm{\bm{\nabla}}}
\newcommand{\Deltadim}{\dm{\bm{\Delta}}}
\newcommand{\rhofdim}{\dm{\rho}_{\mathrm{f}}}
\newcommand{\rhosdim}{\dm{\rho}_{\mathrm{s}}}
\newcommand{\cfdim}{\dm{c}_{\mathrm{f}}}
\newcommand{\csdim}{\dm{c}_{\mathrm{s}}}
\newcommand{\rhoscsavgdim}{(\rhosdim\csdim)_{\mathrm{avg}}}
\newcommand{\ksinfdim}{\dm{k}_{\mathrm{s,inf}}}
\newcommand{\kfdim}{\dm{k}_{\mathrm{f}}}
\newcommand{\ksdim}{\dm{k}_{\mathrm{s}}}
\newcommand{\nufdim}{\dm{\nu}_{\mathrm{f}}}
\newcommand{\vinfdim}{\dm{v}_\infty}
\newcommand{\Tidim}{\dm{T}_\mathrm{i}}
\newcommand{\Tinfdim}{\dm{T}_\infty}
\newcommand{\tdim}{\dm{t}}
\newcommand{\hdim}{\dm{h}}
\newcommand{\Omegasdim}{\dm{\Omega}_{\mathrm{s}}}
\newcommand{\pOmegasdim}{\dm{\partial\Omega}_{\mathrm{s}}}
\newcommand{\pOmegafdim}{\dm{\partial\Omega}_{\mathrm{f}}}
\newcommand{\Omegafdim}{\dm{\Omega}_{\mathrm{f}}}
\newcommand{\Gammadim}{\dm{\Gamma}}
\newcommand{\tfdim}{\dm{t}_\mathrm{f}}
\newcommand{\tf}{{t}_\mathrm{f}}
\newcommand{\nulldim}{\dm{0}}
\newcommand{\nullbmdim}{\dm{\bm{0}}}
\newcommand{\onedim}{\dm{1}}
\newcommand{\elldim}{\dm{\ell}}
\newcommand{\Elldim}{\dm{\mathcal{L}}}
\newcommand{\xdim}{\dm{\bm{x}}}
\newcommand{\Tsavgdim}{\dm{T}_{\mathrm{s,avg}}^{\mathrm{CHT}}}
\newcommand{\usavg}{u_{\mathrm{s,avg}}}
\newcommand{\havg}{\dm{h}_{\paavg}}
\newcommand{\htavg}{\dm{\overline{h}}}
\newcommand{\hstavg}{\dm{\overline{h}}_{\paavg}}
\newcommand{\Nuavg}{\Nusselt_{\paavg}}
\newcommand{\Nutavg}{\overline{\Nusselt}}
\newcommand{\Nustavg}{\overline{\Nusselt}_{\paavg}}
\newcommand{\etabar}{\overline{\eta}}
\newcommand{\rhodim}{\dm{\rho}}
\newcommand{\cdim}{\dm{c}}
\newcommand{\rhocavgdim}{(\rhodim \cdim)_\mathrm{avg}}
\newcommand{\Tdim}{\dm{T}}
\newcommand{\kdim}{\dm{k}}
\newcommand{\Omegadim}{\dm{\Omega}}
\newcommand{\pOmegadim}{\dm{\partial\Omega}}
\newcommand{\Omegas}{\Omega_\mathrm{s}}
\newcommand{\uLump}{\widetilde{u}^{\mathrm{L}}}
\newcommand{\TLump}{\dm{\widetilde{T}}^{\mathrm{L}}}
\newcommand{\tauLdim}{\dm{\tau}^{\mathrm{L}}_{\mathrm{eq}}}
\newcommand{\tauL}{\tau^{\mathrm{L}}_{\mathrm{eq}}}
\newcommand{\taueq}{\tau_{\mathrm{eq}}}
\newcommand{\uavg}{u_{\mathrm{avg}}}
\newcommand{\Tavg}{\dm{T}_{\mathrm{avg}}}
\newcommand{\Tavgtilde}{\dm{\widetilde{T}}_{\mathrm{avg}}}
\newcommand{\uavgtilde}{\widetilde{u}_{\mathrm{avg}}}
\newcommand{\Ttilde}{\dm{\widetilde{T}}}
\newcommand{\utilde}{\widetilde{u}}
\newcommand{\usnd}{u_\mathrm{s}}
\newcommand{\delnd}{\bm{\nabla}}
\newcommand{\xnd}{\bm{x}}
\newcommand{\tnd}{t}
\newcommand{\Deltand}{\bm{\Delta}}
\newcommand{\Omegasnd}{\Omega_\mathrm{s}}
\newcommand{\Omegafnd}{\Omega_\mathrm{f}}
\newcommand{\tconvdim}{\dm{\tau}_{\mathrm{conv}}}
\newcommand{\tdiffdim}{\dm{\tau}_{\mathrm{diff}}}
\newcommand{\tconvnondim}{\tau_{\mathrm{conv}}}
\newcommand{\tdiffnondim}{\tau_{\mathrm{diff}}}
\newcommand{\uavgexp}{\usavg^{\mathrm{fit}}}
\newcommand{\usavgexp}{\usavg^{\mathrm{fit}}}
\newcommand{\deltaeta}{\delta^{\etabar}}
\newcommand{\deltasig}{\delta^{\sigma}}
\newcommand{\pOmegaindim}{\pOmegadim_{\mathrm{in}}}
\newcommand{\pOmegaoutdim}{\pOmegadim_{\mathrm{out}}}
\newcommand{\pOmegain}{\pOmega_{\mathrm{in}}}
\newcommand{\pOmegaout}{\pOmega_{\mathrm{out}}}
\newcommand{\ufiso}{u^{\mathrm{ISO}}_\mathrm{f}}
\newcommand{\Nuiso}{\Nusselt^{\mathrm{ISO}}}
\newcommand{\Nuavgiso}{\Nuavg^{\mathrm{ISO}}}
\newcommand{\Nustavgiso}{\Nustavg^{\mathrm{ISO}}}
\newcommand{\etaiso}{\eta^{\mathrm{ISO}}}
\newcommand{\etabariso}{\etabar^{\mathrm{ISO}}}
\newcommand{\medbullet}{\mathbin{\vcenter{\hbox{\raisebox{-0.85ex}{\scalebox{0.7}{$\bullet$}}}}}}
\DeclareMathOperator*{\argmin}{arg\,min}
\DeclareMathOperator*{\essinf}{ess\,inf}
\newcommand{\Biot}{B}
\newcommand{\BiDunk}{\operatorname{Bi}_{\mathrm{dunk}}}
\newcommand{\Biotiso}{\Biot^{\mathrm{ISO}}}
\newcommand{\hcht}{\hdim^{\mathrm{CHT}}}
\newcommand{\havgcht}{\havg^{\mathrm{CHT}}}
\newcommand{\htavgcht}{\htavg^{\mathrm{CHT}}}
\newcommand{\hstavgcht}{\hstavg^{\mathrm{CHT}}}
\newcommand{\Bcht}{\Biot^{\mathrm{CHT}}}
\newcommand{\etacht}{\eta^{\mathrm{CHT}}}
\newcommand{\Nucht}{\Nusselt^{\mathrm{CHT}}}
\newcommand{\Nuavgcht}{\Nuavg^{\mathrm{CHT}}}
\newcommand{\Nutavgcht}{\Nutavg^{\mathrm{CHT}}}
\newcommand{\Nustavgcht}{\Nustavg^{\mathrm{CHT}}}
\newcommand{\etabarcht}{\etabar^{\mathrm{CHT}}}
\newcommand{\vdimcht}{\dm{\bm{v}}^{\mathrm{CHT}}}
\newcommand{\pdimcht}{\dm{p}^{\mathrm{CHT}}}
\newcommand{\Tsdimcht}{\dm{T}_{\mathrm{s}}^{\mathrm{CHT}}}
\newcommand{\Tfdimcht}{\dm{T}_{\mathrm{f}}^{\mathrm{CHT}}}
\newcommand{\vndcht}{\bm{v}^{\mathrm{CHT}}}
\newcommand{\pndcht}{p^{\mathrm{CHT}}}
\newcommand{\usndcht}{u_\mathrm{s}^{\mathrm{CHT}}}
\newcommand{\ufndcht}{u_\mathrm{f}^{\mathrm{CHT}}}
\newcommand{\Tsavgdimcht}{\dm{T}_{\mathrm{s,avg}}^{\mathrm{CHT}}}
\newcommand{\usavgcht}{u_{\mathrm{s,avg}}^{\mathrm{CHT}}}
\newcommand{\Biapprox}{\undertilde{\Biot}}
\DeclareMathOperator\erf{erf}
\title{Mathematical Modeling and Error Estimation for the Thermal Dunking Problem: A Hierarchical Approach}
\author[1,*]{\bf{Theron Guo}}
\author[1]{\bf{Kento Kaneko}}
\author[2]{\bf{Claude Le Bris}}
\author[1]{\bf{Anthony T. Patera}}
\affil[1]{Massachusetts Institute of Technology, Cambridge, MA 02139, USA, \newline\texttt{\{t\_guo,kaneko,patera\}@mit.edu}\vspace{.1in}}
\affil[2]{\'Ecole des Ponts and INRIA, Champs-sur-Marne, 77455 Marne La Vall\'ee, France, \newline\texttt{claude.le-bris@enpc.fr}\vspace{.1in}}
\affil[*]{\small{Corresponding author}}
\begin{document}

\maketitle

\begin{abstract}
    We consider the thermal dunking problem, in which a solid body is suddenly immersed in a fluid of different temperature, and study both the temporal evolution of the solid and the associated Biot number—a non-dimensional heat transfer coefficient characterizing heat exchange across the solid-fluid interface. We focus on the small-Biot-number regime. The problem is accurately described by the conjugate heat transfer (CHT) formulation, which couples the Navier-Stokes and energy equations in the fluid with the heat equation in the solid through interfacial continuity conditions. Because full CHT simulations are computationally expensive, simplified models are often used in practice. Starting from the coupled equations, we systematically reduce the formulation to the lumped-capacitance model, a single ordinary differential equation with a closed-form solution, based on two assumptions: time scale separation and a spatially uniform solid temperature. The total modeling error is decomposed into time homogenization and lumping contributions. We derive an asymptotic error bound for the lumping error, valid for general heterogeneous solids and spatially varying heat transfer coefficients. Building on this theoretical result, we introduce a computable upper bound expressed in measurable quantities for practical evaluation. Time scale separation is analyzed theoretically and supported by physical arguments and simulations, showing that large separation yields small time homogenization errors. In practice, the Biot number must be estimated from so-called empirical correlations, which are typically limited to specific canonical geometries. We propose a data-driven framework that extends empirical correlations to a broader range of geometries through learned characteristic length scales. All results are validated by direct numerical simulations up to Reynolds numbers of 10,000.
    
    \textbf{Acknowledgments.} This work is supported by ONR Grant N000142312573. We thank Dr Reza Malek-Madani for his support and also for many helpful scientific discussions over the years.

    \vspace{1em}
    \noindent\emph{Keywords:} conjugate heat transfer, lumped capacitance model, small Biot number, asymptotic analysis, error analysis, time homogenization, Nusselt correlations, empirical correlations

\end{abstract}

\tableofcontents

\section{Introduction}\label{sec:intro}
Heat transfer plays a central role in many engineering applications, including heat treatment, electronic cooling, and materials processing~\cite{asm-handbook-volume-04,incropera1988convection}. The three fundamental physical mechanisms are conduction, convection (forced and natural), and radiation~\cite{cengel2014heat,incropera1990fundamentals,ahtt6e}. In many cases, heat exchange occurs between a solid object and a working fluid. Engineers are often concerned with the thermal response of the solid to support design decisions and predict system performance. Although the governing equations—a coupled system of partial differential equations—accurately describe the underlying physics, their solution is computationally expensive and requires detailed material data which may not be readily available. As a result, simplified models that require fewer inputs are often employed for estimation. However, such models are generally motivated heuristically and lack theoretical justification. 

In this work, we develop a rigorous mathematical framework for simplified heat-transfer models and derive corresponding error bounds which quantify their accuracy relative to the full conjugate heat-transfer formulation. To begin, we describe in this introduction the specific problem addressed, formulate the central research question, review related literature, and outline the main contributions along with the roadmap of the manuscript.

\subsection{Problem of Interest: Thermal Dunking Problem}
We consider the \emph{dunking problem}: a passive solid body at initial uniform temperature $\Tidim$ is abruptly placed at time $\tdim = \nulldim$ in a fluid environment at initial and far-field temperature $\Tinfdim$; we wish to find the temperature evolution of the solid body over the time interval $\nulldim < \tdim \leq \tfdim$. Note underlined variables refer to dimensional quantities, while non-dimensional variables are unadorned.

\subsubsection{A truth mathematical model}\label{subsubsec:truth_mathematical_model}
The dunking problem is best described by a two-domain mathematical model: the Navier-Stokes/Boussinesq equations and energy equation in the fluid domain $\Omegafdim$; the heat equation (time-dependent conduction) in the solid domain $\Omegasdim$; and interface conditions enforcing continuity of temperature and heat flux across the fluid-solid boundary $\Gammadim$. This formulation is commonly referred to as \emph{conjugate heat transfer (CHT) analysis} in the heat transfer literature~\cite{perelman1961conjugated}, and will hereafter be regarded as the \emph{truth model}. In this work, we neglect the effects of natural convection and thermal radiation\footnote{The truth model presented here can be extended to account for additional mechanisms of heat transfer. Natural convection is typically included via a buoyancy term in the Navier-Stokes equations, and thermal radiation can be incorporated by modifying the boundary conditions at the interface to include radiative heat exchange. Although our main results are derived under the assumption of forced convection, they may be extended to include these additional effects.}. We have the following governing equations for the velocity and pressure fields $\vdimcht$ and $\pdimcht$,
\begin{subequations}
\begin{alignat*}{3}
    \frac{\partial \vdimcht}{\partial \tdim} + (\vdimcht\cdot \deldim) \vdimcht &= -\frac{1}{\rhofdim} \deldim \pdimcht + \nufdim \Deltadim \vdimcht &\quad& \text{in } \Omegafdim,\\
    \deldim \cdot \vdimcht &= \nulldim &\quad& \text{in } \Omegafdim, \\
    \vdimcht &= \nulldim &\quad& \text{on } \Gammadim, \\
    \vdimcht(\tdim=\nulldim, \cdot) &= \vinfdim \bm{e}_x &\quad& \text{in } \Omegafdim, \\
    \vdimcht(\tdim, |\xdim|\to\infty) &= \vinfdim \bm{e}_x &\quad& \text{in } \Omegafdim,
\end{alignat*}
\end{subequations}
and for the temperature fields $\Tfdimcht$ and $\Tsdimcht$ in the fluid and solid domains,
\begin{subequations}
\begin{alignat*}{3}
    \rhofdim \cfdim \left(\frac{\partial \Tfdimcht}{\partial \tdim} + \vdimcht \cdot \deldim \Tfdimcht \right) &= \kfdim \Deltadim \Tfdimcht &\quad& \text{in } \Omegafdim, \\
    \rhosdim \csdim \frac{\partial \Tsdimcht}{\partial \tdim} &= \deldim \cdot (\ksdim \deldim \Tsdimcht) &\quad& \text{in } \Omegasdim,\\
    \Tfdimcht &= \Tsdimcht &\quad& \text{on } \Gammadim, \\
    \kfdim \deldim \Tfdimcht \cdot \bm{n} &= \ksdim \deldim \Tsdimcht \cdot \bm{n} &\quad& \text{on } \Gammadim, \\
    \Tfdimcht(\tdim=\nulldim, \cdot) &= \Tinfdim &\quad& \text{in } \Omegafdim, \\
    \Tsdimcht(\tdim=\nulldim, \cdot) &= \Tidim &\quad& \text{in } \Omegasdim, \\
    \Tfdimcht(\tdim, |\xdim|\to\infty) &= \Tinfdim &\quad& \text{in } \Omegafdim.
\end{alignat*}
\end{subequations}
The far-field velocity $\vinfdim$ is assumed constant and, without loss of generality, taken to be directed along the $x$-axis. The unit normal vector $\bm{n}$ on $\Gammadim$ is defined to point outward from the solid into the fluid. To ensure a unique solution for the pressure field, the pressure level is fixed by enforcing a zero-mean condition over the fluid domain at all times.

The CHT model requires several parameters: the thermophysical properties of the solid, namely thermal conductivity $\ksdim$, density $\rhosdim$, and specific heat $\csdim$; the properties of the fluid, including density $\rhofdim$, specific heat $\cfdim$, thermal conductivity $\kfdim$, and kinematic viscosity $\nufdim$; and the far-field velocity $\vinfdim$ as well as the initial and far-field temperatures, $\Tidim$ and $\Tinfdim$. The fluid properties are assumed constant, while the solid properties may vary spatially. Of particular interest are multi-phase heterogeneous solids, where the material properties are constant within each phase and may exhibit discontinuities across phase interfaces; while such discontinuities are not directly compatible with the strong form, they are naturally accommodated in the weak formulation, which we adopt for the theoretical analysis and numerical solution. We do not consider temperature-dependent thermophysical properties.

After solving the coupled system, we can compute the \emph{heat transfer coefficient} $\hcht$, defined on the fluid-solid interface $\Gammadim$, as 
\begin{align*}
    \hcht \coloneqq - \frac{\kfdim \deldim \Tfdimcht \cdot \bm{n}}{\Tfdimcht - \Tinfdim}.
\end{align*} This coefficient is of central importance in engineering applications, as it characterizes the rate of heat exchange between the fluid and solid normalized by the difference in interface and far-field temperatures. It generally varies in both space and time. Practitioners are typically interested in the space-time average, $\hstavgcht$. We additionally define the time-dependent spatial average, $\havgcht$, and the spatially varying temporal average, $\htavgcht$.

We now describe the non-dimensionalization of the CHT equations. We associate to the solid domain two length scales: a chosen first length scale, denoted extrinsic, $\elldim$; and a second length scale, denoted intrinsic, $\Elldim \coloneqq |\Omegasdim|/|\pOmegasdim|$, where $|\Omegasdim|$ and $|\pOmegasdim|$ denote the volume and surface area of the solid, respectively. We may then introduce the non-dimensional variables that will be used throughout this work. Lengths are scaled with the extrinsic length scale $\elldim$, and hence the non-dimensional spatial coordinate is given by $\xnd \coloneqq \xdim/\elldim$. Times are scaled with the solid diffusive time scale $\rhoscsavgdim \elldim^2/\ksinfdim$, where $\rhoscsavgdim$ denotes the spatial average of $\rhosdim\csdim$ over $\Omegasdim$ and $\ksinfdim$ is the (essential) infimum of $\ksdim$ over $\Omegasdim$; the resulting non-dimensional time variable (Fourier number) is $\tnd \coloneqq \tdim \ksinfdim / (\rhoscsavgdim \elldim^2)$. Velocity is non-dimensionalized with respect to the free-stream velocity $\vinfdim$, so that $\vndcht \coloneqq \vdimcht/\vinfdim$; pressure is scaled by $\nufdim \rhofdim \vinfdim / \elldim$, and thus the non-dimensional pressure field reads $\pndcht \coloneqq \pdimcht \elldim / (\nufdim \rhofdim \vinfdim)$. Temperature is non-dimensionalized relative to the initial and far-field temperatures: in the solid, $\usndcht \coloneqq (\Tsdimcht - \Tinfdim)/(\Tidim - \Tinfdim)$, and in the fluid, $\ufndcht \coloneqq (\Tfdimcht - \Tinfdim)/(\Tidim - \Tinfdim)$.

The non-dimensional CHT model depends on six key input parameters. These are the volumetric specific heat ratio and the thermal conductivity ratio between fluid and solid,
\begin{align}
    r_1 \coloneqq \frac{\rhofdim \cfdim}{\rhoscsavgdim}, \quad r_2 \coloneqq \frac{\kfdim}{\ksinfdim}; \label{eq:intro_r1r2}
\end{align} the non-dimensional thermophysical properties of the solid, 
\begin{align}
    \sigma \coloneqq \frac{\rhosdim \csdim}{\rhoscsavgdim}, \quad \kappa \coloneqq \frac{\ksdim}{\ksinfdim}; \label{eq:intro_sigma_kappa}
\end{align} and the Reynolds and Prandtl numbers,
\begin{align}
    \Reynolds \coloneqq \frac{\vinfdim \elldim}{\nufdim}, \quad \Prandtl \coloneqq \frac{\nufdim}{\kfdim / (\rhofdim \cfdim)}. \label{eq:intro_re_pr}
\end{align}
The non-dimensional CHT equations are given in~\cref{subsubsec:cht_nondim}.

Finally, we introduce the non-dimensional counterpart of the heat transfer coefficient, the \emph{Nusselt number}, $$\Nucht \coloneqq \hcht \elldim / \kfdim,$$ along with its spatial, temporal, and space-time averages: respectively $\Nuavgcht$, $\Nutavgcht$, and $\Nustavgcht$. To quantify the spatial and temporal variations of the local Nusselt number relative to its mean, we define the variation function $$\etacht \coloneqq \Nucht / \Nustavgcht,$$ and its temporal average, $\etabarcht$. We also define the \emph{Biot number}, $$\Bcht \coloneqq \hstavgcht \elldim / \ksinfdim,$$ which is a constant scalar, independent of space and time. It is directly related to the space-time averaged Nusselt number through $$\Bcht = r_2 \Nustavgcht.$$ While the Nusselt number reflects the intensity of convective heat transfer in the fluid, the Biot number captures how effectively the solid conducts heat relative to the external convective loading. We have the following relations between the heat transfer coefficient, Nusselt number, $\etacht$, and Biot number:
\begin{align}
    \hcht &= \frac{\kfdim}{\elldim} \Nucht = \frac{\ksinfdim}{\elldim} r_2 \Nustavgcht \etacht = \frac{\ksinfdim}{\elldim} \Bcht \etacht, \\ \hstavgcht &= \frac{\kfdim}{\elldim} \Nustavgcht = \frac{\ksinfdim}{\elldim} \Bcht. \label{eq:intro_havgchtNuBi}
\end{align}

We focus on problems in the \emph{small-Biot limit}, $\Bcht\rightarrow 0$. In practice, the small-Biot limit frequently arises, in particular for “everyday” artifacts at modest temperatures subject to natural convection and radiation heat transfer, as well as for smaller artifacts (for example, sensors) subject to modest forced convection heat transfer. In contrast, larger systems at very elevated temperatures or subject to brisk forced convection or change-of-phase heat transfer typically yield larger Biot number. An engineering example of a small-Biot application is the process of annealing by natural convection—for instance, to relieve residual stresses.

We are primarily interested in one quantity of interest (QoI): the weighted average temperature of the solid as a function of time, denoted by $\Tsavgdimcht(\tdim)$ (dimensional) and $\usavgcht(\tnd)$ (non-dimensional); their precise definitions are given in~\cref{sec:cht}.

The truth mathematical model, which we have just described above, is rarely invoked in engineering practice, in particular in the context of engineering estimation. The computational cost is prohibitively high, and the data requirements are restrictive: many materials are inherently heterogeneous, and the exact spatial distribution are often unknown without destructive inspection.

\subsubsection{Robin heat equations}\label{subsubsec:intro_rhe}
The CHT model can be recast as a single-domain heat equation defined solely on the solid domain, in which the fluid domain is eliminated and its influence captured through a Robin boundary condition on the interface. We refer to this formulation as the \emph{equivalent Robin heat equation} (RHE).

As we now only consider the solid domain, we omit the subscript ``s'' and set $\pOmegadim\equiv\Gammadim$. The dimensional equations are given by
\begin{subequations}
\begin{alignat*}{3}
     \rhodim \cdim \frac{\partial \Tdim^{\mathrm{CHT}}}{\partial \tdim} &= \deldim \cdot(\kdim \deldim \Tdim^{\mathrm{CHT}}) &\quad& \text{in } \Omegadim, \\
    \kdim \deldim \Tdim^{\mathrm{CHT}} \cdot \bm{n} + \hdim^{\mathrm{CHT}}(\tdim,\xdim) (\Tdim^{\mathrm{CHT}}-\Tinfdim) &= \nulldim &\quad& \text{on } \pOmegadim, \\
    \Tdim^{\mathrm{CHT}}(\tdim=\nulldim, \cdot) &= \onedim &\quad& \text{in } \Omegadim,
\end{alignat*}
\end{subequations}
with the average temperature as QoI, as before, denoted by $\Tavg^{\mathrm{CHT}}(\tdim)$ (but now without the subscript ``s''). The heat transfer coefficient $\hdim^{\mathrm{CHT}}$ now appears as the Robin coefficient in the boundary condition.

The solution to the RHE exactly reproduces the solid temperature field of the full CHT model, i.e., $\Tdim^{\mathrm{CHT}} \equiv \Tsdimcht$ and $\Tavg^{\mathrm{CHT}} \equiv \Tsavgdimcht$. However, since $\hcht$ is not known \emph{a priori} and must be extracted from the coupled fluid-solid solution, the RHE in this form is not directly useful for practical prediction and estimation. Nonetheless, the equations provide a valuable starting point for systematic model simplification. More details on the equivalence are provided in~\cref{subsec:rhe}.

In the non-dimensional setting, the Robin coefficient is given by $\Bcht \etacht(\tnd,\xnd)$, and the equations are
\begin{subequations}
\begin{alignat*}{3}
     \sigma\frac{\partial u^{\mathrm{CHT}}}{\partial \tnd} &= \delnd \cdot(\kappa \delnd u^{\mathrm{CHT}}) &\quad& \text{in } \Omega, \\
    \kappa \delnd u^{\mathrm{CHT}} \cdot \bm{n} + \Bcht \etacht(\tnd,\xnd) u^{\mathrm{CHT}} &= 0 &\quad& \text{on } \pOmega, \\
    u^{\mathrm{CHT}}(\tnd=0, \cdot) &= 1 &\quad& \text{in } \Omega.
\end{alignat*}
\end{subequations}
We denote the corresponding non-dimensional average temperature by $\uavg^{\mathrm{CHT}}(\tnd)$. Recall that both $\Bcht$ and $\etacht$ are derived from the local Nusselt number $\Nusselt^{\mathrm{CHT}}(\tnd,\xnd)$ through the relations $\Bcht = r_2 \Nustavg^{\mathrm{CHT}}$ and $\etacht(\tnd,\xnd) = \Nusselt^{\mathrm{CHT}}(\tnd,\xnd) / \Nustavg^{\mathrm{CHT}}$, with $\Bcht$ being a constant scalar independent of space and time. We can now motivate the small-Biot limit: for $\Bcht\rightarrow 0$, the Robin boundary condition asymptotically approaches a homogeneous Neumann boundary condition, from which it follows that temperature gradients within the solid are small, and the solid temperature remains approximately uniform. We focus on this limit in the present work.

Our theoretical framework is informed by the general behavior of $\Nusselt^{\mathrm{CHT}}(\tnd,\xnd)$. It is observed that $\Nusselt^{\mathrm{CHT}}$ fluctuates in time but quickly settles around a steady mean value. This behavior is illustrated in~\cref{fig:introduction}, which shows the temporal evolution of the Nusselt number at various locations on a circular cylinder in cross-flow, along with the non-dimensional average solid temperature, for one representative case in the small-Biot regime ($\Bcht=0.0680$ with diameter as length scale). This result was obtained numerically using Nek5000~\cite{nek5000-web-page}, a spectral-element solver for the CHT formulation; additional details are provided in~\cref{subsec:example_cht}. Following an initial sharp drop, the spatially averaged Nusselt number stabilizes by $\tnd \approx 0.3$, with only minor fluctuations thereafter. While a transition occurs near $\tnd \approx 1.5$, the magnitude of $\Nuavgcht$ remains largely unchanged after $\tnd \approx 0.3$. During this period, the solid temperature changes very little, indicating a clear \emph{separation of time scales}: the flow equilibrates much faster than the thermal response of the solid.

To quantify this, we define the convective time scale $\tconvdim \coloneqq \elldim/\vinfdim$, which becomes $\tconvnondim = r_1 / (r_2 \Reynolds \Prandtl)$ under our non-dimensionalization. For the case shown, this yields $\tconvnondim \approx 0.012$. Meanwhile, the average temperature $\usavgcht(\tnd)$ evolves approximately as a simple exponential decay. Defining the equilibrium time constant $\taueq$ such that $\usavgcht(\taueq) = \exp(-1)$, we find $\taueq \approx 3.77$—over two orders of magnitude larger than $\tconvnondim$, indicating a time scale separation. The approximate exponential, 
\begin{align}
    \uavgexp\coloneqq \exp(-\tnd/\taueq) \label{eq:intro_uavg_exp}
\end{align}
is also shown in~\cref{fig:introduction:uavg} and yields an accurate approximation of $\usavgcht$. More details on time scales will be provided in~\cref{subsec:time_homogenization,subsec:lcm_time_scale}.

This time scale separation motivates the definition of the \emph{autonomous Robin heat equation} (RHE$_a$), in which the Robin coefficient---which characterizes the fluid---is assumed to be time-independent but spatially varying. The governing equations are identical to those of the RHE, except that we restrict the heat transfer coefficient to be fixed in time and only vary in space, here denoted $\htavg$:
\begin{subequations}\label{eq:intro_rhe_a}
\begin{alignat}{3}
     \rhodim \cdim \frac{\partial \Ttilde}{\partial \tdim} &= \deldim \cdot(\kdim \deldim \Ttilde) &\quad& \text{in } \Omegadim, \\
    \kdim \deldim \Ttilde \cdot \bm{n} + \htavg(\xdim) (\Ttilde-\Tinfdim) &= \nulldim &\quad& \text{on } \pOmegadim,\label{eq:intro_rhe_bc} \\
    \Ttilde(\tdim=\nulldim, \cdot) &= \onedim &\quad& \text{in } \Omegadim.
\end{alignat}
\end{subequations}
In the non-dimensional formulation, the Robin coefficient becomes $\Biot \etabar(\xdim)$, where $\etabar$ is assumed to vary only in space. 

The Robin coefficient serves as an input to the system and, in principle, can be prescribed arbitrarily, provided that it is positive, i.e., $\htavg > 0$ or equivalently $\Biot\etabar > 0$. However, since no choice of the Robin coefficient can make the RHE$_a$ exactly equivalent to the full CHT formulation, we compute in~\cref{eq:intro_rhe_a} an approximation to the true temperature field (obtained with the CHT formulation), which we denote by $\Ttilde$ in the dimensional setting and $\utilde$ in the non-dimensional setting. The corresponding QoIs—i.e., the average temperatures—are denoted by $\Tavgtilde(\tdim)$ and $\uavgtilde(\tnd)$, respectively.

The RHE$_a$ represents a significant simplification by eliminating the time dependence of the Robin coefficient. However, it requires a spatially varying heat transfer coefficient $\htavg$ (or equivalently, $\Biot\etabar$) as input. Since this quantity varies along the boundary, see~\cref{fig:introduction:Nu}, estimating it directly in engineering practice is generally impractical. As a result, the RHE$_a$ formulation, while useful for analysis, is rarely used for estimation.
\begin{figure}[th]
    \centering
    \begin{subfigure}[b]{0.48\textwidth}
        \centering
        \includegraphics[width=\textwidth]{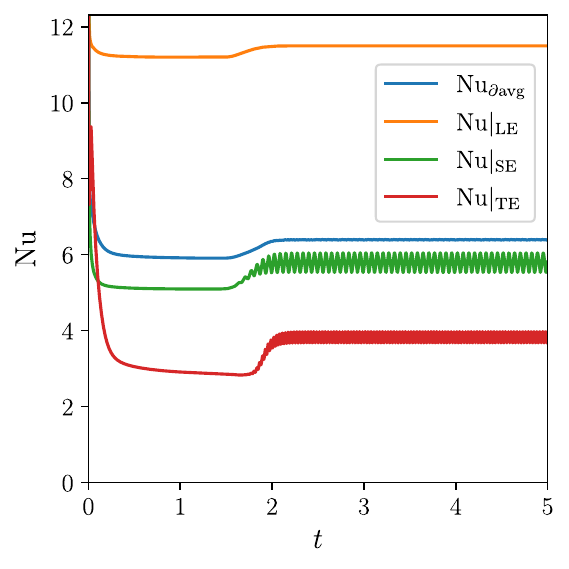}
        \caption{Local and average Nusselt number.}
        \label{fig:introduction:Nu}
    \end{subfigure}
    \hfill
    \begin{subfigure}[b]{0.48\textwidth}
        \centering
        \includegraphics[width=\textwidth]{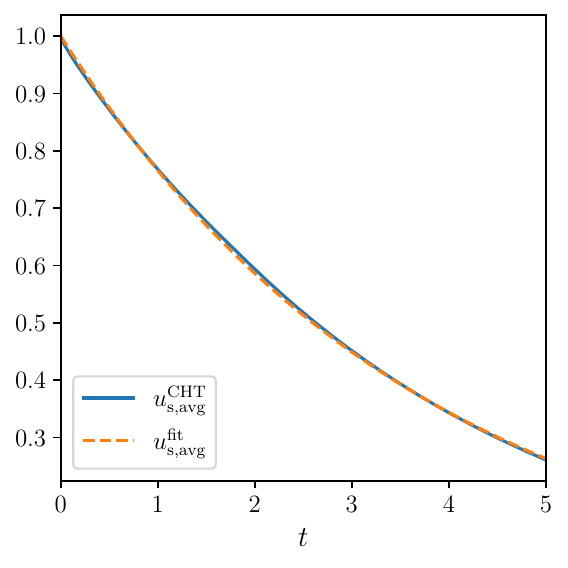}
        \caption{Non-dimensional average temperature over time.}
        \label{fig:introduction:uavg}
    \end{subfigure}
    \caption{(a) Nusselt number at the leading edge (LE), trailing edge (TE), side edge (SE), and the spatial average over the surface of a circular cylinder in cross-flow. The superscripts ``CHT'' are omitted. (b) Evolution of the non-dimensional average solid temperature, which closely follows a simple exponential decay.}
    \label{fig:introduction}
\end{figure}

\subsubsection{Lumped capacitance model}
In engineering estimation, a further reduction—now relative to the autonomous Robin heat equation—is desirable. This leads to the widely used \emph{lumped capacitance model} (LCM), which can be derived from the RHE$_a$ by assuming that the solid maintains a spatially uniform temperature at all times. Denoting this uniform temperature by $\TLump(\tdim)$, we integrate the governing equation in~\cref{eq:intro_rhe_a} over the solid domain and apply the divergence theorem to obtain:
\begin{align*}
    \frac{d\TLump}{d\tdim} &= -\frac{\hstavg |\pOmegadim|}{\rhocavgdim |\Omegadim|} (\TLump - \Tinfdim).
\end{align*}
Hence, the LCM is a single ordinary differential equation (ODE) for $\TLump(\tdim)$, which admits an analytical solution,
\begin{align*}
    (\TLump(\tdim) - \Tinfdim) / (\Tidim - \Tinfdim) = \exp(-\tdim / \tauLdim),
\end{align*}
where $\tauLdim\coloneqq \rhocavgdim\Elldim/\hstavg$ is the associated dimensional time constant. Recall that $\Elldim = |\Omegadim|/|\pOmegadim|$ is the intrinsic length scale. In the non-dimensional setting, the LCM yields a time constant $\tauL \coloneqq 1 / (\Biot \gamma)$, where $\gamma \coloneqq \elldim / \Elldim$ is the ratio between extrinsic and intrinsic length scale, and 
\begin{align*}
    \uLump(\tnd)\coloneqq \exp(-\tnd/\tauL).
\end{align*}
Comparing this expression with the approximate exponential decay of the average temperature in~\cref{eq:intro_uavg_exp}, we see that the two expressions differ only in their time constants, $\taueq$ and $\tauL$. 

The time constant $\tauL$ is inversely proportional to the Biot number $\Biot$. We can compare this time constant with the diffusive time scale in the solid, $\tdiffdim \coloneqq \rhoscsavgdim \elldim^2 / \ksinfdim$, which characterizes how quickly diffusion acts within the solid, and which, under our nondimensionalization, reduces to $\tdiffnondim = 1$. The ratio of the two time scales is given by $\tauL / \tdiffnondim = 1/(\Biot \gamma)$, and shows that for small Biot numbers the cooling of the solid is much slower than the internal diffusion. Hence surface and bulk temperatures equilibrate rapidly relative to the overall thermal response, justifying the uniform temperature assumption for small Biot number. This also means that the LCM is a good approximation only in the small-Biot regime, i.e., as $\Biot \rightarrow 0$. We will study this error in~\cref{subsec:lcm_lumping_error}.

A key advantage of the LCM is its minimal input requirements: the average volumetric specific heat of the solid, $\rhocavgdim$; the intrinsic length scale $\Elldim$, which only depends on body geometry; and the average heat transfer coefficient $\hstavg$. The former two can, in principle, be determined from the solid geometry and composition. The latter is now only a scalar value, much more amenable for estimation.

\subsubsection{Nusselt approximation: empirical correlations}
The biggest challenge in using the LCM is finding a good estimate for the average heat transfer coefficient $\hstavg$ (or $\Biot)$ for a given dunking scenario, \emph{without} solving the CHT. In practice, that means finding a good approximation of the true average Nusselt number $\Nustavgcht$, see~\cref{eq:intro_havgchtNuBi}.

Historically, such approximations have been derived from so-called \emph{empirical correlations}~\cite{incropera1990fundamentals,ahtt6e}, which relate the averaged Nusselt number to the Reynolds and Prandtl numbers (see~\cref{eq:intro_re_pr} for their definitions):
\begin{align}\label{eq:intro_nu_correlation}
    \Nustavgcht \approx f^{\mathcal{S}}(\Reynolds,\Prandtl),
\end{align}
where $f^{\mathcal{S}}$ is defined for a family of similar shapes $\mathcal{S}$, such as spheres or circular cylinders of varying diameter. These expressions typically assume a homogeneous solid with constant properties, i.e., $\kappa = \sigma = 1$ (see the definitions in~\cref{eq:intro_sigma_kappa}), and neglect the thermophysical property ratios $r_1$ and $r_2$ (defined in~\cref{eq:intro_r1r2}). Neglecting the effect on $r_1$ and $r_2$ is justified by the observation that, in many common engineering applications, these ratios are small, and the value of $\Nustavgcht$ shows negligible sensitivity to them.

The functions $f^{\mathcal{S}}$ are typically of power-law form, motivated by boundary layer theory~\cite{schlichting2016boundary}, with coefficients fitted to experimental data. More recently, high-fidelity numerical simulations have also been used to construct such correlations. However, new correlations must be developed for each geometry and flow configuration, making the process labor-intensive. Note that in this work, we use the term correlation interchangeably with empirical correlation, without any statistical interpretation.

\subsection{Research Question}\label{subsec:intro_research_question}
Our goal is to develop an error estimate between the LCM and the full CHT model for \emph{general geometries}, \emph{general thermophysical property distributions}, and \emph{general flow conditions}, in the small-Biot regime. First recall that the non-dimensional RHE QoI, $\uavg^{\mathrm{CHT}}(\tnd;\kappa,\sigma,\Bcht,\etacht(\tnd,\xnd))$, is equivalent to the QoI computed with the CHT formulation. Our objective is to estimate the error incurred by using the LCM with an approximate Biot number $\Biapprox^{\mathrm{CHT}}$ \emph{without} solving the full CHT model, i.e., $\Bcht$ and $\etacht(\tnd,\xnd)$ are unknown. To achieve this, we decompose the total error, given by $|\uavg^{\mathrm{CHT}}(\tnd;\kappa,\sigma,\Bcht,\etacht(\tnd,\xnd)) - \uLump(\tnd;\Biapprox^{\mathrm{CHT}})|$, into three contributions:
\begin{align}
    |\uavg^{\mathrm{CHT}}(\tnd;\kappa,\sigma,\Bcht,\etacht(\tnd,\xnd)) &- \uLump(\tnd;\Biapprox^{\mathrm{CHT}})| \nonumber \\ 
    &\leq |\uavg^{\mathrm{CHT}}(\tnd;\kappa,\sigma,\Bcht,\etacht(\tnd,\xnd)) - \uavgtilde(\tnd;\kappa,\sigma,\Bcht,\etabar(\xnd))| \label{eq:intro_temp_bound}\\
    &+ |\uavgtilde(\tnd;\kappa,\sigma,\Bcht,\etabar(\xnd)) - \uLump(\tnd;\Bcht)| \label{eq:intro_lcm_bound}\\
    &+ |\uLump(\tnd;\Bcht) - \uLump(\tnd;\Biapprox^{\mathrm{CHT}})|. \label{eq:intro_biot_bound}
\end{align}
The first error corresponds to a \emph{temporal approximation error}, incurred by replacing the time-varying Robin coefficient $\Bcht\etacht(\tnd,\xnd)$ with a time-independent approximation $\Bcht\etabar(\xnd)$; the second is the \emph{lumped approximation error}, introduced by replacing the spatially varying RHE$_a$ solution with the LCM; and the third is the \emph{Biot approximation error}, resulting from substituting the Biot number $\Bcht$ with an approximate value $\Biapprox^{\mathrm{CHT}}$.

\subsection{Related Work}
Despite the simplicity and utility of the small-Biot lumped approximation, there is relatively little analysis of the associated approximation error. Most engineering textbooks~\cite{cengel2014heat,incropera1990fundamentals,ahtt6e} indicate only that $\BiDunk$ must be \emph{small}. Here, $\BiDunk$ refers to the Biot number based on the intrinsic length scale $\Elldim$, and relates to our Biot number through $\BiDunk \coloneqq \Biot/\gamma$, where $\gamma=\elldim/\Elldim$. In some cases, a rough threshold is suggested, such as $\BiDunk \leq 0.1$, but typically with vague qualifiers such as “usually”.

To the best of our knowledge, the only study that systematically examines the full CHT-to-LCM error is the work of Virag et al.~\cite{virag2011cooling}, in which the authors perform a numerical comparison between the LCM and full CHT model for a homogeneous sphere subjected to natural convection, where the flow is hence characterized by the Rayleigh number rather than the Reynolds number. Their parametric study spans four non-dimensional inputs: the initial Rayleigh number, initial Biot number, Prandtl number, and the fluid-to-solid thermal diffusivity ratio.
Their results demonstrate that, for many combinations of these parameters, the CHT-to-LCM error is small. Nevertheless, their study is limited to one specific setting, and does not provide any more generalizable theoretical insights.

Other existing works focus solely on the lumped approximation error. In the work of Gockenbach and Schmidtke~\cite{gockenbach2009newton}, a rigorous asymptotic error estimate is derived for the particular case of a sphere with homogeneous thermophysical properties and spatially uniform Robin coefficient. In our notation (see~\cref{eq:intro_lcm_bound}), the result reads: 
\begin{align*}
    |\uavgtilde(\tnd;\kappa=1,\sigma=1,\Bcht,\etabar=1) - \uLump(\tnd;\Bcht)| \leq (3/5) \Bcht / (\gamma\exp(1)) + O((\Biot^{\mathrm{CHT}})^2), \quad \forall \tnd\geq0.
\end{align*}

Recently, we have generalized the bound of Gockenbach and Schmidtke in~\cite{kaneko2024error} to arbitrary geometries with heterogeneous thermophysical properties, while still assuming a spatially uniform Robin coefficient. The bound we derived in~\cite{kaneko2024error}, expressed in our notation here, reads: 
\begin{align}
    |\uavgtilde(\tnd;\kappa,\sigma,\Bcht,\etabar=1) - \uLump(\tnd;\Bcht)| \leq \phi(\kappa,\sigma) \Bcht / (\gamma\exp(1)) + O((\Biot^{\mathrm{CHT}})^2), \quad \forall \tnd\geq0, \label{eq:intro_gockenbach_generalized}
\end{align}
where $\phi(\kappa,\sigma)$ is a dimensionless coefficient determined as the solution to an auxiliary linear elliptic partial differential equation. In the special case of a homogeneous sphere, this bound reduces to that of Gockenbach and Schmidtke with $\phi = 3/5$. An in-depth analysis of the coefficient $\phi$ is also provided, including its mathematical properties, bounds, and numerical evaluations for various geometries. However, because the analysis in~\cite{kaneko2024error} assumes a spatially uniform Robin coefficient—except for some exploratory work on non-uniform Robin coefficients in an Appendix D—the result cannot be directly used to bound the error in the LCM approximation relative to the CHT solution.

Many studies in the literature have focused on estimating the average Nusselt number for particular geometries and flow conditions, usually by proposing empirical correlations derived from experimental data. Classic examples in forced convection include the semi-analytical flat-plate correlation for axial flow~\cite{schlichting2016boundary}, as well as empirical fits for cylinders~\cite{churchill1977correlating} and spheres~\cite{whitaker1972forced} under cross-flow. To extend beyond these specific shapes, Culham et al.~\cite{culham2001simplified} proposed a unified correlation for arbitrary cuboids, based on flow path lengths. Similarly, Yovanovich~\cite{yovanovich1988general} developed a general correlation for spheroids of varying aspect ratios in cross-flow. For natural convection, Hassani et al.~\cite{Hassani1989} investigated a wide variety of shapes and introduced a simple correlation using geometric shape factors to account for different geometries. Lienhard~\cite{lienhard1973commonality} observed that, with an appropriate choice of length scale, the ratio of Nusselt number to Rayleigh number remains nearly constant across many geometries; however, he did not provide a general method for selecting such a length scale.

More recently, high-fidelity numerical simulations have been employed to investigate convective heat transfer in complex geometries. For example, Ke et al.~\cite{ke2018drag} studied the drag force and average Nusselt number for scalene ellipsoids under forced convection. Hirasawa et al.~\cite{hirasawa2012numerical} analyzed heat transfer in packed beds of spheres. Nada et al.~\cite{nada2007heat} combined experimental and numerical approaches to examine semi-circular tubes. A convective heat transfer model for disk-shaped particles ranging from flat disks to cylinders was proposed in~\cite{wang2025new}, while Richter and Nikrityuk~\cite{richter2012drag,richter2013new} considered a broad range of shapes—including cubes, cylinders, spheroids, and cuboids at various angles of attack. This list is not exhaustive but highlights the growing role of simulation in heat transfer studies.

Despite all these contributions in Nusselt number estimation, we are not aware of any systematic studies that examine the influence of the thermophysical property ratios $r_1$ and $r_2$ (defined in~\cref{eq:intro_r1r2}). Most existing works either assume these ratios to be very small—an assumption valid for many engineering applications, and under which the Nusselt number has been empirically shown to be largely insensitive to them—or consider only a limited set of specific cases where they are prescribed explicitly. Moreover, nearly all works are restricted to homogeneous solids; even though many engineering applications—such as pebble-bed nuclear reactors and composite systems—involve heterogeneous materials. In such settings, the fluid-solid property ratios may not be small, and layered or multi-material structures might substantially alter the heat transfer behavior.

\subsection{Contributions}
We make the following contributions:
\begin{enumerate}
    \item In~\cref{subsubsec:cht_lcm_error}, we derive a sharp asymptotic upper bound for the lumped approximation error in~\cref{eq:intro_lcm_bound}:
    \begin{align*}
        |\uavgtilde(\tnd;\kappa,\sigma,\Bcht,\etabar) - \uLump(\tnd;\Bcht)| \leq \phi(\kappa,\sigma,\etabar) \Bcht / (\gamma\exp(1)) + O((\Biot^{\mathrm{CHT}})^2), \quad \forall \tnd\geq0.
    \end{align*}
    This bound extends our previous results in~\cite{kaneko2024error} to the general case of a spatially non-uniform Robin coefficient. The coefficient $\phi(\kappa,\sigma,\etabar)$ generalizes the corresponding quantity in~\cref{eq:intro_gockenbach_generalized}, which was derived under the assumption $\etabar = 1$, and is likewise obtained as the solution to an auxiliary linear elliptic partial differential equation, specified in~\cref{eq:phi_definition,eq:sensitivity_problem}. By accounting for spatial variation in $\etabar$, the present result establishes a direct connection to the full CHT model.
    \item To compute $\phi(\kappa,\sigma,\etabar)$ from the auxiliary problem requires full knowledge of the spatial fields $\kappa$, $\sigma$, and $\etabar$, see~\cref{eq:sensitivity_problem}. Since $\etabar$ depends on the flow and is generally unknown without solving the CHT problem, the derived error bound is not practical for predictive use. Moreover, even $\kappa$ and $\sigma$ may be inaccessible in practice—for example, we often do not know the internal structure of the solid without destructive inspection. To overcome this limitation, we derive an upper bound for $\phi(\kappa,\sigma,\etabar)$ in~\cref{subsubsec:phi_bound},
    \begin{align*}
        \phi(\kappa,\sigma,\etabar) \leq (\sqrt{\phi(1,1,1)} + \sqrt{\deltasig} + \sqrt{\deltaeta})^2,
    \end{align*}
    where $\phi(1,1,1)$ is a purely geometric quantity that can be computed from the auxiliary problem with uniform inputs. The correction terms $\deltasig$ and $\deltaeta$, defined in~\cref{eq:deltasig,eq:deltaeta}, depend on two stability constants—geometric quantities—and on the variances of $\sigma$ and $\etabar$, given by $\int_\Omega (\sigma - 1)^2 / |\Omega|$ and $\int_{\pOmega} (\etabar - 1)^2 / |\pOmega|$ respectively. Note that in this work the term “variance” is not used in the statistical sense, but rather as a measure of spatial heterogeneity. The stability constants are determined by solving two eigenvalue problems, defined in~\cref{eq:stability_1,eq:stability_2}. The numerical treatment of the eigenvalue problems is outlined in Appendix~\ref{sec:steklov_numerical}.

    For a heterogeneous solid with piecewise constant material properties, the variance of $\sigma$ can be evaluated \emph{without} detailed knowledge of the spatial microstructure; only the volume fraction and thermophysical property of each phase is required. The variance of $\etabar$, in contrast, depends on the flow and is therefore more difficult to estimate. Nevertheless, since it is a scalar quantity, it may be realistic to estimate it. From a physical standpoint, it is unlikely that the variance would exceed 1.0, as this would imply regions of extremely high heat transfer adjacent to regions of negligible transfer. Based on both archival data and our own simulation results, we observe that for several convex geometries (e.g., spheres, cylinders), the variance of $\etabar$ does not exceed 0.5. Details on this study are provided in Appendix~\ref{sec:eta_study}. Moreover, this variance tends to decrease with increasing Reynolds number, possibly due to enhanced mixing and turbulence of the flow. In practice, this suggests that one could estimate the variance for a given geometry under low Reynolds number conditions—where simulations are computationally much easier—and safely reuse this bound for higher Reynolds numbers.
    
    \item We investigate the temporal approximation error, defined in~\cref{eq:intro_temp_bound}, through mathematical analysis and numerical simulations in~\cref{subsec:lcm_time_scale}, and find that it remains small when there is a clear separation of time scales between fluid and solid dynamics—that is, when the flow equilibrates much faster than the thermal response of the solid. Motivated by this behavior—and consistent with engineering intuition—we select $\etabar$ as the temporal average of the time-dependent $\etacht$. This can be interpreted as a time homogenization step that averages over fast fluctuations. In Appendix~\ref{sec:time_homogenization_rhe}, we provide a mathematical proof that shows for a simplified setting—a Robin heat equation with a spatially uniform time-periodic Robin coefficient—that the optimal approximation is indeed given by the temporal average.

    \item A key practical question is how to accurately estimate the average Nusselt number $\Nustavgcht$ (and thus the Biot number $\Bcht$) for arbitrary geometries and flow conditions, thereby minimizing the Biot estimation error in~\cref{eq:intro_biot_bound}. In~\cref{subsubsec:universal_correlations}, we propose a novel data-driven framework for estimating the average Nusselt number $\Nustavgcht$ across a wide range of geometries and flow conditions using a single empirical correlation in the form of~\cref{eq:intro_nu_correlation}. For each geometry, we learn an appropriate length scale $\elldim$, inspired by Lienhard's work in natural convection~\cite{lienhard1973commonality}, to be used for the evaluation of the correlation. This has two main advantages: (1)~improved interpretability, and (2)~the ability to extend correlations originally developed for specific shape families to more general geometries. We demonstrate the effectiveness of this approach using a large dataset of spheroidal geometries in~\cref{subsec:numerical_spheroids}.
    
    \item We present numerical results in~\cref{subsubsec:iso_versus_cht} that highlight the limitations of existing empirical correlations for the Nusselt number; as they typically do not account for solid heterogeneity ($\kappa, \sigma$) nor the thermophysical property ratios between solid and fluid ($r_1, r_2$). In this work, we restrict ourselves to the homogeneous case ($\kappa = \sigma = 1$) to systematically investigate the influence of $r_1$ and $r_2$. When $r_1$ and $r_2$ are small, the average Nusselt number $\Nustavgcht$ exhibits negligible sensitivity to these parameters, and empirical correlations remain reasonably accurate. However, this insensitivity breaks down for larger values of $r_1$ and $r_2$, and neglecting their effect leads to significant predictive error. We identify threshold values of $r_1$ and $r_2$ at which the error increases sharply for several representative cases. We also provide material properties of fluid-solid combinations commonly encountered in engineering applications.
\end{enumerate}

\subsection{Roadmap}
In~\cref{sec:models,sec:lcm}, we present the full set of mathematical models considered in this work: the conjugate heat transfer (CHT) model, the Robin heat equation (RHE), the autonomous Robin heat equation (RHE$_a$), and the lumped capacitance model (LCM). We also introduce the isothermal wall formulation (ISO), which was not discussed in this introductory section. We discuss their derivation, non-dimensionalization, and define all quantities of interest. An illustrative example is considered to compare the models. In~\cref{sec:error}, we turn to our main theoretical contribution: the model error analysis between the CHT model and the LCM. We provide detailed proofs and complement the analysis with numerical experiments. \Cref{sec:nusselt} focuses on the Nusselt number. We develop a data-driven framework for estimating the average Nusselt number of general convex geometries by extending the applicability of empirical correlations through learned characteristic length scales, and demonstrate the approach on a family of spheroids. Finally, in~\cref{sec:application}, we summarize our results and offer perspectives on broader implications. Additional technical details on the numerical implementation and simulation studies are provided in the Appendices. Some proofs, previously derived in~\cite{kaneko2024error} and used in this work, are not repeated here.
\section{Foundational Mathematical Models}\label{sec:models}
We present three mathematical models that describe the physics of the dunking problem: the conjugate heat transfer (CHT) formulation, the autonomous Robin heat equation (RHE$_a$), and the isothermal wall (ISO) formulation. The CHT model, previously introduced in~\cref{subsubsec:truth_mathematical_model}, serves as the high-fidelity or “truth” model. The RHE$_a$, discussed in~\cref{subsubsec:intro_rhe}, and the ISO model, which we now introduce, are simplifications of the full model.

Regarding notation, dimensional quantities are indicated with an underline, while non-dimensional quantities are left unadorned. Vector-valued variables are written in boldface. We also introduce the normalized integral
\begin{align*}
    \dashint_{S} (\medbullet) \coloneqq \frac{1}{|S|} \int_{S} (\medbullet),
\end{align*}
where $|S|$ denotes the measure of the set $S$.

\subsection{Conjugate Heat Transfer Formulation (CHT)}\label{sec:cht}
The CHT model describes a coupled fluid-solid system, governed by the incompressible Navier-Stokes equations and advection-diffusion equation in the fluid, the heat equation in the solid, and interface conditions enforcing continuity in the heat flux and temperature. We make the following assumptions:
\begin{itemize}
    \item Effects from natural convection and radiation are negligible.
    \item Pressure variations in the flow do not affect thermodynamic properties (density and temperature).
    \item Fluid and solid properties do not vary substantially over the temperature range dictated by the far-field and the initial solid temperatures, and are therefore assumed constant with respect to temperature.
    \item Viscous dissipation in the fluid is negligible.
    \item The fluid is assumed incompressible, i.e., the velocity field is divergence-free.
    \item The solid domain is stationary.
    \item No heat sources are present.
\end{itemize}

\subsubsection{Dimensional equations}
Let $\Omegafdim$ denote the fluid domain with boundary $\pOmegafdim$ and $\Omegasdim$ the solid domain with boundary $\pOmegasdim$. Both domains are separated by the fluid-solid interface $\Gammadim = \pOmegafdim \cap \pOmegasdim$. The time $\tdim$ is defined from $[\nulldim, \tfdim]$ where $\tfdim$ is the final time. The typical setup that we consider is shown schematically in~\cref{fig:setup}. As we consider an external flow, the outer boundaries of the fluid domain are assumed to be sufficiently far away from the solid domain such that they do not affect the flow around the solid. The inflow boundary is denoted by $\pOmegaindim$, where the fluid enters the domain with far-field velocity $\vinfdim$ and far-field temperature $\Tinfdim$. The outflow boundaries are defined as $\pOmegaoutdim \coloneqq \pOmegafdim \setminus (\pOmegaindim \cup \Gammadim)$.
\begin{figure}
    \centering
    \includegraphics[width=0.5\textwidth]{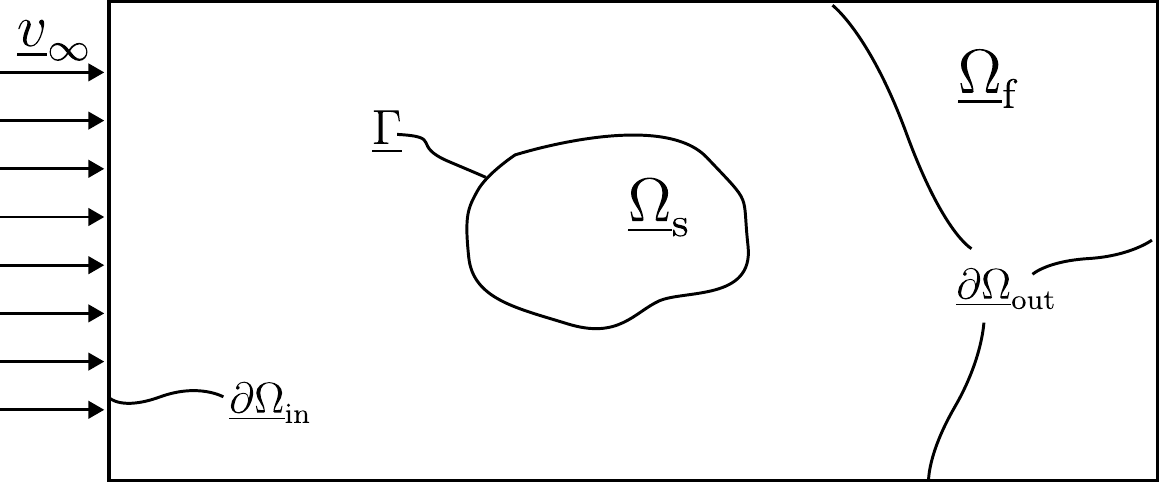}
    \caption{Schematic of the fluid-solid system. The solid domain is denoted by \(\Omegasdim\) and the fluid domain by \(\Omegafdim\). The interface between the two domains is denoted by \(\Gammadim\).}
    \label{fig:setup}
\end{figure}
The fluid has constant material properties: density $\rhofdim$, kinematic viscosity $\nufdim$, heat capacity $\cfdim$, and thermal conductivity $\kfdim$. The solid domain is heterogeneous, characterized by a spatially varying volumetric heat capacity $\rhosdim\csdim(\xdim)$ and thermal conductivity $\ksdim(\xdim)$. The fluid is moving in the far-field  upstream with constant velocity $\vinfdim$ in the $\bm{e}_x$ direction. The initial temperatures of the fluid and solid are given by $\Tinfdim$ and $\Tidim$. 

The velocity and pressure fields in the fluid are governed by the incompressible Navier-Stokes equations, while the temperature fields in both fluid and solid are described by a coupled advection-diffusion equation (in the fluid) and heat conduction equation (in the solid). The incompressible Navier-Stokes equations are given by
\begin{subequations}\label{eq:cht_dim_vp}
\begin{alignat}{3}
    \frac{\partial \vdimcht}{\partial \tdim} + (\vdimcht \cdot \deldim) \vdimcht &= -\frac{1}{\rhofdim} \deldim \pdimcht + \nufdim \Deltadim \vdimcht &\quad& \text{in } \Omegafdim,\\
    \deldim \cdot \vdimcht &= \nulldim &\quad& \text{in } \Omegafdim, \\
    \vdimcht &= \vinfdim\bm{e}_x &\quad& \text{on } \pOmegaindim, \\
    \vdimcht &= \nullbmdim &\quad& \text{on } \Gammadim, \\
    -\pdimcht \bm{n} + \nufdim\rhofdim(\deldim\vdimcht+(\deldim\vdimcht)^T)\cdot \bm{n} &= \nullbmdim &\quad& \text{on } \pOmegaoutdim, \\
    \vdimcht(\tdim=\nulldim, \cdot) &= \vinfdim\bm{e}_x &\quad& \text{in } \Omegafdim.
\end{alignat}
\end{subequations}
To ensure a unique solution for the pressure field, the pressure level is fixed by enforcing a zero-mean condition over the fluid domain at all times. The heat equations within the fluid and the solid are respectively given by the following advection-diffusion and heat equations:
\begin{subequations}\label{eq:cht_dim_T}
\begin{alignat}{3}
    \rhofdim \cfdim \left(\frac{\partial \Tfdimcht}{\partial \tdim} + \vdimcht \cdot \deldim \Tfdimcht \right) &= \kfdim \Deltadim \Tfdimcht &\quad& \text{in } \Omegafdim, \label{eq:cht_dim_v_1}\\
    \rhosdim \csdim \frac{\partial \Tsdimcht}{\partial \tdim} &= \deldim \cdot (\ksdim \deldim \Tsdimcht) &\quad& \text{in } \Omegasdim, \label{eq:cht_dim_T_1}\\
    \Tfdimcht &= \Tinfdim &\quad& \text{on } \pOmegaindim, \\
    \Tfdimcht &= \Tsdimcht &\quad& \text{on } \Gammadim, \label{eq:cht_dim_T_2}\\
    \kfdim \deldim \Tfdimcht \cdot \bm{n} &= \ksdim \deldim \Tsdimcht \cdot \bm{n} &\quad& \text{on } \Gammadim, \label{eq:cht_dim_T_3}\\
    \Tfdimcht(\tdim=\nulldim, \cdot) &= \Tinfdim &\quad& \text{in } \Omegafdim, \\
    \Tsdimcht(\tdim=\nulldim, \cdot) &= \Tidim &\quad& \text{in } \Omegasdim. \label{eq:cht_dim_T_4}
\end{alignat}
\end{subequations}
The unit normal vector $\bm{n}$ on $\Gammadim$ is defined to point from the solid domain into the fluid domain. The Navier-Stokes equations in~\cref{eq:cht_dim_vp} do not depend on the temperature and can be solved independently for the velocity and pressure field. After inserting the velocity field into~\cref{eq:cht_dim_v_1},~\cref{eq:cht_dim_T} can be solved for the temperatures in fluid and solid. The quantity of interest (QoI) is the average temperature in the solid domain, defined as
\begin{align*}
    \Tsavgdim(\tdim) \coloneqq \frac{1}{\int_{\Omegasdim} \rhosdim(\xdim) \csdim(\xdim)} \int_{\Omegasdim} \rhosdim(\xdim) \csdim(\xdim) \Tsdimcht(\tdim,\xdim).
\end{align*}
The weighting factor $\rhosdim(\xdim) \csdim(\xdim)$ acts as a thermal mass density and $\Tsavgdim$ is thus related to the thermal energy in the solid.

\subsubsection{Non-dimensional equations}\label{subsubsec:cht_nondim}
The dimensional formulation involves nine input parameters:
\begin{itemize}
    \item Fluid properties: density $\rhofdim$, kinematic viscosity $\nufdim$, specific heat $\cfdim$, thermal conductivity $\kfdim$,
    \item Solid properties: volumetric specific heat $(\rhosdim\csdim)(\xdim)$, thermal conductivity $\ksdim(\xdim)$,
    \item Flow and thermal conditions: far-field velocity $\vinfdim$ and temperature $\Tinfdim$, initial solid temperature $\Tidim$.
\end{itemize}
To reduce this number and to generalize the solution across a wider class of problems, it is common practice to introduce a non-dimensional formulation. We first define two characteristic length scales associated with the solid domain:
\begin{itemize}
    \item Intrinsic length scale: $\Elldim = |\Omegasdim| / |\pOmegasdim|$, where $|\Omegasdim|$ and $|\pOmegasdim|$ denote the volume and surface area of the solid domain, respectively.
    \item Extrinsic length scale: $\elldim$, which can be chosen arbitrarily. Common choices include the diameter, or volume-equivalent sphere diameter.
\end{itemize}
We then define the solid diffusive time scale with
\begin{align}
    \tdiffdim \coloneqq \frac{\elldim^2 \rhoscsavgdim}{\ksinfdim}, \label{eq:tdiff}
\end{align}
and
\begin{align*}
    \ksinfdim \coloneqq \essinf_{\xdim \in \Omegasdim} \ksdim, \quad \rhoscsavgdim\coloneqq \dashint_{\Omegasdim} \rhosdim \csdim.
\end{align*}
Subsequently, we write the non-dimensional time and space variables as
\begin{align*}
    \xnd \coloneqq \frac{\xdim}{\elldim}, \quad
    \delnd \coloneqq \frac{\deldim}{\elldim^{-1}}, \quad
    \tnd \coloneqq \frac{\tdim}{\tdiffdim}, \quad
    \tf \coloneqq \frac{\tfdim}{\tdiffdim},
\end{align*}
as well as the non-dimensional domains and boundaries
\begin{align*}
    \Omegafnd \coloneqq \left\{ \xnd \ \middle|  \ \xnd = \frac{\xdim}{\elldim}, \ \xdim \in \Omegafdim \right\}, \quad
    \Omegasnd \coloneqq \left\{ \xnd \ \middle| \ \xnd = \frac{\xdim}{\elldim}, \ \xdim \in \Omegasdim \right\}, \quad
    \Gamma \coloneqq \left\{ \xnd \ \middle| \ \xnd = \frac{\xdim}{\elldim}, \ \xdim \in \Gammadim \right\}, \\
    \pOmegain \coloneqq \left\{ \xnd \ \middle| \ \xnd = \frac{\xdim}{\elldim}, \ \xdim \in \pOmegaindim \right\}, \quad 
    \pOmegaout \coloneqq \left\{ \xnd \ \middle| \ \xnd = \frac{\xdim}{\elldim}, \ \xdim \in \pOmegaoutdim \right\}. 
\end{align*}
Moreover, we define the non-dimensional solid material properties
\begin{align}
\kappa(\xnd) \coloneqq \frac{\ksdim(\xdim)}{\ksinfdim}, \quad
\sigma(\xnd) \coloneqq \frac{(\rhosdim \csdim)(\xdim)}{\rhoscsavgdim}, \label{eq:kappa_sigma_1}
\end{align}
which fulfill
\begin{align}
    \inf_{\xnd \in \Omegasnd} \kappa(\xnd) = 1, \quad
    \dashint_{\Omegasnd} \sigma(\xnd) = 1; \label{eq:kappa_sigma_2}
\end{align}
the fluid-to-solid thermophysical property ratios
\begin{align}
    r_1 \coloneqq \frac{\rhofdim \cfdim}{\rhoscsavgdim}, \quad
    r_2 \coloneqq \frac{\kfdim}{\ksinfdim}; \label{eq:r1_r2}
\end{align}
and the Reynolds and Prandtl numbers
\begin{align}
    \Reynolds\coloneqq \frac{\vinfdim\elldim}{\nufdim}, \quad 
    \Prandtl\coloneqq \frac{\nufdim}{\kfdim / (\rhofdim\cfdim)}. \label{eq:Re_Pr}
\end{align}
Note that the Prandtl number is a material parameter of the fluid, while the Reynolds number depends on the chosen length scale.

Finally, with the non-dimensional velocity, pressure, and temperature fields defined as
\begin{align}
    \vndcht(\tnd,\xnd) \coloneqq \frac{\vdimcht(\tdim,\xdim)}{\vinfdim}&, \quad
    \pndcht(\tnd,\xnd) \coloneqq \frac{\pdimcht(\tdim,\xdim)}{\nufdim \rhofdim \vinfdim / \elldim},\\
    \ufndcht(\tnd,\xnd) \coloneqq \frac{\Tfdimcht(\tdim,\xdim) - \Tinfdim}{\Tidim - \Tinfdim}&, \quad
    \usndcht(\tnd,\xnd) \coloneqq \frac{\Tsdimcht(\tdim,\xdim) - \Tinfdim}{\Tidim - \Tinfdim},
\end{align}
we can write the non-dimensional incompressible Navier-Stokes equations,
\begin{subequations}\label{eq:cht_nondim_vp}
\begin{alignat}{3}
    r_1(r_2)^{-1}\frac{\partial \vndcht}{\partial \tnd} + \Reynolds\Prandtl(\vndcht\cdot \delnd) \vndcht &= -\Prandtl\delnd \pndcht + \Prandtl\Deltand \vndcht &\quad& \text{in } \Omegafnd,\label{eq:ns_nondim_1}\\
    \delnd \cdot \vndcht &= 0 &\quad& \text{in } \Omegafnd, \label{eq:ns_nondim_2}\\
    \vndcht &= \bm{e}_x &\quad& \text{on } \pOmegain, \label{eq:ns_nondim_3}\\
    \vndcht &= \bm{0} &\quad& \text{on } \Gamma, \label{eq:ns_nondim_4}\\
    -\pndcht\bm{n} + (\delnd \vndcht+(\delnd \vndcht)^T)\cdot\bm{n} &=0 &\quad& \text{on } \pOmegaout, \label{eq:ns_nondim_5}\\ 
    \vndcht(\tnd=0, \cdot) &= \bm{e}_x &\quad& \text{in } \Omegafnd, \label{eq:ns_nondim_6}
\end{alignat}
\end{subequations}
and the non-dimensional advection-diffusion equation and heat equation,
\begin{subequations}\label{eq:cht_nondim_T}
\begin{alignat}{3}
    r_1 \frac{\partial \ufndcht}{\partial \tnd} + \Reynolds\Prandtl r_2\vndcht \cdot \delnd \ufndcht &= r_2\bm{\Delta} \ufndcht &\quad& \text{in } \Omegafnd, \label{eq:cht_nondim_1}\\
    \sigma \frac{\partial \usndcht}{\partial \tnd} &= \delnd\cdot(\kappa \delnd \usndcht) &\quad& \text{in } \Omegasnd, \label{eq:cht_nondim_2}\\
    \ufndcht &= 0 &\quad& \text{on } \pOmegain, \label{eq:cht_nondim_3}\\
    \ufndcht &= \usndcht &\quad& \text{on } \Gamma, \label{eq:cht_nondim_4}\\
    r_2 \delnd \ufndcht \cdot \bm{n} &= \kappa \delnd \usndcht \cdot \bm{n} &\quad& \text{on } \Gamma, \label{eq:cht_nondim_5} \\
    \ufndcht(\tnd=0, \cdot) &= 0 &\quad& \text{in } \Omegafnd, \label{eq:cht_nondim_6}\\
    \usndcht(\tnd=0, \cdot) &= 1 &\quad& \text{on } \Omegasnd. \label{eq:cht_nondim_7}
\end{alignat}
\end{subequations}
The QoI is the non-dimensional weighted average temperature in the solid domain, defined as
\begin{align}
    \usavgcht(\tnd) \coloneqq \frac{1}{\int_{\Omegasnd} \sigma(\xnd)}\int_{\Omegasnd} \sigma(\xnd) {\usndcht}(\tnd,\xnd) = \dashint_{\Omegasnd} \sigma(\xnd) {\usndcht}(\tnd,\xnd), \label{eq:uavg}
\end{align}
where we used~\cref{eq:kappa_sigma_2} in the last equality.

The non-dimensional formulation has six parameters: 
\begin{itemize}
    \item Two dimensionless groups characterizing the flow: $\Reynolds$ and $\Prandtl$,
    \item Two fluid-to-solid thermophysical property ratios: $r_1$ and $r_2$,
    \item Two non-dimensional solid properties: $\kappa(\xnd)$ and $\sigma(\xnd)$.
\end{itemize}

\subsubsection{Heat transfer coefficient, Nusselt number, and Biot number}
We can associate to our CHT solution a \textit{heat transfer coefficient} to characterize the heat exchange between fluid and solid. It is defined on the interface $\Gammadim$ as the heat flux normalized by the difference in far-field temperature and interface temperature,
\begin{align}
\hcht(\tdim,\xdim) \coloneqq - \frac{\kfdim\deldim \Tfdimcht \cdot \bm{n}}{\Tfdimcht - \Tinfdim}.
\label{eq:htc_dim}
\end{align}
The larger the heat transfer coefficient, the more heat is transferred, per unit temperature difference, between fluid and solid. Inserting the non-dimensional temperature field, we can rewrite the heat transfer coefficient in~\cref{eq:htc_dim} as
\begin{align*}
    \hcht(\tdim,\xdim) = -\frac{\kfdim}{\elldim} \frac{\delnd \ufndcht \cdot \bm{n}}{\ufndcht},
\end{align*}
and define its non-dimensional counterpart, the \textit{Nusselt number},
\begin{align}
    \Nucht(\tnd,\xnd) \coloneqq \frac{\hcht(\tnd,\xnd) \elldim}{\kfdim} = -\frac{\delnd\ufndcht\cdot\bm{n}}{\ufndcht}, \label{eq:local_nusselt}
\end{align}
which is of central importance and generally depends on all six non-dimensional parameters, i.e.,
\begin{align*}
    \Nucht(\tnd,\xnd) = \Nucht(\tnd,\xnd; \Reynolds, \Prandtl, r_1, r_2, \kappa(\xnd), \sigma(\xnd)).
\end{align*}
For many engineering applications, the space-time averaged heat transfer coefficient or Nusselt number is of interest,
\begin{align}
    \hstavgcht &\coloneqq \dashint_{\Gammadim} \dashint_{0}^{\tfdim} \hcht(\tdim,\xdim), \label{eq:cht_htc} \\
    \Nustavgcht &\coloneqq \frac{\hstavgcht\elldim}{\kfdim} = \dashint_{\Gamma} \dashint_{0}^{\tf} \Nucht(\tnd,\xnd) = \dashint_{\Gamma} \dashint_{0}^{\tf} \left(- \frac{\delnd \ufndcht \cdot \bm{n}}{\ufndcht}\right). \label{eq:cht_nusselt}
\end{align}
We further define the spatial and temporal average of the Nusselt number with,
\begin{align*}
    \Nutavgcht(\xnd) &\coloneqq \dashint_{0}^{\tf} \Nucht(\tnd,\xnd) \\
    \Nuavgcht(\tnd) &\coloneqq \dashint_{\Gamma} \Nucht(\tnd,\xnd).
\end{align*}
Another important engineering quantity is the non-dimensional \textit{Biot number}, which is a constant scalar and closely related to the space-time averaged Nusselt number, defined as
\begin{align}
    \Bcht &\coloneqq \frac{\hstavgcht\elldim}{\ksdim} = r_2 \Nustavgcht. \label{eq:cht_biot}
\end{align}
Of particular interest is the limit of small Biot number, which has several implications for the heat transfer behavior of the solid. This will be discussed in more detail in~\cref{subsec:rhea}. Lastly, we can define the \textit{variation function}
\begin{align}
    \etacht(\tnd,\xnd) \coloneqq \frac{\Nucht(\tnd,\xnd)}{\Nustavgcht}, \label{eq:cht_uniformity}
\end{align}
which is a measure for the variation of the heat transfer on the interface, with
\begin{align}
    \dashint_\Gamma \dashint_0^{\tf} \etacht = 1,
\end{align}
and $\etacht=1$ if $\Nucht$ does not depend on space and time; we also define the temporal average as
\begin{align*}
    \etabarcht(\xnd) \coloneqq \dashint_0^{\tf} \etacht(\tnd,\xnd).
\end{align*}
Note, that the variation function does not depend on the choice of length scale $\elldim$.

\subsubsection{Example: Cross-flow around a circular cylinder (CHT)}\label{subsec:example_cht}
We conclude this section with a numerical example: a cross-flow over a circular cylinder of diameter $\dm{D}$ in two dimensions with geometry shown in Figure~\ref{fig:cross_flow}. This example has already been briefly discussed in~\cref{subsubsec:intro_rhe}.
We choose the diameter of the cylinder as extrinsic length scale, i.e., $\elldim = \dm{D}$, and solve the non-dimensional system. The parameters considered are given in~\cref{tab:flow_params}, and the properties of the solid are assumed to be uniform, i.e., $\sigma(\xnd) = \kappa(\xnd) = 1$ for all $\xnd \in \Omegasnd$. The chosen values of $r_1$ and $r_2$ do not correspond to any specific fluid-solid pair; they are selected so that the CHT problem exhibits the qualitative behavior typical of common fluid-solid heat-transfer problems. The Prandtl number corresponds to air at standard conditions. We compute the CHT solution with the spectral element solver Nek5000~\cite{nek5000-web-page}, using 7th-order polynomials for spatial discretization and BDF3 for time stepping until $\tf=5.0$. We use an adaptive time step that maintains a CFL condition of around 0.4. Verification studies for the numerical results are provided in Appendix~\ref{subsec:convergence_study_cyl}.
\begin{figure}[ht]
\centering
\includegraphics[width=0.6\textwidth]{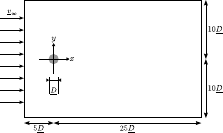}
\caption{Cross-flow over a circular cylinder of diameter $\dm{D}$ in two dimensions (not to scale). The flow enters the domain from the left, as illustrated in~\cref{fig:setup}.}
\label{fig:cross_flow}
\end{figure}
\begin{table}[ht]
\centering
\caption{Flow and material parameters used in the CHT example.}
\label{tab:flow_params}
\begin{tabular}{c c c c c}
\toprule
$\Prandtl$ & $\Reynolds$ & $r_1$ & $r_2$ \\
\midrule
0.71 & 143 & 0.0132 & 0.01085 \\
\bottomrule
\end{tabular}
\end{table}

A few snapshots of the temperature field around the cylinder are shown in~\cref{fig:disk_temp}. After initial boundary layer formation, the wake becomes instable and vortices start shedding. To analyze the results, three spatial locations on the cylinder surface $\Gamma$ are defined, see~\cref{fig:disk_0}: leading edge (LE), trailing edge (TE), and side edge (SE). The leading edge is the point where the flow first impinges on the cylinder, the trailing edge is the point where the flow leaves the cylinder, and the side edge is the point on the cylinder surface that is perpendicular to the flow direction.

\begin{figure}[ht]
\centering
\begin{subfigure}{0.19\textwidth}
    \centering
    \includegraphics[width=\linewidth]{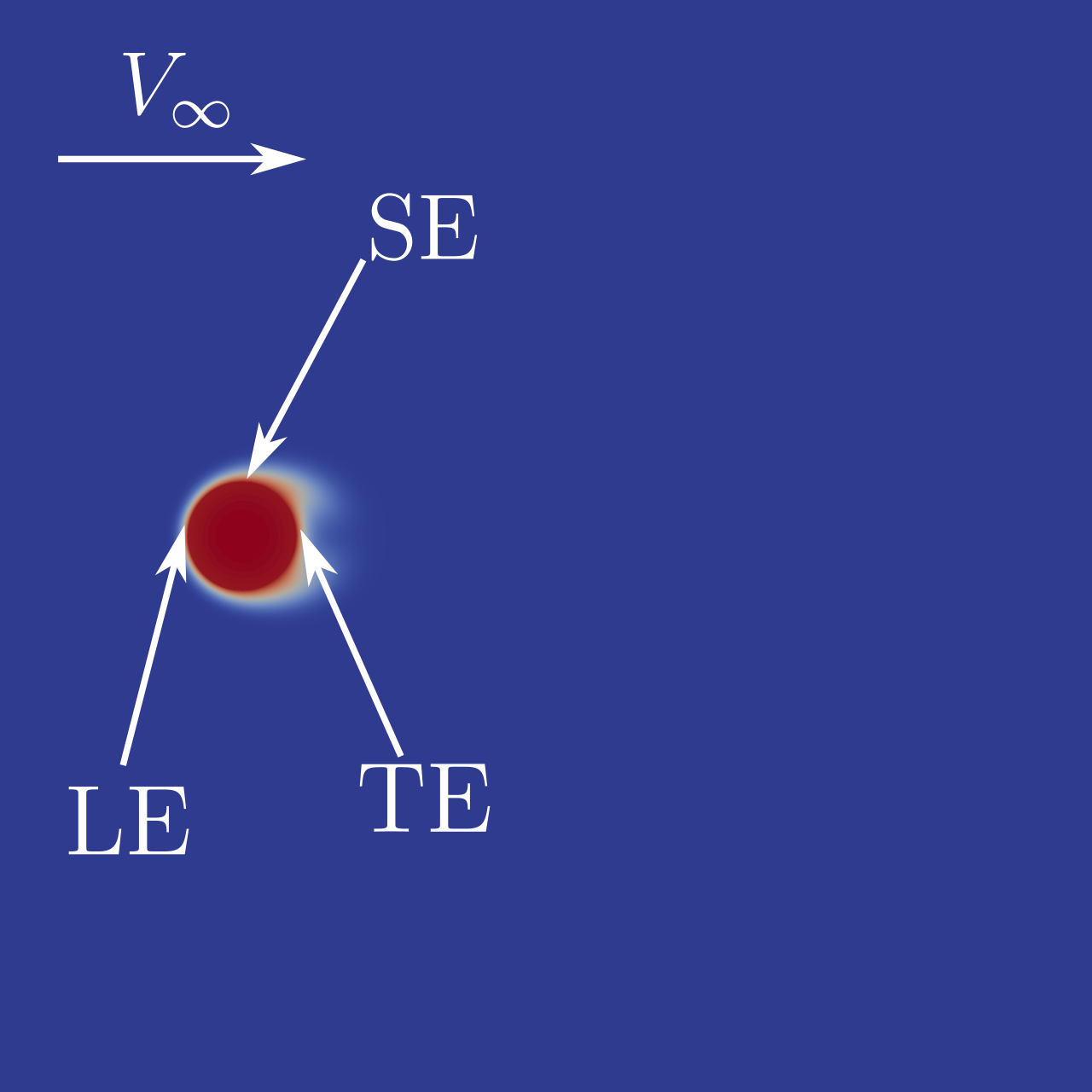}
    \caption{$\tnd=0.013$}
    \label{fig:disk_0}
\end{subfigure}
\hfill
\begin{subfigure}{0.19\textwidth}
    \centering
    \includegraphics[width=\linewidth]{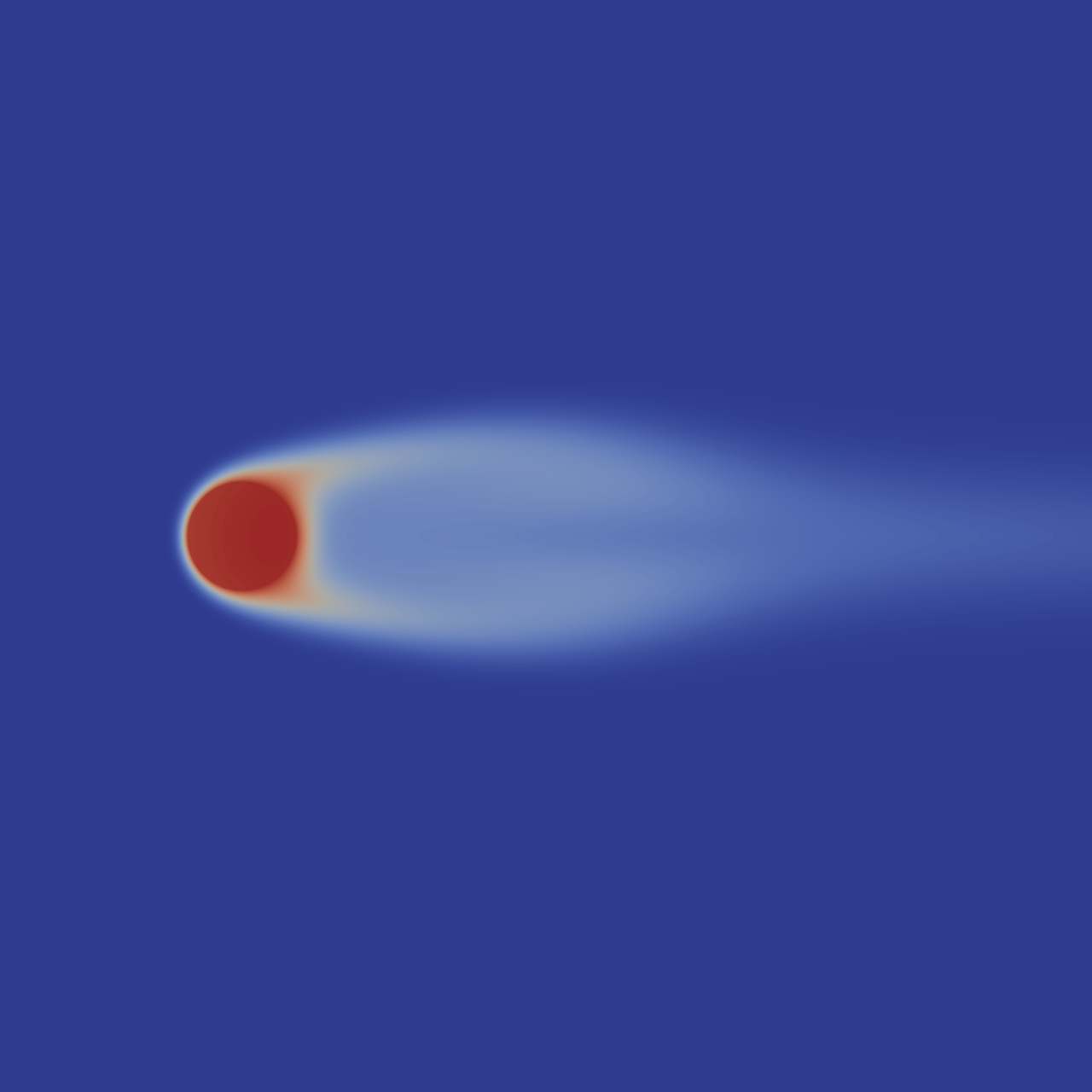}
    \caption{$\tnd=0.2$}
    \label{fig:disk_1}
\end{subfigure}
\hfill
\begin{subfigure}{0.19\textwidth}
    \centering
    \includegraphics[width=\linewidth]{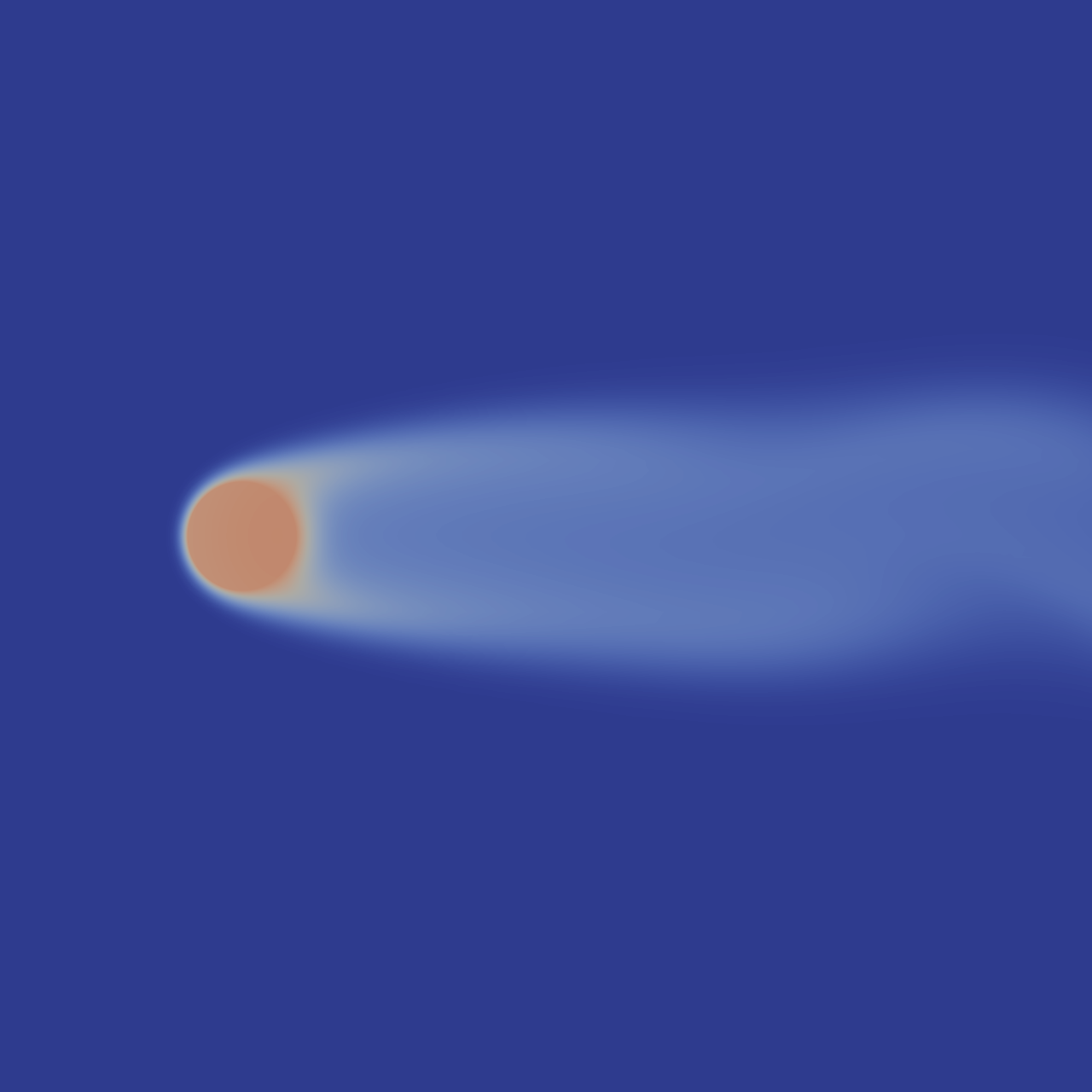}
    \caption{$\tnd=1.5$}
    \label{fig:disk_2}
\end{subfigure}
\hfill
\begin{subfigure}{0.19\textwidth}
    \centering
    \includegraphics[width=\linewidth]{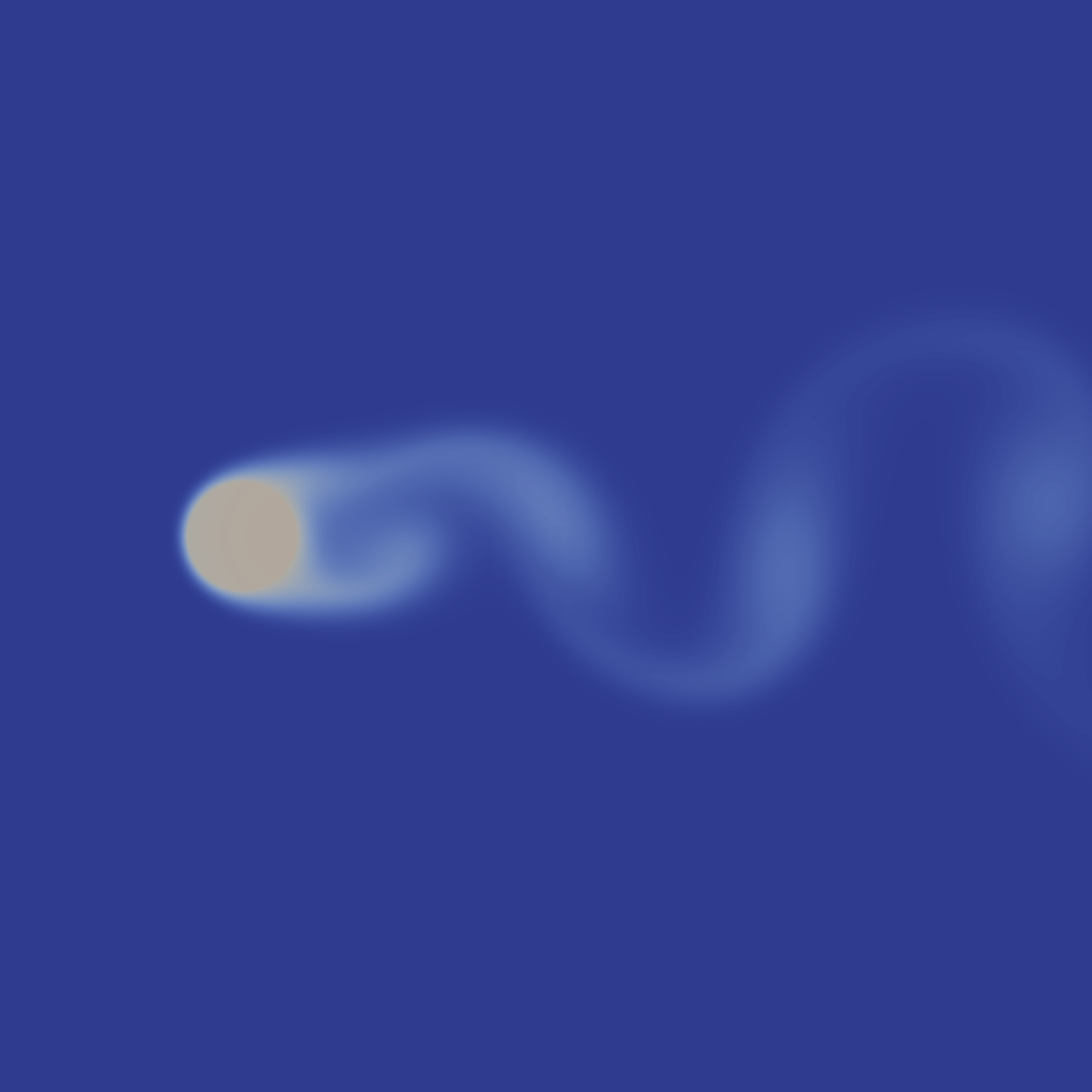}
    \caption{$\tnd=2.5$}
    \label{fig:disk_3}
\end{subfigure}
\hfill
\begin{subfigure}{0.07\textwidth}
    \centering
    \includegraphics[width=\linewidth]{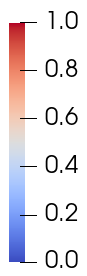}
    \label{fig:disk_colorbar}
\end{subfigure}
\caption{Temperature fields from the CHT solution at different time steps.}
\label{fig:disk_temp}
\end{figure}

The Nusselt numbers at LE, TE, SE are plotted over time in~\cref{fig:cht_nusselt_points}, together with the spatially averaged Nusselt number $\Nuavgcht$. We can make several observations:
\begin{itemize}
    \item \textbf{Temporal behavior:} The Nusselt number initially drops sharply for $\tnd \leq 0.1$, falling from essentially infinity to a finite value. This initial phase is dominated by pure conduction before boundary layers have developed and will be discussed in more detail in~\cref{subsubsec:time_stability}. For $0.1 \leq \tnd \leq 1.5$, thermal and velocity boundary layers develop, and a wake begins to form behind the solid. The Nusselt number stabilizes at around $\tnd \approx 0.3$. Beyond $\tnd \geq 1.5$, the wake becomes unstable and vortices start to shed, leading to an oscillatory Nusselt number, and a transition in the Nusselt number. However, despite the transition and the fluctuations, the magnitude of the spatially averaged Nusselt number, $\Nuavgcht$, remains nearly unchanged after $\tnd \approx 0.3$.

    \item \textbf{Spatial behavior:} The Nusselt number is highest at the leading edge (LE), where the flow first contacts the cylinder. The values at the side edge (SE) and trailing edge (TE) are significantly lower and exhibit fluctuations due to the alternating nature of vortex shedding.
\end{itemize}
\cref{fig:cht_nusselt_space} shows the Nusselt number as a function of the angular coordinate $\theta \in [-\pi, \pi)$, measured around the cylinder starting from the trailing edge (TE), at selected time steps. Initially, the Nusselt number is nearly uniform along the surface. As time progresses, spatial variations emerge due to boundary layer development. The temporally averaged Nusselt number $\Nutavgcht$ over the surface is also shown, highlighting the high spatial variation of the Nusselt number. The same data are shown as a polar plot in~\cref{fig:cht_Nu_polar}, which clearly highlights the spatial non-uniformity around the cylinder surface.
\begin{figure}[ht]
    \centering
    \begin{subfigure}[b]{0.48\textwidth}
        \centering
        \includegraphics[width=\textwidth]{figs_models/Nu_points_over_time_CHT.pdf}
        \caption{}
        \label{fig:cht_nusselt_points}
    \end{subfigure}
    \hfill
    \begin{subfigure}[b]{0.48\textwidth}
        \centering
        \includegraphics[width=\textwidth]{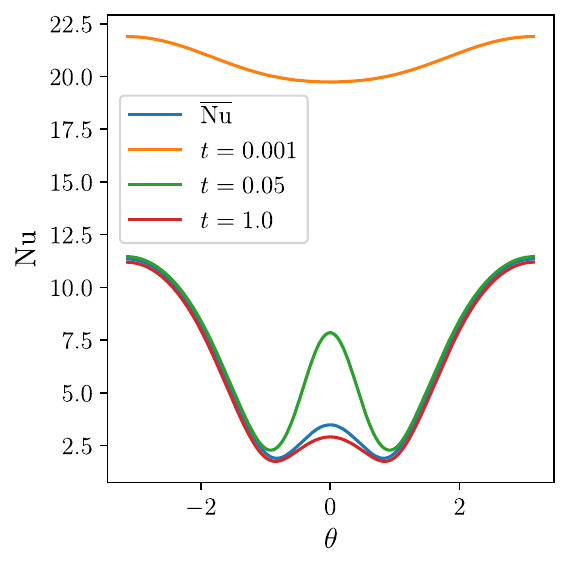}
        \caption{}
        \label{fig:cht_nusselt_space}
    \end{subfigure}
    \caption{(a) Nusselt number at LE, TE, SE and spatial average over time. (b) Nusselt number measured in angular coordinate $\theta$ at different time steps and temporal average. The superscripts ``CHT'' are omitted for brevity.}
    \label{fig:cht_Nu_uavg}
\end{figure}
\begin{figure}
    \centering
    \includegraphics[width=0.6\textwidth]{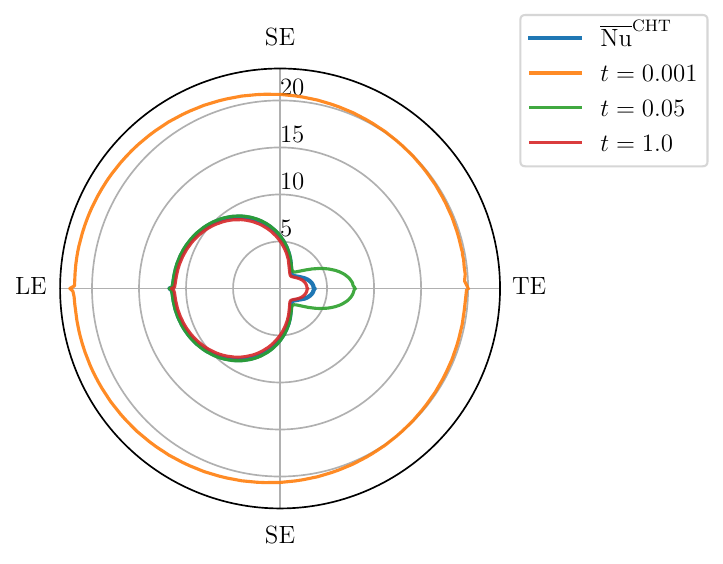}
    \caption{Nusselt number over time, and temporal average, presented as a polar plot.}
    \label{fig:cht_Nu_polar}
\end{figure}
\begin{figure}[ht]
    \centering
    \begin{subfigure}[b]{0.48\textwidth}
        \centering
        \includegraphics[width=\textwidth]{figs_models/usavg_time.pdf}
        \caption{Average temperature}
        \label{fig:disk_uavg}
    \end{subfigure}
    \hfill
    \begin{subfigure}[b]{0.48\textwidth}
        \centering
        \includegraphics[width=\textwidth]{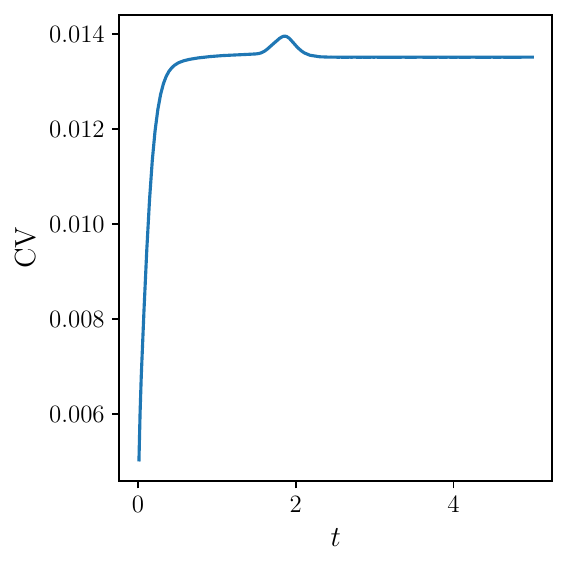}
        \caption{Coefficient of variation}
        \label{fig:disk_temp_uniformity}
    \end{subfigure}
    \caption{(a) Average temperature over time with CHT and exponential fit. (b) Coefficient of variation of the temperature over time.}
    \label{fig:disk_temperature}
\end{figure}

The average temperature, $\usavgcht(\tnd)$, is shown in~\cref{fig:disk_uavg}. Despite the unsteady and complex nature of the flow, the temperature decreases smoothly and monotonically over time. Its temporal evolution closely resembles an exponential decay. For interpretation, we can define an approximate solution using
\begin{align}
    \usavgexp(\tnd) \coloneqq \exp\left(-\frac{\tnd}{\taueq}\right), \label{eq:exp_fit}
\end{align}
where $\taueq$ represents the equilibration time constant, defined implicitly by $\usavgcht(\taueq) = \exp(-1)$. For the present case, we obtain $\taueq = 3.77$. The corresponding exponential approximation from~\cref{eq:exp_fit} is also shown in~\cref{fig:disk_uavg}, and it closely matches the CHT QoI.

During the time required for the Nusselt number to develop and stabilize (up to $\tnd \approx 0.3$), the average temperature remains nearly unchanged. This reveals a \emph{time scale separation} between the fluid and solid dynamics. The fluid is characterized by a convective time scale, defined in dimensional terms as $\tconvdim = \elldim / \vinfdim$, which under our non-dimensionalization (see~\cref{eq:tdiff}) becomes 
\begin{align}
    \tconvnondim \coloneqq \frac{\tconvdim}{\tdiffdim} = r_1 / (r_2 \Reynolds \Prandtl). \label{eq:tconv_nondim}
\end{align}
For the parameters considered, we obtain $\tconvnondim = 0.012$. In comparison, the equilibration time scale $\taueq$ is about 300 times larger, confirming a pronounced separation of fluid and solid time scales. More details on this will be provided in~\cref{subsec:lcm_time_scale}.

It is an interesting fact that the Nusselt number becomes stationary even though the temperature field itself is clearly time-dependent. The Nusselt number was defined in~\cref{eq:local_nusselt} as the heat flux at the interface normalized by the temperature difference between the solid surface and the far-field fluid. While the temperature field $\ufndcht$ continues to evolve over time (affecting the denominator), the normal heat flux $\delnd \ufndcht \cdot \bm{n}$ (the numerator) varies in a nearly proportional manner. As a result, the ratio remains approximately constant after the initial transient phase, leading to an effectively stationary Nusselt number.

\begin{figure}[t]
\centering
\begin{subfigure}{0.19\textwidth}
    \centering
    \includegraphics[width=\linewidth]{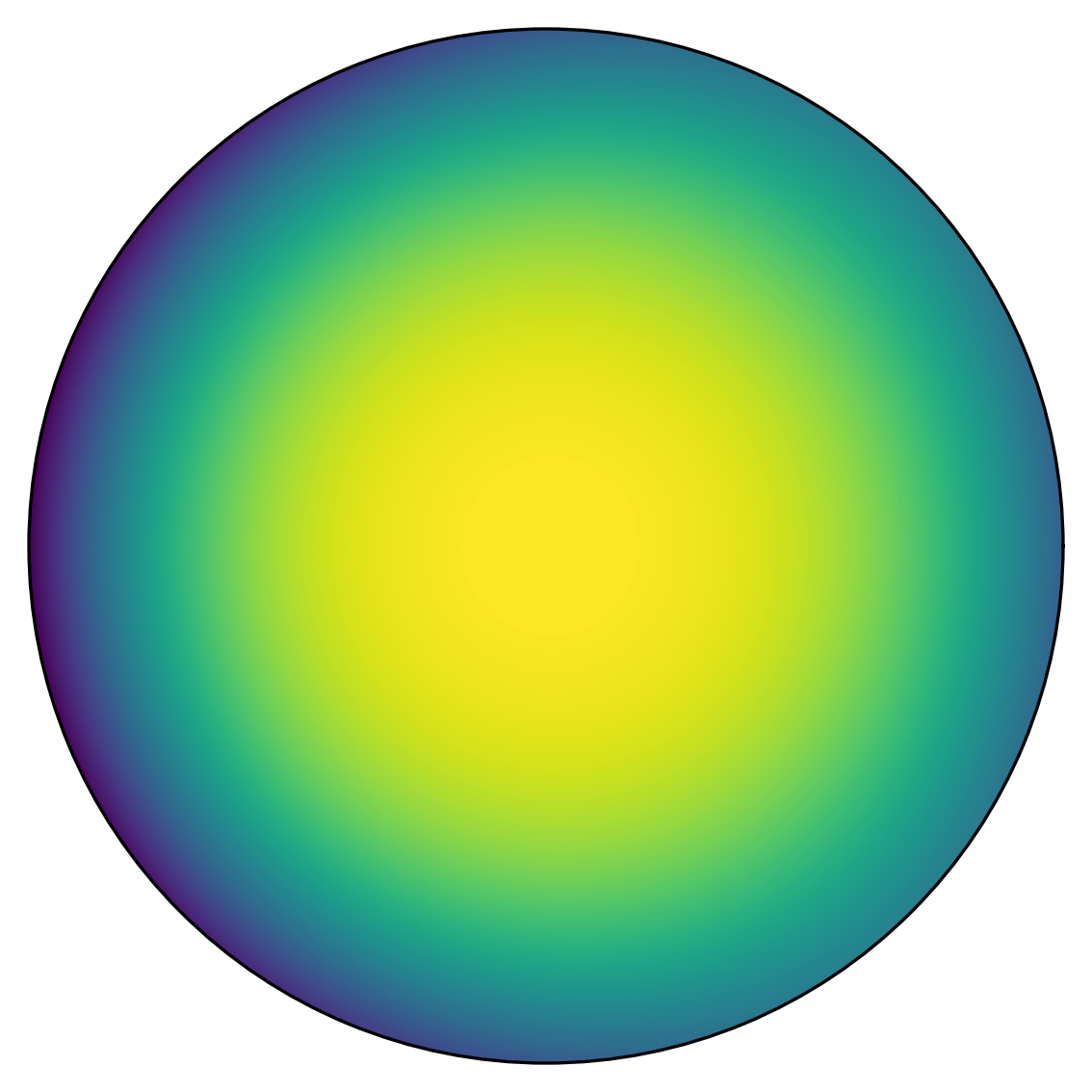}
    \caption{$\tnd=0.013$}
    \label{fig:disk_temp_0}
\end{subfigure}
\hfill
\begin{subfigure}{0.19\textwidth}
    \centering
    \includegraphics[width=\linewidth]{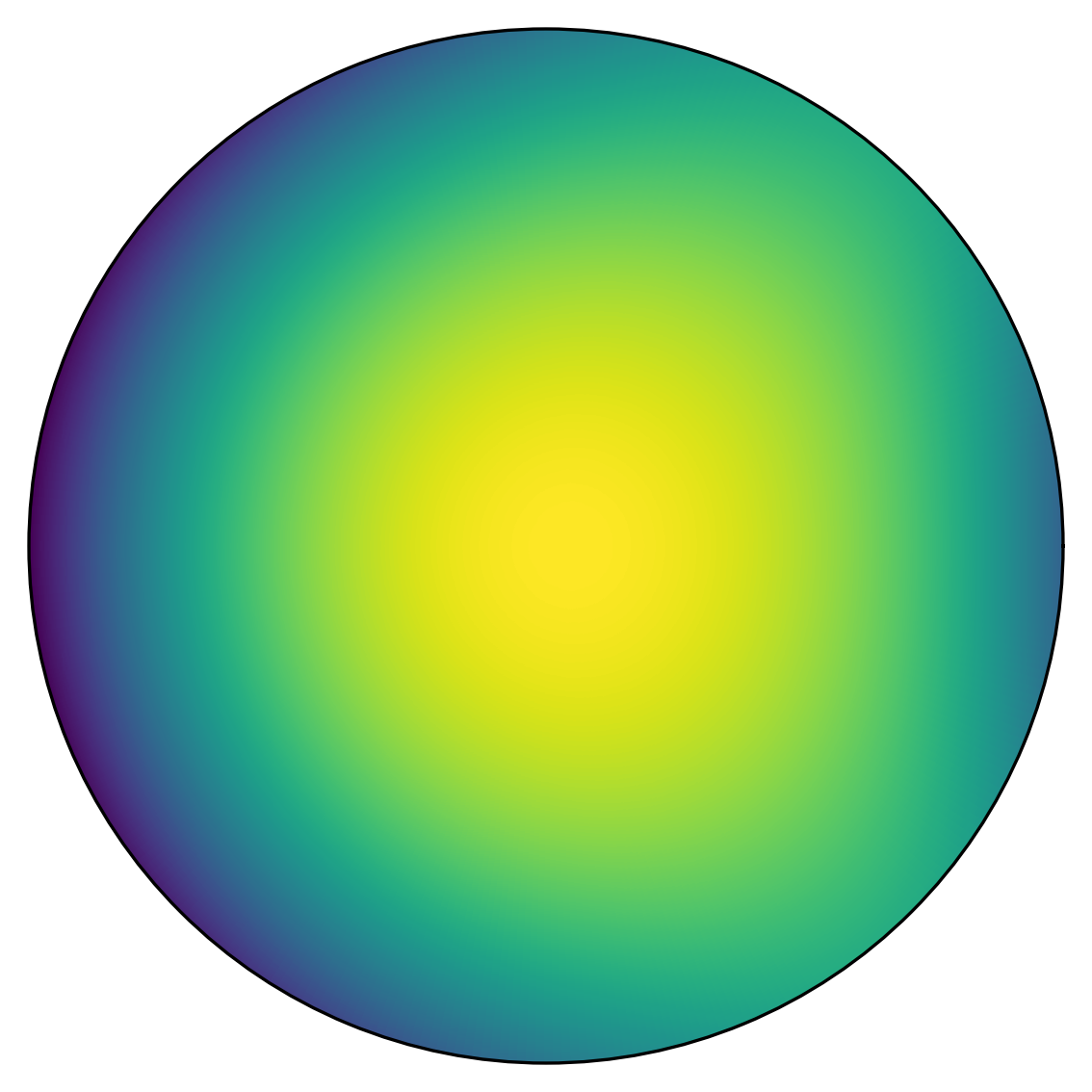}
    \caption{$\tnd=0.2$}
    \label{fig:disk_temp_1}
\end{subfigure}
\hfill
\begin{subfigure}{0.19\textwidth}
    \centering
    \includegraphics[width=\linewidth]{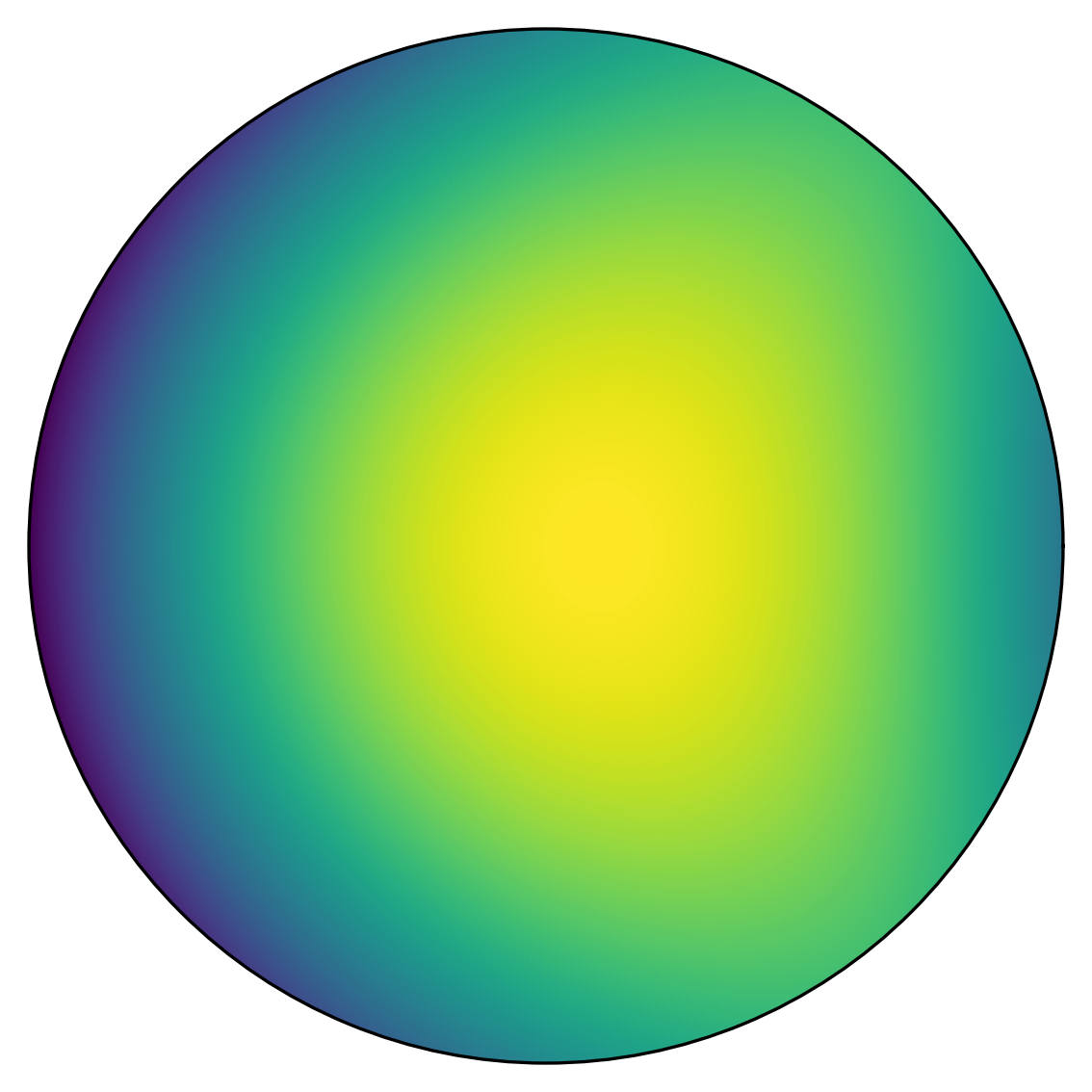}
    \caption{$\tnd=1.5$}
    \label{fig:disk_temp_2}
\end{subfigure}
\hfill
\begin{subfigure}{0.19\textwidth}
    \centering
    \includegraphics[width=\linewidth]{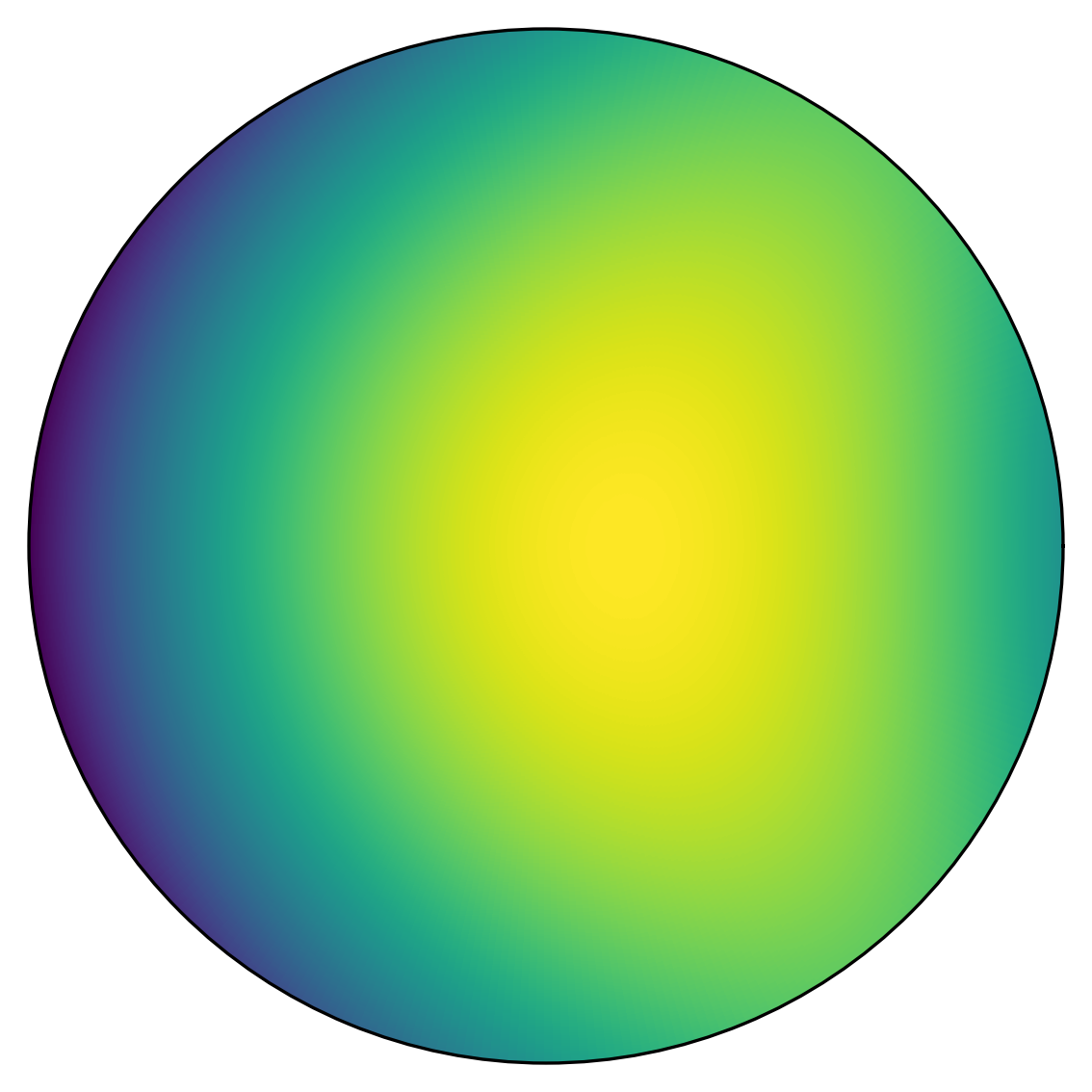}
    \caption{$\tnd=2.5$}
    \label{fig:disk_temp_3}
\end{subfigure}
\hfill
\begin{subfigure}{0.095\textwidth}
    \centering
    \includegraphics[width=\linewidth]{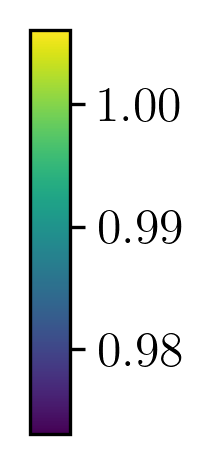}
    \label{fig:disk_temp_4}
\end{subfigure}
\caption{Temperature fields normalized by mean value at different time steps.}
\label{fig:disk_temp_solid}
\end{figure}

To study the spatial uniformity of the temperature field within the solid, we define the coefficient of variation (CV) as
\begin{align*}
    \text{CV}(\tnd) \coloneqq \frac{\sqrt{\displaystyle \dashint_{\Omegasnd} (\usndcht(\tnd,\xnd) - \usavgcht(\tnd))^2}}{\usavgcht(\tnd)}.
\end{align*}
Its value is plotted in~\cref{fig:disk_temp_uniformity}. We can see that the coefficient increases as the boundary layer develops, but remains relatively low for all times (below 0.015), indicating a fairly uniform temperature distribution within the solid. We can see a small peak around $\tnd=1.75$, which corresponds to the onset of vortex shedding. We also visualize the temperature field normalized by its mean value at a few time steps in~\cref{fig:disk_temp_solid}. We see that all values remain reasonably close to 1 (between 0.97 and 1.01), indicating only slight variations from the mean.

We can also quantify the uniformity of the temperature field by considering the ratio of the equilibration time scale to the diffusive time scale (equal to $1$ in our normalization). In this case, $\taueq / \tdiffnondim = 3.77$, indicating that the average temperature of the solid decays much more slowly than diffusion acts within it. As a result, internal gradients are rapidly smoothed out, and the temperature field remains essentially uniform.

Lastly, we evaluate the Biot number. We find $\Nustavgcht = 6.27$ and $\Bcht = r_2 \Nustavgcht = 0.0680$, which is reasonably small. As will be discussed in~\cref{subsec:rhea}, a \emph{small Biot number} implies a roughly uniform temperature distribution within the solid, which is consistent with our observations.

\begin{cmt}
    The periodic behavior of the Nusselt number observed in~\cref{fig:cht_nusselt_points} is a consequence of the highly idealized setup: a cross-flow over a two-dimensional circular cylinder. In three-dimensional configurations, the Nusselt number generally exhibits random and chaotic fluctuations for higher Reynolds numbers. Nevertheless, its spatially averaged value remains relatively stable over time. Three-dimensional illustrative examples are provided in~\cref{sec:nusselt}.
\end{cmt}

\subsection{\texorpdfstring{Autonomous Robin Heat Equation (RHE$_a$)}{Autonomous Robin Heat Equation (RHEa)}}\label{subsec:rhea}
Due to the high computational cost of solving the full CHT model and the extensive data requirements, particularly the spatial distributions of heterogeneous material properties, simplified formulations are often necessary. One such model is the \textit{autonomous Robin heat equation} (RHE$_a$), which assumes a time-independent heat transfer coefficient—or equivalently, a time-independent Nusselt number—hence the term “autonomous”. In this section, we first show that the CHT model can be equivalently recast as a Robin boundary value problem posed solely on the solid domain. This equivalence forms the foundation for both the subsequent model simplifications and the error analysis presented in~\cref{sec:error}. We then simplify the problem to obtain the RHE$_a$, derive its weak formulation, and revisit the numerical example from the previous section.

\subsubsection{Robin heat equation}\label{subsec:rhe}
To motivate a Robin heat equation for the solid, we begin by rearranging the boundary condition~\cref{eq:cht_nondim_5},
\begin{align}
    \kappa \delnd \usndcht \cdot \bm{n} - r_2 \delnd \ufndcht \cdot \bm{n} &= 0 \quad \text{on } \Gamma. \label{eq:cht_nondim_5_rearranged}
\end{align}
We focus on the left-hand side and rewrite it step by step:
\begin{align*}
    \kappa \delnd \usndcht \cdot \bm{n} - r_2 \delnd \ufndcht \cdot \bm{n} &= \kappa \delnd \usndcht \cdot \bm{n} - r_2 \delnd \ufndcht \cdot \bm{n} \frac{\usndcht}{\usndcht} \\
    &= \kappa \delnd \usndcht \cdot \bm{n} + r_2 \underbrace{\left(-\frac{\delnd \ufndcht \cdot \bm{n}}{\ufndcht}\right)}_{=\Nucht}\usndcht \\
    &= \kappa \delnd \usndcht \cdot \bm{n} + r_2 \Nucht \usndcht,
\end{align*}
where we used~\cref{eq:cht_nondim_4} and the definition of the local Nusselt number in~\cref{eq:local_nusselt}.

Using now~\cref{eq:cht_nusselt,eq:cht_biot,eq:cht_uniformity}, we can express the the result in terms of the Biot number and the variation function:
\begin{alignat*}{3}
    \kappa \delnd \usndcht \cdot \bm{n} + r_2 \Nucht \usndcht &= \kappa \delnd \usndcht \cdot \bm{n} + r_2 \Nucht \usndcht \frac{\Nustavgcht}{\Nustavgcht} \\
    &= \kappa \delnd \usndcht \cdot \bm{n} + \underbrace{r_2 \Nustavgcht}_{=\Bcht} \underbrace{\frac{\Nucht}{\Nustavgcht}}_{=\etacht(\tnd,\xnd)}\usndcht \\
    &= \kappa \delnd \usndcht \cdot \bm{n} + \Bcht\etacht(\tnd,\xnd)\usndcht.
\end{alignat*}
Combining this expression with~\cref{eq:cht_nondim_5_rearranged}, we finally obtain
\begin{align}
    \kappa \delnd \usndcht \cdot \bm{n} + \Bcht\etacht(\tnd,\xnd)\usndcht = 0 \quad \text{on } \Gamma. \label{eq:cht_robinbc}
\end{align}
\cref{eq:cht_robinbc} represents a Robin boundary condition on the interface $\Gamma$ with spatially and temporally varying Robin coefficient $\Bcht\etacht(\tnd,\xnd)$. Along with~\cref{eq:cht_nondim_2,eq:cht_nondim_7}, we can write the following boundary value problem for the solid domain only:
\begin{subequations}\label{eq:rhe_nondim_T_cht}
\begin{alignat}{3}
    \sigma \frac{\partial \usnd}{\partial \tnd} &= \delnd\cdot(\kappa \delnd \usnd) &\quad& \text{in } \Omegas, \label{eq:rhe_nondim_T_cht_1}\\
    \kappa \delnd \usnd \cdot \bm{n} + \Bcht\etacht(\tnd,\xnd)\usnd &= 0 &\quad& \text{on } \Gamma, \label{eq:rhe_nondim_T_cht_2}\\
    \usnd(\tnd=0, \cdot) &= 1 &\quad& \text{in } \Omegas.\label{eq:rhe_nondim_T_cht_3}
\end{alignat}
\end{subequations}
The solution to~\cref{eq:rhe_nondim_T_cht} is equivalent to the temperature solution of the CHT model restricted to the solid domain, i.e., $\usnd\equiv\usndcht$. The effect of the fluid is fully captured by the Biot number $\Bcht$ and the variation function $\etacht(\tnd,\xnd)$.

We can now pose the abstract non-dimensional \textit{Robin heat equation} (RHE) for the solid:
\begin{subequations}\label{eq:rhe_nondim_T}
\begin{alignat}{3}
    \sigma \frac{\partial u}{\partial \tnd} &= \delnd\cdot(\kappa \delnd u) &\quad& \text{in } \Omega, \label{eq:rhe_nondim_T_1}\\
    \kappa \delnd u \cdot \bm{n} + \Biot\eta(\tnd,\xnd)u &= 0 &\quad& \text{on } \pOmega, \label{eq:rhe_nondim_T_2}\\
    u(\tnd=0, \cdot) &= 1 &\quad& \text{in } \Omega.\label{eq:rhe_nondim_T_3}
\end{alignat}
\end{subequations}
As the formulation no longer involves a fluid domain, we do not use any subscripts, and we use $\pOmega$ to denote the fluid-solid interface. The effect of the fluid is modeled through the Biot number $\Biot$ and the variation function $\eta(\tnd,\xnd)$, which are now inputs to the RHE, and can be chosen arbitrarily. As before, we choose the weighted average temperature as QoI,
\begin{align}
    \uavg(\tnd) &\coloneqq \dashint_{\Omega} \sigma u(\tnd,\xnd). \label{eq:urhea_avg}
\end{align}
Comparing~\cref{eq:rhe_nondim_T} with~\cref{eq:rhe_nondim_T_cht}, we see that their solutions become identical for $\Biot = \Bcht$ and $\eta = \etacht$, i.e., $u \equiv \usnd\equiv\usndcht$. Consequently, the average temperatures are also identical, $ \uavg\equiv\usavgcht$.

Looking at the Robin boundary condition in~\cref{eq:rhe_nondim_T_2}, we can now understand the \emph{small-Biot} number limit: as $\Biot \to 0$, the Robin boundary condition reduces to a homogeneous Neumann condition ($\kappa\delnd u \cdot \bm{n} = 0$), corresponding to vanishing interfacial heat flux. Consequently, temperature gradients are small, and the field is nearly uniform.

\subsubsection{Time scale considerations and simplification}\label{subsec:time_homogenization}
In the example presented in~\cref{subsec:example_cht}, we observed a \emph{time scale separation} between fluid and solid: while the Nusselt number was essentially fully developed—in the sense that its temporal variations had become small—the solid temperature had not changed significantly. It was also observed that the Nusselt number was spatially non-uniform, see~\cref{fig:cht_Nu_polar}. This motivates the definition of the \emph{autonomous Robin heat equation} (RHE$_a$), which assumes a time-independent but spatially varying Robin coefficient:
\begin{subequations}\label{eq:rhea_nondim_T}
\begin{alignat}{3}
    \sigma \frac{\partial \utilde}{\partial \tnd} &= \delnd\cdot(\kappa \delnd \utilde) &\quad& \text{in } \Omega, \label{eq:rhea_nondim_T_1}\\
    \kappa \delnd \utilde \cdot \bm{n} + \Biot\etabar(\xnd)\utilde &= 0 &\quad& \text{on } \pOmega, \label{eq:rhea_nondim_T_2}\\
    \utilde(\tnd=0, \cdot) &= 1 &\quad& \text{in } \Omega,\label{eq:rhea_nondim_T_3}
\end{alignat}
\end{subequations}
where $\etabar(\xnd)$ is now a time-independent variation function, and the QoI is defined as
\begin{align*}
    \uavgtilde(\tnd) &\coloneqq \dashint_{\Omega} \sigma \utilde(\tnd,\xnd).
\end{align*}
As there is no longer an equivalence between CHT and RHE$_a$, we adorn the temperature field with a tilde to distinguish it from the RHE and CHT solutions, and to emphasize that it is an approximation to the true temperature field (even if $\Biot=\Bcht$).

If the true $\Bcht$ and $\etacht(\tnd,\xnd)$ (and, hence, $\Nucht$) were known, the natural choice would be to take $\Biot = \Bcht$ and $\etabar(\xnd)$ as the time average of the true variation function, i.e., 
\begin{align}
    \etabar(\xnd) = \etabarcht(\xnd) = \dashint_0^{\tf} \etacht(\tnd,\xnd), \quad \dashint_{\pOmega} \etabar(\xnd) = 1. \label{eq:time_avg_eta}
\end{align}
This way, a good approximation of the true Nusselt number, $\Nucht$, is typically achieved, i.e., $\Biot \etabar \approx \Bcht \etacht = r_2 \Nucht$. This is illustrated in~\cref{fig:cht_nusselt_points_tavg}, where we replot~\cref{fig:cht_nusselt_points} (cross-flow over a cylinder in 2D) with the temporal average at each point shown as dashed lines. Notice the stiff behavior during short times: in~\cref{subsubsec:time_stability}, we provide an analysis on its effect on the average temperature. The temporal averaging can be interpreted as a time homogenization procedure that averages over fast fluctuations. In Appendix~\ref{sec:time_homogenization_rhe}, we demonstrate for a model Robin problem---no initial transient, and a sinusoidal Robin coefficient in time and independent of space---that the optimal approximation is, not unexpectedly, given by the time average of the Robin coefficient.
\begin{figure}[ht]
    \centering
    \includegraphics[width=0.5\textwidth]{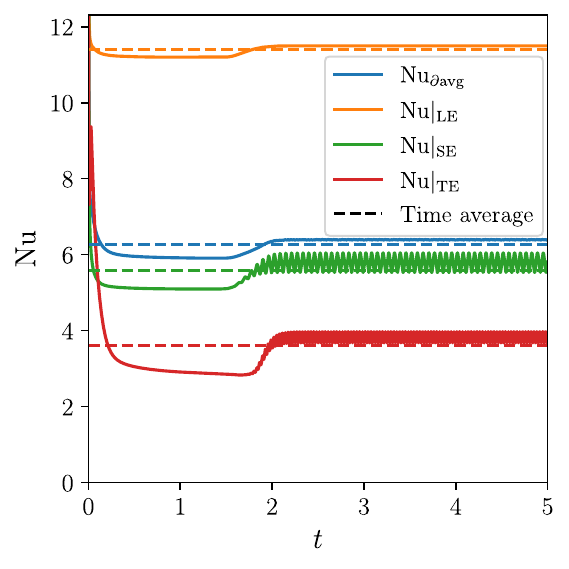}
    \caption{A replot of~\cref{fig:cht_nusselt_points}, but now with temporal averages in dashed lines.}
    \label{fig:cht_nusselt_points_tavg}
\end{figure}

The autonomous formulation offers several advantages, including a corresponding time-independent eigenvalue problem, presented and utilized for the derivation of error estimates in~\cref{subsubsec:cht_lcm_error}, and a more efficient numerical solution. Compared to the full RHE, the RHE$_a$ is more practical, as it requires only the temporally averaged Nusselt number $\Nutavg(\xnd)$ (from which $\Biot$ and $\etabar$ are derived; see~\cref{eq:cht_biot,eq:cht_uniformity,eq:time_avg_eta}), which is easier to estimate than $\Nusselt(\tnd,\xnd)$. However, even $\Nutavg(\xnd)$ remains challenging to determine in practice due to its spatial variation, and the thermophysical properties $\kappa$ and $\sigma$ are often unknown without destructive testing.

\subsubsection{Weak formulation}\label{subsec:rhea_weak}
We next formulate the weak form of the RHE$_a$, which serves both as the basis (i) for theoretical purposes, and for rigorous treatment of heterogeneous materials, and (ii) for the foundation for finite element spatial discretization.

The weak formulation of~\cref{eq:rhea_nondim_T} is: find $\utilde(\tnd) \in H^1(\Omega)$ such that
\begin{align*}
    \int_{\Omega} \sigma \, \frac{\partial \utilde}{\partial \tnd} \, v
    + \int_{\Omega} \kappa \, \delnd \utilde \cdot \delnd v
    + \Biot \int_{\pOmega} \etabar \, \utilde \, v
    = 0, \quad \forall v \in H^1(\Omega),
\end{align*}
subject to initial condition $\utilde(\tnd=0, \cdot) = 1$. It is assumed that $\tf, \Biot > 0$, and
\begin{alignat}{3}
    \sigma &\in L^\infty(\Omega), &&\quad \sigma(\xnd) \geq \sigma_{\min} > 0 \text{ for } \xnd \in \Omega, &&\quad \dashint_\Omega \sigma = 1; \label{eq:sigma_prop}\\
    \kappa &\in L^\infty(\Omega), &&\quad \essinf_{\xnd \in \Omega} \kappa(\xnd) = 1; \label{eq:kappa_prop}\\
    \etabar &\in L^r(\pOmega), \; r \geq 1, &&\quad \etabar(\xnd) \geq 0 \text{ for } \xnd \in \pOmega, &&\quad \dashint_{\pOmega} \etabar(\xnd) = 1. \label{eq:eta_prop}
\end{alignat}
For $w, v \in H^1(\Omega)$, we can introduce the following bilinear forms:
\begin{align}
    a_0(w, v; \kappa) &\coloneqq \int_{\Omega} \kappa \, \delnd w \cdot \delnd v, \label{eq:a0}\\
    a_1(w, v; \etabar) &\coloneqq \int_{\pOmega} \etabar \, w \, v, \label{eq:a1}\\
    a(w, v; \kappa, \Biot, \etabar) &\coloneqq a_0(w, v; \kappa) + \Biot a_1(w, v; \etabar), \label{eq:a}\\
    m(w, v; \sigma) &\coloneqq \int_{\Omega} \sigma \, w \, v. \label{eq:m}
\end{align}
The weak form can then be restated as: given the problem data $\{\Omega, \tf, \kappa, \sigma, \Biot, \etabar \}$, find $\utilde(\cdot,\tnd) \in H^1(\Omega)$ for $\tnd\in(0,\tf]$ such that
\begin{align}
    m\left(\frac{\partial \utilde}{\partial \tnd}, v; \sigma\right) + a(\utilde, v; \kappa, \Biot, \etabar) &= 0, \quad \forall v \in H^1(\Omega), \label{eq:rhe_weak}
\end{align}
subject to initial condition $\utilde(\tnd=0, \cdot) = 1$. The QoI can be expressed as
\begin{align}
    \uavgtilde(\tnd) &= m(\utilde(\cdot,\tnd), 1; \sigma). \label{eq:rhe_qoi}
\end{align}

\subsubsection{\texorpdfstring{Example: Cross-flow around a circular cylinder ($\mathrm{RHE}_a$)}{Example: Cross-flow around a circular cylinder (RHEa)}} \label{subsec:example_rhea}
We revisit the example introduced in~\cref{subsec:example_cht}, and solve it with RHE$_a$. The Biot number $\Biot = 0.068$ and the temporally averaged variation function $\etabar$ are extracted from the CHT solution, with $\etabar$ shown in~\cref{fig:disk_eta}. The simulation is performed using DOLFINx~\cite{baratta2023dolfinx}, with a BDF2 time-stepping scheme, a fixed timestep of $\Delta \tnd = 0.0025$, and a total of 2000 timesteps. The resulting average temperature obtained from the RHE$_a$ model is shown in~\cref{fig:disk_uavg_rhe}, alongside the CHT result. The two curves show excellent agreement, confirming the time scale separation.

\begin{figure}[ht]
    \centering
    \begin{subfigure}[b]{0.48\textwidth}
        \centering
        \includegraphics[width=\textwidth]{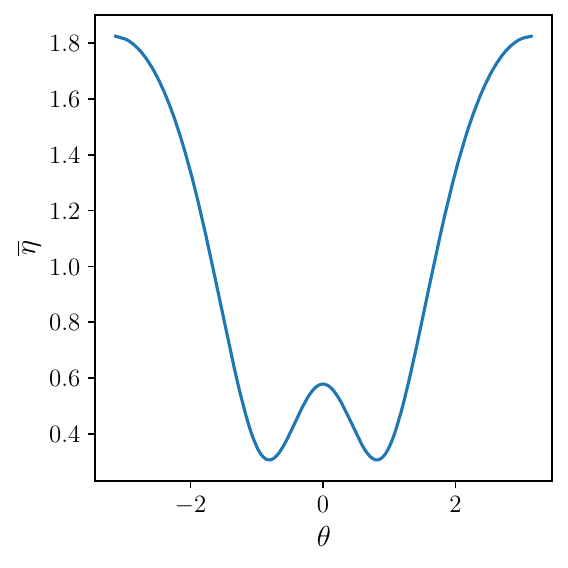}
        \caption{Time-averaged variation}
        \label{fig:disk_eta}
    \end{subfigure}
    \hfill
    \begin{subfigure}[b]{0.48\textwidth}
        \centering
        \includegraphics[width=\textwidth]{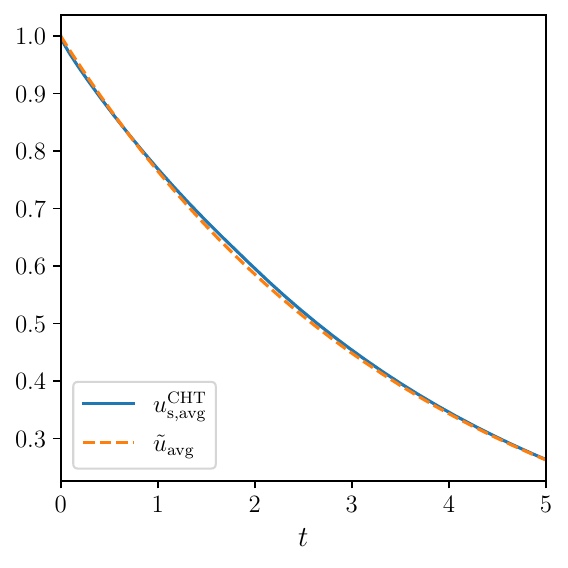}
        \caption{Average temperature}
        \label{fig:disk_uavg_rhe}
    \end{subfigure}
    \caption{(a) Time-averaged variation $\etabarcht$ computed with CHT. (b) Average temperature over time with CHT and RHE$_a$.}
    \label{fig:disk_rhe}
\end{figure}

\subsection{Isothermal Wall (ISO) Formulation}\label{subsec:iso}
To accurately approximate the CHT result with the RHE$_a$, an accurate estimate of the $\Biot$ and $\etabar$ is crucial. One way to obtain these quantities is through the isothermal wall (ISO) formulation. While the RHE$_a$ model represents a solid-side reduction of the CHT problem, the ISO formulation can be viewed as its fluid-side counterpart: it models the solid as an isothermal boundary and solves only the fluid equations. 

This approximation is valid only when there is a clear \emph{time scale separation} between fluid and solid dynamics, and the \emph{Biot number is small}. In the small-Biot limit, the Robin boundary condition in~\cref{eq:cht_robinbc} approaches a Neumann condition, implying that the temperature within the solid remains nearly uniform. This uniformity was empirically demonstrated in~\cref{subsec:example_cht}, particularly in~\cref{fig:disk_temp_uniformity}, where the coefficient of variation remained below 1.5\% throughout the simulation. In the same example, the solid temperature evolved much more slowly than the fluid field. This motivates the assumption of a uniform and quasi-stationary solid temperature adopted in the ISO formulation.

The ISO formulation remains computationally expensive, as it still requires solving the Navier-Stokes equations. However, ISO avoids the disparate time scales present in CHT. Moreover, ISO serves as the foundation for subsequent approximations of the Nusselt number—such as empirical correlations, discussed later in~\cref{sec:nusselt}—which are considerably less expensive to evaluate.

\subsubsection{Governing equations}
In the ISO formulation, the solid is no longer explicitly modeled but represented as an isothermal boundary condition for the fluid. The incompressible Navier-Stokes equations in~\cref{eq:cht_nondim_vp} remain unchanged and are used to compute the velocity and pressure fields in the fluid. The thermal problem, however, is simplified: an advection-diffusion equation is solved only in the fluid domain, subject to an isothermal boundary condition representing the solid. The governing equations for the temperature field are given in non-dimensional form below:
\begin{subequations}\label{eq:iso}
\begin{alignat}{3}
    r_1 \frac{\partial \ufiso}{\partial \tnd} + \Reynolds\Prandtl r_2\vndcht \cdot \delnd \ufiso &= r_2\Deltand \ufiso &\quad& \text{in } \Omegafnd, \label{eq:iso_1}\\
    \ufiso &= 0 &\quad& \text{on } \pOmegain, \label{eq:iso_2}\\
    \ufiso &= 1 &\quad& \text{on } \Gamma, \label{eq:iso_3}\\
    \ufiso(\tnd=0, \cdot) &= 0 &\quad& \text{in } \Omegafnd. \label{eq:iso_4}
\end{alignat}
\end{subequations}
We retain the same non-dimensionalization as in the CHT formulation to enable direct comparison. The resulting equations involve the parameters $\Reynolds$, $\Prandtl$, $r_1$, and $r_2$, while the solid-specific parameters $\kappa$ and $\sigma$ no longer appear. Since the solid domain is entirely absent in the ISO formulation, $r_1$ and $r_2$ act purely as scaling factors inherited from the original non-dimensionalization. If the equations were non-dimensionalized based on fluid properties, i.e., by redefining the diffusive time scale in~\cref{eq:tdiff} with $\kfdim$ and $\rhofdim\cfdim$, both $r_1$ and $r_2$ would equal one and disappear from the formulation. To make this clear, we define $\hat{t} = t\, r_2/r_1$ and $\hat{u}^{\mathrm{ISO}}_\mathrm{f}(\hat{t}, \cdot) = \ufiso(t, \cdot)$ and insert into~\cref{eq:iso_1} to obtain
\begin{align*}
    r_1 \frac{\partial \hat{u}^{\mathrm{ISO}}_\mathrm{f}}{\partial \hat{t}} \frac{\partial\hat{t}}{\partial t} + \Reynolds\Prandtl r_2\vndcht \cdot \delnd \hat{u}^{\mathrm{ISO}}_\mathrm{f} = r_2 \Deltand \hat{u}^{\mathrm{ISO}}_\mathrm{f} \quad \text{in } \Omegafnd.
\end{align*}
As $\partial \hat{t}/\partial t = r_2/r_1$, the $r_1$ cancels out and we can divide through by $r_2$ to obtain
\begin{align*}
    \frac{\partial \hat{u}^{\mathrm{ISO}}_\mathrm{f}}{\partial \hat{t}} + \Reynolds\Prandtl \vndcht \cdot \delnd \hat{u}^{\mathrm{ISO}}_\mathrm{f} = \Deltand \hat{u}^{\mathrm{ISO}}_\mathrm{f} \quad \text{in } \Omegafnd.
\end{align*}

\subsubsection{Nusselt number and Biot number}
Given a geometry, Reynolds and Prandtl numbers, the ISO problem in~\cref{eq:iso} can be solved to compute an approximation of the true Nusselt number, Biot number, and variation function. Analogously to the CHT formulation, these quantities can be defined on the interface $\Gamma$ within the ISO framework as follows:
\begin{align}
    \Nuiso(\tnd,\xnd) &\coloneqq -\frac{\delnd \ufiso \cdot \bm{n}}{\ufiso}, \label{eq:nusselt_iso}\\
    \Nustavgiso &\coloneqq \dashint_{\Gamma} \dashint_{0}^{\tf} \Nuiso(\tnd,\xnd), \label{eq:nusselt_iso_stavg}\\
    \Biotiso &\coloneqq r_2 \Nustavgiso, \label{eq:biot_iso}\\
    \etaiso(\tnd,\xnd) &\coloneqq \frac{\Nuiso(\tnd,\xnd)}{\Nustavgiso}, \label{eq:uniformity_iso}\\
    \etabariso(\xnd) &\coloneqq \dashint_0^{\tf} \etaiso(\tnd,\xnd). \label{eq:uniformity_iso_tavg}
\end{align}
As there is no solid domain in the ISO formulation, the Nusselt number (and its derived quantities) depend only on the Reynolds and Prandtl numbers, i.e.,
\begin{align*}
    \Nuiso(\tnd,\xnd) &= \Nuiso(\tnd,\xnd; \Reynolds, \Prandtl).
\end{align*}
The fluid-to-solid thermophysical ratios $r_1$ and $r_2$ and the solid properties $\kappa$ and $\sigma$ have no influence (the final time $\tf$ needs to be adapted accordingly with $r_1$ and $r_2$).
For small $r_1$ and $r_2$, the ISO formulation provides a satisfactory approximation of the CHT problem, i.e., $\Nuiso \approx \Nucht$, as has been empirically observed in many engineering applications. This will be examined in more detail in~\cref{subsubsec:iso_versus_cht}.

\subsubsection{Triangular system}
The ISO model provides a practical way to approximate the Nusselt number for use in the RHE$_a$ model, which can then be solved to estimate the solid temperature field. The procedure is as follows:
\begin{enumerate}
    \item Given a geometry, Reynolds and Prandtl numbers, solve the ISO problem in~\cref{eq:iso} for $\ufiso$.
    \item With $r_2$, compute $\Biotiso$ and $\etabariso$ using~\cref{eq:nusselt_iso,eq:nusselt_iso_stavg,eq:biot_iso,eq:uniformity_iso,eq:uniformity_iso_tavg}.
    \item Set $\Biot = \Biotiso$ and $\etabar(\xnd) = \etabariso(\xnd)$.
    \item With $\kappa$ and $\sigma$, solve the RHE$_a$ problem in~\cref{eq:rhe_weak} to obtain the solid temperature field $\utilde(\tnd,\xnd)$ and compute the average temperature $\uavgtilde(\tnd)$ with~\cref{eq:rhe_qoi}.
\end{enumerate}
We refer to this approach as the \emph{triangular system}, as it involves solving two decoupled problems in sequence: first ISO and then RHE$_a$. The triangular system is expected to be accurate when the assumptions of \emph{small-Biot} and \emph{time scale separation} hold.

\begin{cmt}
    Instead of solving RHE$_a$, one could also consider solving the full RHE with a time-dependent Robin coefficient $\Biotiso \etaiso(\tnd,\xnd)$. However, the gain in accuracy would be marginal, since time scale separation is assumed, implying $\etaiso(\tnd,\xnd) \approx \etabariso(\xnd)$. If the underlying CHT problem does not exhibit such separation, the triangular system cannot be expected to yield accurate results even with $\etaiso(\tnd,\xnd)$. Moreover, the computational cost would be significantly higher, as a time-dependent Robin coefficient leads to a varying left-hand side that must be reassembled at each timestep.
\end{cmt}

\subsubsection{Example: Cross-flow around a circular cylinder (ISO)}\label{subsec:example_iso}
We apply the triangular system to the cross-flow around a circular cylinder example introduced in~\cref{subsec:example_cht}. The ISO model is solved using Nek5000~\cite{nek5000-web-page} with the same parameters and numerical settings as for the CHT case in~\cref{subsec:example_cht}. The spatially averaged Nusselt number obtained with ISO is shown together with the CHT result in~\cref{fig:iso_nusselt_points}. In addition, $\etabarcht$ is plotted as a function of the angular coordinate $\theta$ (defined in~\cref{subsec:example_cht}) in~\cref{fig:iso_nusselt_space}, alongside the CHT reference. Both plots demonstrate that, although the ISO model exhibits slight deviations, it provides an excellent approximation of the true Nusselt number for this case. For the Nusselt number, we obtain $\Nustavgiso = 6.25$, in close agreement with the CHT value $\Nustavgcht = 6.27$. For the Biot number, we find $\Biotiso = 0.0678$, which is likewise very close to the CHT value $\Bcht = 0.0680$.
Subsequently, the values of $\Biotiso$ and $\etabariso$ are used as inputs to the RHE$_a$ model, which is solved with DOLFINx~\cite{baratta2023dolfinx} under the same settings as in~\cref{subsec:example_rhea}. The resulting average temperature, shown in~\cref{fig:disk_uavg_iso} alongside the CHT reference, exhibits excellent agreement between the two curves.
\begin{figure}[ht]
    \centering
    \begin{subfigure}[b]{0.48\textwidth}
        \centering
        \includegraphics[width=\textwidth]{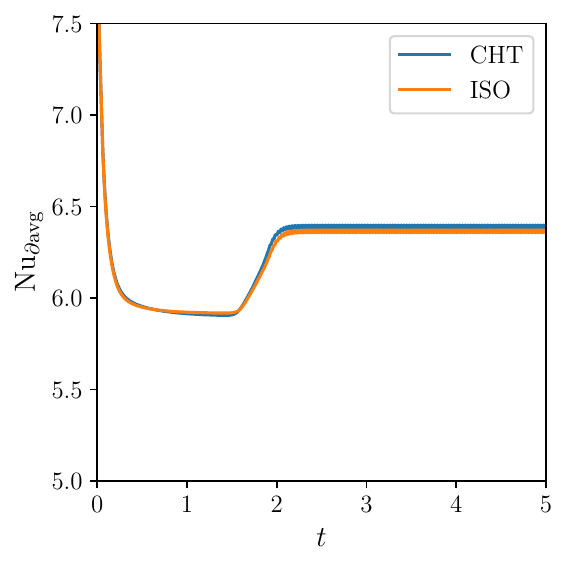}
        \caption{}
        \label{fig:iso_nusselt_points}
    \end{subfigure}
    \hfill
    \begin{subfigure}[b]{0.48\textwidth}
        \centering
        \includegraphics[width=\textwidth]{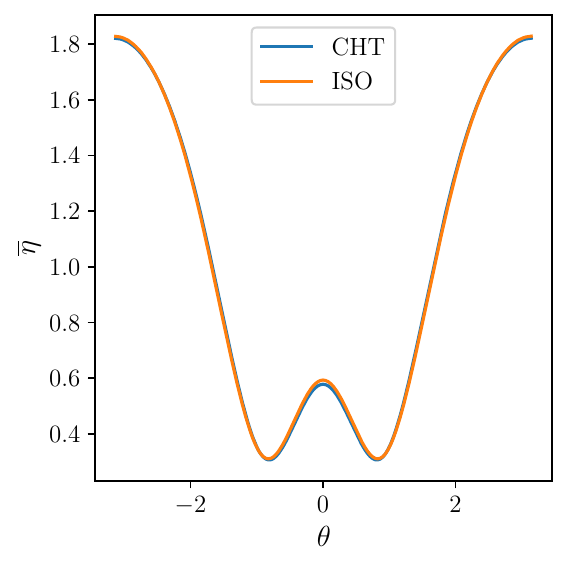}
        \caption{}
        \label{fig:iso_nusselt_space}
    \end{subfigure}
    \caption{(a) Spatial averaged Nusselt number computed with CHT and ISO over time. (b) Temporally averaged variation measured in angular coordinate $\theta$ for CHT and ISO.}
    \label{fig:iso_Nu_uavg}
\end{figure}
\begin{figure}[ht]
    \centering
    \includegraphics[width=0.5\textwidth]{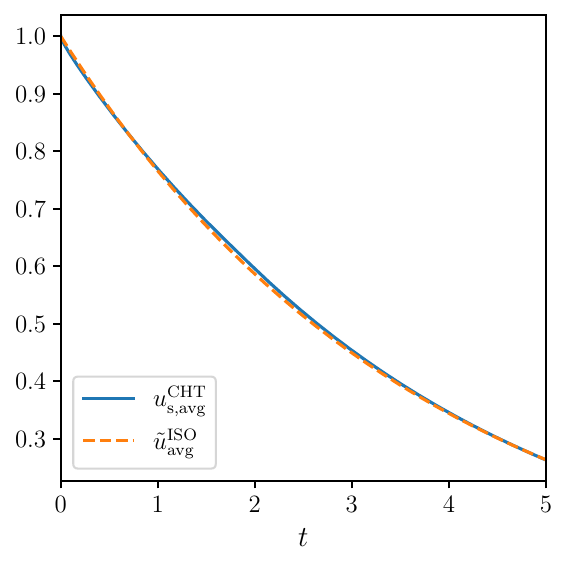}
    \caption{Average temperature computed with CHT and triangular system (ISO + RHE$_a$).}
    \label{fig:disk_uavg_iso}
\end{figure}

\subsection{Summary}
In~\cref{sec:models}, several models were presented for the dunking problem. Below we summarize each of them in terms of their inputs and underlying assumptions.

\paragraph{Conjugate heat transfer (CHT) model:}
The CHT model serves as the reference, or truth, model for the dunking problem. In its non-dimensional form, it involves six input parameters: the Reynolds number $\Reynolds$, the Prandtl number $\Prandtl$, the fluid-to-solid volumetric heat capacity ratio $r_1$, the fluid-to-solid thermal conductivity ratio $r_2$, and the solid thermophysical properties $\kappa$ and $\sigma$. While it is the most complete and accurate formulation, it is computationally expensive and requires detailed knowledge of the thermophysical properties of the solid, which may not be readily available.

\paragraph{Autonomous Robin heat equation (RHE$_a$):}
The RHE$_a$ model is a simplification of the CHT model, based on the assumption of \emph{time scale separation} between fluid and solid dynamics. It requires four input parameters: the solid thermophysical properties $\kappa$ and $\sigma$, the fluid-to-solid thermal conductivity ratio $r_2$, and a time-independent but spatially varying Nusselt number $\Nutavg$ that encapsulates the fluid effects. The RHE$_a$ solution is much less expensive to compute than the CHT. However, like the CHT model, it still requires knowledge of $\kappa$ and $\sigma$, and in addition the time-independent Nusselt number must be provided.

\paragraph{Isothermal wall (ISO) formulation:}
The ISO model is another simplification of the CHT model that requires only two input parameters: the Reynolds number and the Prandtl number. The formulation rests on two assumptions: a \emph{small-Biot} number, implying a nearly spatially uniform temperature field within the solid, and \emph{time scale separation} between fluid and solid. For problems satisfying these assumptions, the ISO model offers a more efficient way to estimate the Nusselt number required by the RHE$_a$ model. However, as the formulation does not take into account the solid properties $\kappa$ and $\sigma$ nor the thermophysical property ratios $r_1$ and $r_2$, it is unclear for which property ranges the  ISO-predicted Nusselt number accurately approximates the CHT value; this question is examined in~\cref{subsubsec:iso_versus_cht}. Owing to the relatively high computational cost of ISO simulations, practitioners often rely on \emph{empirical correlations}—simplified approximations of the ISO-derived Nusselt number—which are discussed in detail in~\cref{sec:nusselt}.

\section{Lumped Capacitance Model (LCM)}\label{sec:lcm}
The lumped capacitance model (LCM) is a simple and widely used approach for estimating the thermal response of a solid body immersed in a fluid. It can be derived from the RHE$_a$ model by assuming, in addition to time scale separation, an exactly spatially uniform temperature field. It is therefore only valid for problems in the small Biot limit. In this section, we derive the non-dimensional form of the LCM, present its dimensional counterpart, and revisit the example of cross-flow around a circular cylinder.

\subsection{Non-Dimensional Form}
To enforce the spatial uniformity of the temperature within the weak form in~\cref{eq:rhe_weak}, we can select a test and trial space that contains only the constant value, i.e., $X^{\text{L}}\coloneqq \operatorname{span}\{1_\Omega\}$. The problem becomes: find $\uLump(\tnd) \in X^{\text{L}}$ such that
\begin{align*}
    m\left(\frac{\partial \uLump}{\partial \tnd}, v; \sigma\right) + a(\uLump, v; \kappa, \Biot, \etabar) &= 0, \quad \forall v \in X^{\text{L}}
\end{align*}
with initial condition $\uLump(0) = 1$, and the bilinear forms were defined in~\cref{eq:m,eq:a,eq:a0,eq:a1}. As $\uLump(\tnd)$ and $v$ do not depend on space, they can be pulled out of the space integrals, and we can directly integrate to obtain
\begin{align*}
    \frac{d\uLump}{d \tnd}\int_\Omega \sigma + \Biot \uLump \int_{\pOmega} \etabar &= 0.
\end{align*}
With the properties of $\sigma$ and $\etabar$ in~\cref{eq:sigma_prop,eq:eta_prop}, we find
\begin{align*}
    \frac{d\uLump}{d \tnd}|\Omega| + \Biot |\pOmega| \uLump &= 0.
\end{align*}
We can then write the analytical solution of the ODE as
\begin{align}
    \uLump(\tnd) = \exp\left(-\Biot \gamma \tnd\right), \label{eq:lumped_capacitance_solution_intermediate}
\end{align}
where the geometric factor $\gamma$ is defined as
\begin{align}
    \gamma \coloneqq \frac{|\pOmega|}{|\Omega|}. \label{eq:gamma}
\end{align}
We can also represent $\gamma$ in terms of dimensional quantities as
\begin{align}
    \gamma = \frac{|\pOmegadim| \elldim}{|\Omegadim|} = \frac{\elldim}{\Elldim}. \label{eq:gamma_dim}
\end{align}
where $\Elldim$ and $\elldim$ are the intrinsic and extrinsic length scale, respectively. We can rewrite~\cref{eq:lumped_capacitance_solution_intermediate} as
\begin{align}
    \uLump(\tnd) = \exp\left(-\frac{\tnd}{\tauL}\right), \label{eq:lumped_capacitance_solution}
\end{align}
where
\begin{align}
    \tauL = \frac{1}{\Biot \gamma} \label{eq:teq_nondim}
\end{align}
is the nondimensional equilibration time constant. It is clear from \cref{eq:lumped_capacitance_solution,eq:teq_nondim} that the LCM solution only requires knowledge of the body volume $|\Omega|$, the surface area $|\pOmega|$, and the Biot number $\Biot$. As compared to the RHE$_a$ problem, the LCM does not require knowledge of the time-independent variation $\etabar(\xnd)$ nor the detailed geometry of the solid domain $\Omega$. Moreover, material inhomogeneities in the solid, $\kappa$ and $\sigma$, do not have any effect on the solution.

\subsection{Dimensional Form}
The LCM solution in~\cref{eq:lumped_capacitance_solution} can be expressed in dimensional form as
\begin{align}
    \frac{\TLump(\tdim) - \Tinfdim}{\Tidim - \Tinfdim} &= \exp\left(-\frac{\tdim}{\tauLdim}\right), \label{eq:uLCM_dim}
\end{align}
where 
\begin{align}
    \tauLdim \coloneqq \frac{|\Omegadim|}{|\pOmegadim|}\frac{\rhocavgdim}{\hstavg} \label{eq:teq_dim}
\end{align}
is the dimensional equilibration time constant. The dimensional solution in~\cref{eq:uLCM_dim,eq:teq_dim} requires only the volume $|\Omegadim|$, surface area $|\pOmegadim|$, the volume-averaged heat capacity $\rhocavgdim$, the averaged HTC $\hstavg$, and the far-field and initial solid temperatures, $\Tinfdim$ and $\Tidim$. The volume, surface area and volume-averaged heat capacity, far-field and initial solid temperatures are typically known, while the heat transfer coefficient can be estimated, as discussed in~\cref{sec:nusselt}. 

\subsection{Example Revisited: Cross-Flow Around a Circular Cylinder (LCM)}\label{subsec:example_lcm}
We revisit the cross-flow around a circular cylinder considered in~\cref{subsec:example_cht}. The problem was non-dimensionalized using $\dm{D}$ as the characteristic length. The geometric factor $\gamma$ can be computed with~\cref{eq:gamma_dim},
\begin{align}\label{eq:gamma_cylD}
    \gamma &= \frac{|\pOmega|}{|\Omega|} 
           = \frac{|\pOmegadim| \elldim}{|\Omegadim|} 
           = \frac{\pi \dm{D}^2}{\tfrac{1}{4}\pi \dm{D}^2} 
           = 4.
\end{align}
Using the Biot number obtained with the ISO model in~\cref{subsec:example_iso}, $\Biotiso = 0.0678$, we find a time constant
\begin{align*}
    \tauL = \frac{1}{\Biotiso \gamma} = 3.69,
\end{align*}
which is in very close agreement with the exponential fit of the CHT solution in~\cref{eq:exp_fit}, where $\taueq = 3.77$ was found.

It is remarkable that the LCM, which requires only the volume, surface area, and an estimated Biot number (here obtained via ISO), can approximate the full CHT solution—a coupled system of PDEs with six inputs—so closely. Its accuracy rests on two key assumptions: approximate spatial uniformity of the solid temperature, valid when the Biot number is small (here $\Biotiso = 0.0678$), and separation of fluid and solid time scales.

However, these assumptions are often only qualitatively stated in engineering practice. For many fluid-solid combinations, it is unclear whether the Biot number is sufficiently small or whether a true time scale separation exists. In the next section, we examine these assumptions in detail and make them more precise through mathematical analysis and numerical simulations.

\section{Model Error Estimation}\label{sec:error}
We are interested in developing an error bound for the CHT-to-LCM error, given by 
\begin{align}
    |\usavgcht(\tnd;\kappa,\sigma,r_1,r_2,\Reynolds,\Prandtl) - \uLump(\tnd;\Biapprox)|. \label{eq:error_triangular}
\end{align}
Here, $\usavgcht$ denotes the CHT QoI defined in~\cref{eq:uavg}, and $\uLump$ the LCM QoI defined in~\cref{eq:lumped_capacitance_solution_intermediate}. The CHT model depends on six dimensionless input parameters: the nondimensional solid properties $\kappa$ and $\sigma$ from~\cref{eq:kappa_sigma_1}; the fluid-to-solid property ratios $r_1$ and $r_2$ from~\cref{eq:r1_r2}; and the Reynolds and Prandtl numbers $\Reynolds$ and $\Prandtl$ from~\cref{eq:Re_Pr}. The LCM model depends only on the Biot number. Since the true Biot number $\Bcht$ is typically unknown in practice, we introduce an approximate value $\Biapprox$, representing an estimate of $\Bcht$ (e.g., obtained via the ISO model or empirical correlations).

In~\cref{subsec:rhea}, it was shown that the CHT problem can be equivalently reformulated as a RHE problem, and that the corresponding QoIs are identical:
\begin{align*}
    \uavg(\tnd;\kappa,\sigma,\Bcht,\etacht(\tnd,\xnd)) \equiv \usavgcht(\tnd;\kappa,\sigma,r_1,r_2,\Reynolds,\Prandtl),
\end{align*}
where $\Bcht$ and $\etacht(\tnd,\xnd)$ denote the truth Biot number and variation function defined in~\cref{eq:cht_biot,eq:cht_uniformity}. Accordingly, the error in~\cref{eq:error_triangular} can be rewritten as
\begin{align*}
    |\uavg(\tnd;\kappa,\sigma,\Bcht,\etacht(\tnd,\xnd)) - \uLump(\tnd;\Biapprox)|.
\end{align*}
For brevity, we omit the superscripts “CHT” for the remainder of this section and simply write $\Biot \equiv \Bcht$ and $\eta \equiv \etacht$. With this convention, we now consider the maximum-in-time error and apply the triangle inequality to obtain
\begin{align}
    \max_{\tnd\in[0,\tf]}|\uavg(\tnd;\kappa,\sigma,\Biot,\eta(\tnd,\xnd)) - \uLump(\tnd;\Biapprox)| &\leq \max_{\tnd\in[0,\tf]}|\uavg(\tnd;\kappa,\sigma,\Biot,\eta(\tnd,\xnd)) - \uavgtilde(\tnd;\kappa,\sigma,\Biot,\etabar(\xnd))| \label{eq:error_triangular_time} \\
    &+ \max_{\tnd\in[0,\tf]}|\uavgtilde(\tnd;\kappa,\sigma,\Biot,\etabar(\xnd)) - \uLump(\tnd;\Biot)| \label{eq:error_triangular_lcm} \\
    &+ \max_{\tnd\in[0,\tf]}|\uLump(\tnd;\Biot) - \uLump(\tnd;\Biapprox)|. \label{eq:error_triangular_biot}
\end{align}
Here, the three terms correspond to:
\begin{itemize}
    \item (\ref{eq:error_triangular_time}): the error due to \emph{temporal approximation} (replacing $\eta(\tnd,\xnd)$ by $\etabar(\xnd)$),
    \item (\ref{eq:error_triangular_lcm}): the error due to \emph{lumping} (replacing RHE$_a$ by LCM), and
    \item (\ref{eq:error_triangular_biot}): the error due to \emph{Biot approximation} (replacing $\Biot$ by an estimate $\Biapprox$).
\end{itemize}
We first review the current engineering practice for assessing the applicability of the LCM. We then, one by one, discuss strategies to quantify and bound the errors in~\cref{eq:error_triangular_time,eq:error_triangular_lcm,eq:error_triangular_biot}, and provide supporting numerical results.

For some of the proofs, we will use the following estimate for exponential functions: For all $z \geq 0$ and $0 \leq \epsilon < 1$, the inequality
\begin{align}
    \big|\exp(-z(1-\epsilon)) - \exp(-z)\big| \leq \frac{|\epsilon|}{\exp(1)} + \mathcal{O}(\epsilon^2) \label{eq:exp_inequality}
\end{align}
holds. A proof of this elementary estimate is provided for instance in~\cite[Appendix C.17]{kaneko2024error}.

\subsection{Current Engineering Practice}\label{subsec:current_practice}
Because of its simplicity and minimal input requirements, the LCM is widely used in engineering practice as a quick estimate. This naturally raises the question: How accurate is the LCM, and under what conditions is it applicable? Standard engineering textbooks such as~\cite{ahtt6e,incropera1990fundamentals} typically propose the following criterion:
\begin{align}
    \BiDunk \coloneqq \frac{\hstavg \Elldim}{\ksinfdim} = \frac{\hstavg\elldim}{\ksinfdim} \frac{\Elldim}{\elldim} = \frac{B}{\gamma} \leq 0.1, \label{eq:biot_small}
\end{align}
where $\BiDunk$ denotes the Biot number defined with respect to the intrinsic length scale $\Elldim$. The heuristic condition in~\cref{eq:biot_small} suggests that the LCM will be a reliable approximation when the Biot number is small. Although supported by empirical evidence in many cases, it lacks rigorous theoretical justification, and indeed, in some cases,~\cref{eq:biot_small} can be very misleading. We intend to address this gap in the following sections.

\subsection{Error Due to Temporal Approximation}\label{subsec:lcm_time_scale}
In this section, we analyze the temporal approximation error in~\cref{eq:error_triangular_time},
$$ \max_{\tnd\in[0,\tf]}|\uavg(\tnd;\kappa,\sigma,\Biot,\eta(\tnd,\xnd)) - \uavgtilde(\tnd;\kappa,\sigma,\Biot,\etabar(\xnd))|,$$
arising from replacing the time-dependent $\eta(\tnd,\xnd)$ with a time-independent $\etabar(\xnd)$. We first examine the short-time stiff behavior of the RHE and study its effect on the averaged temperature, before discussing the separation of fluid and solid time scales. Finally, we present numerical results supporting the theoretical findings.

\subsubsection{Initial transient and stability}\label{subsubsec:time_stability}
Looking back at~\cref{fig:cht_nusselt_points}, we observe that the Nusselt number exhibits a very stiff behavior in short-time: it essentially starts at infinity and decays rapidly to a finite value. This raises the question of whether such transient behavior can contribute large errors. We first present a proposition that characterizes the initial stiff behavior using analytical solutions. We then establish uniform bounds for the weighted averaged temperature of the RHE and derive an energy bound that suggests stability.

\begin{propo}\label{propo:short_time_stiff}
    For very short times
    \[
        \tnd \ll C_x\tconvnondim,
    \]
    where $C_x$ depends on the distance from the leading edge as well as on the local regularity of the domain (e.g., curvature and corners) and the convective time was defined in~\cref{eq:tconv_nondim}, the following short-time asymptotic behavior is expected:
    \begin{align}
        \left.\usndcht\right|_{\Gamma} &\sim \frac{1}{1 + \sqrt{r_1 r_2}}, \label{eq:short_time_stiff}\\
        \Nusselt(\tnd,\xnd) &\sim \frac{1}{\sqrt{\pi \tnd}} \, \sqrt{\frac{r_1}{r_2}}. \label{eq:short_time_nusselt}
    \end{align}
\end{propo}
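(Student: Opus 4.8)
The plan is to recognize that on the time scale $\tnd \ll C_x\tconvnondim$ the thermal signal has penetrated only an infinitesimally thin layer on either side of $\Gamma$, so that the full CHT system collapses locally to the classical problem of two semi-infinite conducting half-spaces suddenly brought into contact. First I would argue that convection is inert on this time scale: the thermal penetration depth in the fluid scales as $\sqrt{(r_2/r_1)\tnd}$, whereas the advective term $\Reynolds\Prandtl\, r_2\, \vndcht\cdot\delnd\ufndcht$ only becomes comparable to diffusion once $\tnd$ reaches $\mathcal{O}(\tconvnondim)$; the prefactor $C_x$ encodes how the local distance from the leading edge and the wall curvature delay this onset. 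For $\tnd$ below this threshold one may drop the advection term, freeze the (no-slip) velocity to zero near $\Gamma$, and flatten the interface, since the penetration depth is then small compared with the local radius of curvature—this is exactly where the dependence of $C_x$ on curvature and corners enters.

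Next I would set up the reduced one-dimensional model. Introducing a coordinate $\xi$ normal to $\Gamma$ with $\xi<0$ in the solid and $\xi>0$ in the fluid, and taking $\kappa=\sigma=1$ locally, \cref{eq:cht_nondim_1,eq:cht_nondim_2} reduce to pure heat equations with non-dimensional diffusivities $\alpha_s = 1$ in the solid and $\alpha_f = r_2/r_1$ in the fluid, subject to the far-field data $\usndcht\to 1$ as $\xi\to-\infty$ and $\ufndcht\to 0$ as $\xi\to+\infty$, together with the interface conditions $\usndcht=\ufndcht$ and $r_2\,\partial_\xi\ufndcht = \kappa\,\partial_\xi\usndcht$ at $\xi=0$ (recall $\bm{n}$ points from solid into fluid). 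I would then solve each half-space with the self-similar error-function ansatz
\begin{align*}
    \usndcht = u_\Gamma + (1-u_\Gamma)\,\erf\!\left(\frac{-\xi}{2\sqrt{\alpha_s\tnd}}\right), \qquad
    \ufndcht = u_\Gamma\left[1 - \erf\!\left(\frac{\xi}{2\sqrt{\alpha_f\tnd}}\right)\right],
\end{align*}
which by construction satisfy the diffusion equations, the initial/far-field data, and temperature continuity for any constant interface value $u_\Gamma$.

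The interface value $u_\Gamma \equiv \left.\usndcht\right|_\Gamma$ is then fixed by flux continuity. Differentiating the ansatz and evaluating at $\xi=0$ gives $\partial_\xi\ufndcht|_{0^+} = -u_\Gamma/\sqrt{\pi\alpha_f\tnd}$ and $\partial_\xi\usndcht|_{0^-} = -(1-u_\Gamma)/\sqrt{\pi\alpha_s\tnd}$; inserting these into $r_2\,\partial_\xi\ufndcht = \partial_\xi\usndcht$ with $\alpha_s=1$, $\alpha_f = r_2/r_1$ yields $u_\Gamma\sqrt{r_1 r_2} = 1-u_\Gamma$, i.e. $u_\Gamma = 1/(1+\sqrt{r_1 r_2})$, which is \cref{eq:short_time_stiff}. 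Equivalently, this is the classical effusivity-weighted average of the two initial temperatures, with solid and fluid thermal effusivities $\sqrt{\kappa\sigma}=1$ and $\sqrt{r_1 r_2}$. Finally the local Nusselt number follows directly from its definition \cref{eq:local_nusselt}: $\Nusselt = -\partial_\xi\ufndcht|_{0^+}/\ufndcht|_{0} = (u_\Gamma/\sqrt{\pi\alpha_f\tnd})/u_\Gamma = 1/\sqrt{\pi\alpha_f\tnd}$, and substituting $\alpha_f = r_2/r_1$ gives \cref{eq:short_time_nusselt}; note that $u_\Gamma$ cancels, so the Nusselt scaling is insensitive to the effusivity ratio.

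The main obstacle is the asymptotic justification of the localization, not the explicit one-dimensional solve, which is classical. Rigorously controlling the neglected advective and tangential-diffusion contributions—and pinning down the precise geometric content of $C_x$ near stagnation points, sharp corners, and high-curvature regions, where the flat-interface similarity solution degrades—would require a matched boundary-layer argument. Since the statement is phrased as an asymptotic equivalence ($\sim$), it suffices to provide a formal scaling argument showing that these corrections are $o(1)$ relative to the retained diffusive balance on the interval $\tnd \ll C_x\tconvnondim$.
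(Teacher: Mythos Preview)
Your proposal is correct and follows essentially the same approach as the paper: reduce the short-time problem to two semi-infinite half-spaces in thermal contact, solve each side with the error-function similarity solution, fix the interface value $u_\Gamma$ via flux continuity, and read off $\Nusselt$ from its definition. Your additional discussion of why advection and curvature are negligible on this time scale (and what $C_x$ encodes) goes a bit beyond the paper, which simply asserts the semi-infinite reduction is valid for $\tnd\ll\tconvnondim$ without elaborating.
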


\begin{proof}
    The result is obtained by treating both the solid and the fluid as semi-infinite solids in perfect thermal contact and solving the corresponding one-dimensional transient conduction problem. For simplicity, the solid is assumed homogeneous, i.e., $\kappa = 1$ and $\sigma = 1$. The analysis is valid only for sufficiently short times, before convective boundary layers have developed, i.e., $\tnd \ll \tconvnondim$.

    We first analyze the fluid side. Consider transient conduction into a semi-infinite medium occupying $\psi > 0$, where $\psi$ denotes the coordinate normal to the interface $\Gamma$. At $\psi = 0$, a surface temperature $u_\Gamma$ is suddenly imposed at $t = 0$. The governing equation for the fluid (in non-dimensional form; compare with~\cref{eq:cht_nondim_1} for $\vndcht = 0$) is
    \begin{subequations}\label{eq:semi_inf_fluid}
    \begin{align}
        r_1 \frac{\partial \ufndcht}{\partial \tnd}
        &= r_2 \frac{\partial^2 \ufndcht}{\partial \psi^2}
        \quad \text{in } \Omegafnd, \label{eq:semi_inf_fluid_1}
    \end{align}
    \end{subequations}
    with initial and boundary conditions
    \begin{align}
        \ufndcht(0,\psi) = 0, \qquad
        \ufndcht(\tnd,0) = u_\Gamma, \qquad
        \ufndcht(\tnd,\psi \to \infty) = 0.
    \end{align}
    Introducing the similarity variable
    $\xi = \psi / (2\sqrt{(r_2/r_1)\tnd})$
    and setting $\ufndcht(\tnd,\psi) = f(\xi)$ reduces the PDE to an ODE, whose solution---with the boundary conditions applied---is
    \begin{align}
        \ufndcht(\tnd,\psi)
        = u_\Gamma \!\left[ 1 - \erf\!\left(\frac{\psi}{2\sqrt{(r_2/r_1)\tnd}}\right) \right],
    \end{align}
    where $\erf$ denotes the error function.
    The corresponding non-dimensional heat flux at the surface $\psi = 0$ is
    \begin{align}
        q_f(\tnd,0)
        = -r_2 \left. \frac{\partial \ufndcht}{\partial \psi} \right|_{\psi=0}
        = u_\Gamma \sqrt{\frac{r_1 r_2}{\pi \tnd}}.
    \end{align}
    Hence, using~\cref{eq:local_nusselt}, the local Nusselt number becomes
    \begin{align}
        \Nusselt(\tnd,\xnd)
        = \frac{\left. \frac{\partial \ufndcht}{\partial \psi} \right|_{\psi=0}}{u_\Gamma}
        = \frac{1}{\sqrt{\pi \tnd}} \sqrt{\frac{r_1}{r_2}},
    \end{align}
    which proves~\cref{eq:short_time_nusselt}.

    For the solid, the conduction problem (compare with~\cref{eq:cht_nondim_2} with $\kappa=\sigma=1$) reads
    \begin{subequations}\label{eq:semi_inf_solid}
    \begin{align}
        \frac{\partial \usndcht}{\partial \tnd}
        &= \frac{\partial^2 \usndcht}{\partial \psi^2}
        \quad \text{in } \Omegasnd, \label{eq:semi_inf_solid_1}
    \end{align}
    \end{subequations}
    with initial and boundary conditions
    \begin{align}
        \usndcht(0,\psi) = 1, \qquad
        \usndcht(\tnd,0) = u_\Gamma, \qquad
        \usndcht(\tnd,\psi \to \infty) = 1.
    \end{align}
    Using the similarity variable $\xi = \psi / (2\sqrt{\tnd})$ and setting $\usndcht(\tnd,\psi) = g(\xi)$ gives
    \begin{align}
        \usndcht(\tnd,\psi)
        = 1 + (1 - u_\Gamma)
          \erf\!\left(\frac{\psi}{2\sqrt{\tnd}}\right).
    \end{align}
    The corresponding heat flux at $\psi = 0$ is
    \begin{align}
        q_s(\tnd,0)
        = -\left. \frac{\partial \usndcht}{\partial \psi} \right|_{\psi=0}
        = (1 - u_\Gamma) \frac{1}{\sqrt{\pi \tnd}}.
    \end{align}
    Enforcing continuity of heat flux between fluid and solid, $q_f(\tnd,0) = q_s(\tnd,0)$, yields
    \begin{align}
        u_\Gamma = \frac{1}{1 + \sqrt{r_1 r_2}},
    \end{align}
    which proves~\cref{eq:short_time_stiff}.
\end{proof}
What~\cref{propo:short_time_stiff} shows is that, for very short times, the Nusselt number starts spatially uniform at a very large value (that depends on $r_1$ and $r_2$) and decays with the inverse square root of time. This explains the stiff behavior observed in~\cref{fig:cht_nusselt_points}, and also explains the spatial uniformity of the Nusselt number at very short times, see~\cref{fig:cht_Nu_polar}. To study the effects of the initial behavior on the solution, we now establish uniform bounds for the weighted averaged temperature of the RHE.

Let $u_1$ and $u_2$ be solutions of the RHE in~\cref{eq:rhe_nondim_T} with parameters 
$\{\Omega,\tf,\kappa,\sigma,\Biot_1,\eta_1(\tnd,\xnd)\}$ and 
$\{\Omega,\tf,\kappa,\sigma,\Biot_2,\eta_2(\tnd,\xnd)\}$, respectively, and $\Biot_1$, $\Biot_2$, $\eta_1$, $\eta_2$ non-negative. Define $w \coloneqq u_1 - u_2$. Then the following bounds hold for all $(\tnd,\xnd)\in[0,\tf]\times\Omega$:
\begin{align}
    0 \leq u_1(\tnd,\xnd) \leq 1, 
    \quad 0 \leq u_2(\tnd,\xnd) \leq 1, 
    \quad |w(\tnd,\xnd)| \leq 1, \label{eq:solution_bounds}
\end{align}
which follow from the maximum principle for parabolic equations; see, e.g.,~\cite{evans2022partial}.
As~\cref{eq:rhe_nondim_T} is linear, subtracting the equations satisfied by $u_1$ and $u_2$, multiplying by $w$, and integrating over space and time yield
\begin{align}
    \frac{1}{2}\int_\Omega \sigma w^2(\cdot,\tnd) 
    + \int_0^\tnd \left( \int_\Omega \kappa|\delnd w|^2 
    + \int_{\pOmega} \Biot_1 \eta_1 w^2 \right) dt' 
    = \int_0^\tnd \int_{\pOmega} (\Biot_2 \eta_2 - \Biot_1 \eta_1) u_2 w \, dt',
    \label{eq:energy_identity}
\end{align}
for all $\tnd \in [0,\tf]$.

As we are looking to bound the weighted averaged temperature $\uavg$ defined in~\cref{eq:urhea_avg}, we relate it to the $L^2$-norm appearing in~\cref{eq:energy_identity}. For any $w \in H^1(\Omega)$, the weighted average $w_{\mathrm{avg}}$ as defined in~\cref{eq:urhea_avg} satisfies
\begin{align}
    w_{\mathrm{avg}}^2 = \left( \frac{1}{|\Omega|}\int_\Omega \sigma w \right)^2 
    \leq \frac{1}{|\Omega|^2}\int_\Omega \sigma \cdot \int_\Omega \sigma w^2 
    = \frac{1}{|\Omega|} \int_\Omega \sigma w^2, \label{eq:uavg_bound}
\end{align}
where~\cref{eq:sigma_prop} was used in the last step. By combining the bounds in~\cref{eq:solution_bounds} together with~\cref{eq:energy_identity,eq:uavg_bound} we obtain
\begin{align}\label{eq:L1L1-stability}
    \frac{|\Omega|}{2} w_{\mathrm{avg}}(\tnd)^2 + \int_0^\tnd \left( \int_\Omega \kappa|\delnd w|^2
    + \int_{\pOmega} \Biot_1 \eta_1 w^2 \right) dt' 
    &\leq \left\Vert\Biot_2\eta_2 - \Biot_1\eta_1\right\Vert_{L^1((0,\tnd);L^1(\pOmega))},
\end{align}
where $w_{\mathrm{avg}}(\tnd)$ is the difference between the weighted averaged temperatures corresponding to $u_1$ and $u_2$.
 
By setting $(\Biot_1, \eta_1) = (\Biot, \eta(\tnd,\xnd))$ and 
$(\Biot_2, \eta_2) = (\Biot, \etabar(\xnd))$, 
we obtain a direct estimate for the temporal approximation error in~\cref{eq:error_triangular_time}, 
Since the integral terms on the left-hand side of~\cref{eq:L1L1-stability} are nonnegative, 
we can immediately deduce the bound
\begin{align}
\begin{aligned}
    &\max_{\tnd\in[0,\tf]}|\uavg(\tnd;\kappa,\sigma,\Biot,\eta(\tnd,\xnd)) - \uavgtilde(\tnd;\kappa,\sigma,\Biot,\etabar(\xnd))| \\
    &\quad\leq \sqrt{\frac{2\Biot}{|\Omega|} \left\Vert \eta(\tnd,\xnd) -\etabar(\xnd) \right\Vert_{L^1((0,\tf);L^1(\pOmega))}}.
\end{aligned}
\label{eq:temporal_approx_error_bound}
\end{align}
The estimate shows that perturbations of the Robin coefficient in 
$L^1((0,\tf);L^1(\pOmega))$ influence the solid temperature in a controlled manner, 
as measured by the $L^2$-norm of the averaged temperature. 
At short times, the norm is dominated by the initial transient, 
but over longer time horizons, this effect is averaged out. 
Consistent with this analysis, numerical results confirm that for large time scale separation, 
the initial behavior has only a minor influence on the averaged temperature. This will be studied quantitatively in~\cref{subsubsec:time_numerics}.

\subsubsection{Time scale separation between fluid and solid and time averaging}\label{subsubsec:time_scale_separation}
Having discussed the initial transient behavior and its effects, we now develop physical arguments to assess the degree of time scale separation between the fluid and solid. In the case of time scale separation, the fluid flow and Nusselt number develop quickly and become stationary long before the solid temperature changes significantly. From the perspective of the solid, the fluid thus appears quasi-stationary, and it is reasonable to approximate $\eta(\tnd,\xnd)$ by its time average $\etabar(\xnd)$,
\begin{align}
    \etabar(\xnd) = \dashint_0^{\tf} \eta(\tnd,\xnd). \label{eq:eta_time_avg}
\end{align}
To quantify the degree of time scale separation, we compare the fluid convective and solid equilibration time scales. The non-dimensional convective time scale was given in~\cref{eq:tconv_nondim}.
In the small-Biot regime, the solid temperature evolution is well described by the LCM, whose characteristic equilibration time---the time for the solid to cool to $1/\exp(1)$ of its initial temperature---is
\begin{align*}
    \tauL = \frac{1}{B\gamma},
\end{align*}
as previously defined in~\cref{eq:teq_nondim}.  
The ratio of the two time scales is then
\begin{align}\label{eq:time_ratio_2}
    \frac{\tauL}{\tconvnondim}
    = \frac{r_2}{r_1}\frac{\Reynolds\Prandtl}{B\gamma}
    = \frac{1}{r_1}\frac{\Reynolds\Prandtl}{\Nustavg\gamma},
\end{align}
where~\cref{eq:cht_biot} was used in the last step. A large value of this ratio indicates that the solid dynamics are much slower than the fluid dynamics, implying strong time scale separation.

In the small-Biot limit and for fixed Reynolds and Prandtl numbers, time scale separation depends primarily on $r_1$, while $r_2$ affects it only indirectly through its influence on $\Nustavg$. The effects of $r_1$, $r_2$, and $\Reynolds$ on $\Nustavg$ will be analyzed in~\cref{subsubsec:iso_versus_cht}.  
It remains unclear how large the ratio in~\cref{eq:time_ratio_2} must be for the temporal approximation error to become negligible. We present numerical examples in~\cref{subsubsec:time_numerics}, where we vary $r_1$, to study this empirically.

\begin{cmt}
Strictly speaking, the argument based on~\cref{eq:time_ratio_2} assumes that the LCM accurately predicts the equilibration time of the CHT model. Since the LCM itself presupposes time scale separation, this assumption is not exact. Nevertheless, the ratio in~\cref{eq:time_ratio_2} provides a useful measure of time scale separation in the small-Biot regime, as will be confirmed empirically in~\cref{subsubsec:time_numerics}.
\end{cmt}

\begin{cmt}
It should be noted that the temporal average in~\cref{eq:eta_time_avg} used for $\etabar$ is not necessarily optimal. In Appendix~\ref{sec:time_homogenization_rhe}, we present a proof for a simplified setting—neglecting the initial transient, and prescribing a spatially uniform time-periodic $\eta$—which demonstrates that the average is indeed optimal. The general case, however, remains unresolved and is left for future work.
\end{cmt}

\subsubsection{Numerical results}\label{subsubsec:time_numerics}
We study the cylinder cross-flow case from~\cref{subsec:example_cht} for varying $r_1$ to illustrate the effect of time scale separation on the temporal approximation error~\cref{eq:error_triangular_time}, which we denote here as
\begin{align}
    E_t \coloneqq \max_{\tnd\in[0,\tf]}|\uavg(\tnd;\kappa,\sigma,\Biot,\eta(\tnd,\xnd)) - \uavgtilde(\tnd;\kappa,\sigma,\Biot,\etabar(\xnd))|.
\end{align}
The setup is identical to~\cref{subsec:example_cht}, except that $r_1$ is varied. For each $r_1$, we compute the Nusselt number and average temperature with CHT using Nek5000; the results are shown in~\cref{fig:time_r1_study}. For larger $r_1$ values, vortex shedding occurs later in time or not at all within the chosen time horizon, and the Nusselt number continues to decrease without stabilizing. The corresponding average temperature decays more rapidly at early times—since the Nusselt number has higher values initially—but more slowly at later times, reflecting the lower Nusselt values. This behavior can partially be explained with the short time behavior, see~\cref{propo:short_time_stiff}: for larger $r_1$, the initial Nusselt number is larger (see~\cref{eq:short_time_nusselt}), leading to a more rapid initial decrease in the average temperature.
\begin{figure}[ht]
    \centering
    \begin{subfigure}[b]{0.48\textwidth}
        \centering
        \includegraphics[width=\textwidth]{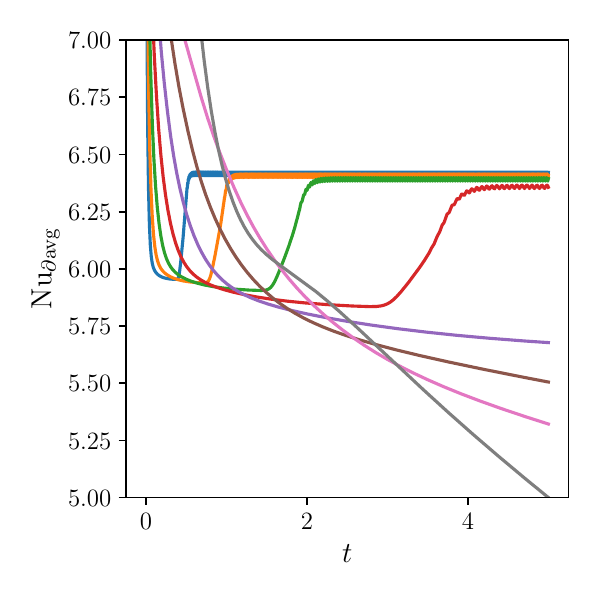}
        \caption{Spatially averaged Nusselt number}
        \label{fig:time_r1_study_Nu}
    \end{subfigure}
    \hfill
    \begin{subfigure}[b]{0.48\textwidth}
        \centering
        \includegraphics[width=\textwidth]{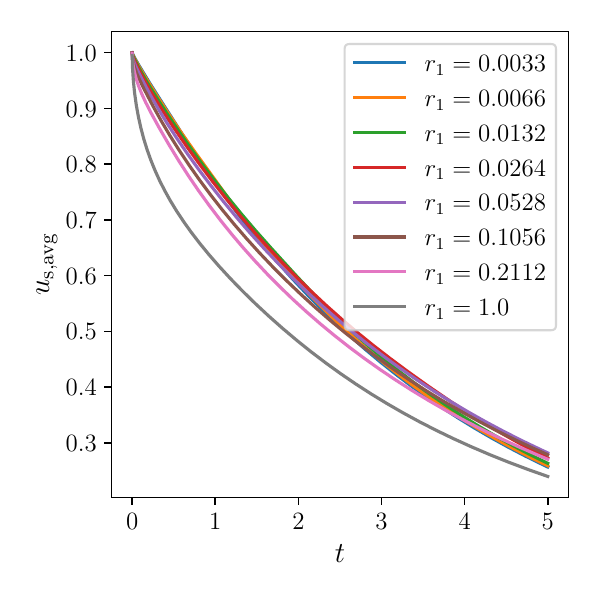}
        \caption{Average solid temperature}
        \label{fig:time_r1_study_uavg}
    \end{subfigure}
    \caption{Effect of $r_1$ on (a) the spatially averaged Nusselt number and (b) the average solid temperature.}
    \label{fig:time_r1_study}
\end{figure}

For each case we also compute $\Biot$ and $\etabar(\xnd)$, and solve the RHE$_a$ in DOLFINx~\cite{baratta2023dolfinx}. Interestingly, the Biot number remains small and varies only weakly with $r_1$ across the entire range. This highlights that using the LCM, which neglects the influence of $r_1$, can lead to large errors even when the Biot number is small. 

The temporal error $E_t$ is plotted against $r_1$ in~\cref{fig:time_r1_study_err}. As expected, the error increases with $r_1$, i.e., as time scale separation becomes less pronounced, and the initial transient effects become more significant. We highlight the case $r_1=0.0132$ in red color, which corresponds to the example in~\cref{subsec:example_cht}. The error for this case is $E_t=\num{0.010307143115081074}$. We also provide all computed values of $\Biot$, $\tauL/\tconvnondim$ and $E_t$ in~\cref{tab:time_scale_ratios}. We can see that for this problem, the temporal approximation error is below 1\% for $\tauL/\tconvnondim \gtrsim 300$.
\begin{figure}[t]
    \centering
    \includegraphics[width=0.5\textwidth]{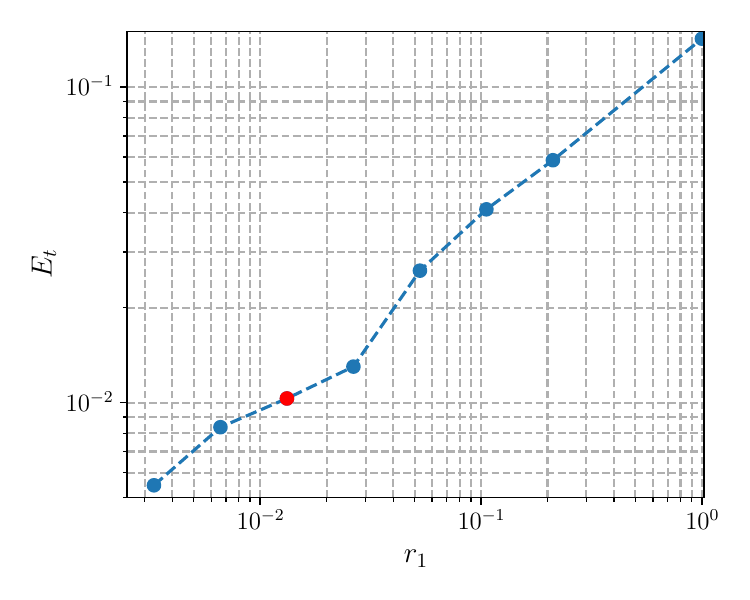}
    \caption{Maximum temporal approximation error $E_t$ as a function of $r_1$. The red point corresponds to the example in~\cref{subsec:example_cht}.}
    \label{fig:time_r1_study_err}
\end{figure}
\begin{table}[t]
\sisetup{scientific-notation = false}
\centering
\caption{Values of $r_1$ and the corresponding computed Biot number $\Biot$, time scale ratio $\tauL/\tconvnondim$ and maximum temporal approximation error $E_t$. The other parameters are fixed as $r_2=0.01085$, $\Reynolds=143$, $\Prandtl=0.71$.}
\label{tab:time_scale_ratios}
\begin{tabular}{c|cccccccc}
$r_1$                & 0.0033                     & 0.0066                     & 0.0132                     & 0.0264                     & 0.0528                     & 0.1056                    & 0.2112         & 1.0000           \\ \hline
$\Biot$              & 0.0693                     & 0.0689                     & 0.0680                     & 0.0662                     & 0.0645                     & 0.0651                    & 0.0666        & 0.0728            \\
$\tauL/\tconvnondim$ & 1205                       & 606                        & 307                        & 158                        & 80                         & 40                        & 20          & 4              \\
$E_t$                & 0.00547 & 0.00836 & \num{0.010307143115081074} & \num{0.013005113903375776} & \num{0.026193240600391188} & \num{0.04097377776132938} & \num{0.05863628036813262} & \num{0.14231574168950878}
\end{tabular}
\end{table}

\subsection{Error Due to Lumping}\label{subsec:lcm_lumping_error}
In this section, we derive an upper bound for the error in~\cref{eq:error_triangular_lcm} due to lumping (uniform temperature assumption),
$$\max_{\tnd\in[0,\tf]}|\uavgtilde(\tnd;\kappa,\sigma,\Biot,\etabar) - \uLump(\tnd;\Biot)|.$$
The key ingredients of the derivation are a generalized eigenvalue problem associated with the autonomous RHE$_a$ model in~\cref{eq:rhea_nondim_T}, and an asymptotic expansion of its first eigenvalue.

\subsubsection{Error bound}\label{subsubsec:cht_lcm_error}
We first introduce the generalized eigenvalue problem (we henceforth omit the adjective ``generalized'' in our exposition) associated to the autonomous RHE$_a$ model in~\cref{eq:rhea_nondim_T} and the corresponding Rayleigh quotient. For given $(\kappa,\sigma,B,\etabar)$, find $(\psi_j,\lambda_j)$ for $j=1,2,\ldots$ solution to
\begin{subequations}\label{eq:sensitivity_eigenvalue_problem}
\begin{alignat}{4}
- \delnd \cdot (\kappa\,\delnd \psi_j) &= \lambda_j \,\sigma\, \psi_j\, &\quad& \text{in }&\Omega, \\
\kappa\, \delnd \psi_j \cdot \bm{n} + \Biot\, \etabar\, \psi_j &= 0 &\quad& \text{on }&\pOmega, \\
\int_\Omega \sigma\, \psi_j^2 &= 1. &\quad& & &
\end{alignat}
\end{subequations}
We can write the weak form of~\cref{eq:sensitivity_eigenvalue_problem} as: find $(\psi_j,\lambda_j) \in H^1(\Omega) \times \mathbb{R}$ such that
\begin{align}
    \int_\Omega \kappa \,\delnd\psi_j \cdot \delnd v + \Biot \int_{\pOmega} \etabar\, \psi_j\, v = \lambda_j \int_\Omega \sigma \,\psi_j\, v \quad \forall v\in H^1(\Omega), \label{eq:eigenvalue_problem_weak}
\end{align}
with normalization
\begin{align*}
    \int_\Omega \sigma\, \psi_j^2 &= 1.
\end{align*}
Using the bilinear forms $a_0$ and $a_1$ defined in~\cref{eq:a0,eq:a1}, we can write the weak form in~\cref{eq:eigenvalue_problem_weak} more compactly as: find $(\psi_j,\lambda_j) \in H^1(\Omega) \times \mathbb{R}$ such that
\begin{align*}
    a_0(\psi_j,v;\kappa) + \Biot a_1(\psi_j,v;\etabar) = \lambda_j m(\psi_j,v;\sigma) \quad \forall v\in H^1(\Omega)
\end{align*}
with normalization
\begin{align}
    m(\psi_j,\psi_j;\sigma) = 1. \label{eq:eigenvalue_problem_weak_normalization}
\end{align}
The corresponding Rayleigh quotient is
\begin{align}
    R(v;\kappa,\sigma,\Biot,\etabar) \coloneqq \frac{a_0(v,v;\kappa) + \Biot a_1(v,v;\etabar)}{m(v,v;\sigma)}. \label{eq:rayleigh_quotient}
\end{align}
From the Rayleigh quotient, it follows that the eigenvalues are non-negative for $\Biot\geq0$ and the properties of $\sigma,\kappa,\etabar$ defined in~\cref{eq:eta_prop,eq:kappa_prop,eq:sigma_prop}. We sort them in increasing order, hence $0 \leq \lambda_1 \leq \lambda_2 \leq \ldots$, and the corresponding eigenfunctions form an orthonormal basis of $L^2(\Omega)$ with respect to the inner product induced by $m(\cdot,\cdot;\sigma)$. With the eigenpairs, we can expand the solution to the autonomous RHE$_a$ problem in~\cref{eq:rhea_nondim_T} as
\begin{align}
    \utilde(\tnd,\xnd;\kappa,\sigma,\Biot,\etabar) = \sum_{j=1}^\infty |\Omega|\psi_j(\xnd) m(1,\psi_j;\sigma) \exp(-\lambda_j \tnd), \label{eq:rhea_solution_eigen}
\end{align}
and the QoI as
\begin{align}
    \uavgtilde(\tnd;\kappa,\sigma,\Biot,\etabar) = \sum_{j=1}^\infty |\Omega| m(1,\psi_j;\sigma)^2 \exp(-\lambda_j \tnd), \label{eq:rhea_qoi_eigen}
\end{align}
see~\cite[Proposition 2.2 and 2.3]{kaneko2024error} for more details. 
Furthermore, we have the following lemma for the coefficients of the first term in~\cref{eq:rhea_qoi_eigen}.
\begin{lem}
    The coefficient of the first term in~\cref{eq:rhea_qoi_eigen} can be expanded as
    \begin{align}
        |\Omega| m(1,\psi_1;\sigma)^2 = 1 - \Biot^2 \Upsilon + \mathcal{O}(\Biot^3), \label{eq:first_eigenfunction_expansion}
    \end{align}
    where
    \begin{align*}
        \Upsilon \coloneqq \int_\Omega \sigma \psi'^0_1 \psi'^0_1.
    \end{align*}
\end{lem}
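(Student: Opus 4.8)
The plan is to treat the first eigenpair $(\psi_1,\lambda_1)$ of the weak eigenvalue problem~\cref{eq:eigenvalue_problem_weak} as a function of $\Biot$ and develop a regular perturbation expansion about $\Biot=0$. First I would record the limiting problem: as $\Biot\to 0$ the boundary term $\Biot\,a_1$ drops out, so~\cref{eq:eigenvalue_problem_weak} becomes the pure Neumann eigenvalue problem $a_0(\psi,v;\kappa)=\lambda\,m(\psi,v;\sigma)$. For connected $\Omega$ its lowest eigenvalue is simple, $\lambda_1^0=0$, with a constant eigenfunction, and the normalization $m(\psi_1,\psi_1;\sigma)=1$ together with $\dashint_\Omega\sigma=1$ from~\cref{eq:sigma_prop} fixes $\psi_1^0=|\Omega|^{-1/2}$. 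I then posit the Taylor expansions $\psi_1=\psi_1^0+\Biot\,\psi'^0_1+\tfrac12\Biot^2\psi''^0_1+\mathcal{O}(\Biot^3)$ and $\lambda_1=\Biot\,\lambda'^0_1+\mathcal{O}(\Biot^2)$, where $\psi'^0_1$ is exactly the solution of the sensitivity problem~\cref{eq:sensitivity_problem} and $\Upsilon=m(\psi'^0_1,\psi'^0_1;\sigma)$.

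The key device is that the normalization $m(\psi_1,\psi_1;\sigma)=1$ holds for \emph{every} $\Biot$, which I would expand order by order. The $\mathcal{O}(\Biot)$ balance gives $m(\psi_1^0,\psi'^0_1;\sigma)=0$ (the first correction is $m$-orthogonal to the constant mode), and the $\mathcal{O}(\Biot^2)$ balance gives $m(\psi_1^0,\psi''^0_1;\sigma)=-\,m(\psi'^0_1,\psi'^0_1;\sigma)=-\Upsilon$. These two scalar identities are all that is needed; in particular they let me avoid ever solving explicitly for $\psi''^0_1$.

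Next I would reduce the coefficient in~\cref{eq:first_eigenfunction_expansion} to an overlap with the constant mode. Since $\psi_1^0$ is constant, $m(1,\phi;\sigma)=|\Omega|^{1/2}m(\psi_1^0,\phi;\sigma)$ for any $\phi$, so the coefficient is a fixed, $\Biot$-independent multiple of $m(\psi_1^0,\psi_1;\sigma)^2$; because at $\Biot=0$ the higher Neumann modes have zero overlap with the constant initial datum, this coefficient equals $\uavgtilde(0)=1$ there, and hence the coefficient is exactly $m(\psi_1^0,\psi_1;\sigma)^2$ for all $\Biot$. Inserting the two normalization identities gives $m(\psi_1^0,\psi_1;\sigma)=1+\Biot\,m(\psi_1^0,\psi'^0_1;\sigma)+\tfrac12\Biot^2 m(\psi_1^0,\psi''^0_1;\sigma)+\mathcal{O}(\Biot^3)=1-\tfrac12\Upsilon\,\Biot^2+\mathcal{O}(\Biot^3)$, and squaring produces $1-\Upsilon\,\Biot^2+\mathcal{O}(\Biot^3)$, the claimed expansion. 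As a consistency check I would test the $\mathcal{O}(\Biot)$ eigen-relation against $\psi_1^0$ to get $\lambda'^0_1=\gamma$ (using $\dashint_{\pOmega}\etabar=1$ from~\cref{eq:eta_prop}), matching the LCM decay rate.

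The main obstacle is not this short algebra but justifying the expansion itself. I would invoke analytic perturbation theory for self-adjoint operators (Kato): viewing~\cref{eq:eigenvalue_problem_weak} as a holomorphic family of type (B) in $\Biot$ on the fixed form domain $H^1(\Omega)$, and using that $\lambda_1^0=0$ is a simple, isolated eigenvalue of the limiting Neumann problem, the branch $\bigl(\psi_1(\Biot),\lambda_1(\Biot)\bigr)$ is analytic near $\Biot=0$, which legitimizes the Taylor coefficients $\psi'^0_1,\psi''^0_1$ and the uniform $\mathcal{O}(\Biot^3)$ remainder. The one structural hypothesis I would state explicitly is connectedness of $\Omega$, ensuring the Neumann kernel is one-dimensional so that $\lambda_1^0$ is simple and the perturbation branch is unambiguous.
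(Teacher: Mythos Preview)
Your argument is correct. The paper does not reproduce a proof here but defers to~\cite[Appendix~C.5]{kaneko2024error}; your approach---Taylor-expanding the analytic branch $\psi_1(\Biot)$ about $\Biot=0$, differentiating the normalization $m(\psi_1,\psi_1;\sigma)=1$ order by order to obtain $m(\psi_1^0,\psi'^0_1;\sigma)=0$ and $m(\psi_1^0,\psi''^0_1;\sigma)=-\Upsilon$, then squaring the overlap with the constant mode---is the standard perturbation computation and almost certainly coincides with the cited reference. Your invocation of Kato's analytic perturbation theory for a simple isolated eigenvalue, together with the explicit connectedness hypothesis on $\Omega$, is the right rigorous underpinning for the formal expansion (the paper leaves this implicit).
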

The proof is given in~\cite[Appendix C.5]{kaneko2024error}.
In the small-Biot limit, the first eigenvalue $\lambda_1$ dominates the long-time solution in~\cref{eq:rhea_qoi_eigen}. To analyze this regime, i.e., as $\Biot \to 0$, we introduce an asymptotic expansion of the first eigenvalue in powers of $\Biot$ around $\Biot = 0$:
\begin{align}
    \lambda_1(\Biot) = \lambda_1^0 + \lambda_1'^0 \Biot + \frac{1}{2}\lambda_1''^0 \Biot^2 + \mathcal{O}(\Biot^3). \label{eq:eigenvalue_expansion}
\end{align}
We use the superscript $0$ to denote the case $\Biot=0$, and the prime denotes differentiation with respect to $\Biot$. In the following, we find expressions for $\lambda_1^0$, $\lambda_1'^0$, and $\lambda_1''^0$.

From the Rayleigh quotient in~\cref{eq:rayleigh_quotient}, and setting $\Biot=0$, we directly find the first eigenvalue and eigenfunction,
\begin{align}
    \lambda_1^0 = 0, \quad \psi_1^0 = |\Omega|^{-1/2}. \label{eq:eigenvalue_problem_biot0}
\end{align}
\begin{lem}
    The first and second derivatives of the first eigenvalue at $\Biot=0$ are given by
    \begin{align}
        \lambda_1'^0 &= \gamma, \label{eq:eigenvalue_derivative_1} \\
        \lambda_1''^0 &= -2 a_0(\psi'^0_1,\psi'^0_1;\kappa). \label{eq:eigenvalue_derivative_2}
    \end{align}
\end{lem}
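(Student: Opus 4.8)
The plan is to treat the eigenvalue problem \cref{eq:eigenvalue_problem_weak} as an analytic perturbation in $\Biot$ about the pure Neumann problem at $\Biot=0$, where \cref{eq:eigenvalue_problem_biot0} identifies the simple eigenpair $\lambda_1^0=0$, $\psi^0_1=|\Omega|^{-1/2}$. Since $\lambda_1^0$ is simple and separated from the rest of the spectrum, standard (Kato--Rellich) perturbation theory ensures that $\lambda_1(\Biot)$ and a normalized eigenfunction branch $\psi_1(\Biot)$ are analytic near $\Biot=0$, so I may expand $\psi_1 = \psi^0_1 + \Biot\,\psi'^0_1 + \tfrac12\Biot^2\psi''^0_1 + \mathcal{O}(\Biot^3)$ together with \cref{eq:eigenvalue_expansion}. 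Substituting both expansions into the weak form and collecting equal powers of $\Biot$ produces a hierarchy of identities that I would then test against carefully chosen functions.

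At order $\Biot^1$, using $\lambda_1^0=0$, the matched identity reads
\begin{align*}
    a_0(\psi'^0_1,v;\kappa) + a_1(\psi^0_1,v;\etabar) = \lambda_1'^0\, m(\psi^0_1,v;\sigma), \qquad \forall v \in H^1(\Omega).
\end{align*}
Testing with the constant $v=\psi^0_1$ annihilates $a_0$ (its integrand contains $\delnd\psi^0_1=0$), and with $\int_{\pOmega}\etabar=|\pOmega|$, $\int_\Omega\sigma=|\Omega|$ from \cref{eq:eta_prop,eq:sigma_prop} and the normalization $m(\psi^0_1,\psi^0_1;\sigma)=1$ this collapses to $\lambda_1'^0 = |\pOmega|/|\Omega| = \gamma$, establishing \cref{eq:eigenvalue_derivative_1}. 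Equivalently, this is the Hellmann--Feynman identity applied to the Rayleigh quotient \cref{eq:rayleigh_quotient}, namely $\lambda_1'(\Biot)=a_1(\psi_1,\psi_1;\etabar)$.

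For the second derivative I would combine two relations. Matching at order $\Biot^2$ and again testing with the constant $\psi^0_1$ removes the $a_0(\psi''^0_1,\psi^0_1;\kappa)$ contribution and yields $\tfrac12\lambda_1''^0 = a_1(\psi'^0_1,\psi^0_1;\etabar) - \gamma\,m(\psi'^0_1,\psi^0_1;\sigma)$. Differentiating the normalization $m(\psi_1,\psi_1;\sigma)\equiv1$ at $\Biot=0$ gives $m(\psi'^0_1,\psi^0_1;\sigma)=0$, so the last term drops and $\tfrac12\lambda_1''^0 = a_1(\psi'^0_1,\psi^0_1;\etabar)$. To convert this boundary term into the stated volume term, I return to the order-$\Biot^1$ identity and now test it with $v=\psi'^0_1$; symmetry of $a_1$ and $m$ together with $m(\psi'^0_1,\psi^0_1;\sigma)=0$ gives $a_0(\psi'^0_1,\psi'^0_1;\kappa) + a_1(\psi'^0_1,\psi^0_1;\etabar)=0$. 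Substituting then produces $\lambda_1''^0 = -2\,a_0(\psi'^0_1,\psi'^0_1;\kappa)$, which is \cref{eq:eigenvalue_derivative_2}.

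I expect the main obstacle to be the rigorous justification of the formal expansion rather than the algebra: one must verify the simplicity of $\lambda_1^0$ and a spectral gap to $\lambda_2^0>0$, so that analytic perturbation theory applies and the branch $\psi_1(\Biot)$ is well defined and differentiable. A related subtlety is gauge-fixing $\psi'^0_1$: the order-$\Biot^1$ identity determines it only modulo the constant $\psi^0_1$, and it is precisely the differentiated normalization $m(\psi'^0_1,\psi^0_1;\sigma)=0$ that fixes this free component and renders the two derivations consistent. Everything else reduces to bookkeeping with the bilinear forms $a_0$, $a_1$, $m$ defined in \cref{eq:a0,eq:a1,eq:m}.
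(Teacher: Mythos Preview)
Your proposal is correct and follows essentially the same approach as the paper: differentiate the weak eigenvalue identity and the normalization condition in $\Biot$, test the first-order identity with $\psi_1^0$ to obtain $\lambda_1'^0=\gamma$, then test the second-order identity with $\psi_1^0$ and combine it with the first-order identity tested with $\psi'^0_1$ (using $m(\psi'^0_1,\psi_1^0;\sigma)=0$) to obtain $\lambda_1''^0=-2a_0(\psi'^0_1,\psi'^0_1;\kappa)$. Your additional remarks on Kato--Rellich simplicity/analyticity and the gauge-fixing of $\psi'^0_1$ make explicit the regularity assumptions that the paper leaves implicit, but the algebraic skeleton is identical.
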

\begin{proof}
    Starting from the weak form in~\cref{eq:eigenvalue_problem_weak}, we consider the sensitivity equation obtained by differentiating~\cref{eq:eigenvalue_problem_weak} with respect to $\Biot$ and evaluating at $\Biot=0$ for the first eigenpair $(\psi_1,\lambda_1) \in H^1(\Omega) \times \mathbb{R}$:
    \begin{align}
        a_0(\psi'^0_1,v;\kappa) + a_1(\psi_1^0,v;\etabar) = \lambda_1'^0 m(\psi_1^0,v;\sigma) \quad \forall v\in H^1(\Omega). \label{eq:sensitivity_equation}
    \end{align}
    We can similarly repeat for the normalization condition in~\cref{eq:eigenvalue_problem_weak_normalization} to obtain
    \begin{align}
        m(\psi_1^0,\psi'^0_1;\sigma) = 0, \label{eq:sensitivity_normalization}
    \end{align}
    which is equal to
    \begin{align}
        m(1,\psi'^0_1;\sigma) = \int_\Omega \sigma \psi'^0_1 = 0, \label{eq:sensitivity_normalization_explicit}
    \end{align}
    because of~\cref{eq:eigenvalue_problem_biot0}.
    By choosing the test function as $v=\psi_1^0$ in~\cref{eq:sensitivity_equation} and using~\cref{eq:eigenvalue_problem_weak_normalization,eq:eta_prop,eq:eigenvalue_problem_biot0}, we find
    \begin{align}
        \lambda_1'^0 = a_1(\psi_1^0,\psi_1^0;\etabar) = |\Omega|^{-1} \int_{\pOmega} \etabar = \gamma, \label{eq:eigenvalue_derivative_1_proof}
    \end{align}
    which proves~\cref{eq:eigenvalue_derivative_1}. Next, we can insert $v=\psi'^0_1$ into~\cref{eq:sensitivity_equation} to find the relation
    \begin{align}
        a_0(\psi'^0_1,\psi'^0_1;\kappa) + a_1(\psi_1^0,\psi'^0_1;\etabar) = 0, \label{eq:sensitivity_equation_2}
    \end{align}
    where we used~\cref{eq:sensitivity_normalization}. To find the second derivative, we differentiate~\cref{eq:eigenvalue_problem_weak} twice with respect to $\Biot$ and evaluate at $\Biot=0$ with $\lambda_1^0=0$ and $\lambda_1'^0=\gamma$ to obtain
    \begin{align*}
        a_0(\psi''^0_1,v;\kappa) + 2 a_1(\psi'^0_1,v;\etabar) = \lambda_1''^0 m(\psi_1^0,v;\sigma) + 2 \gamma m(\psi'^0_1,v;\sigma) \quad \forall v\in H^1(\Omega).
    \end{align*}
    Now with $v=\psi_1^0$ and using~\cref{eq:eigenvalue_problem_weak_normalization,eq:eigenvalue_problem_biot0,eq:sensitivity_equation_2,eq:sensitivity_normalization}, we find
    \begin{align}
        \lambda_1''^0 = 2 a_1(\psi'^0_1,\psi_1^0;\etabar) = -2 a_0(\psi'^0_1,\psi'^0_1;\kappa), \label{eq:eigenvalue_derivative_2_proof}
    \end{align}
    which concludes the proof.
\end{proof}
We now define the coefficient $\phi$ as
\begin{align}
    \phi(\kappa,\sigma,\etabar) \coloneqq a_0(\psi'^0_1,\psi'^0_1;\kappa) = \int_\Omega \kappa |\delnd \psi'^0_1|^2, \label{eq:phi_definition}
\end{align}
which can be related to $\lambda_1''^0$ with~\cref{eq:eigenvalue_derivative_2_proof},
\begin{align}
    \phi(\kappa,\sigma,\etabar) = -\frac{\lambda_1''^0}{2},
\end{align}
and we can write the asymptotic expansion of the first eigenvalue in~\cref{eq:eigenvalue_expansion} as
\begin{align}
    \lambda_1(\Biot) = \gamma \Biot - \phi(\kappa,\sigma,\etabar) \Biot^2 + O(\Biot^3). \label{eq:eigenvalue_expansion_final}
\end{align}
The coefficient $\phi$ is of central importance for the error analysis, as seen below. To evaluate $\phi$, we need to solve the sensitivity equation, derived in~\cref{eq:sensitivity_equation,eq:sensitivity_normalization_explicit}, and rewritten here: find $\psi'^0_1 \in H^1(\Omega)$ such that
\begin{align*}
    a_0(\psi'^0_1,v;\kappa) = |\Omega|^{-1/2} L(v;\sigma,\etabar), \quad \forall v\in H^1(\Omega),
\end{align*}
with normalization condition
\begin{align*}
    m(\psi'^0_1,1;\sigma) = 0,
\end{align*}
and where the linear form $L$ is defined as
\begin{align}
\begin{aligned}
    L(v;\sigma,\etabar) &\coloneqq \gamma m(1,v;\sigma) - a_1(1,v;\etabar)\\ 
    &=\gamma \int_\Omega \sigma v - \int_{\pOmega} \etabar v.
\end{aligned}\label{eq:Lform}
\end{align}
We now define the function space
\begin{align*}
    Z_0(b) \coloneqq \left\{\, w \in H^1(\Omega) \ \big| \ \int_\Omega b w = 0 \,\right\},
\end{align*}
with which we can rewrite the sensitivity problem as: find $\psi'^0_1 \in Z_0(\sigma)$ such that
\begin{align}
    a_0(\psi'^0_1,v;\kappa) = |\Omega|^{-1/2} L(v;\sigma,\etabar) \quad \forall v\in Z_0(\sigma), \label{eq:sensitivity_problem_rewritten}
\end{align}
or expanded explicitly,
\begin{align*}
    \int_\Omega \kappa \delnd \psi'^0_1 \cdot \delnd v = -|\Omega|^{-1/2} \int_{\pOmega} \etabar v \quad \forall v\in Z_0(\sigma),
\end{align*}
where the first term on the right hand side vanishes due to the definition of $Z_0(\sigma)$. The associated strong form is given by
\begin{alignat}{4}
\begin{aligned}
 - \delnd \cdot (\kappa\delnd \psi'^0_1) &= 0 &\quad& \text{in }&\Omega, \\
 \kappa\delnd \psi'^0_1 \cdot \bm{n} &= -|\Omega|^{-1/2} \etabar &\quad& \text{on }&\pOmega,\\
    \int_\Omega \sigma \psi'^0_1 &= 0. &
\end{aligned} \label{eq:sensitivity_problem}
\end{alignat}
The numerical solution of this problem was discussed in~\cite{kaneko2024error} (for $\etabar=1$).
We now get to the central result of this section.
\begin{propo}[\textbf{Lumping error}]\label{propo:lcm_error_bound}
The error due to lumping is bounded asymptotically from above with
\begin{align}
    \max_{\tnd\in[0,\tf]}|\utilde_{\rm{avg}}(\tnd;\kappa,\sigma,\Biot,\etabar) - \uLump(\tnd;\Biot)| \leq \frac{\phi(\kappa,\sigma,\etabar)}{\gamma\exp(1)} \Biot + O(\Biot^2), \label{eq:lcm_error_bound}
\end{align}
where the coefficient $\phi > 0$ was defined in~\cref{eq:phi_definition}.
\end{propo}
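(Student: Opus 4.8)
The plan is to exploit the spectral representation of the RHE$_a$ quantity of interest in \cref{eq:rhea_qoi_eigen} and to compare it, mode by mode, against the single-exponential LCM solution $\uLump(\tnd)=\exp(-\gamma\Biot\tnd)$. In the small-Biot limit the first eigenvalue $\lambda_1$ collapses to zero while all higher eigenvalues remain bounded away from zero (they converge to the strictly positive Neumann eigenvalues obtained at $\Biot=0$, cf.\ \cref{eq:eigenvalue_problem_biot0}); hence the behavior of $\uavgtilde$ is governed at leading order by the first mode, and the LCM should be recovered from it. Concretely, I would split
\begin{align*}
\uavgtilde(\tnd) = \underbrace{|\Omega|\,m(1,\psi_1;\sigma)^2\,\exp(-\lambda_1\tnd)}_{\text{first mode}} + \underbrace{\sum_{j\geq 2}|\Omega|\,m(1,\psi_j;\sigma)^2\,\exp(-\lambda_j\tnd)}_{\text{tail}},
\end{align*}
and estimate the two pieces separately.

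For the tail, the key observation is that evaluating \cref{eq:rhea_qoi_eigen} at $\tnd=0$ together with the initial condition $\utilde(0,\cdot)=1$ yields the Parseval-type identity $\sum_{j\geq 1}|\Omega|\,m(1,\psi_j;\sigma)^2 = \uavgtilde(0) = 1$. Subtracting the first-coefficient expansion of \cref{eq:first_eigenfunction_expansion}, namely $|\Omega|\,m(1,\psi_1;\sigma)^2 = 1-\Biot^2\Upsilon+\mathcal{O}(\Biot^3)$, I obtain $\sum_{j\geq 2}|\Omega|\,m(1,\psi_j;\sigma)^2 = \mathcal{O}(\Biot^2)$. Since every coefficient is non-negative and $\exp(-\lambda_j\tnd)\leq 1$ for $\lambda_j\geq 0$ and $\tnd\geq 0$, the tail is bounded in absolute value by this sum, hence is $\mathcal{O}(\Biot^2)$ uniformly in $\tnd\in[0,\tf]$. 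The same boundedness lets me replace the first-mode prefactor $1-\Biot^2\Upsilon+\mathcal{O}(\Biot^3)$ by $1$ at the cost of an additional $\mathcal{O}(\Biot^2)$ term.

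It then remains to compare $\exp(-\lambda_1\tnd)$ with $\uLump(\tnd)=\exp(-\gamma\Biot\tnd)$. Inserting the eigenvalue expansion \cref{eq:eigenvalue_expansion_final}, $\lambda_1=\gamma\Biot-\phi\Biot^2+\mathcal{O}(\Biot^3)$, I factor $\lambda_1\tnd = \gamma\Biot\tnd\bigl(1-\tfrac{\phi}{\gamma}\Biot+\mathcal{O}(\Biot^2)\bigr)$ and set $z\coloneqq\gamma\Biot\tnd\geq 0$ and $\epsilon\coloneqq\tfrac{\phi}{\gamma}\Biot+\mathcal{O}(\Biot^2)$, which lies in $[0,1)$ for $\Biot$ small since $\phi,\gamma>0$. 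The elementary estimate \cref{eq:exp_inequality} then gives
\begin{align*}
\bigl|\exp(-\lambda_1\tnd)-\exp(-\gamma\Biot\tnd)\bigr| = \bigl|\exp(-z(1-\epsilon))-\exp(-z)\bigr| \leq \frac{\epsilon}{\exp(1)}+\mathcal{O}(\epsilon^2) = \frac{\phi\,\Biot}{\gamma\exp(1)}+\mathcal{O}(\Biot^2).
\end{align*}
Crucially, this bound is independent of $z$, hence uniform in $\tnd$. Collecting the tail and prefactor contributions through the triangle inequality and taking the maximum over $\tnd\in[0,\tf]$ then produces exactly \cref{eq:lcm_error_bound}.

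The main obstacle is not the first-mode comparison—that is handed to us by \cref{eq:exp_inequality}—but rather securing uniformity in time while discarding the higher modes. I must ensure that the infinite tail is controlled by a $\tnd$-independent, $\mathcal{O}(\Biot^2)$ quantity; this hinges on the non-negativity of the spectral coefficients and the bound $\exp(-\lambda_j\tnd)\leq 1$, together with the Parseval identity that pins the total coefficient mass to $1$. The remaining work is careful bookkeeping of the $\mathcal{O}(\Biot^2)$ remainders, verifying that none of them conceals a factor growing with $\tnd$ or with the mode index $j$.
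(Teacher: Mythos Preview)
Your proposal is correct and follows essentially the same approach as the paper's own proof: split the eigenfunction expansion \cref{eq:rhea_qoi_eigen} into the first mode and the tail, control the tail uniformly in $\tnd$ via Parseval together with \cref{eq:first_eigenfunction_expansion}, replace the first-mode prefactor by $1$ at an $\mathcal{O}(\Biot^2)$ cost, and then compare $\exp(-\lambda_1\tnd)$ with $\exp(-\gamma\Biot\tnd)$ using \cref{eq:eigenvalue_expansion_final} and the exponential estimate \cref{eq:exp_inequality} with $z=\gamma\Biot\tnd$ and $\epsilon=\phi\Biot/\gamma$. Your emphasis on the $\tnd$-uniformity of each remainder is well placed and matches the paper's reasoning exactly.
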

\begin{proof}
    We start by writing the difference between the RHE$_a$ in~\cref{eq:rhea_qoi_eigen} and LCM solution in~\cref{eq:lumped_capacitance_solution},
    \begin{align*}
        \utilde_{\rm{avg}}(\tnd;\kappa,\sigma,\Biot,\etabar) - \uLump(\tnd;\Biot) &= \sum_{j=1}^\infty |\Omega| m(1,\psi_j;\sigma)^2 \exp(-\lambda_j \tnd) - \exp(-\gamma \Biot \tnd) \\
        &= \exp(-\lambda_1 \tnd) - \exp(-\gamma \Biot \tnd) + |\Omega|\sum_{j=2}^\infty m(1,\psi_j;\sigma)^2 \exp(-\lambda_j \tnd),
    \end{align*}
    where we inserted~\cref{eq:first_eigenfunction_expansion} for the first term of the series and truncated terms of order $\Biot^2$.
    As $\lambda_j\geq0$, we can bound the last term with
    \begin{align*}
        |\Omega|\sum_{j=2}^\infty m(1,\psi_j;\sigma)^2 \exp(-\lambda_j \tnd) &\leq |\Omega|\sum_{j=2}^\infty m(1,\psi_j;\sigma)^2 \\
        &= |\Omega|\sum_{j=1}^\infty m(1,\psi_j;\sigma)^2 - |\Omega| m(1,\psi_1;\sigma)^2 \\
        &= |\Omega|\sum_{j=1}^\infty m(1,\psi_j;\sigma)^2 - (1-\Biot^2 \Upsilon + \mathcal{O}(\Biot^3)),
    \end{align*}
    where we again used~\cref{eq:first_eigenfunction_expansion}. From Parseval's identity, we know
    \begin{align*}
        |\Omega|\sum_{j=1}^\infty m(1,\psi_j;\sigma)^2 = 1,
    \end{align*}
    and it follows that
    \begin{align*}
        |\Omega|\sum_{j=2}^\infty m(1,\psi_j;\sigma)^2 \exp(-\lambda_j \tnd) &\leq \Biot^2 \Upsilon + \mathcal{O}(\Biot^3).
    \end{align*}
    Now, by inserting the asymptotic expansion of the first eigenvalue in~\cref{eq:eigenvalue_expansion_final} and truncating terms of order $\Biot^2$, we obtain 
    \begin{align*}
        \utilde_{\rm{avg}}(\tnd;\kappa,\sigma,\Biot,\etabar) - \uLump(\tnd;\Biot) &\leq \exp(-\lambda_1 \tnd) - \exp(-\gamma \Biot \tnd) + \mathcal{O}(\Biot^2) \\
        &= \exp(-(\gamma\Biot-\phi\Biot^2) \tnd) - \exp(-\gamma \Biot \tnd) + \mathcal{O}(\Biot^2) \\
        &= \exp\left(-\gamma\Biot \tnd\left(1-\frac{\phi\Biot}{\gamma}\right)\right) - \exp(-\gamma \Biot \tnd) + \mathcal{O}(\Biot^2).
    \end{align*}
    Finally, we use the exponential bound in~\cref{eq:exp_inequality} with $z=\gamma \Biot \tnd$ and $\epsilon = \phi\Biot/\gamma$ to find the bound in~\cref{eq:lcm_error_bound}.
\end{proof}

\subsubsection{\texorpdfstring{Upper bound for $\phi$}{Upper bound for phi}}\label{subsubsec:phi_bound}

We can use~\cref{eq:lcm_error_bound} to bound the lumping error given knowledge of the coefficient $\phi(\kappa,\sigma,\etabar)$. However, to compute $\phi(\kappa,\sigma,\etabar)$,~\cref{eq:sensitivity_problem} needs to be solved, for which the spatial distributions of the material properties $\kappa$ and $\sigma$ as well as the time-independent variation $\etabar$ are required---quantities that might be unknown in applications. In this section, we derive an upper bound for $\phi$ that is less restrictive: it only requires knowledge of the geometry $\Omega$ and the variances of $\sigma$ and $\etabar$, given by
\begin{align}
    \dashint_{\Omega} (\sigma - 1)^2, \label{eq:var_sigma}\\
    \dashint_{\pOmega} (\etabar - 1)^2 \label{eq:var_eta}
\end{align}
which are scalar derived quantities.

Before we state the main result, we first derive an upper bound for $\phi$ that removes the dependency of $\phi$ on $\kappa$.

\begin{lem}\label{lem:phi_bound_kappa}
    The coefficient $\phi(\kappa,\sigma,\etabar)$ is bounded from above with
    \begin{align}
        \phi(\kappa,\sigma,\etabar) \leq \phi(1,\sigma,\etabar). \label{eq:phi_bound_kappa}
    \end{align}
\end{lem}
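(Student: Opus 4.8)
The plan is to represent $\phi$ through a dual (complementary-energy) variational principle and then exploit the pointwise bound $\kappa \ge 1$. Recall from~\cref{eq:sensitivity_problem_rewritten} that $\psi'^0_1$ is the unique element of $Z_0(\sigma)$ satisfying $a_0(\psi'^0_1,v;\kappa) = |\Omega|^{-1/2}L(v;\sigma,\etabar)$ for all $v \in Z_0(\sigma)$; well-posedness holds because $a_0(\cdot,\cdot;\kappa)$ is coercive on $Z_0(\sigma)$ (a constant lying in $Z_0(\sigma)$ must vanish, since $\int_\Omega \sigma = |\Omega| > 0$). Testing with $v = \psi'^0_1$ and invoking~\cref{eq:phi_definition} gives $\phi(\kappa,\sigma,\etabar) = a_0(\psi'^0_1,\psi'^0_1;\kappa) = |\Omega|^{-1/2}L(\psi'^0_1;\sigma,\etabar)$.

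Next I would invoke the Dirichlet principle: since $a_0(\cdot,\cdot;\kappa)$ is symmetric and coercive on $Z_0(\sigma)$, the solution $\psi'^0_1$ maximizes the strictly concave functional
\begin{align*}
    J_\kappa(v) \coloneqq 2|\Omega|^{-1/2}L(v;\sigma,\etabar) - a_0(v,v;\kappa)
\end{align*}
over $Z_0(\sigma)$, with maximal value $a_0(\psi'^0_1,\psi'^0_1;\kappa) = \phi(\kappa,\sigma,\etabar)$; thus $\phi(\kappa,\sigma,\etabar) = \max_{v \in Z_0(\sigma)} J_\kappa(v)$, and the same identity holds with $\kappa$ replaced by $1$. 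The monotonicity step is then immediate: by~\cref{eq:kappa_prop} we have $\kappa \ge 1$ a.e. in $\Omega$, so $a_0(v,v;\kappa) = \int_\Omega \kappa|\delnd v|^2 \ge \int_\Omega |\delnd v|^2 = a_0(v,v;1)$ for every $v$, while the load $L$ does not depend on $\kappa$. Hence $J_\kappa(v) \le J_1(v)$ for all $v \in Z_0(\sigma)$, and taking the supremum gives $\phi(\kappa,\sigma,\etabar) = \max_v J_\kappa(v) \le \max_v J_1(v) = \phi(1,\sigma,\etabar)$, which is~\cref{eq:phi_bound_kappa}. Equivalently, one may skip naming $J_\kappa$ and simply use $\psi'^0_1$ as a trial function in the variational principle for $\phi(1,\sigma,\etabar)$, estimating $J_1(\psi'^0_1) = 2a_0(\psi'^0_1,\psi'^0_1;\kappa) - a_0(\psi'^0_1,\psi'^0_1;1) \ge a_0(\psi'^0_1,\psi'^0_1;\kappa) = \phi(\kappa,\sigma,\etabar)$ by $\kappa \ge 1$.

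The argument is short, and the only real care is needed in the first two steps rather than the last. One must confirm the complementary-energy characterization $\phi = \max_v J_\kappa(v)$—including coercivity of $a_0$ on $Z_0(\sigma)$, which rests on connectedness of $\Omega$ together with $\int_\Omega \sigma = |\Omega|>0$—since this is exactly what converts the monotone dependence of the energy on $\kappa$ into monotone dependence of $\phi$. Once the dual formulation is in place, the pointwise inequality $\kappa \ge 1$ does all the work, as $\kappa$ enters only through the quadratic energy and not through the fixed load.
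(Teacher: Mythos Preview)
Your proof is correct and follows essentially the same approach as the paper: both arguments use the variational (energy) characterization of $\phi$ and the pointwise bound $\kappa\ge1$ to compare the quadratic terms. Your functional $J_\kappa$ is simply $-2$ times the paper's energy functional, so your maximization is equivalent to the paper's minimization, and your ``alternative'' of using $\psi'^0_1$ as a trial function in the $\kappa=1$ problem is exactly the step the paper carries out.
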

\begin{proof}
    First, define $\xi \in Z_0(\sigma)$ and $\xi^1 \in Z_0(\sigma)$ as the solutions to~\cref{eq:sensitivity_problem_rewritten} for a general $\kappa$ and $\kappa=1$ respectively, i.e.,
    \begin{align}
        a_0(\xi,v;\kappa) &= |\Omega|^{-1/2} L(v;\sigma,\etabar) \quad \forall v\in Z_0(\sigma) \label{eq:kappa_prf_1}, \\
        a_0(\xi^1,v;1) &= |\Omega|^{-1/2} L(v;\sigma,\etabar) \quad \forall v\in Z_0(\sigma) \label{eq:kappa_prf_2}.
    \end{align}
    Moreover, we have with~\cref{eq:phi_definition} that
    \begin{align}
        \phi(\kappa,\sigma,\etabar) = a_0(\xi,\xi;\kappa), \label{eq:kappa_prf_3} \\
        \phi(1,\sigma,\etabar) = a_0(\xi^1,\xi^1;1). \label{eq:kappa_prf_4}
    \end{align}
    By inserting $v=\xi$ into~\cref{eq:kappa_prf_1} and $v=\xi^1$ into~\cref{eq:kappa_prf_2}, we obtain the following relations:
    \begin{align}
        |\Omega|^{-1/2} L(\xi;\sigma,\etabar) = \phi(\kappa,\sigma,\etabar), \label{eq:kappa_prf_5} \\
        |\Omega|^{-1/2} L(\xi^1;\sigma,\etabar) = \phi(1,\sigma,\etabar). \label{eq:kappa_prf_6}
    \end{align}
    We can now introduce the energy functional associated with~\cref{eq:sensitivity_problem_rewritten},
    \begin{align}
        J(v;\kappa,\sigma,\etabar) \coloneqq \frac{1}{2} a_0(v,v;\kappa) - |\Omega|^{-1/2} L(v;\sigma,\etabar). \label{eq:kappa_prf_7}
    \end{align}
    As the solution $\xi^1$ minimizes $J(v;1,\sigma,\etabar)$,we can write
    \begin{align*}
        J(\xi^1;1,\sigma,\etabar) &\leq J(\xi;1,\sigma,\etabar).
    \end{align*}
    The left term can be rewritten with~\cref{eq:kappa_prf_4,eq:kappa_prf_6} as
    \begin{align*}
        J(\xi^1;1,\sigma,\etabar) &= \frac{1}{2} a_0(\xi^1,\xi^1;1) - |\Omega|^{-1/2} L(\xi^1;\sigma,\etabar) \\ 
        &= \frac{1}{2} \phi(1,\sigma,\etabar) - \phi(1,\sigma,\etabar) \\
        &= -\frac{1}{2} \phi(1,\sigma,\etabar).
    \end{align*}
    Since $\kappa\geq 1$, see~\cref{eq:kappa_prop}, the right term can be rewritten with~\cref{eq:kappa_prf_3,eq:kappa_prf_5} as
    \begin{align*}
        J(\xi;1,\sigma,\etabar) &\leq J(\xi;\kappa,\sigma,\etabar) \\
        &= \frac{1}{2} a_0(\xi,\xi;\kappa) - |\Omega|^{-1/2} L(\xi;\sigma,\etabar) \\
        &= -\frac{1}{2} \phi(\kappa,\sigma,\etabar).
    \end{align*}
    Combining both results and dividing by $-1/2$ on both sides yields~\cref{eq:phi_bound_kappa}.
\end{proof}

\begin{propo}
    The coefficient $\phi(\kappa,\sigma,\etabar)$ is bounded from above with
    \begin{align}
        \phi(\kappa,\sigma,\etabar) \leq (\sqrt{\phi(1,1,1)} + \sqrt{\deltasig} + \sqrt{\deltaeta})^2, \label{eq:phi_bound_result}
    \end{align}
    where $\phi(1,1,1)$ depends only on the geometry $\Omega$. The corrections, $\deltasig$ and $\deltaeta$, are defined as
    \begin{align}
        \deltasig &\coloneqq \frac{\gamma^2}{\mu}\dashint_\Omega(\sigma-1)^2, \label{eq:deltasig} \\
        \deltaeta &\coloneqq \frac{\gamma}{\Lambda} \dashint_{\pOmega} (\etabar - 1)^2. \label{eq:deltaeta}
    \end{align}
    Here, $\mu$ and $\Lambda$ correspond to eigenvalues defined as
    \begin{align}
        \mu \coloneqq \inf_{w\in Z_0(1)} \frac{a_0(w,w)}{\int_\Omega w^2}, \label{eq:stability_1} \\
        \Lambda \coloneqq \inf_{w\in Z_0(1)} \frac{a_0(w,w)}{\int_{\pOmega} w^2}, \label{eq:stability_2}
    \end{align}
    which are also purely geometric quantities. We note that the stability constants $\gamma^2/\mu$ and $\gamma/\Lambda$ are scale invariant. The numerical treatment of the eigenvalue problems is outlined in Appendix~\ref{sec:steklov_numerical}.
\end{propo}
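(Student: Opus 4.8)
The plan is to first invoke \cref{lem:phi_bound_kappa}, which gives $\phi(\kappa,\sigma,\etabar)\leq\phi(1,\sigma,\etabar)$, so it suffices to bound $\phi(1,\sigma,\etabar)$; this removes the $\kappa$-dependence and reduces matters to the model coefficient $\kappa=1$, for which $a_0(\cdot,\cdot;1)$ is the plain Dirichlet form $\int_\Omega\delnd(\cdot)\cdot\delnd(\cdot)$. The central observation is that $\phi(1,\sigma,\etabar)$ admits a dual-norm (Rayleigh) characterization: since the representer $\xi^1$ solves the symmetric problem $a_0(\xi^1,v;1)=|\Omega|^{-1/2}L(v;\sigma,\etabar)$ and $\phi(1,\sigma,\etabar)=a_0(\xi^1,\xi^1;1)$, one has
\[
\sqrt{\phi(1,\sigma,\etabar)}=\sup_{v}\frac{|\Omega|^{-1/2}L(v;\sigma,\etabar)}{\|\delnd v\|_{L^2(\Omega)}}.
\]
I would stress that, thanks to the normalizations $\dashint_\Omega\sigma=1$ and $\dashint_{\pOmega}\etabar=1$ in \cref{eq:sigma_prop,eq:eta_prop}, the functional $L(\cdot;\sigma,\etabar)$ annihilates constants (using $|\pOmega|=\gamma|\Omega|$), so the supremum is insensitive to the choice of mean-value constraint and may be taken over $Z_0(1)$; this is what places the geometric constants $\mu$ and $\Lambda$ on the correct space.

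Next I would split the right-hand-side functional as
\[
L(v;\sigma,\etabar)=L(v;1,1)+\gamma\int_\Omega(\sigma-1)v-\int_{\pOmega}(\etabar-1)v,
\]
and apply subadditivity of the supremum (the triangle inequality for the dual norm) to get $\sqrt{\phi(1,\sigma,\etabar)}\leq\sqrt{\phi(1,1,1)}+T_\sigma+T_\eta$. The first term is recognized directly as the dual-norm characterization of $\phi(1,1,1)$, since on $Z_0(1)$ the functional $L(\cdot;1,1)$ reduces to $-\int_{\pOmega}(\cdot)$. For $T_\sigma$ I would use Cauchy--Schwarz, $\gamma\int_\Omega(\sigma-1)v\leq\gamma\|\sigma-1\|_{L^2(\Omega)}\|v\|_{L^2(\Omega)}$, together with the Poincaré inequality $\|v\|_{L^2(\Omega)}\leq\mu^{-1/2}\|\delnd v\|_{L^2(\Omega)}$ that holds on $Z_0(1)$ by the definition \cref{eq:stability_1} of $\mu$; converting $\int_\Omega(\sigma-1)^2=|\Omega|\dashint_\Omega(\sigma-1)^2$ then gives $T_\sigma\leq\sqrt{\deltasig}$. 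For $T_\eta$ the same steps with the boundary Cauchy--Schwarz $\int_{\pOmega}(\etabar-1)v\leq\|\etabar-1\|_{L^2(\pOmega)}\|v\|_{L^2(\pOmega)}$, the trace--Poincaré inequality $\|v\|_{L^2(\pOmega)}\leq\Lambda^{-1/2}\|\delnd v\|_{L^2(\Omega)}$ from \cref{eq:stability_2}, and the conversion $\int_{\pOmega}(\etabar-1)^2=\gamma|\Omega|\dashint_{\pOmega}(\etabar-1)^2$, give $T_\eta\leq\sqrt{\deltaeta}$. Squaring the resulting inequality and chaining with \cref{lem:phi_bound_kappa} yields \cref{eq:phi_bound_result}.

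The step I expect to require the most care is the reduction to the single space $Z_0(1)$: the sensitivity problem defining $\phi(1,\sigma,\etabar)$ is naturally posed on $Z_0(\sigma)$, whereas $\mu$ and $\Lambda$ are defined on $Z_0(1)$, so a naive application of Poincaré on $Z_0(\sigma)$ would introduce the wrong, $\sigma$-dependent constant. The clean resolution is the constant-invariance of $L$ noted above, which shows that $\phi(1,\sigma,\etabar)$ is really the squared dual norm on the quotient $H^1(\Omega)/\mathbb{R}$ and hence may be evaluated on any convenient representative space. Everything else is routine Cauchy--Schwarz and bookkeeping of the averaging normalizations; I would also observe that the scale invariance of $\gamma^2/\mu$ and $\gamma/\Lambda$ follows since $\mu$ scales like (length)$^{-2}$ and $\Lambda$ like (length)$^{-1}$, matching the powers of $\gamma$.
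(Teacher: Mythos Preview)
Your proposal is correct and follows essentially the same route as the paper: invoke \cref{lem:phi_bound_kappa} to reduce to $\kappa=1$, pass from $Z_0(\sigma)$ to $Z_0(1)$ via the constant-invariance of $L$, split $L(\cdot;\sigma,\etabar)=L(\cdot;1,1)+\gamma\int_\Omega(\sigma-1)(\cdot)-\int_{\pOmega}(\etabar-1)(\cdot)$, and control the three pieces by $\phi(1,1,1)$, $\deltasig$, $\deltaeta$ using Cauchy--Schwarz together with the definitions of $\mu$ and $\Lambda$. The only cosmetic difference is that you work with the dual-norm characterization $\sqrt{\phi}=\sup_v |\Omega|^{-1/2}L(v)/\|\delnd v\|_{L^2}$ and invoke the triangle inequality for the dual norm, whereas the paper introduces the three Riesz representers $\overline{\xi},\widehat{\xi},\widetilde{\xi}$ explicitly and expands $a_0(\overline{\xi}+\widehat{\xi}+\widetilde{\xi},\overline{\xi}+\widehat{\xi}+\widetilde{\xi})$ with Cauchy--Schwarz on the cross terms; the two presentations are equivalent.
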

\begin{proof}
    We start the proof by using~\cref{lem:phi_bound_kappa} to remove the dependence on $\kappa$,
    \begin{align*}
        \phi(\kappa,\sigma,\etabar) \leq \phi(1,\sigma,\etabar).
    \end{align*}
    For the remainder of this proof, we thus set $\kappa=1$ and write, for brevity, hereafter
    \begin{align*}
        a_0(v,w) &\equiv a_0(v,w;1), \\
        \phi(\sigma,\etabar) &\equiv \phi(1,\sigma,\etabar).
    \end{align*}
    We start by defining
    \begin{align*}
        \phi(\sigma,\etabar) \coloneqq a_0(\psi'^0_1, \psi'^0_1),
    \end{align*}
    where $\psi'^0_1 \in Z_0(\sigma)$ is the solution to~\cref{eq:sensitivity_problem_rewritten},
    \begin{align*}
        a_0(\psi'^0_1,v) = |\Omega|^{-1/2} L(v;\sigma,\etabar), \quad \forall v\in Z_0(\sigma).
    \end{align*}
    We also define $\xi\in Z_0(1)$ which solves the same problem but in a different space,
    \begin{align}
        a_0(\xi,v) = |\Omega|^{-1/2} L(v;\sigma,\etabar), \quad \forall v\in Z_0(1). \label{eq:phi_prf_2}
    \end{align}
    Both $\psi'^0_1$ and $\xi$ are related via a constant shift,
    \begin{align*}
        \psi'^0_1 = \xi - \int_{\Omega} \sigma \xi,
    \end{align*}
    and, thus, it holds,
    \begin{align}
        \phi(\sigma,\etabar) = a_0(\psi'^0_1, \psi'^0_1) = a_0(\xi, \xi), \label{eq:phi_prf_3}
    \end{align}
    since $a_0$ only depends on gradients. We can now split the right hand side in~\cref{eq:phi_prf_2} with~\cref{eq:Lform} into three parts,
    \begin{align*}
        a_0(\xi,v) = |\Omega|^{-1/2} \left(\gamma \int_\Omega(\sigma-1)v + \gamma\int_\Omega v - \int_{\pOmega} (\etabar - 1) v - \int_{\pOmega} v \right), \quad \forall v\in Z_0(1),
    \end{align*}
    as well as $\xi = \overline{\xi} + \widehat{\xi} + \widetilde{\xi}$, where $\overline{\xi}, \widehat{\xi}, \widetilde{\xi} \in Z_0(1)$, to obtain three separate problems:
    \begin{align}
        a_0(\overline{\xi},v) &= |\Omega|^{-1/2} \left(\gamma\int_\Omega v - \int_{\pOmega} v \right) = |\Omega|^{-1/2} L(v;1,1), \quad \forall v\in Z_0(1), \label{eq:phi_prf_4}\\
        a_0(\widehat{\xi},v) &= |\Omega|^{-1/2} \gamma \int_\Omega(\sigma-1)v, \quad \forall v\in Z_0(1), \label{eq:phi_prf_5} \\
        a_0(\widetilde{\xi},v) &= - |\Omega|^{-1/2}  \int_{\pOmega} (\etabar - 1) v, \quad \forall v\in Z_0(1). \label{eq:phi_prf_6}
    \end{align}
    The first equation in~\cref{eq:phi_prf_4} does not depend on $\sigma$ nor $\etabar$ and can be directly solved for $\overline{\xi}$ given a geometry $\Omega$, which yields
    \begin{align}
        \phi(1,1) \coloneqq a_0(\overline{\xi},\overline{\xi}). \label{eq:phi_prf_7}
    \end{align}
    Choosing $v=\widehat{\xi}$ in the second equation in~\cref{eq:phi_prf_5}, successively applying the Cauchy-Schwarz inequality and inserting the definition of $\mu$ in~\cref{eq:stability_1} gives
    \begin{align}
    \begin{aligned}
        a_0(\widehat{\xi},\widehat{\xi}) &= |\Omega|^{-1/2} \gamma \int_\Omega(\sigma-1)\widehat{\xi} \\
        &\leq |\Omega|^{-1/2} \gamma \sqrt{\int_\Omega(\sigma-1)^2} \sqrt{\int_\Omega \widehat{\xi}^2} \\
        &\leq |\Omega|^{-1/2} \gamma \sqrt{\int_\Omega(\sigma-1)^2} \sqrt{a_0(\widehat{\xi},\widehat{\xi}) \sup_{w\in Z_0(1)} \frac{\int_\Omega w^2}{a_0(w,w)}} \\
        &= |\Omega|^{-1/2} \gamma \sqrt{\int_\Omega(\sigma-1)^2} \sqrt{a_0(\widehat{\xi},\widehat{\xi}) \mu^{-1}}.
    \end{aligned}\label{eq:phi_prf_8}
    \end{align}
    By dividing both sides of~\cref{eq:phi_prf_8} by $\sqrt{a_0(\widehat{\xi},\widehat{\xi})}$, squaring and inserting~\cref{eq:deltasig}, we obtain
    \begin{align}
        a_0(\widehat{\xi},\widehat{\xi}) \leq \deltasig. \label{eq:phi_prf_9}
    \end{align}
    Similarly, choosing $v=\widetilde{\xi}$ in the third equation in~\cref{eq:phi_prf_6}, successively applying the Cauchy-Schwarz inequality, and inserting the definition of $\Lambda$ in~\cref{eq:stability_2} yields
    \begin{align}
    \begin{aligned}
        a_0(\widetilde{\xi},\widetilde{\xi}) &= - |\Omega|^{-1/2}  \int_{\pOmega} (\etabar - 1) \widetilde{\xi} \\
        &\leq |\Omega|^{-1/2} \sqrt{\int_{\pOmega} (\etabar - 1)^2} \sqrt{\int_{\pOmega} \widetilde{\xi}^2} \\
        &\leq |\Omega|^{-1/2} \sqrt{\int_{\pOmega} (\etabar - 1)^2} \sqrt{a_0(\widetilde{\xi},\widetilde{\xi}) \sup_{w\in Z_0(1)} \frac{\int_{\pOmega} w^2}{a_0(w,w)}} \\
        &= |\Omega|^{-1/2} \sqrt{\int_{\pOmega} (\etabar - 1)^2} \sqrt{a_0(\widetilde{\xi},\widetilde{\xi}) \Lambda^{-1}},
    \end{aligned}\label{eq:phi_prf_10}
    \end{align}
    and dividing both sides of~\cref{eq:phi_prf_10} by $\sqrt{a_0(\widetilde{\xi},\widetilde{\xi})}$, squaring, and inserting~\cref{eq:deltaeta}, yields
    \begin{align}
        a_0(\widetilde{\xi},\widetilde{\xi}) \leq \deltaeta. \label{eq:phi_prf_11}
    \end{align}
    Starting now from~\cref{eq:phi_prf_3}, and inserting $\xi = \overline{\xi} + \widehat{\xi} + \widetilde{\xi}$, we have
    \begin{align}
    \begin{aligned}
        \phi(\sigma,\etabar) &= a_0(\xi, \xi) \\
        &= a_0(\overline{\xi} + \widehat{\xi} + \widetilde{\xi}, \overline{\xi} + \widehat{\xi} + \widetilde{\xi}) \\
        &= a_0(\overline{\xi},\overline{\xi})+a_0(\widehat{\xi},\widehat{\xi})+a_0(\widetilde{\xi},\widetilde{\xi})+2a_0(\overline{\xi},\widehat{\xi})+2a_0(\overline{\xi},\widetilde{\xi})+2a_0(\widehat{\xi},\widetilde{\xi}).
    \end{aligned}\label{eq:phi_prf_12}
    \end{align}
    The mixed terms in~\cref{eq:phi_prf_12} can be bounded with the Cauchy-Schwarz inequality:
    \begin{align}
        a_0(\overline{\xi},\widehat{\xi}) &\leq \sqrt{a_0(\overline{\xi},\overline{\xi})}\sqrt{a_0(\widehat{\xi},\widehat{\xi})}, \label{eq:phi_prf_13}\\
        a_0(\overline{\xi},\widetilde{\xi}) &\leq \sqrt{a_0(\overline{\xi},\overline{\xi})}\sqrt{a_0(\widetilde{\xi},\widetilde{\xi})}, \label{eq:phi_prf_14}\\
        a_0(\widetilde{\xi},\widehat{\xi}) &\leq \sqrt{a_0(\widetilde{\xi},\widetilde{\xi})}\sqrt{a_0(\widehat{\xi},\widehat{\xi})}. \label{eq:phi_prf_15}
    \end{align}
    Inserting~\cref{eq:phi_prf_13,eq:phi_prf_14,eq:phi_prf_15} together with~\cref{eq:phi_prf_7,eq:phi_prf_9,eq:phi_prf_11} into~\cref{eq:phi_prf_12} concludes the proof:
    \begin{align*}
    \begin{aligned}
        \phi(\sigma,\etabar) &= a_0(\overline{\xi},\overline{\xi})+a_0(\widehat{\xi},\widehat{\xi})+a_0(\widetilde{\xi},\widetilde{\xi})+2a_0(\overline{\xi},\widehat{\xi})+2a_0(\overline{\xi},\widetilde{\xi})+2a_0(\widehat{\xi},\widetilde{\xi}) \\
        &\leq \left(\sqrt{a_0(\overline{\xi},\overline{\xi})} + \sqrt{a_0(\widehat{\xi},\widehat{\xi})} + \sqrt{a_0(\widetilde{\xi},\widetilde{\xi})} \right)^2 \\
        &\leq \left(\sqrt{\phi(1,1)} + \sqrt{\delta^\sigma} + \sqrt{\delta^{\etabar}}\right)^2.
    \end{aligned}
    \end{align*}
\end{proof}
\begin{cmt}
    Instead of solving the eigenvalue problems in~\cref{eq:stability_1,eq:stability_2} to obtain $\mu$ and $\Lambda$, it is possible to derive lower bounds for these quantities. For example, for convex domains, the Payne-Weinberger inequality~\cite{payne1960optimal} can be used to bound $\mu$ from below with
    \begin{align*}
        \mu \geq \frac{\pi^2}{\mathcal{D}_{\Omega}^2},
    \end{align*}
    where $\mathcal{D}_{\Omega}$ is the diameter of the domain $\Omega$, defined as the largest distance between any two points in $\Omega$. Similar lower bounds for $\Lambda$ can be found in~\cite{girouard2017spectral}.
\end{cmt}

To evaluate the bound in~\cref{eq:phi_bound_result}, only the geometry $\Omega$ (from which $\phi(1,1,1)$, $\mu$ and $\Lambda$ can be computed) and the variances in~\cref{eq:var_sigma,eq:var_eta} are required. Importantly, no detailed knowledge of the spatial distributions of $\kappa$, $\sigma$, or $\etabar$ is needed; it suffices to know only the derived quantities.
For the special case of composite materials, where the solid is composed of $n$ materials of known volume fraction $v_i$ for $i=1,\dots,n$ with $\sum_{i=1}^n v_i=1$ and known constant density $\rhodim_i$ and constant heat capacity $\cdim_i$ for $i=1,\dots,n$, the variance of $\sigma$ can be directly evaluated,
\begin{align}
    \dashint_{\Omega} (\sigma-1)^2 = \sum_{i=1}^n v_i \left(\frac{(\rhodim \cdim)_i}{\sum_{j=1}^n v_j (\rhodim \cdim)_j} - 1\right)^2. \label{eq:var_sigma_composite}
\end{align}
However, evaluating the variance of $\eta$ is significantly more challenging, as it depends on the external flow conditions.

In the following two sections, we study the tightness of the bounds in~\cref{eq:phi_bound_result,eq:lcm_error_bound} with numerical examples for \emph{given} domain, $\kappa$, $\sigma$, and $\etabar$. In Appendix~\ref{sec:eta_study}, we provide a numerical study on the values of $\etabar$ for various geometries under realistic flow conditions coming from CHT solutions and experimental data, and provide guidance on estimating the $\eta$-variance in practice.

\subsubsection{\texorpdfstring{Numerical examples for homogeneous solids}{Numerical examples for homogeneous solids}}
In this section, we consider only homogeneous solids, $\sigma = 1$ and $\kappa = 1$, to study how the shape of $\etabar$ affects the lumping error. The influence of geometry on $\phi$ for $\etabar=1$ has been analyzed in detail in~\cite{kaneko2024error}. In the examples presented here, we study the influence of $\etabar\not=1$ on $\phi$, as well as the upper bound derived for different domains and $\etabar$.

We consider the four two-dimensional geometries shown in~\cref{fig:eta_geometries}—a disk, a square, an equilateral triangle, and a cross. For each geometry, we compute the relevant geometric quantities: the coefficient $\phi(1,1,1)$, and the scale-independent stability constants $\gamma^2/\mu$ and $\gamma/\Lambda$. The computations are performed with DOLFINx~\cite{baratta2023dolfinx}, as described in Appendix~\ref{sec:steklov_numerical}, and the results are summarized in~\cref{tab:geometry_constants}. For the disk, the analytical values $\phi(1,1,1) = 1/2$ and $\gamma/\Lambda = 2$ are known~\cite{kaneko2024error,girouard2017spectral}, and they agree with our numerical results.
\begin{figure}[ht]
    \centering
    \includegraphics[width=0.6\textwidth]{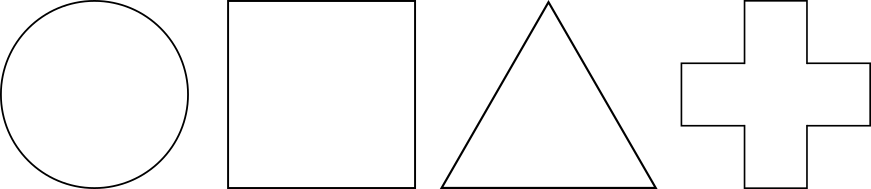}
    \caption{Geometries used in the numerical study of the LCM error.}
    \label{fig:eta_geometries}
\end{figure}
\begin{table}[ht]
    \centering
    \sisetup{scientific-notation=fixed, round-mode=places, round-precision=3}
    \caption{Comparison of $\phi$ constants for various two-dimensional geometries: the coefficient $\phi(1,1,1)$ defined in~\cref{eq:phi_definition}, the stability constant $\gamma^2/\mu$ in~\cref{eq:deltasig}, and the stability constant $\gamma/\Lambda$ in~\cref{eq:deltaeta}.}
    \label{tab:geometry_constants}
    \begin{tabular}{l|cccc}
                    & Disk        & Square      & Triangle       & Cross                   \\ \hline
    $\phi(1,1,1)$    & \num{0.4999999999998502}   & \num{0.6666666666666539}   & \num{1.0}      & \num{0.8353507069334796}           \\
    $\gamma^2/\mu$ & \num{1.1799557177878428}   & \num{1.6211389031933416} & \num{2.735671954819209} & \num{4.385215317997851} \\
    $\gamma/\Lambda$   & \num{1.999999999999442} & \num{2.905909227062346} & \num{5.3672862015104}    & \num{6.677162272258544}
    \end{tabular}
\end{table}

We consider four functions for $\etabar$, defined in non-dimensionalized coordinates ($x, y$) scaled by the domain diameter, $\operatorname{diam}(\Omega)\coloneqq\max_{\xnd,\bm{y}\in\Omega}\|\xnd-\bm{y}\|_2$, with the origin at the centroid:
\begin{itemize}
    \item[(i)] Constant function: $\etabar=c$.
    \item[(ii)] Linear function: $\etabar(x,y)=c(1+x+y)$.
    \item[(iii)] Sinusoidal function: $\etabar(x,y)=c(1+\sin(10\pi x)\sin(5\pi y))$.
    \item[(iv)] Step function: 
        \begin{align}
            \etabar(x,y)=
            \begin{cases}
            0 & \text{if } y < 0, \\
            2c & \text{if } y \geq 0.
            \end{cases}
            \label{eq:step_function}
        \end{align}
\end{itemize}
For each domain and function, $c>0$ is chosen such that $\dashint_{\pOmega}\etabar = 1$.

The variance $\dashint_{\pOmega}(\etabar - 1)^2$, together with the corresponding $\deltaeta$, the upper bound $\phi^{\text{ub}}$, and the actual value of $\phi$, are reported in~\cref{tab:disk_eta,tab:square_eta,tab:triangle_eta,tab:cross_eta} for each domain and choice of $\etabar$. In all cases, $\phi^{\text{ub}}$ is about 2-4 times larger than the actual value of $\phi$, indicating that the bound is reasonably tight for our purposes.

Since the variance of $\etabar$ may be unknown in practice, we also report $\phi^{\text{ub,est}}$, obtained by assuming a variance of $\dashint_{\pOmega}(\etabar - 1)^2=1$. In this case, $\phi^{\text{ub,est}}$ can be up to ten times larger than the actual $\phi$. A variance of 1, however, is likely an overestimate for realistic applications, as it implies that the standard deviation of $\etabar$ is comparable to its mean—corresponding to regions with very high heat transfer alongside regions with none. The step function defined in~\cref{eq:step_function} is one example with such a variance on some domains. In Appendix~\ref{sec:eta_study}, we analyze convex geometries under realistic flow conditions and find that the variance is typically much smaller than 1, which brings $\phi^{\text{ub,est}}$ significantly closer both to $\phi^{\text{ub}}$ and the actual value of $\phi$.
\begin{table}[p]
    \centering
    \sisetup{scientific-notation=fixed, round-mode=places, round-precision=3}
    \caption{Comparison of actual values of $\phi$ for a disk with the upper bounds $\phi^{\text{ub}}$ and $\phi^{\text{ub,est}}$ for different $\etabar$. Also reported are the variance $\dashint_{\pOmega}(\etabar - 1)^2$ and the correction $\deltaeta$. To compute $\phi^{\text{ub,est}}$, we assume $\dashint_{\pOmega}(\etabar - 1)^2 = 1.0$.}
    \label{tab:disk_eta}
    \begin{tabular}{l|cccc}
        & Constant function & Linear function & Sinusoidal function & Step function \\ \hline
        $\phi$ & \num{0.4999999999995885} & \num{0.9999999989845357} & \num{0.6721977240334304} & \num{2.20521657192347} \\
    \hline
        $\phi(1,1,1)$ & \num{0.4999999999999203} & \num{0.4999999999999203} & \num{0.4999999999999203} & \num{0.49999999999992034} \\
        $\phi^{\text{ub}}$ & \num{0.49999999999992345} & \num{1.9999999979703322} & \num{2.277541156152137} & \num{4.4959995082459985} \\
        $\phi^{\text{ub,est}}$ & \num{4.499999999999129} & \num{4.499999999999129} & \num{4.499999999999129} & \num{4.499999999999129} \\
    \hline
        $\deltaeta$ & \num{4.498862513591685e-30} & \num{0.49999999898524583} & \num{0.6432772754888557} & \num{1.997333301998454} \\
        $\dashint_{\pOmega}(\etabar-1)^2$ & \num{2.2494312567963163e-30} & \num{0.2499999994926756} & \num{0.32163863774449564} & \num{0.9986666509994374} \\
    \end{tabular}
\end{table}
\begin{table}[p]
    \centering
    \sisetup{scientific-notation=fixed, round-mode=places, round-precision=3}
    \caption{Comparison of actual values of $\phi$ for a square with the upper bounds $\phi^{\text{ub}}$ and $\phi^{\text{ub,est}}$ for different $\etabar$. Also reported are the variance $\dashint_{\pOmega}(\etabar - 1)^2$ and the correction $\deltaeta$. To compute $\phi^{\text{ub,est}}$, we assume $\dashint_{\pOmega}(\etabar - 1)^2 = 1.0$.}
    \label{tab:square_eta}
    \begin{tabular}{l|cccc}
        & Constant function & Linear function & Sinusoidal function & Step function \\ \hline
        $\phi$ & \num{0.6666666666666415} & \num{1.1405770116592213} & \num{0.8003697176729266} & \num{3.256150570685944} \\
    \hline
        $\phi(1,1,1)$ & \num{0.6666666666666545} & \num{0.6666666666666539} & \num{0.6666666666666539} & \num{0.6666666666666539} \\
        $\phi^{\text{ub}}$ & \num{0.6666666666666632} & \num{2.2874333794816253} & \num{2.825561370403062} & \num{6.3505638807264075} \\
        $\phi^{\text{ub,est}}$ & \num{6.3562948580222915} & \num{6.35629485802229} & \num{6.35629485802229} & \num{6.35629485802229} \\
    \hline
        $\deltaeta$ & \num{2.731842492069457e-29} & \num{0.4843182045103952} & \num{0.747263497940897} & \num{2.90203468142628} \\
        $\dashint_{\pOmega}(\etabar-1)^2$ & \num{9.400990459812617e-30} & \num{0.1666666666666681} & \num{0.25715307655921643} & \num{0.9986666666666725} \\
    \end{tabular}
\end{table}
\begin{table}[p]
    \centering
    \sisetup{scientific-notation=fixed, round-mode=places, round-precision=3}
    \caption{Comparison of actual values of $\phi$ for a triangle with the upper bounds $\phi^{\text{ub}}$ and $\phi^{\text{ub,est}}$ for different $\etabar$. Also reported are the variance $\dashint_{\pOmega}(\etabar - 1)^2$ and the correction $\deltaeta$. To compute $\phi^{\text{ub,est}}$, we assume $\dashint_{\pOmega}(\etabar - 1)^2 = 1.0$.}
    \label{tab:triangle_eta}
    \begin{tabular}{l|cccc}
        & Constant function & Linear function & Sinusoidal function & Step function \\ \hline
        $\phi$ & \num{0.9999999999999273} & \num{1.8333333333331259} & \num{1.7539006986661871} & \num{5.543611371132418} \\
    \hline
        $\phi(1,1,1)$ & \num{1.00000000000005} & \num{1.0000000000000497} & \num{1.0000000000000497} & \num{1.0000000000000497} \\
        $\phi^{\text{ub}}$ & \num{1.0000000000000564} & \num{3.786158336986286} & \num{6.334251443759022} & \num{12.839407821457366} \\
        $\phi^{\text{ub,est}}$ & \num{11.00076705353117} & \num{11.00076705353117} & \num{11.00076705353117} & \num{11.00076705353117} \\
    \hline
        $\deltaeta$ & \num{9.774715601270339e-30} & \num{0.894547700251733} & \num{2.3006636800424842} & \num{6.672983937410922} \\
        $\dashint_{\pOmega}(\etabar-1)^2$ & \num{1.8211653402271803e-30} & \num{0.1666666666666666} & \num{0.4286456122639888} & \num{1.243269631407598} \\
    \end{tabular}
\end{table}
\begin{table}[p]
    \centering
    \sisetup{scientific-notation=fixed, round-mode=places, round-precision=3}
    \caption{Comparison of actual values of $\phi$ for a cross with the upper bounds $\phi^{\text{ub}}$ and $\phi^{\text{ub,est}}$ for different $\etabar$. Also reported are the variance $\dashint_{\pOmega}(\etabar - 1)^2$ and the correction $\deltaeta$. To compute $\phi^{\text{ub,est}}$, we assume $\dashint_{\pOmega}(\etabar - 1)^2 = 1.0$.}
    \label{tab:cross_eta}
    \begin{tabular}{l|cccc}
        & Constant function & Linear function & Sinusoidal function & Step function \\ \hline
        $\phi$ & \num{0.835350706936011} & \num{1.8749607468210994} & \num{1.3606059369213064} & \num{4.974787264265159} \\
    \hline
        $\phi(1,1,1)$ & \num{0.8353507069334798} & \num{0.8353507069334796} & \num{0.8353507069334796} & \num{0.8353507069334798} \\
        $\phi^{\text{ub}}$ & \num{0.8353507069335263} & \num{3.8765561834651225} & \num{6.783388943807153} & \num{12.231957114402185} \\
        $\phi^{\text{ub,est}}$ & \num{12.235974516853036} & \num{12.235974516853036} & \num{12.235974516853036} & \num{12.235974516853036} \\
    \hline
        $\deltaeta$ & \num{6.378355518546946e-28} & \num{1.11286037870974} & \num{2.8578537039386918} & \num{6.674194644581832} \\
        $\dashint_{\pOmega}(\etabar-1)^2$ & \num{9.552494395780908e-29} & \num{0.16666666666666408} & \num{0.42800423105068935} & \num{0.9995555555555328} \\
    \end{tabular}
\end{table}

We solve the corresponding RHE$_a$ for each $\etabar$ and domain using DOLFINx for various $\Biot/\gamma$ values, and compare the results with the LCM solution to find the actual lumping error in~\cref{eq:error_triangular_lcm} which we denote here by
$$
E_l \coloneqq \max_{\tnd\in[0,\tf]}|\uavgtilde(\tnd;\kappa,\sigma,\Biot,\etabar(\xnd)) - \uLump(\tnd;\Biot)|
$$
For RHE$_a$, we employ BDF2 time stepping with 2000 steps, integrating up to $\tf = 3\tauL$ where $\tauL$ was defined in~\cref{eq:teq_nondim}. The error $E_l$, together with the error bounds based on $\phi$, $\phi^{\text{ub}}$, and $\phi^{\text{ub,est}}$, is shown in~\cref{fig:homogeneous_eta_results} for two example cases. We observe that the bound based on $\phi$ is asymptotically tight for small $\Biot$. The results are qualitatively the same for the other geometries and $\etabar$ functions, and are therefore not shown here.

\begin{figure}[ht]
    \centering
    \begin{subfigure}[b]{0.48\textwidth}
        \centering
        \includegraphics[width=\textwidth]{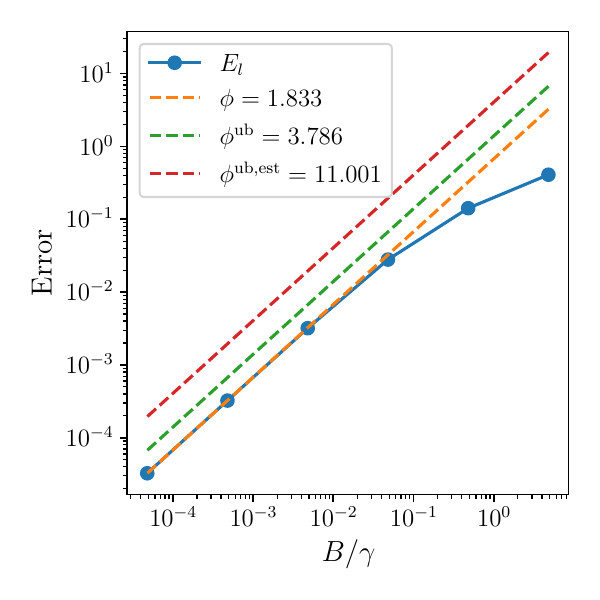}
        \caption{Triangle, linear $\etabar$}
        \label{fig:homogeneous_eta_results_1}
    \end{subfigure}
    \hfill
    \begin{subfigure}[b]{0.48\textwidth}
        \centering
        \includegraphics[width=\textwidth]{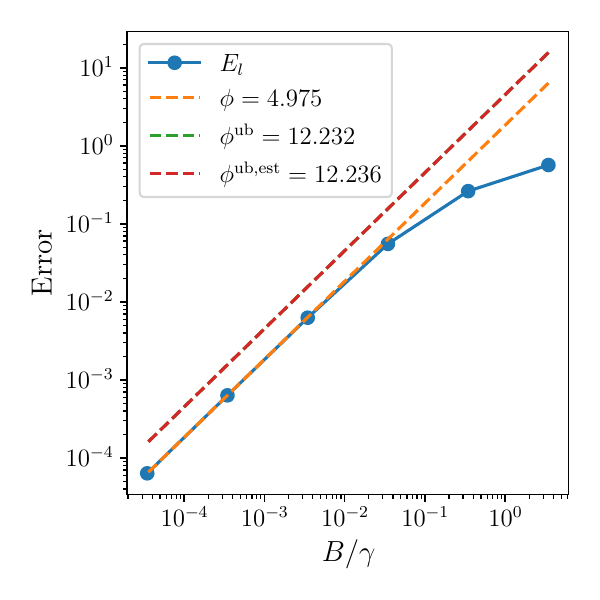}
        \caption{Cross, step $\etabar$}
        \label{fig:homogeneous_eta_results_2}
    \end{subfigure}
    \caption{Comparison of $\phi$, $\phi^{\text{ub}}$, $\phi^{\text{ub,est}}$ error bounds and actual error $E_l$ for two cases. (a)~Triangle with linear $\etabar$. (b)~Cross with step $\etabar$. The solid line is the actual error. For the cross, the green curve ($\phi^{\text{ub}}$) and the red curve ($\phi^{\text{ub,est}}$) overlap, since the variance of the step-function $\etabar$ is close to 1.}
    \label{fig:homogeneous_eta_results}
\end{figure}

\subsubsection{Numerical example for a layered solid}
In the previous examples, we considered homogeneous solids, i.e., $\sigma=\kappa=1$. The error bound with $\phi$ in~\cref{eq:lcm_error_bound} and the upper bound for $\phi$ derived in~\cref{eq:phi_bound_result} also extend to heterogeneous solids. In this section, we present one example with a heterogeneous solid to demonstrate that the error bound in~\cref{eq:lcm_error_bound} remains valid for heterogeneous materials, and that the upper bound in~\cref{eq:phi_bound_result} still bounds $\phi$ from above.

We consider the two-dimensional cross domain shown in~\cref{fig:eta_geometries}, divided along its vertical centerline into two regions of equal thickness, each corresponding to a different material, and apply the step function for $\etabar$ defined in~\cref{eq:step_function}. This setup is identical to that of~\cref{fig:homogeneous_eta_results_2}, except that the solid is now heterogeneous rather than homogeneous. The parameter $\sigma$ is assumed piecewise constant, taking values $\sigma_0$ and $2\sigma_0$ in the two layers. Imposing $\dashint_{\pOmega}\sigma=1$ gives $\sigma_0=2/3$, yielding a variance of $\dashint_{\pOmega}(\sigma-1)^2=1/9$. Note here that to compute the variance, the detailed spatial distribution of $\sigma$ is not required, and the variance can be computed with~\cref{eq:var_sigma_composite}. We solve the RHE$_a$ problem, using the same settings as before, for various $\Biot/\gamma$ values, and compute the actual error $E_l$ between the RHE$_a$ and LCM solutions. We plot the results in~\cref{fig:heterogeneous_eta_results}, where we also include the error bounds based on $\phi$, $\phi^{\text{ub}}$, and $\phi^{\text{ub,est}}$. We observe that the error bound based on $\phi$ is again asymptotically tight for small $\Biot$, and that both $\phi^{\text{ub}}$ and $\phi^{\text{ub,est}}$ bound $\phi$ from above.
\begin{figure}[ht]
    \centering
    \includegraphics[width=0.5\textwidth]{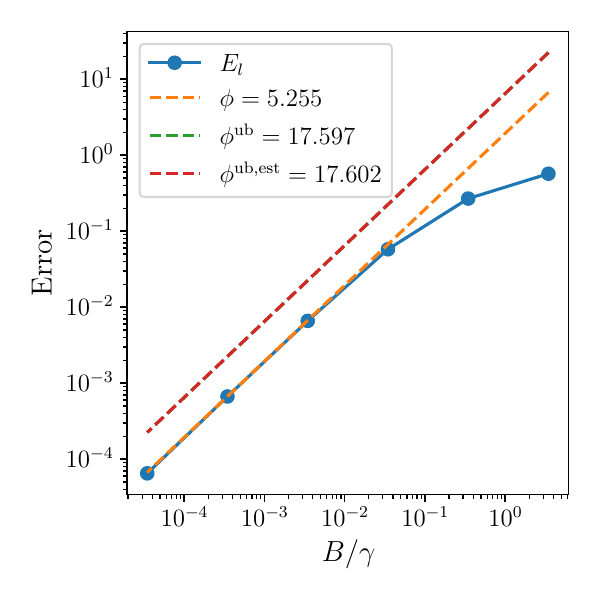}
    \caption{Comparison of $\phi$, $\phi^{\text{ub}}$, $\phi^{\text{ub,est}}$ and actual error $E_l$ for the cross specimen with two layers. The green curve ($\phi^{\text{ub}}$) and the red curve ($\phi^{\text{ub,est}}$) overlap, since the variance of the step-function $\etabar$ is close to 1.}
    \label{fig:heterogeneous_eta_results}
\end{figure}

Comparing~\cref{fig:heterogeneous_eta_results} with~\cref{fig:homogeneous_eta_results_2}, we observe that the actual error $E_l$, as well as $\phi$, $\phi^{\text{ub}}$, and $\phi^{\text{ub,est}}$ all increase for the layered material. This is expected, since $\dashint_{\Omega}(\sigma-1)^2=1/9$ here, whereas it was zero in~\cref{fig:homogeneous_eta_results_2}.

\subsection{Error Due to Biot Approximation}\label{subsec:lcm_biot}
In this section, we analyze the error arising from the Biot approximation in~\cref{eq:error_triangular_biot}, namely
$$\max_{\tnd\in[0,\tf]} \big| \uLump(\tnd;\Biot) - \uLump(\tnd;\Biapprox) \big|.$$
In general, we do not know the true Biot number $\Biot$ and must rely on an estimate $\Biapprox$.
Applying the exponential inequality in~\cref{eq:exp_inequality} to our setting gives
\begin{align}
    \max_{\tnd\in[0,\tf]} \big|\uLump(\tnd;\Biot) - \uLump(\tnd;\Biapprox)\big|
        &= \max_{\tnd\in[0,\tf]} \Big| \exp(-\Biot \gamma \tnd) - \exp(-\Biapprox \gamma \tnd) \Big| \nonumber\\
        &= \max_{\tnd\in[0,\tf]} \Big| \exp\!\left(-\Biapprox \gamma \tnd \left(1 - \frac{\Biot-\Biapprox}{\Biapprox}\right)\right) - \exp(-\Biapprox \gamma \tnd) \Big| \nonumber\\
        &\leq \frac{|\Biot - \Biapprox|}{\Biapprox \exp(1)} + \mathcal{O}\!\left(\left(\frac{\Biot - \Biapprox}{\Biapprox}\right)^2\right), \label{eq:biot_error_bound}
\end{align}
where we set $z=\Biapprox\gamma \tnd$ and $\epsilon=(\Biot-\Biapprox)/\Biapprox$. Hence, the error induced by the Biot approximation scales linearly with the relative error in the Biot number, $|\Biot - \Biapprox|/\Biapprox$.

\subsection{Example Revisited: Cross-Flow Around a Circular Cylinder}
We conclude this chapter by revisiting the cross-flow around a circular cylinder from~\cref{subsec:example_cht} one last time, now evaluating all error sources in~\cref{eq:error_triangular_time,eq:error_triangular_lcm,eq:error_triangular_biot}:
\begin{itemize}
    \item The \emph{temporal error} in~\cref{eq:error_triangular_time} has already been assessed in~\cref{subsubsec:time_numerics} as $E_t=\num{0.010307143115081074}$, see~\cref{fig:time_r1_study_err}.
    \item Using ISO, we obtained a Biot number of $0.0678$, compared to the actual value $0.0680$. The \emph{Biot approximation error} in~\cref{eq:biot_error_bound} is bounded by $$\frac{|\Biot - \Biapprox|}{\Biapprox \exp(1)} = \num{0.00108199835}.$$
    \item For the \emph{lumping error}, we use the time-averaged variation function $\etabar$ from the CHT solution and compute $\phi=\num{1.1053025668470475}$ with~\cref{eq:sensitivity_problem,eq:phi_definition}. The variance in this case is $\dashint_{\pOmega}(\etabar-1)^2 = 0.316$ (much smaller than 1), which gives the upper bound $\phi^{\text{ub}}=\num{2.256650480672011}$ with~\cref{eq:phi_bound_result}. The values of $\phi(1,1,1)$ and $\gamma/\Lambda$ have been provided for the disk (which is equivalent to the cylinder in 2D) in~\cref{tab:geometry_constants}. Evaluating our error bound in~\cref{eq:lcm_error_bound} with $\phi$ and $\phi^{\text{ub}}$ for the given $\Biot=0.0680$ gives \num{0.0069125} and \num{0.01411296}. We also solve the RHE$_a$ and find the actual error $E_l=\num{0.006823082401541114}$. Thus, the bound with $\phi$ is very tight, while the bound with $\phi^{\text{ub}}$ is about twice as large.

\end{itemize}
We can also evaluate the heuristic criterion, used in engineering practice and defined in~\cref{eq:biot_small}, to assess the validity of the LCM. We obtain
\begin{align*}
\frac{B}{\gamma} = 0.01695 < 0.1,
\end{align*}
with $\gamma = 4$ from~\cref{eq:gamma_cylD}, and, hence, the LCM would be deemed valid in this case. The criterion works well here, since the value of $\phi$ is small for this problem.

In this example, the total error is dominated by the temporal approximation error. This contribution is typically ignored in engineering practice, where time scale separation is assumed ($r_1$ does not appear in ISO nor LCM). However, in~\cref{subsubsec:time_numerics}, we showed numerically that the temporal errors can become very large even in small Biot cases. 

If $r_1$ is sufficiently small, then the temporal approximation error becomes negligible. Similarly, if the Biot number can be accurately estimated, the Biot approximation error becomes small. The dominant contribution would then be the lumping error, which can be bounded \emph{a priori} with~\cref{eq:lcm_error_bound,eq:phi_bound_result}, provided that good estimates of $\dashint_{\pOmega}(\etabar-1)^2$ and $\dashint_{\Omega}(\sigma-1)^2$ are available.

\section{About the Nusselt Number}\label{sec:nusselt}
A key practical challenge in applying the LCM is finding an approximate Biot number $\Biapprox$ that closely matches the true Biot number $\Bcht$ for a given problem. Traditionally, this is achieved by estimating the averaged Nusselt number $\Nustavgcht$ through \emph{empirical correlations}, from which $\Bcht$ is computed via~\cref{eq:cht_biot}. Note that, as mentioned in the introduction, we use the terms correlation and empirical correlation interchangeably, without any statistical interpretation.

In~\cref{subsec:iso}, we have introduced the ISO formulation and the corresponding Nusselt number $\Nustavgiso$, and demonstrated in a representative case—characterized by large time scale separation and a small Biot number—that $\Nustavgiso$ can accurately approximate $\Nustavgcht$. While the CHT Nusselt $\Nustavgcht$ generally depends on $r_1$, $r_2$, $\kappa$, $\sigma$, as well as $\Reynolds$ and $\Prandtl$, the ISO Nusselt number $\Nustavgiso$ depends only on $\Reynolds$ and $\Prandtl$.

Compared to the full CHT model, the ISO formulation is computationally more efficient. However, because it still requires solving the Navier-Stokes equations, ISO simulations remain expensive, particularly at high Reynolds numbers. To enable rapid estimation of the Nusselt number, approximations of the ISO-based Nusselt number are employed—so-called empirical correlations. These correlations are typically expressed as power-law functions of the Reynolds and Prandtl numbers and are calibrated for a specific geometry family (e.g., sphere, cylinder, flat plate). 

Historically, empirical correlations have been derived through experimental measurements covering wide ranges of Reynolds and Prandtl numbers. However, due to the high cost and complexity of such experiments, they are often limited to a few canonical geometries. In recent years, numerical simulations have also been used to generate data for correlation development, allowing more complex geometries to be considered. However, simulations are typically restricted to low to intermediate Reynolds numbers, as fully resolving turbulence is prohibitively expensive and turbulence models introduce additional uncertainties. For those reasons, empirical correlations are generally only available for a limited set of geometries and flow conditions.

Realistic geometries encountered in practice often deviate from these canonical shapes for which correlations exist. To still obtain reasonable estimates, a practitioner would first identify the geometry family that best matches the shape of interest and then use the corresponding correlation to estimate the Nusselt number. However, most correlations are only valid for a narrow range of geometries and flow conditions, making it challenging to apply them to complex shapes encountered in real-world applications.

In this section, we propose a framework to systematically extend the validity of existing empirical correlations to much broader classes of geometries by assigning suitably chosen length scales to each geometry, thereby making existing correlations more \emph{universal}. We begin in~\cref{subsec:length_scale_functions} by introducing the concept of \emph{length scale functions}, which formalizes the notion of a characteristic length for general geometries. In~\cref{subsec:nusselt_approximation}, we then review empirical correlations with a few well-known examples. Subsequently, in~\cref{subsubsec:universal_correlations}, we present a data-driven approach for learning length scale functions from data and apply this methodology to a large dataset of spheroids, demonstrating that a single correlation originally defined for a sphere can adequately approximate the Nusselt numbers across the entire spheroid family. Finally, in~\cref{subsec:general_materials}, we contextualize the numerical results by comparing CHT and ISO Nusselt numbers across different parameter regimes and examining realistic solid-fluid material property ratios.

In the following (up to~\cref{subsec:general_materials}), we assume that we are in the regime where the ISO Nusselt number $\Nustavgiso$ closely approximates the CHT Nusselt number $\Nustavgcht$, and therefore focus on ISO. Since we are primarily concerned with the space-time averaged Nusselt number, we denote $\Nusselt \equiv \Nustavgiso$ and, for brevity, refer to it simply as the Nusselt number throughout this section. For the time-dependent, spatially averaged Nusselt number, we will omit the superscript and simply write $\Nuavg(\tnd)$ instead of $\Nuavgiso(\tnd)$. We also note a minor abuse of notation: dimensional quantities such as the coordinates $\dm{x}$, $\dm{y}$, $\dm{z}$, or other dimensional parameters, are occasionally treated as elements of sets like $\mathbb{R}$ or compared using inequalities (e.g., $\dm{\alpha} > 0$) for notational convenience. 

For simplicity, we restrict our attention to flows over the class of convex domains $\mathcal{C}$ that are centered at the origin, symmetric with respect to the coordinate planes, and may be rotated about the $z$-axis. Since the flow direction is taken along the $x$-axis, such a rotation can equivalently be interpreted as a flow at an angle of attack within the $x$-$y$ plane on a non-rotated body.

\subsection{Length Scale Functions}\label{subsec:length_scale_functions}
We introduce the concept of a \emph{length scale function}, denoted by $\dm{\mathcal{D}}$, which maps a domain $\Omegadim\in\mathcal{C}$ to a characteristic length $\elldim = \dm{\mathcal{D}}(\Omegadim)$. The resulting $\elldim$ is a positive real number with the physical dimension of length. The function $\dm{\mathcal{D}}$ is required to satisfy the following properties:
\begin{enumerate}
    \item \textbf{Positivity:} 
    \(
        \dm{\mathcal{D}}(\Omegadim) > 0
    \)
    for all admissible domains $\Omegadim$;
    \item \textbf{Invariance:} 
    \(
        \dm{\mathcal{D}}(\mathbf{R}\Omegadim + \mathbf{t}) 
        = \dm{\mathcal{D}}(\Omegadim)
    \)
    for any rigid-body motion with rotation $\mathbf{R}$ and translation $\mathbf{t}$;
    \item \textbf{Linear scaling:} 
    \(
        \dm{\mathcal{D}}(c\,\Omegadim) = c\,\dm{\mathcal{D}}(\Omegadim)
    \)
    for any scalar $c > 0$.
    Here, $c\,\Omegadim$ denotes the isotropically scaled domain
    \(
        c\,\Omegadim 
        \coloneqq 
        \{\, \xdim_{\mathrm{c}} + c(\xdim - \xdim_{\mathrm{c}}) 
        \mid \xdim \in \Omegadim \,\},
    \)
    where $\xdim_{\mathrm{c}}$ is the centroid of $\Omegadim$. 
\end{enumerate}
There are many common examples for length scale functions. A few examples are provided here:
\begin{itemize}
    \item The diameter function $\dm{\mathcal{D}}_{\text{diam}}(\Omegadim) \coloneqq \max_{\xdim,\dm{\bm{y}}\in\Omegadim} |\xdim - \dm{\bm{y}}|$;
    \item The equivalent spherical diameter $\dm{\mathcal{D}}_{\text{sphere}}(\Omegadim) \coloneqq \left(6|\Omegadim|/\pi\right)^{1/3}$;
    \item The square root of the surface area $\dm{\mathcal{D}}_{\text{area}}(\Omegadim) \coloneqq \sqrt{|\pOmegadim|}$;
    \item The volume to surface area ratio $\dm{\mathcal{D}}_{\text{vs}}(\Omegadim) \coloneqq |\Omegadim| / |\pOmegadim| = \Elldim$.
\end{itemize}
Each length scale function may be more or less suitable for a given problem, depending on the geometry and flow conditions. For convection problems, the characteristic length should be related to the length of the flow path over the body. While this choice is straightforward for simple geometries such as spheres and cylinders (e.g., the diameter), it becomes nontrivial for complex or irregular shapes.

\subsection{Empirical Correlations}\label{subsec:nusselt_approximation}
As mentioned above, practitioners typically estimate the Biot number using empirical correlations, which approximate the ISO Nusselt number. They are generally defined for a parameterized family of geometries, as we now describe.

We consider convex parameterized geometries denoted by 
$\Omegadim(\dm{\bm{\alpha}}, \bm{\beta}) \in \mathcal{C}$, 
where 
$\dm{\bm{\alpha}} = (\dm{\alpha}_1, \dm{\alpha}_2, \ldots, \dm{\alpha}_M)$ 
represents a vector of dimensional geometric parameters 
(e.g., lengths, radii, or thicknesses), 
and 
$\bm{\beta} = (\beta_1, \beta_2, \ldots, \beta_N)$ 
collects other non-dimensional parameters 
(e.g., rotation angles, aspect ratios). 
Varying $(\dm{\bm{\alpha}}, \bm{\beta})$ defines a family of geometries
\[
    \mathcal{S} = 
    \{\, \Omegadim(\dm{\bm{\alpha}}, \bm{\beta}) 
    \mid 
    (\dm{\bm{\alpha}}, \bm{\beta}) \in \mathcal{A} 
    \subset 
    \mathbb{R}^M \times \mathbb{R}^N 
    \,\} \subset \mathcal{C},
\]
where $\mathcal{A}$ denotes the admissible parameter space. 
A specific geometry corresponding to a particular parameter set 
$(\dm{\bm{\alpha}}^{(i)}, \bm{\beta}^{(i)})$ 
is written as 
$\Omegadim^{(i)} = \Omegadim(\dm{\bm{\alpha}}^{(i)}, \bm{\beta}^{(i)})$.

The Nusselt number for the family of geometries $\mathcal{S}$ depends on the flow properties, as well as on the geometric parameters. From dimensional analysis, it follows that, for a chosen length scale function $\dm{\mathcal{D}}$ and a given $\Omegadim \in \mathcal{S}$, the Nusselt number can be expressed as a function of the non-dimensional arguments
\begin{align}
  \Nusselt[\dm{\mathcal{D}}(\Omegadim)]
  = 
  f^{\mathcal{S}}[\dm{\mathcal{D}}(\Omegadim)]
  \!\left(
    \Reynolds[\dm{\mathcal{D}}(\Omegadim)],\, 
    \Prandtl,\,
    \underbrace{\frac{\dm{\alpha}_2}{\dm{\alpha}_1},\, \ldots,\, 
    \frac{\dm{\alpha}_M}{\dm{\alpha}_1}}_{\text{$M-1$ terms}},\, 
    \beta_1,\, \ldots,\, \beta_N
  \right), \label{eq:nusselt_functional}
\end{align}
where the argument list follows directly from the Buckingham~$\pi$ theorem. Consequently, only $M-1$ independent ratios of the dimensional parameters appear in the expression, obtained by normalizing all dimensional quantities with $\dm{\alpha}_1$. In~\cref{eq:nusselt_functional}, we make explicit the dependence on the characteristic length $\dm{\mathcal{D}}(\Omegadim)$ in the definitions of the Reynolds and Nusselt numbers with square brackets. The ratios of the dimensional parameters do not depend on the length scale.

Empirical correlations approximate this functional dependence through an empirical function $\undertilde{f}^{\mathcal{S}}[\dm{\mathcal{D}}(\Omegadim)]$:
\begin{align}
    \Nusselt[\dm{\mathcal{D}}(\Omegadim)]
    \approx 
    \undertilde{\Nusselt}[\dm{\mathcal{D}}(\Omegadim)]
    \coloneqq 
    \undertilde{f}^{\mathcal{S}}[\dm{\mathcal{D}}(\Omegadim)]\!\left(
      \Reynolds[\dm{\mathcal{D}}(\Omegadim)],\, 
      \Prandtl,\,
      \underbrace{\frac{\dm{\alpha}_2}{\dm{\alpha}_1},\, \ldots,\,
      \frac{\dm{\alpha}_M}{\dm{\alpha}_1}}_{\text{$M-1$ terms}},\, 
      \beta_1,\, \ldots,\, \beta_N
    \right),
    \label{eq:nusselt_correlation}
\end{align}
where the undertilde in $\undertilde{\Nusselt}$ emphasizes that it represents an approximation to the actual Nusselt number. The correlation function $\undertilde{f}^{\mathcal{S}}[\dm{\mathcal{D}}(\Omegadim)]$ is typically obtained by fitting to experimental or computational data and informed by boundary layer theory, and will depend on the choice of length scale function $\dm{\mathcal{D}}$.

The construction and use of empirical correlations naturally follow an offline-online decomposition. The offline phase involves the expensive process of generating data---either experimentally or through high-fidelity simulations---and subsequently fitting the correlation function $\undertilde{f}^{\mathcal{S}}[\dm{\mathcal{D}}(\Omegadim)]$. The online phase then consists of rapidly evaluating this fitted correlation for new parameter values to estimate the Nusselt number.

Empirical correlations have been developed for many canonical configurations, including flow over flat plates, cylinders, and spheres. Their accuracy typically lies within a relative error of 10--20\%, depending on the specific correlation and flow regime~\cite{incropera1990fundamentals}. The limited accuracy arises from several factors, including measurement uncertainties, surface roughness, variations in flow conditions—where even small perturbations in the incoming flow can lead to markedly different behaviors in turbulent regimes—and the restricted datasets used for fitting. Moreover, each correlation is valid only within specific ranges of Reynolds and Prandtl numbers, and geometrical parameters. Nevertheless, these correlations often provide sufficiently accurate results for an engineering estimate at a low computational cost.

In the following, we briefly review a few well-known experimental correlations, and discuss current computational approaches.

\subsubsection{Experimental correlations}\label{subsubsec:experimental_correlations}
Due to the complexity and high computational costs of simulating the Navier-Stokes equations, experiments have historically been the primary source of data. Many groups have studied for example flat plates, cylinders and spheres in cross-flow, and developed empirical correlations for the Nusselt number based on experimental measurements. A comprehensive review of such correlations can be found in~\cite{incropera1990fundamentals}. We provide here a few well-known correlations for the flow over a flat plate, circular cylinder and a sphere.

\paragraph{Flat plate aligned with the flow}
For the specific case of an infinitely wide flat plate aligned with the flow, 
the geometry family can be expressed as
\[
    \mathcal{S}_{\mathrm{fp}}
    =
    \{\, \Omegadim_{\mathrm{fp}}(\dm{L}) 
    \mid \dm{L} > 0 \,\},
\]
where each member corresponds to an idealized plate of length $\dm{L}$ in the 
streamwise ($\dm{x}$) direction, zero thickness in the $\dm{y}$-direction, 
and infinite width in the $\dm{z}$-direction,
\[
    \Omegadim_{\mathrm{fp}}(\dm{L})
    = [-\dm{L}/2,\, \dm{L}/2] \times \{0\} \times \mathbb{R}.
\]
The corresponding length scale function
\(
    \dm{\mathcal{D}}_{\mathrm{fp}}
\)
is defined as
\[
    \dm{\mathcal{D}}_{\mathrm{fp}}(\Omegadim_{\mathrm{fp}})
    =
    \max_{(\dm{x},\,\dm{y},\,\dm{z}) \in \Omegadim_{\mathrm{fp}}} \dm{x}
    - 
    \min_{(\dm{x},\,\dm{y},\,\dm{z}) \in \Omegadim_{\mathrm{fp}}} \dm{x},
\]
which extracts the total extent of the domain in the streamwise direction.  
For any $\Omegadim_{\mathrm{fp}}(\dm{L}) \in \mathcal{S}_{\mathrm{fp}}$, this yields 
$\dm{\mathcal{D}}_{\mathrm{fp}}(\Omegadim_{\mathrm{fp}}(\dm{L})) = \dm{L}$.  
The corresponding empirical correlation (for $M=1$, $N=0$) then depends only on the Reynolds and Prandtl numbers (compare~\cref{eq:nusselt_correlation}):
\begin{align*}
    \undertilde{\Nusselt}[\dm{L}]
    =
    \undertilde{f}^{\mathcal{S}_{\mathrm{fp}}}[\dm{L}]
      \bigl(\Reynolds[\dm{L}],\, \Prandtl\bigr).
\end{align*}
Assuming laminar flows ($\Reynolds[\dm{L}] \leq 1 \times 10^5$), the Nusselt number for this configuration can be computed in closed form with boundary layer theory~\cite{schlichting2016boundary} as
\begin{align*}
    \undertilde{\Nusselt}[\dm{L}] = \undertilde{f}^{\mathcal{S}_{\mathrm{fp}}}_{\mathrm{lam}}[\dm{L}](\Reynolds[\dm{L}],\Prandtl) = 0.664 \Reynolds[\dm{L}]^{1/2} \Prandtl^{1/3},
\end{align*}
which is valid for $\Prandtl\geq0.6$, and where we introduce the subscript ``lam'' for laminar. The exponents of $1/2$ on the Reynolds number and $1/3$ on the Prandtl number are characteristic of laminar boundary layers, in which the boundary layer thickness grows proportionally to $\sqrt{x}$ with the distance $x$ from the leading edge~\cite{schlichting2016boundary}. 
For turbulent flows, a correlation for the averaged Nusselt number (neglecting the transitional region) based on experimental data is given by~\cite{ahtt6e}
\begin{align*}
    \undertilde{\Nusselt}[\dm{L}] = \undertilde{f}^{\mathcal{S}_\mathrm{fp}}_{\mathrm{turb}}[\dm{L}](\Reynolds[\dm{L}],\Prandtl)
    = 0.664 \, \Reynolds_{\mathrm{tr}}^{1/2} \Prandtl^{1/3} 
    + 0.037 \left( \Reynolds[\dm{L}]^{4/5} - \Reynolds_{\mathrm{tr}}^{4/5} \right) \Prandtl^{0.6},
\end{align*}
where $\Reynolds_{\mathrm{tr}}$ denotes the transitional Reynolds number, typically of order $\mathcal{O}(10^5)$, and the subscript ``turb'' stands for turbulent.
In the turbulent regime, the boundary layer growth follows a different scaling, and the corresponding Reynolds exponent in the correlation changes to 4/5.

\paragraph{Circular cylinder in cross-flow}
For the case of a circular cylinder of diameter $\dm{D}$ exposed to a uniform cross-flow, 
the geometry family can be defined as
\[
    \mathcal{S}_{\mathrm{cyl}}
    =
    \{\, \Omegadim_{\mathrm{cyl}}(\dm{D})
    \mid \dm{D} > 0 \,\},
\]
where each member represents an idealized cylinder of infinite length in the 
$z$-direction and diameter $\dm{D}$ in the $x-y$ plane,
\[
    \Omegadim_{\mathrm{cyl}}(\dm{D})
    =
    \{\, (\dm{x}, \dm{y}, \dm{z}) \in \mathbb{R}^3 
    \mid \dm{x}^2 + \dm{y}^2 \le (\dm{D}/2)^2,\ \dm{z} \in \mathbb{R} \,\}.
\]
We define the corresponding length scale function 
\(
    \dm{\mathcal{D}}_{\mathrm{cyl}}
\)
as
\[
    \dm{\mathcal{D}}_{\mathrm{cyl}}(\Omegadim_{\mathrm{cyl}})
    = 2\max_{(\dm{x},\,\dm{y},\,\dm{z}) \in \Omegadim_{\mathrm{cyl}}} 
        \sqrt{\dm{x}^2 + \dm{y}^2}
\]
which extracts the diameter of the domain in the cross-flow direction.
For any $\Omegadim_{\mathrm{cyl}}(\dm{D}) \in \mathcal{S}_{\mathrm{cyl}}$, this gives 
$\dm{\mathcal{D}}_{\mathrm{cyl}}(\Omegadim_{\mathrm{cyl}}(\dm{D})) = \dm{D}$.  
The corresponding empirical correlation (for $M=1$, $N=0$) then depends only on the Reynolds and Prandtl numbers (compare~\cref{eq:nusselt_correlation}):
\begin{align*}
    \undertilde{\Nusselt}[\dm{D}]
    =
    \undertilde{f}^{\mathcal{S}_{\mathrm{cyl}}}[\dm{D}]
      \bigl(\Reynolds[\dm{D}],\, \Prandtl\bigr).
\end{align*}
A widely used expression for this configuration is the Churchill-Bernstein correlation~\cite{churchill1977correlating}, given by
\begin{align*}
    \undertilde{\Nusselt}[\dm{D}] 
    = \undertilde{f}^{\mathcal{S}_{\mathrm{cyl}}}_{\mathrm{cb}}[\dm{D}](\Reynolds[\dm{D}], \Prandtl) 
    =
    0.3 
    + \frac{0.62\, \Reynolds[\dm{D}]^{1/2} \Prandtl^{1/3}}
    {\left[1 + (0.4/\Prandtl)^{2/3}\right]^{1/4}}
    \left[1 + \left(\frac{\Reynolds[\dm{D}]}{282000}\right)^{5/8}\right]^{4/5},
\end{align*}
valid for $\Reynolds[\dm{D}] < 10^7$ and $0.7 < \Prandtl < 500$. As many other correlations exist for cylinders, we use the subscript ``cb'' to specify the Churchill-Bernstein correlation. The bracketed terms in the expression become significant primarily at high Prandtl and Reynolds numbers. Note again the $1/2$ exponent on the Reynolds number and the $1/3$ exponent on the Prandtl number—these arise because the local flow around the cylinder resembles that over a flat plate. Furthermore, the $4/5$ Reynolds exponent reappears in the turbulent regime, reflecting the same scaling behavior as observed for the flat plate.

The data underlying the Churchill-Bernstein correlation were experimentally obtained for finite-length cylinders with large length-to-diameter ratios, but the correlation is commonly assumed to remain valid for the limiting case of an infinite cylinder. There also exist other correlations specifically developed for finite-length cylinders that account for end effects, see, e.g.,~\cite{incropera1990fundamentals}.

\paragraph{Sphere in uniform flow}
For the case of a sphere of diameter $\dm{D}$, 
the geometry family can be defined as
\[
    \mathcal{S}_{\mathrm{sph}}
    =
    \{\, \Omegadim_{\mathrm{sph}}(\dm{D})
    \mid \dm{D} > 0 \,\},
\]
where each member represents a sphere of diameter $\dm{D}$,
\[
    \Omegadim_{\mathrm{sph}}(\dm{D})
    =
    \{\, (\dm{x}, \dm{y}, \dm{z}) \in \mathbb{R}^3 
    \mid \dm{x}^2 + \dm{y}^2 + \dm{z}^2 \le (\dm{D}/2)^2 \,\}.
\]
We define the corresponding length scale function 
\(
    \dm{\mathcal{D}}_{\mathrm{sph}}
\)
as
\[
    \dm{\mathcal{D}}_{\mathrm{sph}}
    =
    \dm{\mathcal{D}}_{\mathrm{diam}},
\]
where $\dm{\mathcal{D}}_{\mathrm{diam}}$ denotes the diameter function introduced earlier in~\cref{subsec:length_scale_functions}, which extracts the diameter of the domain.  
For any $\Omegadim_{\mathrm{sph}}(\dm{D}) \in \mathcal{S}_{\mathrm{sph}}$, this yields 
$\dm{\mathcal{D}}_{\mathrm{sph}}(\Omegadim_{\mathrm{sph}}(\dm{D})) = \dm{D}$.
Since the geometry is fully characterized by a single dimensional parameter ($M=1$, $N=0$), the corresponding empirical correlation again only depends on the Reynolds and Prandtl numbers,
\begin{align*}
    \undertilde{\Nusselt}[\dm{D}]
    =
    \undertilde{f}^{\mathcal{S}_{\mathrm{sph}}}[\dm{D}]
      \bigl(\Reynolds[\dm{D}],\, \Prandtl\bigr).
\end{align*}
A well-known example for this case is the Ranz-Marshall correlation~\cite{ranz1952evaporation},
\begin{align}
    \undertilde{\Nusselt}[\dm{D}] 
    = \undertilde{f}_{\mathrm{rm}}^{\mathcal{S}_{\mathrm{sph}}}[\dm{D}]
      \bigl(\Reynolds[\dm{D}], \Prandtl\bigr)
    = 2 + 0.6\, \Reynolds[\dm{D}]^{1/2} \Prandtl^{1/3}.
    \label{eq:ranz_marshall}
\end{align}
which was originally developed for small droplets at low Reynolds numbers ($\Reynolds[\dm{D}] \leq 200$).  
Nevertheless, it provides reasonable estimates (within 10\% error) even up to $\Reynolds[\dm{D}] \approx 10^4$~\cite{hirasawa2012numerical}, as confirmed also in our own simulations; see~\cref{subsubsec:proof_of_concept_sphere}.

Here we again observe the $1/2$ exponent on the Reynolds number and the $1/3$ exponent on the Prandtl number, consistent with the boundary layer theory for a flat plate. The constant term of 2 arises from the purely diffusive contribution to the Nusselt number, which can be derived analytically for a sphere at zero Reynolds number~\cite{incropera1990fundamentals}.

\subsubsection{Computational correlations}
The experimental correlations presented in~\cref{subsubsec:experimental_correlations} are all defined for geometry families with only a single dimensional parameter (i.e., $M = 1$) and no non-dimensional parameters (i.e., $N = 0$). 
Performing experiments over broad ranges of geometries and flow conditions is often prohibitively expensive or even infeasible. 
Consequently, empirical correlations for more complex shapes with multiple dimensional or non-dimensional parameters remain scarce.

Recent advances in computational power and numerical methods have made it increasingly practical to compute Nusselt numbers for complex geometries and flow conditions using computational fluid dynamics (CFD). 
Such simulations enable systematic studies of configurations that are difficult to realize experimentally and can support the development of correlations that incorporate additional geometric descriptors. 
However, CFD-based studies are typically restricted to laminar or weakly turbulent regimes, as fully resolving turbulence remains computationally prohibitive. 
Turbulence models, while widely used, introduce additional uncertainties whose quantitative accuracy is often difficult to assess.

It is therefore crucial to reuse and systematically extend existing correlations wherever possible. The following sections develop a framework to achieve precisely this: by selecting a suitable length scale function, we extend the validity of existing correlations to a much broader class of geometries.

\subsection{Towards Universal Correlations}\label{subsubsec:universal_correlations}

In~\cite{lienhard1973commonality}, it was observed that many empirical correlations for different geometry families in natural convection share a common functional form, differing primarily in their choice of characteristic length scale. 
This observation suggests that by identifying a suitable length scale function, one could formulate a single, universal correlation applicable to a broad class of geometries. 
This also implies that correlations developed for simple canonical shapes could, in principle, be applied to more complex geometries through an appropriate length scale function.

We propose here a way to find such length scale functions in forced convection. The key concept is to utilize a length scale transformation and automatically learn the length scale function from data, obtained either experimentally or from numerical simulations.

\subsubsection{Transformation of length scales}\label{subsubsec:length_scale_transform}
Given a geometry family $\mathcal{S}\subset\mathcal{C}$ and domain $\Omegadim\in\mathcal{S}$, let $\dm{\mathcal{D}}_1$ and $\dm{\mathcal{D}}_2$ denote two length scale functions that map $\Omegadim$ to a length scale. A Reynolds number defined with respect to $\dm{\mathcal{D}}_1(\Omegadim)$, denoted $\Reynolds[\dm{\mathcal{D}}_1(\Omegadim)]$, can be transformed into a corresponding Reynolds number based on $\dm{\mathcal{D}}_2(\Omegadim)$ as
\begin{align}
    \Reynolds[\dm{\mathcal{D}}_2(\Omegadim)] 
    = \frac{\vinfdim \dm{\mathcal{D}}_2(\Omegadim)}{\nufdim}
    = \frac{\vinfdim \dm{\mathcal{D}}_2(\Omegadim)}{\nufdim} 
      \frac{\dm{\mathcal{D}}_1(\Omegadim)}{\dm{\mathcal{D}}_1(\Omegadim)}
    = q^{-1} \Reynolds[\dm{\mathcal{D}}_1(\Omegadim)],
    \label{eq:reynolds_transform}
\end{align}
where
\begin{align}
    q \coloneqq \frac{\dm{\mathcal{D}}_1(\Omegadim)}{\dm{\mathcal{D}}_2(\Omegadim)}
    \label{eq:length_scale_ratio}
\end{align}
is the ratio of the two length scales. As $q$ is a ratio of two length scales, and length scale functions scale linearly with the domain, $q$ is dimensionless and independent of $\Omegadim$. 

Analogously, an (approximate) Nusselt number defined with $\dm{\mathcal{D}}_1(\Omegadim)$ can be transformed into a Nusselt number defined with $\dm{\mathcal{D}}_2(\Omegadim)$:
\begin{align}
    \Nusselt[\dm{\mathcal{D}}_2(\Omegadim)] = q^{-1} \Nusselt[\dm{\mathcal{D}}_1(\Omegadim)]. \label{eq:nusselt_transform}
\end{align}
We can now use~\cref{eq:reynolds_transform,eq:length_scale_ratio,eq:nusselt_transform} 
to transform the input and output of a given correlation function 
$\undertilde{f}^{\mathcal{S}}[\dm{\mathcal{D}}_1(\Omegadim)](\Reynolds[\dm{\mathcal{D}}_1(\Omegadim)],\Prandtl,\ldots)$ 
from the characteristic length scale $\dm{\mathcal{D}}_1(\Omegadim)$ to $\dm{\mathcal{D}}_2(\Omegadim)$:
\begin{align}
    \undertilde{\Nusselt}[\dm{\mathcal{D}}_2(\Omegadim)]
    &=
    q^{-1}\,
    \undertilde{f}^{\mathcal{S}}[\dm{\mathcal{D}}_1(\Omegadim)]
    \!\left(
        q\,\Reynolds[\dm{\mathcal{D}}_2(\Omegadim)],\,
        \Prandtl,\,
        \frac{\dm{\alpha}_2}{\dm{\alpha}_1},\,\ldots,\,
        \frac{\dm{\alpha}_M}{\dm{\alpha}_1},\,
        \beta_1,\,\ldots,\,\beta_N
    \right).
    \label{eq:correlation_transform}
\end{align}
This transformation provides a direct means to convert correlations between different length scales. It serves as the key ingredient for learning suitable length scale functions from data to extend existing correlations, as we will discuss next.

\subsubsection{General idea to learn length scale functions}
Assume that we wish to estimate the Nusselt number for a geometry family 
$\mathcal{S} \subset \mathcal{C}$ 
characterized by $M$ dimensional parameters 
$\dm{\alpha}_1, \ldots, \dm{\alpha}_M$ 
and $N$ dimensionless parameters 
$\beta_1, \ldots, \beta_N$.
From~\cref{eq:nusselt_functional}, the Nusselt number for a given 
$\Omegadim \in \mathcal{S}$ can be expressed as
\begin{align}
    \Nusselt[\dm{\mathcal{D}}_2(\Omegadim)] 
    = f^{\mathcal{S}}[\dm{\mathcal{D}}_2(\Omegadim)]\!\left(
        \Reynolds[\dm{\mathcal{D}}_2(\Omegadim)],\,
        \Prandtl,\,
        \frac{\dm{\alpha}_2}{\dm{\alpha}_1}, \ldots, 
        \frac{\dm{\alpha}_M}{\dm{\alpha}_1},\,
        \beta_1, \ldots, \beta_N
    \right), \label{eq:nusselt_functional_S2}
\end{align}
where $\dm{\mathcal{D}}_2$ denotes a chosen length scale function.

Suppose now that we are given an ansatz correlation 
$\undertilde{f}^{\mathcal{S}^1}[\dm{\mathcal{D}}^{\mathrm{ref}}_1(\medbullet)](\Reynolds[\dm{\mathcal{D}}^{\mathrm{ref}}_1(\medbullet)],\Prandtl)$
that has been fitted for a geometry family 
$\mathcal{S}^1 \subset \mathcal{C}$ 
with a single geometrical parameter $\dm{\alpha}^1_1$ 
and a corresponding length scale function $\dm{\mathcal{D}}^{\mathrm{ref}}_1$,
such as those presented in~\cref{subsubsec:experimental_correlations} for simple canonical geometries. We use the superscript “ref'' to indicate that this length scale function was chosen when constructing the correlation for the geometry family $\mathcal{S}^1$. 

We would now like to reuse this correlation for the geometry family $\mathcal{S}$.
However, directly applying 
$\undertilde{f}^{\mathcal{S}^1}[\dm{\mathcal{D}}^{\mathrm{ref}}_1(\Omegadim)](\Reynolds[\dm{\mathcal{D}}^{\mathrm{ref}}_1(\Omegadim)],\Prandtl)$ 
with $\Omegadim \in \mathcal{S}$ 
will, in general, not yield accurate results, 
as the two families $\mathcal{S}^1$ and $\mathcal{S}$ 
may exhibit substantially different flow behaviors.
Instead, we seek to \emph{learn} a suitable length scale function 
$\dm{\mathcal{D}}_1$ such that the correlation 
$\undertilde{f}^{\mathcal{S}^1}[\dm{\mathcal{D}}_1(\Omegadim)](\Reynolds[\dm{\mathcal{D}}_1(\Omegadim)],\Prandtl)$
becomes applicable to $\Omegadim\in\mathcal{S}$ as well.

For $\Omegadim \in \mathcal{S}$, we express the \emph{unknown} length scale function $\dm{\mathcal{D}}_1$ with~\cref{eq:length_scale_ratio} in terms of a \emph{known} length scale function 
$\dm{\mathcal{D}}_2$ as
\begin{align*}
    \dm{\mathcal{D}}_1(\Omegadim) 
    \coloneqq q \, \dm{\mathcal{D}}_2(\Omegadim),
\end{align*} 
where $q$ is a function to be determined from data.
To identify $q$, we first employ the transformation introduced in~\cref{eq:correlation_transform}
to express the ansatz correlation 
$\undertilde{f}^{\mathcal{S}^1}[\dm{\mathcal{D}}_1(\Omegadim)](\Reynolds[\dm{\mathcal{D}}_1(\Omegadim)],\Prandtl)$ 
in terms of the characteristic length $\dm{\mathcal{D}}_2(\Omegadim)$, yielding
\begin{align}
    \undertilde{\Nusselt}[\dm{\mathcal{D}}_2(\Omegadim)]
    =
    q^{-1}\,
    \undertilde{f}^{\mathcal{S}^1}[\dm{\mathcal{D}}_1(\Omegadim)]
    \!\left(
        q\,\Reynolds[\dm{\mathcal{D}}_2(\Omegadim)],\,
        \Prandtl
    \right).
    \label{eq:transformed_correlation}
\end{align}
Using the transformed correlation~\cref{eq:transformed_correlation} to approximate the general Nusselt relation in~\cref{eq:nusselt_functional_S2} yields
\begin{align}
    f^{\mathcal{S}}[\dm{\mathcal{D}}_2(\Omegadim)]\!\left(
        \Reynolds[\dm{\mathcal{D}}_2(\Omegadim)],\, \Prandtl,\,
        \frac{\dm{\alpha}_2}{\dm{\alpha}_1},\, \ldots,\,
        \frac{\dm{\alpha}_M}{\dm{\alpha}_1},\,
        \beta_1,\, \ldots,\, \beta_N
    \right)
    \approx
    q^{-1}\,
    \undertilde{f}^{\mathcal{S}^1}[\dm{\mathcal{D}}_1(\Omegadim)]
    \!\left(
        q\,\Reynolds[\dm{\mathcal{D}}_2(\Omegadim)],\, \Prandtl
    \right).
    \label{eq:universal_correlation_approx}
\end{align}
Comparing the arguments, we observe that the left hand side of~\cref{eq:universal_correlation_approx} 
depends on additional geometric parameters 
$\dm{\alpha}_2/\dm{\alpha}_1, \ldots, \dm{\alpha}_M/\dm{\alpha}_1, 
\beta_1, \ldots, \beta_N$ 
that are not present on the right hand side.
To account for this discrepancy, the length scale ratio \( q \) must depend on these parameters,
\begin{align}
    q\!\left(
        \frac{\dm{\alpha}_2}{\dm{\alpha}_1}, \ldots,
        \frac{\dm{\alpha}_M}{\dm{\alpha}_1},
        \beta_1, \ldots, \beta_N
    \right),
    \label{eq:length_scale_ratio_parametric}
\end{align}
so that their influence on the Nusselt number is effectively absorbed into the definition of \( q \). 
For each geometry $\Omegadim \in \mathcal{S}$, 
the corresponding value of \( q \) automatically determines an appropriate length scale 
$\dm{\mathcal{D}}_1(\Omegadim)=q\,\dm{\mathcal{D}}_2(\Omegadim)$ 
with which to evaluate the correlation 
$\undertilde{f}^{\mathcal{S}^1}[\dm{\mathcal{D}}_1(\Omegadim)](\Reynolds[\dm{\mathcal{D}}_1(\Omegadim)],\, \Prandtl)$.

This represents the key idea of the proposed framework: 
for every \(\Omegadim \in \mathcal{S}\), we seek a parameter-dependent length scale ratio \( q \)---independent of the Reynolds and Prandtl numbers---that transforms a chosen and known length scale $\mathcal{\dm{D}}_2(\Omegadim)$ into a new one, \(\dm{\mathcal{D}}_1(\Omegadim)\),
to be used in the ansatz correlation \(\undertilde{f}^{\mathcal{S}^1}[\dm{\mathcal{D}}_1(\Omegadim)](\Reynolds[\dm{\mathcal{D}}_1(\Omegadim)],\, \Prandtl)\). 
Although the ansatz correlation may have been originally developed for a simple geometry family \(\mathcal{S}^1\) with length scale function \(\dm{\mathcal{D}}^{\mathrm{ref}}_1\), 
the learned length scale function \( \dm{\mathcal{D}}_1 \) extends its applicability to a broader class of geometries \(\mathcal{S}\), 
thereby making the correlation more \emph{universal}.
To determine \( q \), one must generate simulation or experimental data for the Nusselt number 
across the parameter space of $\mathcal{S}$, 
as will be demonstrated in the following section.

\begin{cmt}
    In the discussion above, we considered an ansatz correlation with a single geometric parameter for simplicity. 
    The same principle, however, extends naturally to more general correlations that involve multiple dimensional and non-dimensional parameters. 
    In such cases, the length scale ratio $q$ would be identified jointly with the additional parameters of the chosen ansatz correlation based on available data.
\end{cmt}

\subsubsection{Learning length scale functions from data}
Assume we are given a geometry family $\mathcal{S}$ and a dataset obtained for multiple instances of this family, corresponding to different parameter combinations $(\bm{\dm{\alpha}}, \bm{\beta})$. 
For each geometry, simulations are performed at several Reynolds numbers to compute the corresponding Nusselt numbers, while the Prandtl number $\Prandtl$ is held fixed. The resulting dataset can be represented as tuples
\[
  \bigl(\Omegadim^{(i)},\, \Reynolds^{(i,k)},\, \Nusselt^{(i,k)}\bigr),
  \qquad
  \Omegadim^{(i)} = \Omegadim(\bm{\dm{\alpha}}^{(i)}, \bm{\beta}^{(i)}) \in \mathcal{S},
\]
where
\[
  \Reynolds^{(i,k)} \coloneqq 
  \Reynolds\!\bigl[\dm{\mathcal{D}}_2(\Omegadim^{(i)})\bigr]^{(k)}, 
  \qquad
  \Nusselt^{(i,k)} \coloneqq 
  \Nusselt\!\bigl[\dm{\mathcal{D}}_2(\Omegadim^{(i)})\bigr]^{(k)}.
\]
Here, $\dm{\mathcal{D}}_2$ denotes a \emph{known} length scale function. The indices $i = 1, \ldots, N_\Omega$ enumerate the geometry instances, and $k = 1, \ldots, N_i$ enumerate Reynolds-Nusselt samples for each geometry.

For each tuple $(i,k)$, we solve for $q^{(i,k)}$ as the minimizer of
\begin{align}
    q^{(i,k)} 
    \coloneqq 
    \argmin_q 
    \left[
      \Nusselt^{(i,k)} 
      - q^{-1} 
        \undertilde{f}^{\mathcal{S}^1}[\dm{\mathcal{D}}_1(\Omegadim^{(i)})]\!\left(q\,\Reynolds^{(i,k)},\, \Prandtl\right)
    \right]^2, 
    \label{eq:regression_pointwise}
\end{align}
where $\undertilde{f}^{\mathcal{S}^1}[\dm{\mathcal{D}}_1(\medbullet)](\Reynolds[\dm{\mathcal{D}}_1(\medbullet)],\, \Prandtl)$ is the chosen ansatz correlation associated with the shape family $\mathcal{S}^1$ and \emph{unknown} length scale function $\dm{\mathcal{D}}_1$, which, for simplicity, is assumed to depend on only a single dimensional parameter (i.e., $M = 1$, $N = 0$) here.

Because the data will, in general, not fit the ansatz perfectly, the pointwise values $q^{(i,k)}$ will vary with $k$ (i.e., across different Reynolds numbers), even though, by construction, $q$ should be independent of Reynolds number, see~\cref{eq:length_scale_ratio_parametric}. To obtain a single representative $q^{(i)}$ for geometry $i$, we average the $q^{(i,k)}$ over the sampled Reynolds numbers in log-space using a trapezoidal rule. Sorting $\{\Reynolds^{(i,k)}\}_{k=1}^{N_i}$ in ascending order, we set
\begin{align}
  q^{(i)}
  = \frac{1}{\log \Reynolds^{(i,N_i)} - \log \Reynolds^{(i,1)}}
    \sum_{k=1}^{N_i-1}
    \frac{q^{(i,k)}+q^{(i,k+1)}}{2}\,
    \bigl(\log \Reynolds^{(i,k+1)} - \log \Reynolds^{(i,k)}\bigr),
  \label{eq:q_average_log}
\end{align}
and assign this value to the geometry $\Omegadim^{(i)}$.

Finally, the set 
\(
  \{(\bm{\dm{\alpha}}^{(i)},\bm{\beta}^{(i)},q^{(i)})\}_{i=1}^{N_\Omega}
\)
can be used in a second-stage regression to construct a continuous surrogate
\[
  \tilde{q}\!\left(
    \frac{\dm{\alpha}_2}{\dm{\alpha}_1}, \ldots, 
    \frac{\dm{\alpha}_M}{\dm{\alpha}_1}, 
    \bm{\beta}
  \right)
\]
over the parameter space.  
The learned length scale function is then given by
\begin{align}
  \dm{\mathcal{D}}_1(\Omegadim(\bm{\dm{\alpha}}, \bm{\beta}))
  = 
  \tilde{q}\!\left(
    \frac{\dm{\alpha}_2}{\dm{\alpha}_1}, \ldots, 
    \frac{\dm{\alpha}_M}{\dm{\alpha}_1}, 
    \bm{\beta}
  \right)
  \dm{\mathcal{D}}_2(\Omegadim(\bm{\dm{\alpha}}, \bm{\beta})). \label{eq:learned_length_scale}
\end{align}

\begin{cmt}
    Note, in this work, we consider a fixed Prandtl number $\Prandtl$, but there is nothing preventing us from collecting data for different $\Reynolds$ and $\Prandtl$ numbers and learning $q$ with~\cref{eq:universal_correlation_approx}. We would expect universality, i.e., no strong dependence of $q$ on $\Reynolds$ and $\Prandtl$, for most practical cases with $\Prandtl > 0.6$.
\end{cmt}

\subsubsection{Proof of concept: Sphere}\label{subsubsec:proof_of_concept_sphere}
To demonstrate the approach, we consider the Ranz-Marshall correlation defined in~\cref{eq:ranz_marshall} for a sphere of diameter $\dm{D}$ as the ansatz correlation function $\undertilde{f}_{\mathrm{rm}}^{\mathcal{S}_{\mathrm{sph}}}[\dm{\mathcal{D}}^{\mathrm{ref}}_1(\Omegadim)](\Reynolds[\dm{\mathcal{D}}^{\mathrm{ref}}_1(\Omegadim)],\, \Prandtl)$. Here, $\dm{\mathcal{D}}^{\mathrm{ref}}_1$ corresponds to the diameter function $\dm{\mathcal{D}}_{\mathrm{diam}}$ introduced in~\cref{subsec:length_scale_functions}, which returns the sphere diameter $\dm{D}$ for any spherical domain. We test whether the proposed framework can recover this reference length scale, $\dm{\mathcal{D}}_1(\Omegadim) \approx \dm{\mathcal{D}}^{\mathrm{ref}}_1(\Omegadim)$, given simulation data of a sphere generated using a length scale function $\dm{\mathcal{D}}_2$.

We select the length scale function $\dm{\mathcal{D}}_2 = \dm{\mathcal{D}}_{\mathrm{area}}$, defined in~\cref{subsec:length_scale_functions}, which returns the square root of the surface area of the domain, and we denote the resulting characteristic length scale as $\sqrt{A} \coloneqq \dm{\mathcal{D}}_{\mathrm{area}}(\medbullet)$. Our goal is therefore to test whether
$$
\frac{\dm{\mathcal{D}}_1(\Omegadim)}{\dm{\mathcal{D}}^{\mathrm{ref}}_1(\Omegadim)} = \frac{q \, \sqrt{\dm{A}}}{\dm{D}} \approx 1.
$$

To generate simulation data, we solve the flow over the sphere with the ISO formulation with $\Prandtl = 0.71$ and $\Reynolds[\sqrt{\dm{A}}]$ ranging from $100$ to $10^4$ using Nek5000~\cite{nek5000-web-page}. The simulations employ seventh-order polynomials and a BDF2 time integrator with adaptive time stepping, targeting a CFL number of 0.45. For high Reynolds numbers, a filter is applied to attenuate the highest mode by 5\% and the second-highest by 1.25\%, which is required to ensure long-time stability~\cite{fischer2001filter}. Each case is advanced until a statistically steady state is reached; the algorithm used to detect steady state is described in Appendix~\ref{sec:steady_state_algorithm}. A corresponding mesh convergence study is provided in Appendix~\ref{subsec:convergence_study_sph}.

For each Reynolds number, the spatially averaged Nusselt number is shown in~\cref{fig:nusselt_sphere_time}, where time is normalized by the final time $\tf$ determined from the steady-state criterion. The space-time averaged Nusselt number is indicated by a grey dashed line for each $\Reynolds$. As expected, the Nusselt number increases with increasing Reynolds number and exhibits random oscillations that become more pronounced at higher $\Reynolds$. However, these fluctuations remain small compared to the mean value.

\begin{figure}[ht]
    \centering
    \includegraphics[width=\textwidth]{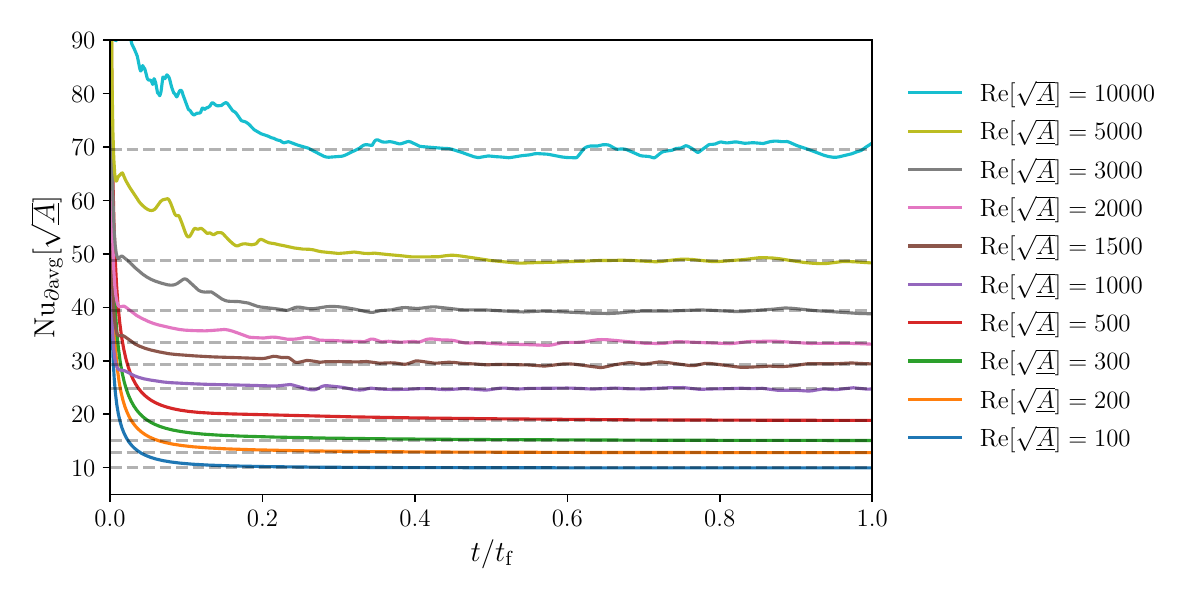}
    \caption{Spatially averaged Nusselt number for a sphere in uniform flow for various Reynolds numbers, using $\sqrt{\dm{A}}$ as length scale. The space-time averaged Nusselt numbers are indicated by the grey dashed lines.}
    \label{fig:nusselt_sphere_time}
\end{figure}

With the space-time averaged Nusselt numbers for each Reynolds number, we first compare them with the Ranz-Marshall correlation (transformed to length scale $\sqrt{\dm{A}}$ with~\cref{eq:correlation_transform}) in~\cref{fig:nu_re_sphere_1}. The correlation slightly overpredicts the Nusselt number, with average deviations of 7\%, consistent with the findings of~\cite{hirasawa2012numerical} where errors of up to 10\% were reported. Subsequently, we solve the minimization problem in~\cref{eq:regression_pointwise} to obtain $q^{(k)}$, where $k$ indexes the Reynolds number samples (we omit the geometry index $i$ as there is only one domain), and the resulting ratio $q^{(k)}\sqrt{\dm{A}} / \dm{D}$ is plotted in~\cref{fig:nu_re_sphere_2}. The average $q\sqrt{\dm{A}} / \dm{D} = 1.12$ is computed with~\cref{eq:q_average_log} and also plotted in~\cref{fig:nu_re_sphere_2} (indicated by the black dashed line). We observe that $q^{(k)}\sqrt{\dm{A}} / \dm{D}$ is consistently greater than one. The deviation arises from the intrinsic approximation error of the Ranz-Marshall correlation which overpredicts the Nusselt number, and thus the learned length scale $q\sqrt{\dm{A}}$ cannot perfectly recover the diameter.

Using the learned $q$, we can now evaluate the Ranz-Marshall correlation with $q\sqrt{\dm{A}}$ as the length scale. The modified correlation is also shown as a dashed line in~\cref{fig:nu_re_sphere_1}. The average deviations from the simulation data are reduced to 2\%, demonstrating that the learned length scale improves the correlation's accuracy.
\begin{figure}[ht]
    \centering
    \begin{subfigure}[b]{0.48\textwidth}
        \centering
        \includegraphics[width=\textwidth]{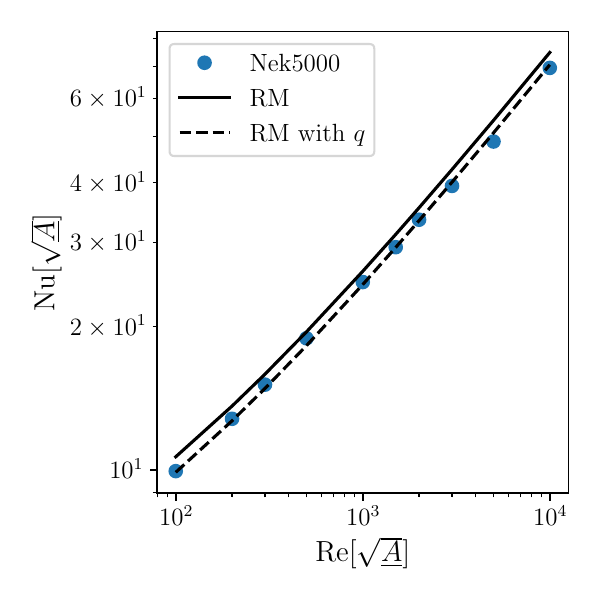}
        \caption{Nusselt number in $\sqrt{\dm{A}}$}
        \label{fig:nu_re_sphere_1}
    \end{subfigure}
    \hfill
    \begin{subfigure}[b]{0.48\textwidth}
        \centering
        \includegraphics[width=\textwidth]{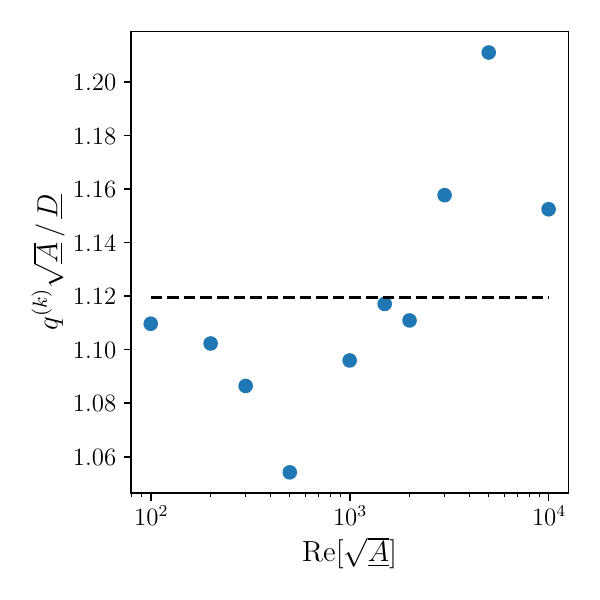}
        \caption{Learned length scales $q^{(k)}\sqrt{\dm{A}}/\dm{D}$}
        \label{fig:nu_re_sphere_2}
    \end{subfigure}
    \caption{(a) Comparison of simulation results in Nek5000 with the Ranz-Marshall correlation for a sphere. Ranz-Marshall correlation with learned $q$ is also shown. (b) Learned length scales $q^{(k)}\sqrt{\dm{A}}/\dm{D}$ plotted over Reynolds number.}
    \label{fig:nu_re_sphere}
\end{figure}

\subsubsection{Numerical study: Spheroids}\label{subsec:numerical_spheroids}
In~\cref{subsubsec:proof_of_concept_sphere}, we demonstrated that the Ranz-Marshall correlation describes our simulation data for the sphere well, and the accuracy is further improved with the learned length scale $q\sqrt{\dm{A}}$. We now extend this numerical study to a broader class of geometries—\emph{rotated spheroids}—to investigate whether the Ranz-Marshall correlation can describe this entire family when equipped with learned length scales.

We consider rotated spheroids: the geometry family can be defined as
\[
    \mathcal{S}_{\mathrm{spheroid}}
    =
    \{\, \Omegadim_{\mathrm{spheroid}}(\dm{a}, \dm{b}, \theta)
    \mid 
    \dm{a} > 0,\, \dm{b} > 0,\, \theta \in [0, \pi/2] \,\},
\]
where each member represents a spheroid with semi-major axis $\dm{a}$, 
semi-minor axis $\dm{b}$, and rotation angle $\theta$ about the $z$-axis, corresponding to the angle of attack relative to the incoming flow direction. 
The parameterized geometries, centered at the origin, are defined as
\begin{align}
    \Omegadim_{\mathrm{spheroid}}(\dm{a}, \dm{b}, \theta)
    =
    \left\{
        (\dm{x}, \dm{y}, \dm{z}) \in \mathbb{R}^3
        \ \bigg| \
        \frac{(\dm{x}\cos\theta + \dm{y}\sin\theta)^2}{\dm{a}^2}
        +
        \frac{(-\dm{x}\sin\theta + \dm{y}\cos\theta)^2}{\dm{b}^2}
        +
        \frac{\dm{z}^2}{\dm{b}^2}
        \le 1
    \right\}.
    \label{eq:spheroid}
\end{align}
Each geometry is characterized by two dimensional parameters ($M=2$) 
and one non-dimensional parameter ($N=1$). 
According to~\cref{eq:nusselt_functional}, only one non-dimensional combination 
of the dimensional parameters can influence the Nusselt number. We choose similarly as before
\[
    \dm{\mathcal{D}}_2 = \dm{\mathcal{D}}_{\mathrm{area}},
\]
and write $\sqrt{\dm{A}} \coloneqq \dm{\mathcal{D}}_{\mathrm{area}}(\medbullet)$.
Using this length scale and with~\cref{eq:nusselt_functional}, the Nusselt number can be expressed as
\begin{align}
    \Nusselt[\sqrt{\dm{A}}]
    =
    f^{\mathcal{S}_{\mathrm{spheroid}}}[\sqrt{\dm{A}}]
      \bigl(\Reynolds[\sqrt{\dm{A}}],\, \Prandtl,\, s,\, \theta\bigr).
    \label{eq:spheroid_correlation}
\end{align}
where $s \coloneqq \dm{a} / \dm{b}$ denotes the aspect ratio of the spheroid. Prolate spheroids correspond to $s > 1$, oblate spheroids to $s < 1$, and the special case $s = 1$ represents a sphere. 

We now want to approximate the Nusselt number for all $\Omegadim\in\mathcal{S}_{\mathrm{spheroid}}$ in~\cref{eq:spheroid_correlation} using the Ranz-Marshall correlation as the ansatz correlation $\undertilde{f}_{\mathrm{rm}}^{\mathcal{S}_{\mathrm{sph}}}[\dm{\mathcal{D}}^{\mathrm{ref}}_1(\Omegadim)]$, with $\dm{\mathcal{D}}^{\mathrm{ref}}_1 = \dm{\mathcal{D}}_{\mathrm{diam}}$, meaning $q$ depends on the aspect ratio and rotation angle. We consider aspect ratios in the range $s \in [0.1,\,10]$, spanning from extremely oblate spheroids (resembling flat plates) to highly prolate spheroids (resembling long cylinders), and rotation angles of $\theta \in [0^\circ,\,90^\circ]$. For data generation, we use the parameter grid
\begin{align*}
    \Reynolds[\sqrt{\dm{A}}] &= \{100,\,200,\,300,\,500,\,1000,\,1500,\,2000\}, \\
    s &= \{0.1,\,0.133,\,0.2,\,0.4,\,1,\,2.5,\,5,\,7.5,\,10\}, \\
    \theta &= \{0^\circ,\,15^\circ,\,30^\circ,\,45^\circ,\,60^\circ,\,75^\circ,\,90^\circ\},
\end{align*}
and fix $\Prandtl = 0.71$. This results in a total of $7 \times 9 \times 7 = 441$ simulations. For the sphere ($s = 1$), the angle is irrelevant, reducing the total to 399 simulations.

For each parameter sample, the ISO formulation is solved using Nek5000~\cite{nek5000-web-page}. The numerical setup is identical to that described in~\cref{subsubsec:proof_of_concept_sphere}, employing the same polynomial order, time-integration scheme, and steady-state detection algorithm (see Appendix~\ref{sec:steady_state_algorithm}). Mesh convergence studies for representative oblate and prolate cases are provided in Appendices~\ref{subsec:convergence_study_sph_pro} and~\ref{subsec:convergence_study_sph_obl}.

We plot the entire dataset of space-time averaged Nusselt numbers against Reynolds numbers in~\cref{fig:nusselt_spheroids_1}, using $\sqrt{\dm{A}}$ as the characteristic length scale. 
The Ranz-Marshall correlation is shown as a black dashed line for reference. 
The Nusselt numbers exhibit considerable scatter: the highest and lowest curves correspond to the prolate spheroid under cross-flow and axial-flow (flow perpendicular or aligned with the long axis of the spheroid, respectively) conditions, respectively, while the sphere and the Ranz-Marshall correlation lie roughly in the middle. If the Ranz-Marshall correlation were used to estimate the Nusselt numbers with $\sqrt{\dm{A}}$ for all cases, the maximum deviations reach up to 52\%.

We now proceed to learn the length scale function. We compute $q^{(i,k)}$ for each case with~\cref{eq:regression_pointwise}, and then average over Reynolds numbers using~\cref{eq:q_average_log} to obtain $q^{(i)}$ for each geometry defined by $(s,\theta)$. We plot in~\cref{fig:nusselt_spheroids_2} the averaged Nusselt numbers over Reynolds numbers again, but now using the learned length scale $q^{(i)}\sqrt{\dm{A}}$. We observe that the data points collapse very well onto the Ranz-Marshall correlation, with a maximum error of 15\% for the data.

\begin{figure}[ht]
    \centering
    \begin{subfigure}[b]{0.48\textwidth}
        \centering
        \includegraphics[width=\textwidth]{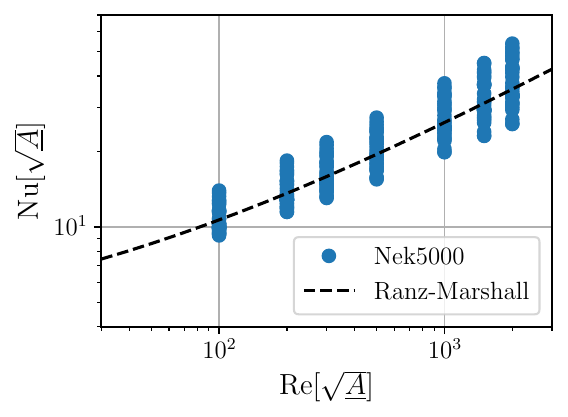}
        \caption{Length scale $\sqrt{\dm{A}}$}
        \label{fig:nusselt_spheroids_1}
    \end{subfigure}
    \hfill
    \begin{subfigure}[b]{0.48\textwidth}
        \centering
        \includegraphics[width=\textwidth]{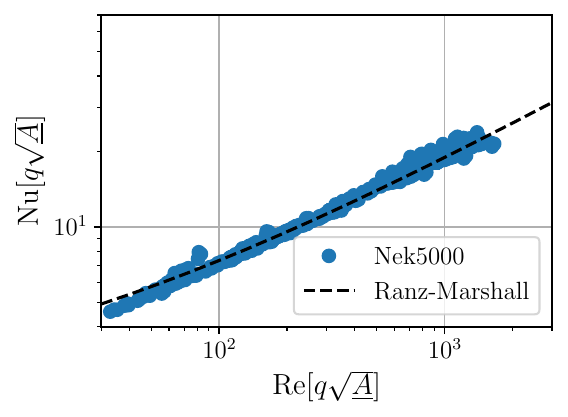}
        \caption{Length scale $q^{(i)}\sqrt{\dm{A}}$}
        \label{fig:nusselt_spheroids_2}
    \end{subfigure}
    \caption{Nusselt number as a function of Reynolds number from Nek5000 simulations, compared with the Ranz-Marshall correlation. (a) Using the length scale $\sqrt{\dm{A}}$; (b) using the learned length scale $q^{(i)}\sqrt{\dm{A}}$.}
    \label{fig:nusselt_spheroids}
\end{figure}

To assess the extrapolation capability of the learned length scale, we restrict the training data to $\Reynolds[\sqrt{\dm{A}}] = \{100, 200, 300, 500\}$, resulting in four data points per geometry. The resulting Nusselt-Reynolds relationship is shown in~\cref{fig:nusselt_spheroids_extra_1}, where the learned length scale $q^{(i)}\sqrt{\dm{A}}$ is used. Despite the reduced dataset, the Ranz-Marshall correlation still captures the overall trend very well, with a maximum deviation of 20\%. 
As compared to~\cref{fig:nusselt_spheroids_2}, the spread increases for $\Reynolds[q\sqrt{\dm{A}}]$ larger than $\approx500$, as data in this range were not included in the learning process.

To further test extrapolation performance, we examine selected high Reynolds number cases up to $\Reynolds[\sqrt{\dm{A}}]=10000$. Specifically, we consider the prolate spheroid with $s=10$ under both cross-flow and axial-flow conditions, since these configurations exhibited the largest deviations from the Ranz-Marshall correlation when $\sqrt{\dm{A}}$ was used as the length scale. Together with the sphere data from~\cref{subsubsec:proof_of_concept_sphere}, the results are plotted in~\cref{fig:nusselt_spheroids_extra_2}, where $q^{(i)}$ was learned only from data with $\Reynolds[\sqrt{\dm{A}}] \leq 500$. We can see that the Ranz-Marshall correlation with the learned length scale still predicts the Nusselt number well even at higher Reynolds numbers. For the prolate spheroid under axial-flow (orange line), the scaling deviates more noticeably from the Ranz-Marshall correlation. For $s=10$, the prolate spheroid is essentially like a long cylinder but in axial-flow, resembling neither a sphere nor a flat plate. The maximum error is 32\%, which is still a significant improvement over the 52\% error observed when using Ranz-Marshall with $\sqrt{\dm{A}}$ as the length scale. For much larger Reynolds numbers, the accuracy is expected to deteriorate, as the scaling behavior changes in the turbulent regime (the Reynolds exponent deviates from $1/2$), whereas the Ranz-Marshall correlation only contains the $1/2$ Reynolds exponent term.

\begin{figure}[ht]
    \centering
    \begin{subfigure}[b]{0.48\textwidth}
        \centering
        \includegraphics[width=\textwidth]{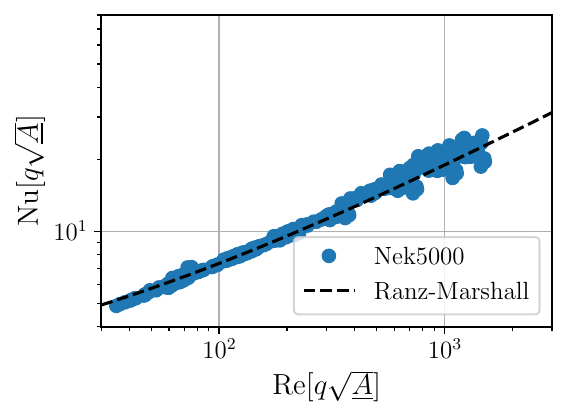}
        \caption{Nek5000 results up to $\Reynolds[\sqrt{\dm{A}}] \le 2000$.}
        \label{fig:nusselt_spheroids_extra_1}
    \end{subfigure}
    \hfill
    \begin{subfigure}[b]{0.48\textwidth}
        \centering
        \includegraphics[width=\textwidth]{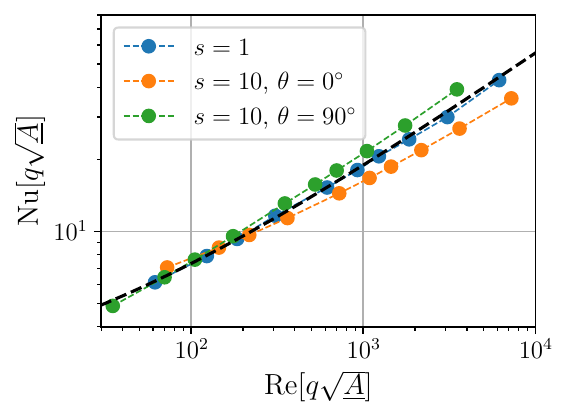}
        \caption{Nek5000 results up to $\Reynolds[\sqrt{\dm{A}}] \le 10^4$.}
        \label{fig:nusselt_spheroids_extra_2}
    \end{subfigure}
    \caption{Comparison of Nusselt number versus Reynolds number from Nek5000 simulations with the Ranz-Marshall correlation. (a) Data expressed in terms of $\Reynolds[q\sqrt{\dm{A}}]$, where $q$ was identified using data with $\Reynolds[\sqrt{\dm{A}}] \le 500$. (b) Higher Reynolds number results for the prolate spheroid under cross- and axial-flow, and for the sphere, compared with the Ranz-Marshall correlation using the learned length scale $q\sqrt{\dm{A}}$ (with $q$ identified from $\Reynolds[\sqrt{\dm{A}}] \le 500$).}
    \label{fig:nusselt_spheroids_extra}
\end{figure}

Using the available data $(s^{(i)}, \theta^{(i)}, q^{(i)})$, where $q^{(i)}$ was identified from cases with $\Reynolds[\sqrt{\dm{A}}] \le 500$, we construct a continuous surrogate $\tilde{q}(s, \theta)$ via piecewise linear interpolation. We plot $\tilde{q}$ against $\log_{10}(s)$ for various angles of attack in~\cref{fig:q_spheroids}. The ratio $\tilde{q}$ varies fairly smoothly with both the aspect ratio and the angle of attack. A clear trend is observed for prolate spheroids, whereas the behavior for oblate spheroids is less distinct. The overall spread of $\tilde{q}$ values is significantly larger for prolate than for oblate spheroids, consistent with the wider variation in Nusselt numbers observed in~\cref{fig:nusselt_spheroids_1}. This difference can be attributed to the flow topology: a prolate spheroid behaves essentially like a long cylinder, where the flow can vary drastically from $\theta=0^\circ$ to $\theta=90^\circ$. For $\theta = 0^\circ$, the flow impinges directly on the front surface and no longer exhibits local flat plate behavior, whereas for $\theta = 90^\circ$, the flow resembles a cross-flow over a cylinder, where the local boundary layer development is similar to that of a flat plate. In contrast, for oblate spheroids, for both $\theta=0^\circ$ to $\theta=90^\circ$ the flows exhibit regions where the flow remains locally similar to that over a flat plate, resulting in less variation in $\tilde{q}$.
\begin{figure}[ht]
    \centering
    \includegraphics[width=0.6\textwidth]{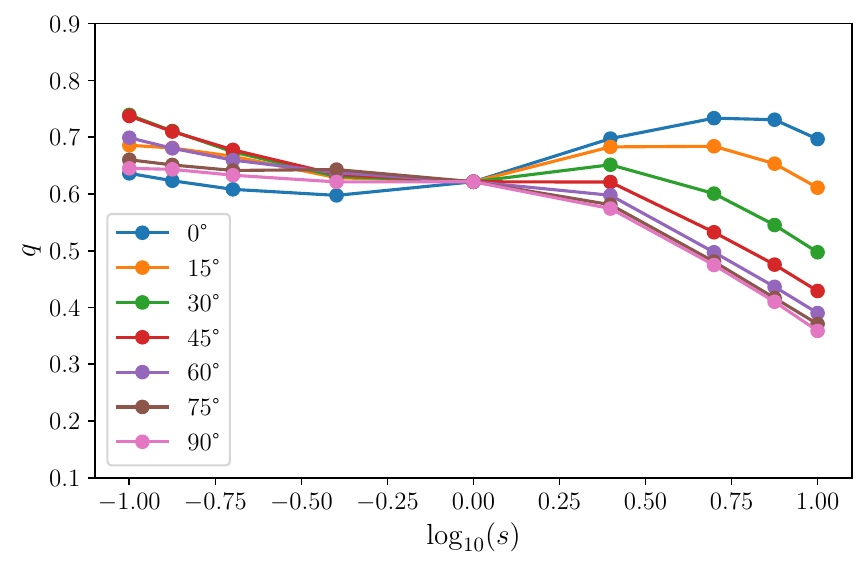}
    \caption{Learned $q$ (for $\Reynolds[\sqrt{\dm{A}}] \le 500$) over $\log_{10}(s)$ for various angles of attack $\theta$.}
    \label{fig:q_spheroids}
\end{figure}

\subsubsection{Test case: Rectangular cuboid}\label{subsubsec:nusselt_cuboid}
In~\cref{subsec:numerical_spheroids}, we demonstrated that the proposed framework can successfully learn length scale functions for the rotated spheroid family, thereby extending the applicability of the Ranz-Marshall correlation beyond spheres to highly prolate and oblate shapes. As a result, we have now effectively obtained a correlation for rotated spheroids that spans a large portion of the class of convex geometries $\mathcal{C}$.

As a final test, we consider a geometry $\Omegadim \in \mathcal{C}$ that lies outside the spheroid family, $\Omegadim \not\subset \mathcal{S}_{\mathrm{spheroid}}$: a rectangular cuboid with length $6.25\dm{L}$, width $\dm{L}$, and thickness $\dm{L}$, subjected to a flow at an angle of attack of $7.5^\circ$. This example demonstrates that the learned spheroid correlation can be used to estimate the Nusselt number for more general shapes that remain geometrically similar to spheroids.

The cuboid is solved with the ISO formulation using Nek5000 with $\Prandtl = 0.71$ and $\Reynolds[\sqrt{\dm{A}}] = 5000$, employing the same numerical setup as described in~\cref{subsubsec:proof_of_concept_sphere}. A mesh convergence study for this case is provided in Appendix~\ref{subsec:convergence_study_cuboid}. We show the temperature field in the $\dm{x}$-$\dm{y}$ plane ($\dm{z}=0$) at the final time step in~\cref{fig:flow_cuboid_xy}, where we can see significant differences on the top or bottom side due to the angle of attack. The spatially averaged Nusselt number as a function of time is shown in~\cref{fig:nusselt_cuboid_time}, where time is normalized by the final time $\tf$ determined from the steady-state criterion. Similar as the examples before, the Nusselt number exhibits small random oscillations of small amplitude around a stationary mean value. We obtain a space-time averaged Nusselt number of $\Nusselt[\sqrt{\dm{A}}] = 45.71$.

To estimate the Nusselt number for the cuboid, we directly use the Ranz-Marshall correlation together with the learned length scale ratio $\tilde{q}$ obtained in~\cref{subsec:numerical_spheroids} for spheroids with $\Reynolds[\sqrt{\dm{A}}] \le 500$---no additional data for the cuboid is required. To identify the most representative spheroid for the cuboid, we generate a point cloud of the cuboid surface and perform a principal component analysis (PCA) to fit an equivalent spheroid, which yields an aspect ratio of $s = 6.24$. The fit is shown in~\cref{fig:cuboid_fit}, where we used 500 randomly generated points. Evaluating the Ranz-Marshall correlation with the learned length scale $\tilde{q}(s = 6.24, \theta = 7.5^\circ)\sqrt{\dm{A}}$, we obtain a predicted Nusselt number of $\undertilde{\Nusselt}[\sqrt{\dm{A}}] = 47.94$, corresponding to a relative error of $5\%$.

This shows that the Ranz-Marshall correlation combined with the learned length scale function $\tilde{q}$---trained solely on spheroids---can accurately estimate the Nusselt number of a cuboid. The cuboid is geometrically close to its best-fit spheroid, and since geometrically similar bodies share comparable characteristic length scales, they consequently exhibit similar convective heat transfer behavior.

\begin{figure}[ht]
    \centering
    \begin{subfigure}[b]{0.3\textwidth}
        \centering
        \includegraphics[width=\textwidth]{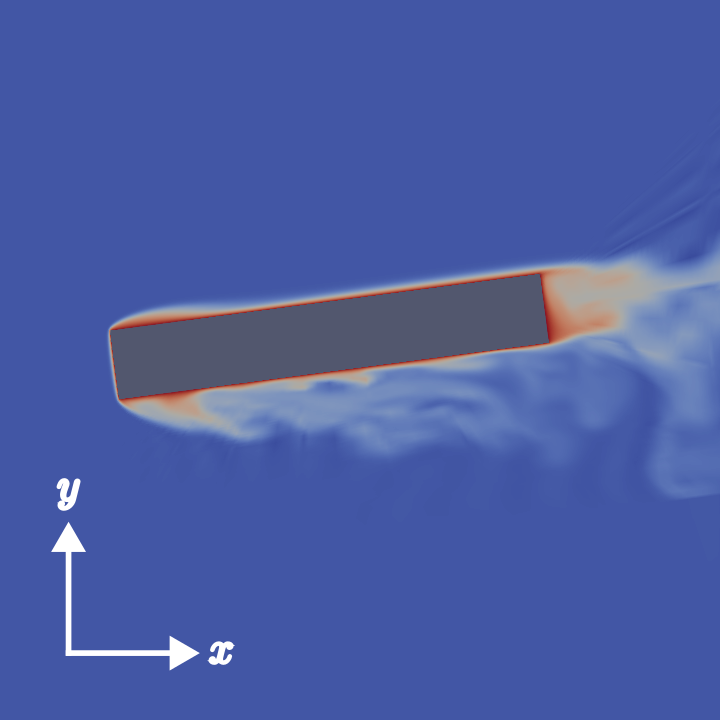}
        \caption{Temperature field in the $\dm{x}-\dm{y}$ plane ($\dm{z}=0$) at the final time step.}
        \label{fig:flow_cuboid_xy}
    \end{subfigure}
    \hfill
    \begin{subfigure}[b]{0.305\textwidth}
        \centering
        \includegraphics[width=\textwidth]{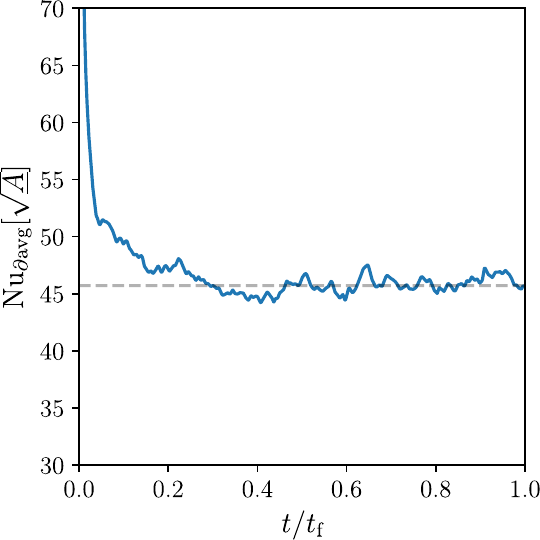}
        \caption{Spatially averaged Nusselt number over time.}
        \label{fig:nusselt_cuboid_time}
    \end{subfigure}
    \hfill
    \begin{subfigure}[b]{0.305\textwidth}
        \centering
        \includegraphics[width=\textwidth]{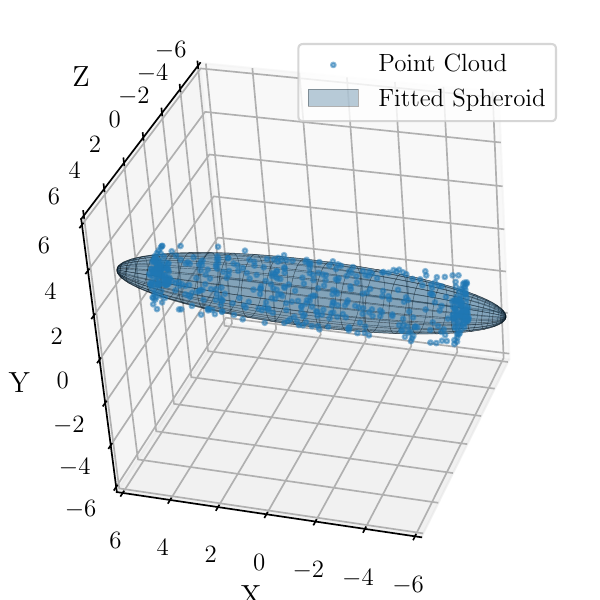}
        \caption{Best-fit spheroid with $s=6.24$ for the cuboid.}
        \label{fig:cuboid_fit}
    \end{subfigure}
    \caption{
    Results for the cuboid at $\Reynolds[\sqrt{\dm{A}}] = 5000$ and $\Prandtl = 0.71$.
    (a) Temperature field in the $\dm{x}-\dm{y}$ plane ($\dm{z}=0$) at the final time step.
    (b) Spatially averaged Nusselt number over time, normalized by the final time $\tf$.
    The space-time averaged value is indicated by the grey dashed line.
    (c) Best-fit spheroid obtained via PCA for the cuboid geometry.
    }
    \label{fig:flow_cuboid_combined}
\end{figure}

\subsubsection{Final remarks and outlook}\label{subsubsec:final_remarks_length_scale_learning}
We have demonstrated that the Nusselt number of the spheroid family $\mathcal{S}_{\mathrm{spheroid}}$, 
covering aspect ratios from 0.1 to 10 and a wide range of angles of attack, 
can be adequately approximated over a wide range of Reynolds numbers using the Ranz-Marshall correlation—originally defined for spheres—provided that an appropriate length scale function is employed. 
In other words, by learning a suitable length scale function from data, 
we effectively extended the applicability of the Ranz-Marshall correlation beyond spheres to the broader class of spheroids.

By performing a best-fit within $\mathcal{S}_{\mathrm{spheroid}}$, 
the derived length scale function for the spheroid family can also be applied to convex bodies outside the spheroid family, 
as demonstrated for the cuboid example, where a PCA-based geometric mapping was employed. 
This generalization works because geometrically similar shapes tend to share similar characteristic length scales. 
However, if a convex domain $\Omegadim \in \mathcal{C}$ is highly dissimilar from any member of $\mathcal{S}_{\mathrm{spheroid}}$—for example, a wedge-shaped body—the accuracy of the estimate will be limited.

It is worth noting that similar ideas have been explored in previous studies. For instance, Culham and Yovanovich~\cite{culham2001simplified} proposed a methodology based on flow path lengths to estimate Nusselt numbers for arbitrary cuboids, using the spheroid correlation they developed earlier in~\cite{yovanovich1988general}. However, their approach relies on physical intuition to define flow path lengths and cannot readily handle rotated or arbitrarily oriented geometries.

A natural next step is to apply this framework to additional geometry families for which empirical correlations already exist. 
In doing so, one could progressively extend the applicability of established correlations to a wider range of shapes, 
thereby building a library of generalized, data-driven correlations. 
By reusing existing correlations—which often already account for turbulence—this approach could also provide a viable pathway to describe turbulent regimes, where fully resolved simulations remain prohibitively expensive. 
Importantly, the learning process itself does not require high Reynolds number data, 
making it computationally efficient; 
high Reynolds number simulations would primarily serve for validation.

Once a sufficiently rich set of geometry classes is available, 
a classification framework for general convex geometries $\Omegadim\in\mathcal{C}$ could be envisioned. 
For example, one could represent geometries as point clouds 
and employ modern machine learning classifiers to automatically identify the closest member among all shape families. Although we restricted our study to convex geometries, the approach itself does not rely on convexity and can be applied to other domains, such as star-shaped bodies.

Together, these developments could transform the current ad hoc practice of selecting correlations and characteristic lengths into a systematic, data-driven, and automated framework for convective heat transfer modeling.

\subsection{Realistic Fluid-Solid Material Properties}\label{subsec:general_materials}
In the previous sections, we assumed that the ISO Nusselt number provides a good approximation of the CHT Nusselt number, and, hence, the Biot approximation error in~\cref{eq:error_triangular_biot} is small. This is generally true when both $r_1$ and $r_2$ are small, which implies a large time scale separation and a small Biot number, respectively.

However, an important practical question remains: \emph{how small is small?} And what are typical $r_1$ and $r_2$ values for realistic solid-fluid combinations? To address this, we first present a numerical study in~\cref{subsubsec:iso_versus_cht} that investigates the influence of $r_1$ and $r_2$ on the CHT Nusselt number and compares the results with the ISO model, followed by representative material property ratios in~\cref{subsubsec:material_ratios}.

\subsubsection{\texorpdfstring{Numerical study on the influence of $r_1$ and $r_2$ on the Nusselt number}{Numerical study on the influence of r1 and r2 on the Nusselt number}} \label{subsubsec:iso_versus_cht}
We perform a numerical study to investigate the influence of $r_1$ and $r_2$ on the space-time averaged Nusselt number in the CHT problem and compare the results with the ISO model. To enable a consistent comparison across different $r_1$ and $r_2$ values, we first run the ISO simulation for each parameter pair until the Nusselt number reaches a steady-state value, denoted $\Nustavgiso$. The corresponding end time is then used to run the CHT simulation, from which we compute $\Nustavgcht$.

To determine the steady-state, we use the following method: for the current end time $\tf$, we compute the time integrals
\begin{align*}
    \Nustavg^{\text{iso},(i)} = \dashint_0^{(0.5+0.1i)\tf} \Nuavgiso(t)
\end{align*}
for $i=1, 2, \dots, 5$. We check the relative error $$\frac{\left|\Nustavg^{\text{iso},(i+1)} - \Nustavg^{\text{iso},(i)}\right|}{\left|\Nustavg^{\text{iso},(i)}\right|}$$ for $i=1,\dots,4$ and compute the average relative error. If it is below 0.5\%, we consider the Nusselt number to have reached a steady state.

We consider three geometries: a sphere, a prolate spheroid, and an oblate spheroid, each defined by the equation in~\cref{eq:spheroid}. The aspect ratios and orientations are set as follows: $s = 5$ with $\theta = 0^\circ$ for the prolate spheroid, and $s = 0.2$ with $\theta = 90^\circ$ for the oblate spheroid. The parameter ranges $r_1 \in [0.00264,\, 0.16896]$ and $r_2 \in [0.0005425,\, 0.03472]$ are selected, with the characteristic length taken as the square root of the surface area. Simulations are performed at Reynolds numbers of 100 and 500, with a fixed Prandtl number of 0.71. All computations are carried out using Nek5000, following the same numerical setup as described in~\cref{subsubsec:proof_of_concept_sphere}.

The results are shown in~\cref{fig:nu_r1r2_Re100,fig:nu_r1r2_Re500}, where we plot, for fixed $r_2$, the error over $r_1$ and vice versa in columns, with different geometries in rows. As expected, for very small $r_1$ and $r_2$, the relative error between $\Nustavgiso$ and $\Nustavgcht$ is negligible (around 0.1--1\%). The trend with respect to $r_1$ and $r_2$ is not strictly monotonic: as $r_1$ and $r_2$ increase, the relative error generally grows, reaching up to 17\% for the prolate spheroid at $\Reynolds=100$, yet there exist ranges where the difference decreases with increasing $r_1$ or $r_2$. Overall, the errors for the prolate and oblate spheroids are larger than for the sphere, and the influence of $r_1$ appears to be more pronounced than that of $r_2$. Nevertheless, over the range of $r_1$ and $r_2$ considered, the relative error remains below 20\% for all cases, which is on the same order of magnitude as the empirical correlations commonly used in engineering practice.

To better understand why the trend is not monotonic, we consider the thermal diffusivity ratio, defined as $\sqrt{r_1 / r_2}$. This ratio characterizes the relative speed of thermal diffusion in the solid compared to the fluid. The relative errors for different $r_2$ values are plotted against this ratio in~\cref{fig:nu_r1r2}. Interestingly, all curves corresponding to different $r_2$ values exhibit a minimum. However, the location and sharpness of this minimum depend on the Reynolds number and geometry: for the sphere, the minimum is broader, whereas for the oblate and prolate spheroids it is much sharper. The minimum generally occurs at a thermal diffusivity ratio of $\mathcal{O}(1)$, indicating that the thermal diffusivities of the solid and fluid are comparable. There appears to be a systematic connection between the thermal diffusivity ratio and the relative error between the ISO and CHT Nusselt numbers, although a detailed investigation of this relationship is beyond the scope of the present work.

Our numerical study provides insight into the inherent CHT-to-ISO errors.  
In~\cref{eq:error_triangular_biot}, we expressed the Biot approximation error with the CHT Biot number $\Bcht$ and an approximation $\undertilde{B}$ as
\begin{align*}
    \max_{\tnd\in[0,\tf]} \big| \uLump(\tnd; \Bcht) - \uLump(\tnd; \Biapprox) \big|.
\end{align*}
This error can be further decomposed using the triangle inequality:
\begin{align*}
    \max_{\tnd\in[0,\tf]} \big| \uLump(\tnd; \Bcht) - \uLump(\tnd; \Biapprox) \big|
    &\leq
    \max_{\tnd\in[0,\tf]} \big| \uLump(\tnd; \Bcht) - \uLump(\tnd; \Biotiso) \big|
    + 
    \max_{\tnd\in[0,\tf]} \big| \uLump(\tnd; \Biotiso) - \uLump(\tnd; \Biapprox) \big|.
\end{align*}
The first term represents the inherent CHT-to-ISO error, which we have quantified numerically for several representative cases in this section.  
The second term accounts for the additional error introduced by further approximations of the ISO model, such as the use of empirical correlations.  
In principle, one could construct comprehensive tables of the inherent CHT-to-ISO error across different parameter combinations—$(r_1, r_2)$, Reynolds numbers, and geometries—and use them as a reference to assess whether the ISO model is suitable for a given application.

It is important to note that even if the relative Nusselt error is small for certain combinations of $\sqrt{r_1 / r_2}$ with large $r_1$ and $r_2$, implying a small Biot approximation error in~\cref{eq:error_triangular_biot}, both the lumping error in~\cref{eq:error_triangular_lcm} and the temporal approximation error in~\cref{eq:error_triangular_time} increase with $r_1$ and $r_2$. Consequently, a small relative error in Nusselt numbers does not necessarily imply a small total error in the ISO+LCM temperature solution.

\begin{figure}[p]
    \centering

    \begin{subfigure}[t]{0.48\textwidth}
        \centering
        \includegraphics[width=\textwidth]{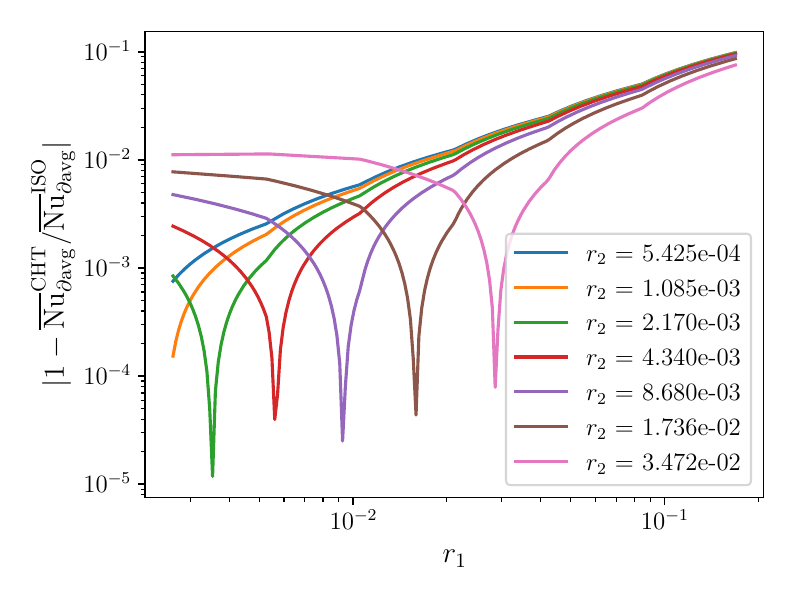}
        \caption{Sphere: rel.~Nusselt error vs.\ $r_1$ (curves: $r_2$).}
        \label{fig:sph_r1_Re100}
    \end{subfigure}\hfill
    \begin{subfigure}[t]{0.48\textwidth}
        \centering
        \includegraphics[width=\textwidth]{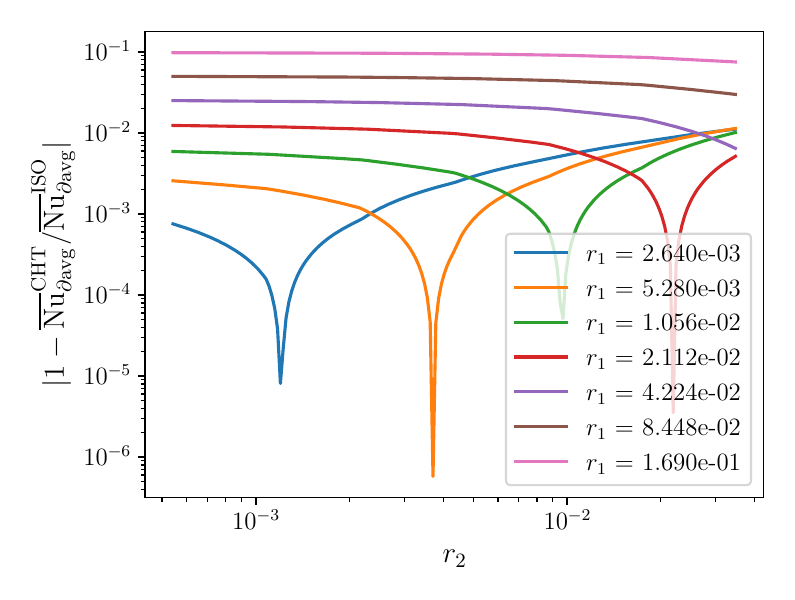}
        \caption{Sphere: rel.~Nusselt error vs.\ $r_2$ (curves: $r_1$).}
        \label{fig:sph_r2_Re100}
    \end{subfigure}

    \vspace{0.6em}
    \begin{subfigure}[t]{0.48\textwidth}
        \centering
        \includegraphics[width=\textwidth]{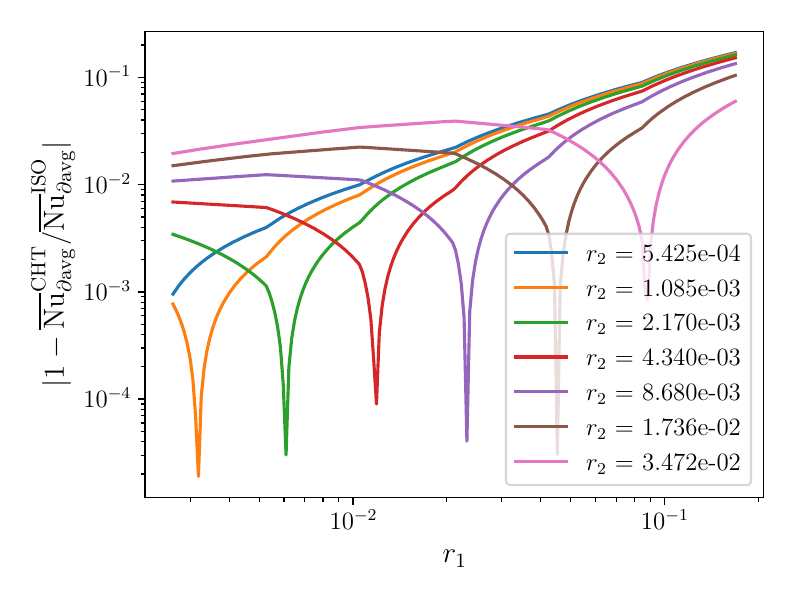}
        \caption{Prolate: rel.~Nusselt error vs.\ $r_1$ (curves: $r_2$).}
        \label{fig:prsph_r1_Re100}
    \end{subfigure}\hfill
    \begin{subfigure}[t]{0.48\textwidth}
        \centering
        \includegraphics[width=\textwidth]{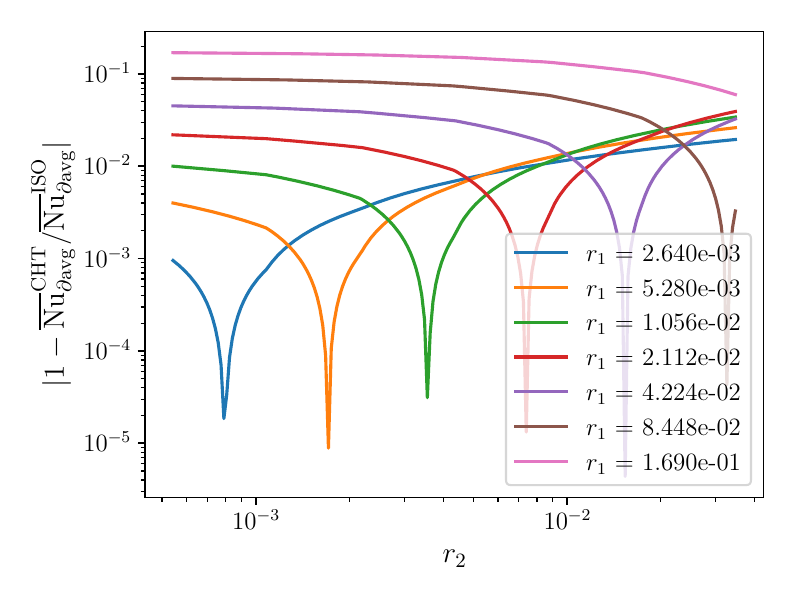}
        \caption{Prolate: rel.~Nusselt error vs.\ $r_2$ (curves: $r_1$).}
        \label{fig:prsph_r2_Re100}
    \end{subfigure}

    \vspace{0.6em}

    \begin{subfigure}[t]{0.48\textwidth}
        \centering
        \includegraphics[width=\textwidth]{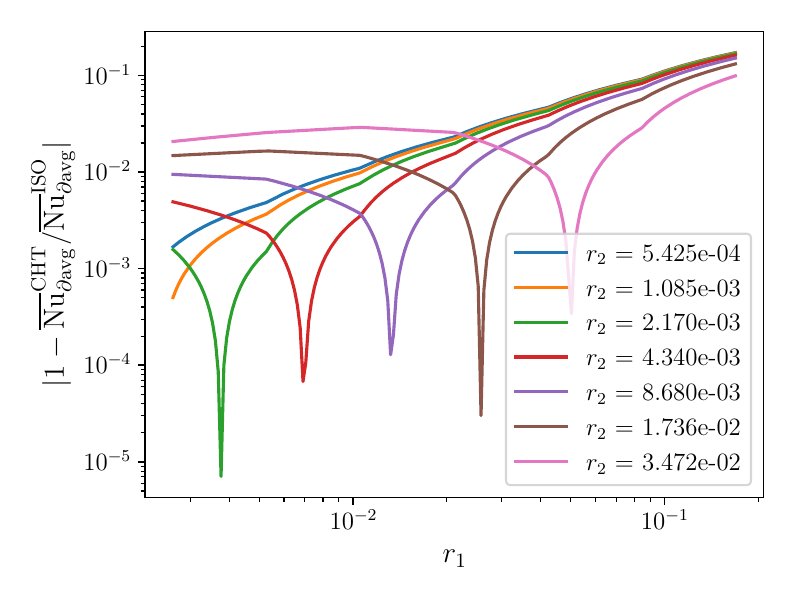}
        \caption{Oblate: rel.~Nusselt error vs.\ $r_1$ (curves: $r_2$).}
        \label{fig:obsph_r1_Re100}
    \end{subfigure}\hfill
    \begin{subfigure}[t]{0.48\textwidth}
        \centering
        \includegraphics[width=\textwidth]{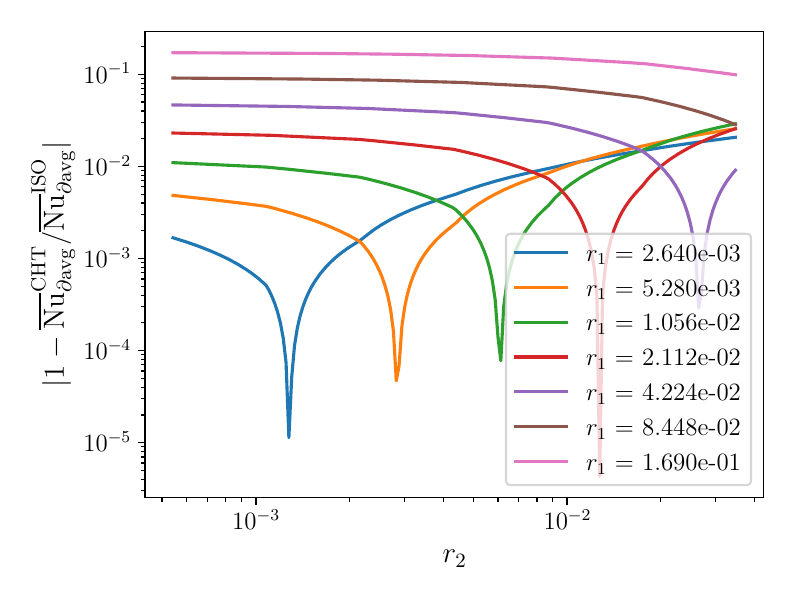}
        \caption{Oblate: rel.~Nusselt error vs.\ $r_2$ (curves: $r_1$).}
        \label{fig:obsph_r2_Re100}
    \end{subfigure}

    \caption{Nusselt-number comparisons at $\Reynolds=100$: left column plots relative difference between ISO and CHT Nusselt vs.\ $r_1$ (individual curves correspond to fixed $r_2$); right column plots relative difference between ISO and CHT Nusselt vs.\ $r_2$ (curves: fixed $r_1$) for sphere (top), prolate (middle), and oblate (bottom) spheroids.}
    \label{fig:nu_r1r2_Re100}
\end{figure}

\begin{figure}[p]
    \centering

    \begin{subfigure}[t]{0.48\textwidth}
        \centering
        \includegraphics[width=\textwidth]{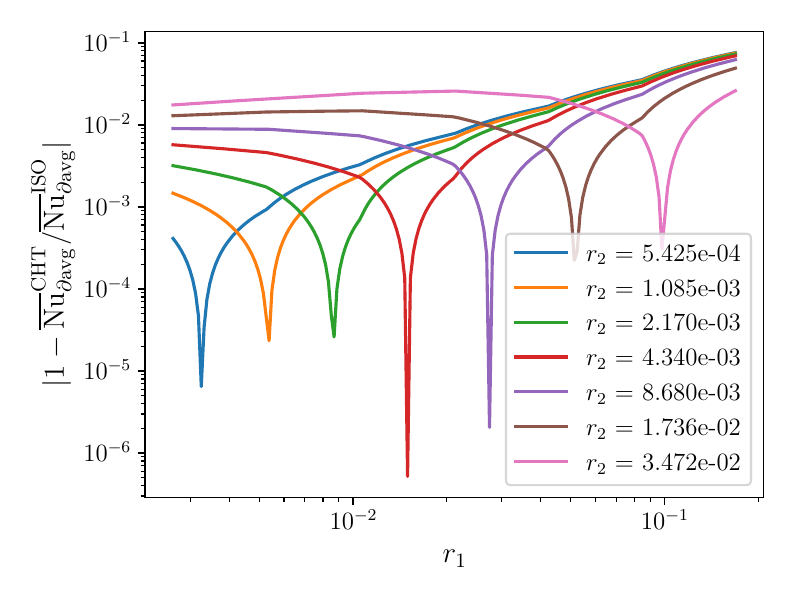}
        \caption{Sphere: rel.~Nusselt error vs.\ $r_1$ (curves: $r_2$).}
        \label{fig:sph_r1_Re500}
    \end{subfigure}\hfill
    \begin{subfigure}[t]{0.48\textwidth}
        \centering
        \includegraphics[width=\textwidth]{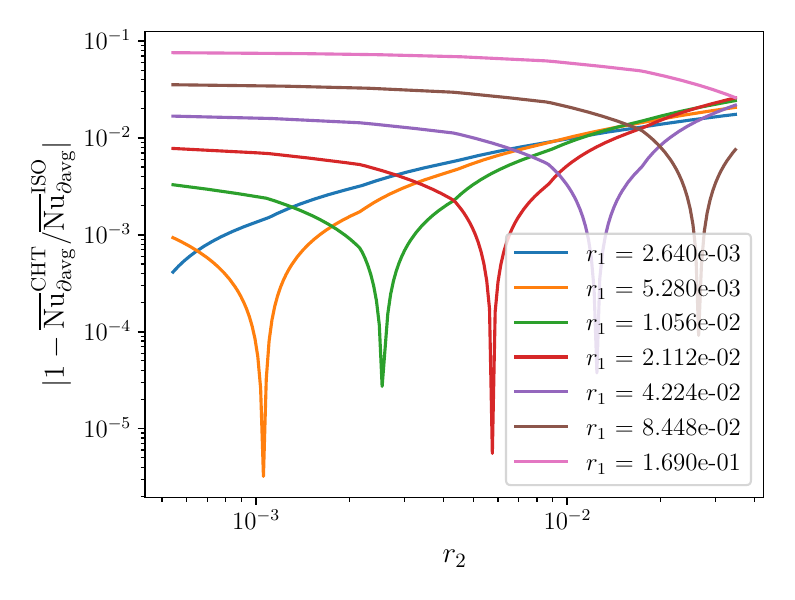}
        \caption{Sphere: rel.~Nusselt error vs.\ $r_2$ (curves: $r_1$).}
        \label{fig:sph_r2_Re500}
    \end{subfigure}

    \vspace{0.6em}

    \begin{subfigure}[t]{0.48\textwidth}
        \centering
        \includegraphics[width=\textwidth]{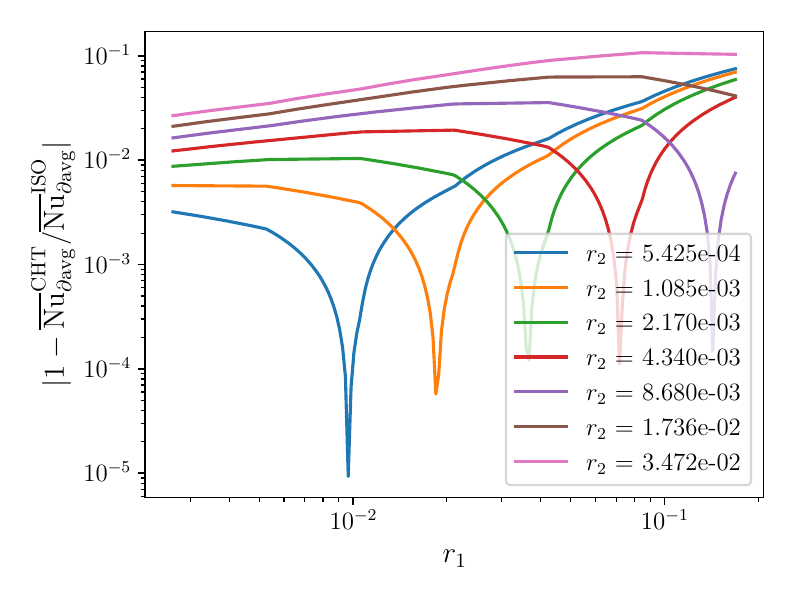}
        \caption{Prolate: rel.~Nusselt error vs.\ $r_1$ (curves: $r_2$).}
        \label{fig:prsph_r1_Re500}
    \end{subfigure}\hfill
    \begin{subfigure}[t]{0.48\textwidth}
        \centering
        \includegraphics[width=\textwidth]{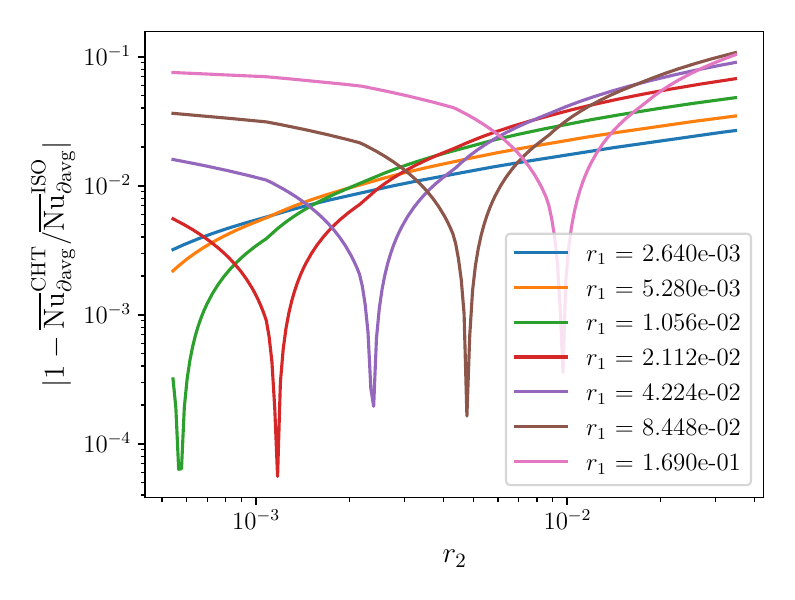}
        \caption{Prolate: rel.~Nusselt error vs.\ $r_2$ (curves: $r_1$).}
        \label{fig:prsph_r2_Re500}
    \end{subfigure}

    \vspace{0.6em}

    \begin{subfigure}[t]{0.48\textwidth}
        \centering
        \includegraphics[width=\textwidth]{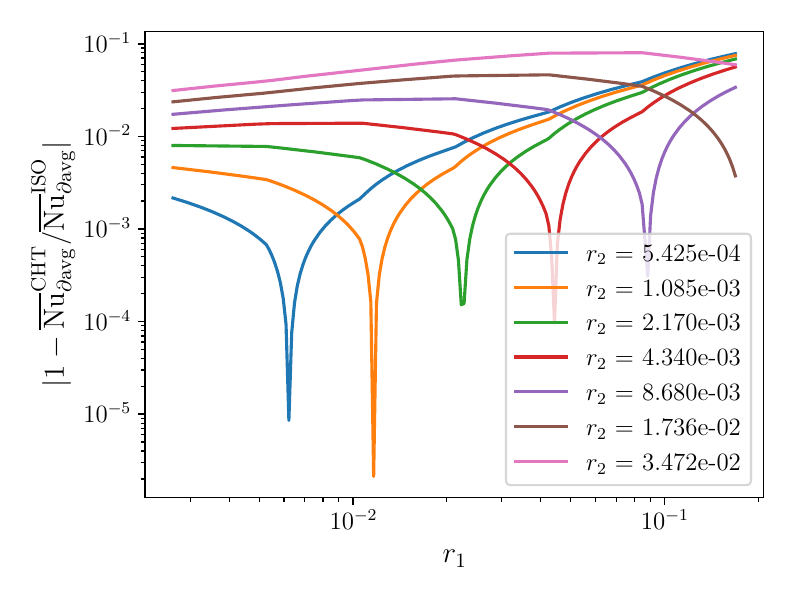}
        \caption{Oblate: rel.~Nusselt error vs.\ $r_1$ (curves: $r_2$).}
        \label{fig:obsph_r1_Re500}
    \end{subfigure}\hfill
    \begin{subfigure}[t]{0.48\textwidth}
        \centering
        \includegraphics[width=\textwidth]{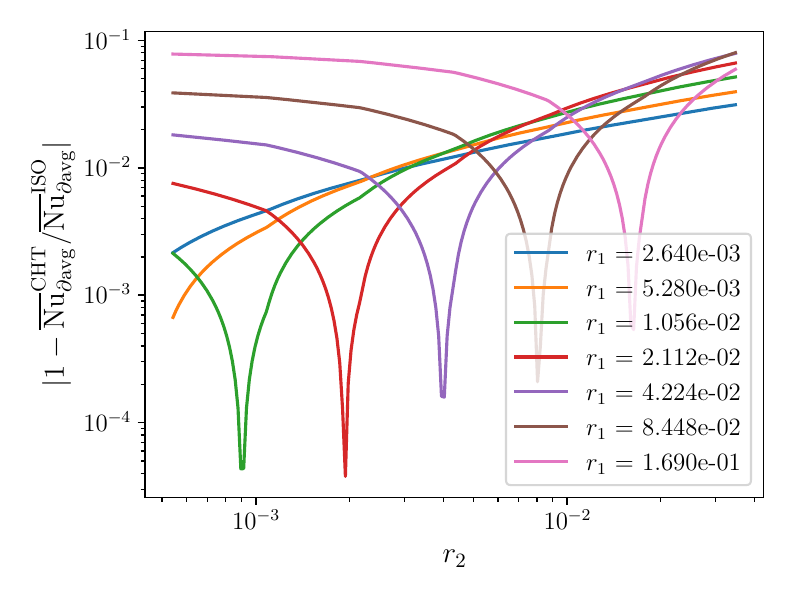}
        \caption{Oblate: rel.~Nusselt error vs.\ $r_2$ (curves: $r_1$).}
        \label{fig:obsph_r2_Re500}
    \end{subfigure}

    \caption{Nusselt-number comparisons at $\Reynolds=500$: left column plots relative difference between ISO and CHT Nusselt vs.\ $r_1$ (individual curves correspond to fixed $r_2$); right column plots relative difference between ISO and CHT Nusselt vs.\ $r_2$ (curves: fixed $r_1$) for sphere (top), prolate (middle), and oblate (bottom) spheroids.}
    \label{fig:nu_r1r2_Re500}
\end{figure}

\begin{figure}[p]
    \centering

    \begin{subfigure}[t]{0.48\textwidth}
        \centering
        \includegraphics[width=\textwidth]{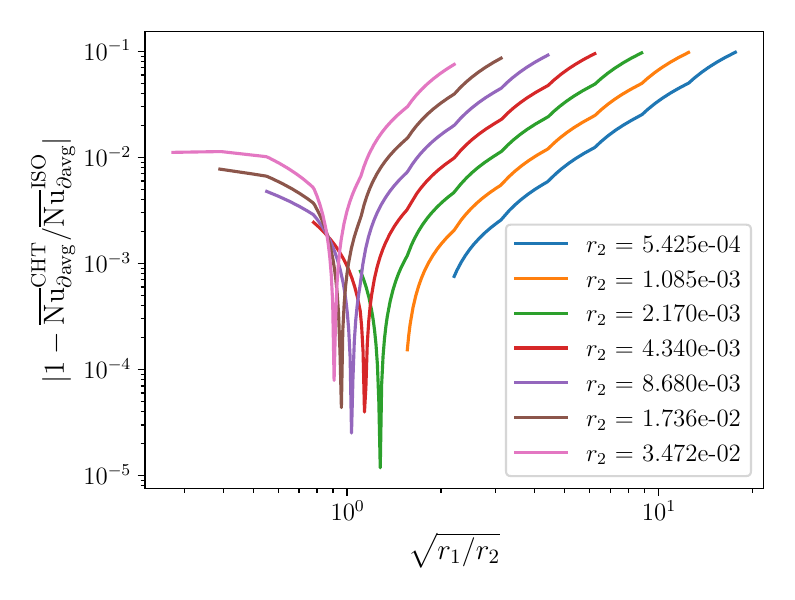}
        \caption{Sphere, $\Reynolds = 100$.}
        \label{fig:sph_Re100}
    \end{subfigure}\hfill
    \begin{subfigure}[t]{0.48\textwidth}
        \centering
        \includegraphics[width=\textwidth]{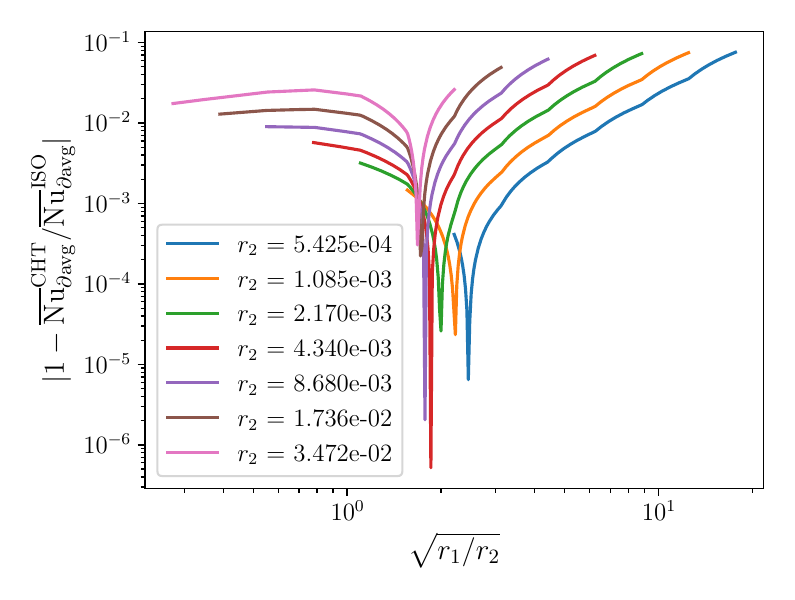}
        \caption{Sphere, $\Reynolds = 500$.}
        \label{fig:sph_Re500}
    \end{subfigure}

    \vspace{0.6em}

    \begin{subfigure}[t]{0.48\textwidth}
        \centering
        \includegraphics[width=\textwidth]{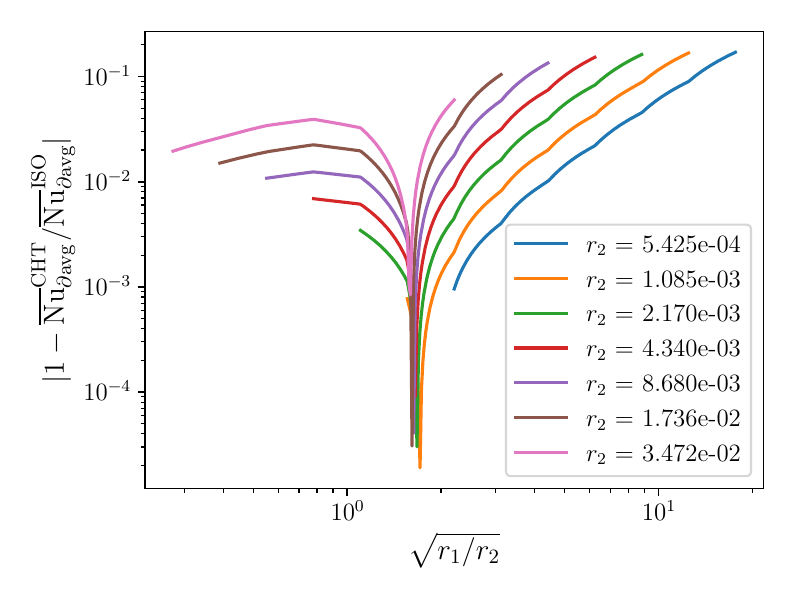}
        \caption{Prolate, $\Reynolds = 100$.}
        \label{fig:prsph_Re100}
    \end{subfigure}\hfill
    \begin{subfigure}[t]{0.48\textwidth}
        \centering
        \includegraphics[width=\textwidth]{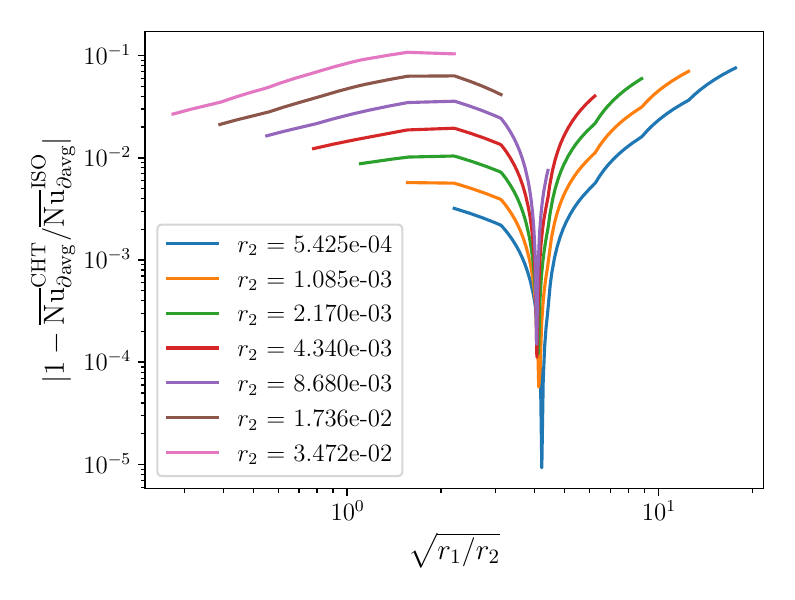}
        \caption{Prolate, $\Reynolds = 500$.}
        \label{fig:prsph_Re500}
    \end{subfigure}

    \vspace{0.6em}

    \begin{subfigure}[t]{0.48\textwidth}
        \centering
        \includegraphics[width=\textwidth]{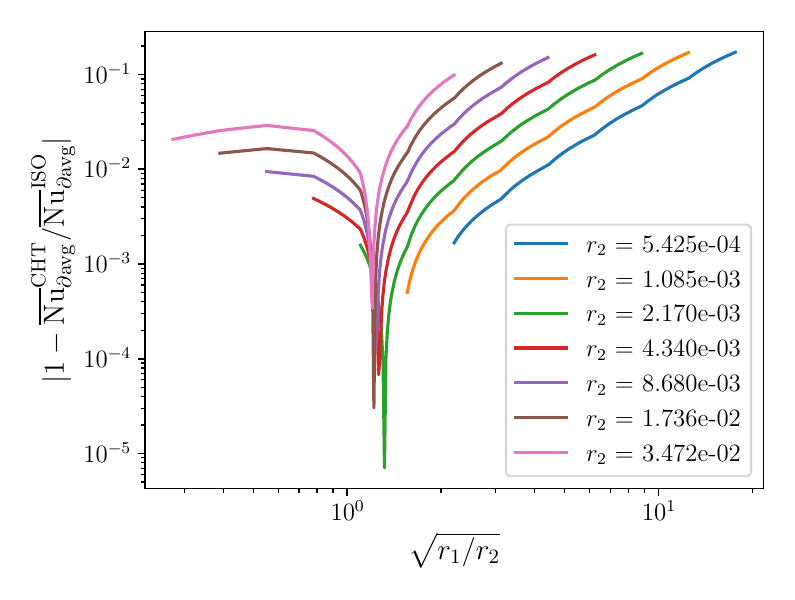}
        \caption{Oblate, $\Reynolds = 100$.}
        \label{fig:obsph_Re100}
    \end{subfigure}\hfill
    \begin{subfigure}[t]{0.48\textwidth}
        \centering
        \includegraphics[width=\textwidth]{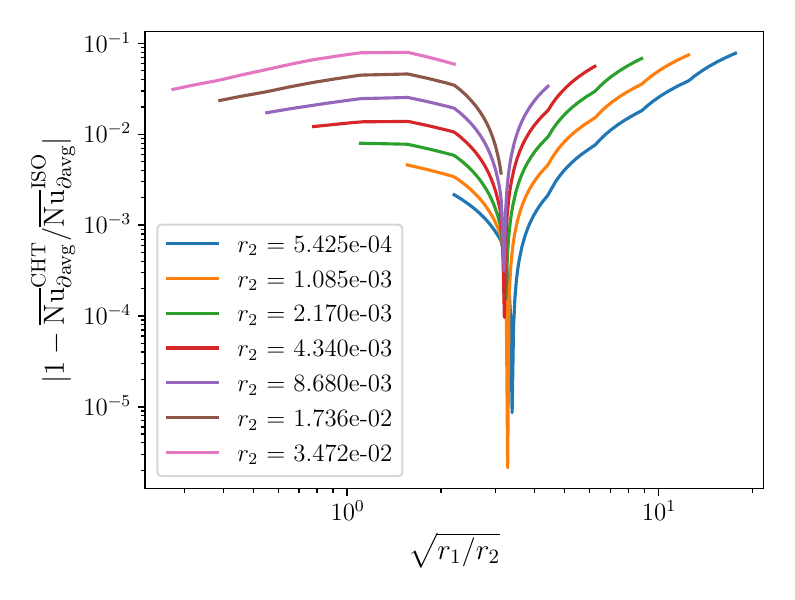}
        \caption{Oblate, $\Reynolds = 500$.}
        \label{fig:obsph_Re500}
    \end{subfigure}

    \caption{Relative Nusselt-number error $|1 - \overline{\mathrm{Nu}}_{\partial\mathrm{avg}}^{\mathrm{CHT}} / \overline{\mathrm{Nu}}_{\partial\mathrm{avg}}^{\mathrm{ISO}}|$ 
    as a function of $r_1/r_2$ for different $r_2$ for different geometries and Reynolds numbers.}
    \label{fig:nu_r1r2}
\end{figure}

\subsubsection{Material property ratios for realistic fluid-solid combinations}\label{subsubsec:material_ratios}

In many traditional engineering applications, such as the cooling of metals in air, the thermophysical property ratios between the working fluid and the solid are indeed small. However, in modern applications, solid and fluid properties can be of comparable magnitude—for example, plastic components in water, metal parts in liquid-metal coolants, or aerogels in air.

Representative thermophysical properties for selected solids and fluids are summarized in~\cref{tab:material_properties}. The data were collected from~\cite{ahtt6e} (unless otherwise noted) and correspond to different reference temperatures. These values are not intended for precise quantitative use but rather to indicate the order of magnitude of property ratios relevant to practical material combinations. We consider aluminum, stainless steel, polycarbonate, and silica aerogel as solid materials, and air, water, helium (used as coolant in pebble-bed reactors), and lead (also used as coolant in nuclear reactors) as fluids.

The resulting ratios $r_1$ and $r_2$ for all solid-fluid combinations are reported in~\cref{tab:r1,tab:r2}. For metallic solids in gaseous media (air or helium), both $r_1$ and $r_2$ are very small. In contrast, for certain combinations—such as aerogel in lead or polycarbonate in water—one or both ratios are on the order of one or higher. In such cases, neither the LCM nor ISO models can be expected to approximate the full CHT behavior accurately. The latter is typically not addressed in heat transfer textbooks such as~\cite{lienhard1973commonality}.

\begin{table}[ht]
    \centering
    \caption{Thermophysical properties of selected solid and fluid materials.}
    \label{tab:material_properties}
    \begin{tabular}{lccccc}
        \toprule
        \textbf{Material / Fluid} 
        & $\rho~[\si{kg/m^3}]$ 
        & $c_p~[\si{J/kg\cdot K}]$ 
        & $k~[\si{W/m\cdot K}]$ 
        & $\nu~[\si{m^2/s}]$ 
        & $\Prandtl$ \\
        \midrule
        Aluminum (20°C) & 2707 & 905 & 237 & -- & -- \\
        Stainless steel AISI 316 (20°C) & 8000 & 460 & 13.5 & -- & -- \\
        Polycarbonate (23°C) & 1200 & 1250 & 0.29 & -- & -- \\
        Silica aerogel \cite{scheuerpflug1992thermal} & 62 & 1000 & 0.01 & -- & -- \\
        Air (27°C) & 1.177 & 1006 & 0.0264 & $1.575\times10^{-5}$ & 0.71 \\
        Water (22°C) & 997.8 & 4183 & 0.6017 & $9.6\times10^{-7}$ & 6.66 \\
        Helium (1300 K) & 0.0375 & 5193 & 0.437 & $1.49\times10^{-3}$ & 0.664 \\
        Lead (600°C) & $1.0\times10^{4}$ & 130 & 20 & $1.3\times10^{-7}$ & $\approx 0.001$ \\
        \bottomrule
    \end{tabular}
\end{table}

\begin{table}[ht]
\centering
\caption{Volumetric heat capacity ratio \(r_1 = (\rho c)_f / (\rho c)_s\) for various fluid-solid combinations.}
\label{tab:r1}
\begin{tabular}{lcccc}
\toprule
               & Aluminum & Stainless Steel & Polycarbonate & Silica Aerogel \\
\midrule
Air            & \num{0.000483}    & \num{0.000322}    & \num{0.000789}    & \num{0.019098}     \\
Water          & \num{1.703706}    & \num{1.134184}    & \num{2.782532}    & \num{67.319313}    \\
Helium         & \num{0.000080}    & \num{0.000053}    & \num{0.000130}    & \num{0.003143}     \\
Lead           & \num{0.530648}    & \num{0.353261}    & \num{0.866667}    & \num{20.967742}    \\
\bottomrule
\end{tabular}
\end{table}

\begin{table}[H]
\centering
\caption{Thermal conductivity ratio \(r_2 = k_f / k_s\) for various fluid-solid combinations.}
\label{tab:r2}
\begin{tabular}{lcccc}
\toprule
               & Aluminum & Stainless Steel & Polycarbonate & Silica Aerogel \\
\midrule
Air            & \num{0.000111}    & \num{0.001956}    & \num{0.091034}    & \num{2.64}         \\
Water          & \num{0.002537}    & \num{0.044574}    & \num{2.075517}    & \num{60.17}        \\
Helium         & \num{0.001844}    & \num{0.03237}     & \num{1.506897}    & \num{43.7}         \\
Lead           & \num{0.084388}    & \num{1.481481}    & \num{68.965517}   & \num{2000.0}       \\
\bottomrule
\end{tabular}
\end{table}

\section{Perspectives}\label{sec:application}
In this manuscript, we have analyzed in detail the approximation error of the CHT problem when modeled using the ISO+LCM approach. While the CHT is the most accurate model for the thermal dunking problem, it is a coupled fluid-solid system with vastly different time scales, making it computationally expensive to solve. The ISO+LCM approximation significantly reduces the computational cost by decoupling the problem into a fluid-only system (ISO) and a solid-only system (LCM). However, this simplification introduces errors that must be carefully evaluated to ensure the accuracy of the results.

The total error can be decomposed into three main contributions: (i)~the temporal approximation error arising from time homogenization, (ii)~the lumping approximation error introduced by the uniform temperature assumption, and (iii)~the Biot approximation error resulting from using the ISO model or empirical correlations to estimate the true Biot number. Below, we summarize the main findings and provide a practical guideline for evaluating the validity of the ISO+LCM approximation in general engineering applications.

Traditionally, for a given problem setup, engineers estimate the Biot number to determine whether the LCM is applicable. If the Biot number falls below a predefined threshold such as~\cref{eq:biot_small}, the LCM is used. Even assuming that the Biot number could be estimated accurately, this criterion does not correctly account for the lumping error and the temporal approximation error.

Our analysis shows that the lumping error depends not only on the Biot number but also on the geometry, the thermophysical properties of the solid ($\kappa$, $\sigma$), and the flow field around the body ($\etabar$). These dependencies are summarized by the parameter $\phi$, which can be computed with~\cref{eq:sensitivity_problem,eq:phi_definition}. We also derive an upper bound for $\phi$ in~\cref{eq:phi_bound_result}, which can be evaluated even when $\kappa$, $\sigma$, and $\etabar$ are not known pointwise. This bound depends only on the variances $\dashint_\Omega (\sigma - 1)^2$ and $\dashint_{\pOmega} (\etabar - 1)^2$, allowing the estimation of the lumping error using only scalar quantities. These variances can often be reasonably estimated based on prior knowledge of the solid material properties, see~\cref{eq:var_sigma_composite}, and simplified flow simulations, see Appendix~\ref{sec:eta_study}. Importantly, the upper bound captures the effects of spatial variations in the heat transfer coefficient $\etabar$—a quantity that would otherwise need to be obtained by solving the full CHT problem. Although its evaluation involves computing $\phi(1,1,1)$ by solving the PDE in~\cref{eq:sensitivity_problem}, as well as two eigenvalue problems for $\mu$ and $\Lambda$ defined in~\cref{eq:stability_1,eq:stability_2}, the computational cost remains negligible compared to that of a full CHT simulation.

Even if the Biot number and lumping error are small, the temporal approximation error—which is typically neglected in practice—may still be significant. Our numerical study in~\cref{subsubsec:time_numerics} demonstrates that this error can grow large even for small Biot numbers, depending on the property ratio $r_1 = (\rhofdim \cfdim) / (\rhosdim \csdim)$. As $r_1$ increases, time scale separation becomes less pronounced, and the initial values of the Nusselt number during short times have a significant effect on the average temperature, resulting in large temporal errors, see~\cref{subsubsec:time_stability,propo:short_time_stiff}. To quantify this behavior, a ratio is introduced in~\cref{eq:time_ratio_2}, which measures the relative response times of the fluid and solid. However, it remains unclear how large this ratio must be to ensure a small temporal error. Ideally, one would like to derive a practical criterion like
\begin{align*}
    \frac{1}{r_1}\frac{\Reynolds\Prandtl}{\Nustavg\gamma} > f(\text{geometry}),
\end{align*}
where $f(\text{geometry})$ is a geometry-dependent factor. In the example considered in~\cref{subsubsec:time_numerics}, a value of 300 is found. Such a criterion would offer a quantitative condition for the validity of the time scale separation assumption, and ensure that the temporal approximation error is small.

Finally, it is important to recognize that the Biot number estimated using the ISO model may deviate from the true value, particularly when either $r_1$ or $r_2$ is large. Our numerical study in~\cref{subsubsec:iso_versus_cht} shows that even for moderate values of $r_1$ and $r_2$, the relative error between the ISO and CHT Nusselt numbers can reach up to 20\%. This level of deviation may be acceptable, as it is on the same order of magnitude as the uncertainty typically associated with empirical correlations (see~\cref{subsec:nusselt_approximation}). However, for many modern applications, where either $r_1$ or $r_2$ may take even larger values, one should exercise caution.

\appendix
\section{\texorpdfstring{Numerical Study of $\etabar$ for Real Flows}{Numerical Study of Uniformity for Real Flows}}\label{sec:eta_study}

To evaluate the upper bound for the lumping error in~\cref{eq:lcm_error_bound} without detailed knowledge of the spatial quantities $\kappa$, $\sigma$, and $\etabar$, we can use the upper bound for $\phi$ in~\cref{eq:phi_bound_result}. This bound requires only the geometry and the scalar-valued variances $\dashint_{\pOmega}(\etabar-1)^2$ and $\dashint_{\Omega}(\sigma-1)^2$. The variance of $\sigma$ can be computed whenever $\sigma$ is piecewise constant with known values and volume fractions, see~\cref{eq:var_sigma_composite}. By contrast, $\etabar$ depends on the flow field and is generally unknown in practice.

In this section, we present a numerical study for three representative cases of realistic flows: forced convection over a cylinder with circular cross section, a sphere, and a cylinder with square cross section. We investigate the variance of $\etabar$ across a range of Reynolds numbers, using both experimental data from the literature and simulation data obtained with Nek5000~\cite{nek5000-web-page}. Based on the results, we propose a strategy to estimate $\dashint_{\pOmega}(\etabar-1)^2$ for practical applications.

\subsection{Empirical results}
\subsubsection{Cylinder with circular cross section}
Several experimental studies have measured the time-averaged Nusselt number in space around a long cylinder with circular cross section in cross flow, including Giedt~\cite[Fig.~3]{giedt1951effect}, Schmidt~\cite[Figs.~4--6]{schmidt1943heat}, and Achenbach~\cite[Figs.~14 and 17]{achenbach1977effect}, all at high Reynolds numbers. Since no raw data were provided, we extracted points from the published figures and reconstructed the curves using radial basis function regression. As all data were for high Reynolds numbers, we additionally performed simulations with Nek5000 for low Reynolds number cases. The simulations were carried out in a two-dimensional domain, corresponding to an infinitely long cylinder, as also assumed in the experiments. This approximation is appropriate at low Reynolds numbers, where three-dimensional effects are negligible. 

The resulting $\etabar$ curves are shown in~\cref{fig:eta_disk}. Since the cylinder is assumed infinitely long, $\etabar$ is defined on the circular cross section and therefore depends only on the azimuthal angle $\theta$. Its shape varies markedly with Reynolds number (evaluated with diameter as the length scale): at low values, $\etabar$ is spread out with no pronounced peak; as Reynolds increases, the profile sharpens and becomes more localized; and at even higher Reynolds numbers, a second peak emerges around $100^\circ$, which overtakes the leading edge as the location of maximum heat transfer.

We compute the variance $\dashint_{\pOmega}(\etabar-1)^2$ for all data sets and summarize the results in~\cref{tab:variance_disk}. The maximum variance for each dataset is highlighted in boldface. At very low Reynolds numbers, the variance increases with Reynolds number up to the onset of vortex shedding (around $\Reynolds \approx 90$), reaching its maximum at $\Reynolds = 79.58$. Beyond this point, the variance decreases until around $\Reynolds = 1.7\times10^5$, as the flow becomes increasingly turbulent, and the heat transfer around the cylinder tends to become more uniform due to enhanced mixing. A secondary increase in variance is observed between $\Reynolds \approx 1.7\times 10^5$ and $1.27\times 10^6$, corresponding to the so-called drag crisis: the boundary layer transitions from laminar to turbulent, allowing it to remain attached over a larger portion of the surface and leading to a sudden change in both heat-transfer and drag characteristics. Nevertheless, even in this regime, the variance remains far below the maximum attained at low Reynolds numbers.

\begin{table}[ht]
    \centering
    \begin{minipage}{0.48\textwidth}
        \centering
        \begin{tabular}{l|c|c}
            Source & Re & $\dashint_{\pOmega}(\etabar-1)^2$ \\
            \hline
            Giedt~\cite{giedt1951effect}     & 70800    & \textbf{0.092} \\
                             & 101300   & 0.056 \\
                             & 140000   & 0.020 \\
                             & 170000   & 0.031 \\
                             & 219000   & 0.037 \\
            \hline
            Schmidt~\cite{schmidt1943heat}   & 8290     & \textbf{0.234} \\
                             & 15550    & 0.169 \\
                             & 21200    & 0.133 \\
                             & 52800    & 0.108 \\
                             & 102000   & 0.082 \\
                             & 170000   & 0.049 \\
                             & 257600   & 0.052 \\
                             & 426000   & 0.123 \\
        \end{tabular}
    \end{minipage}\hfill
    \begin{minipage}{0.48\textwidth}
        \centering
        \begin{tabular}{l|c|c}
            Source & Re & $\dashint_{\pOmega}(\etabar-1)^2$ \\
            \hline
            Achenbach~\cite{achenbach1977effect} & 1270000  & \textbf{0.183} \\
                             & 4000000  & 0.139 \\
            \hline
            Nek5000 sim. & 15.92   & 0.193 \\
                         & 31.83   & 0.262 \\
                         & 47.75   & 0.300 \\
                         & 63.66   & 0.320 \\
                         & 79.58   & \textbf{0.332} \\
                         & 95.49   & 0.313 \\
                         & 143.24  & 0.305 \\
                         & 190.99  & 0.301 \\
                         & 238.73  & 0.298 \\
                         & 286.48  & 0.297 \\
                         & {\color{white}0}  & {\color{white}0} \\
        \end{tabular}
    \end{minipage}
    \caption{Variance $\dashint_{\pOmega}(\etabar-1)^2$ from literature and simulations across Reynolds numbers for $\Prandtl=0.71$ for the cylinder with circular cross section. The maximum value for each dataset is highlighted in boldface.}
    \label{tab:variance_disk}
\end{table}

\begin{figure}[p]
    \centering
    \begin{subfigure}[b]{0.48\textwidth}
        \centering
        \includegraphics[width=\textwidth]{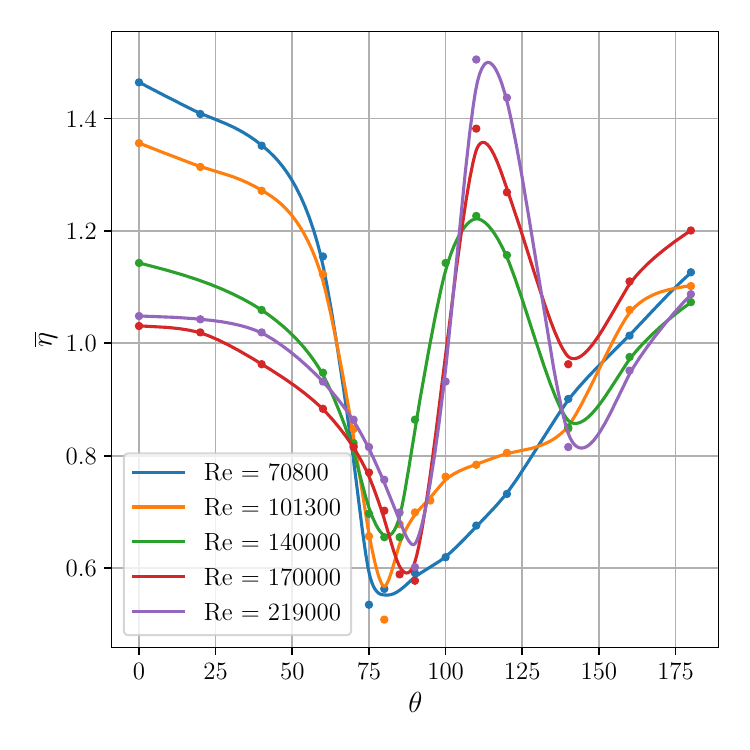}
        \caption{Giedt~\cite{giedt1951effect}}
        \label{fig:eta_giedt}
    \end{subfigure}
    \hfill
    \begin{subfigure}[b]{0.48\textwidth}
        \centering
        \includegraphics[width=\textwidth]{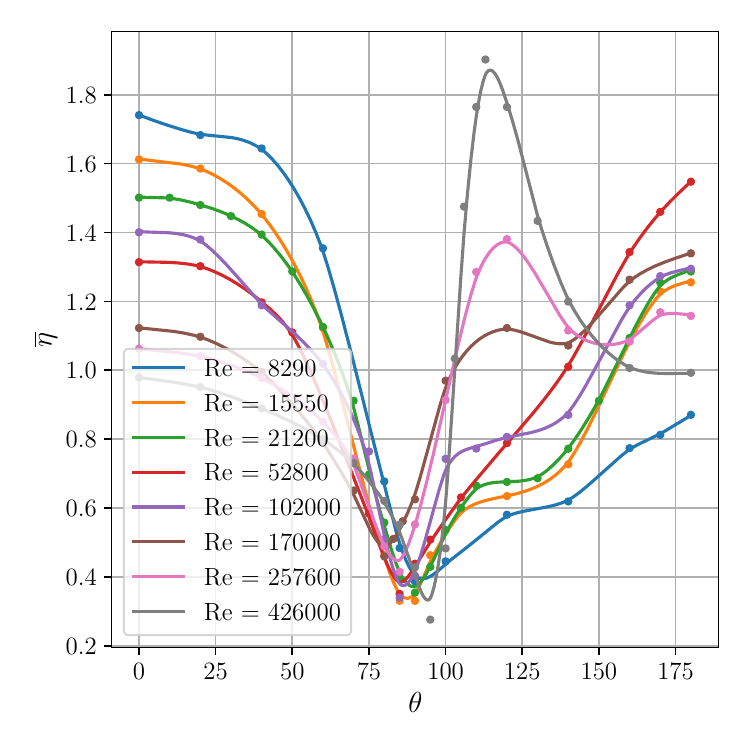}
        \caption{Schmidt~\cite{schmidt1943heat}}
        \label{fig:eta_schmidt}
    \end{subfigure}
    \begin{subfigure}[b]{0.48\textwidth}
        \centering
        \includegraphics[width=\textwidth]{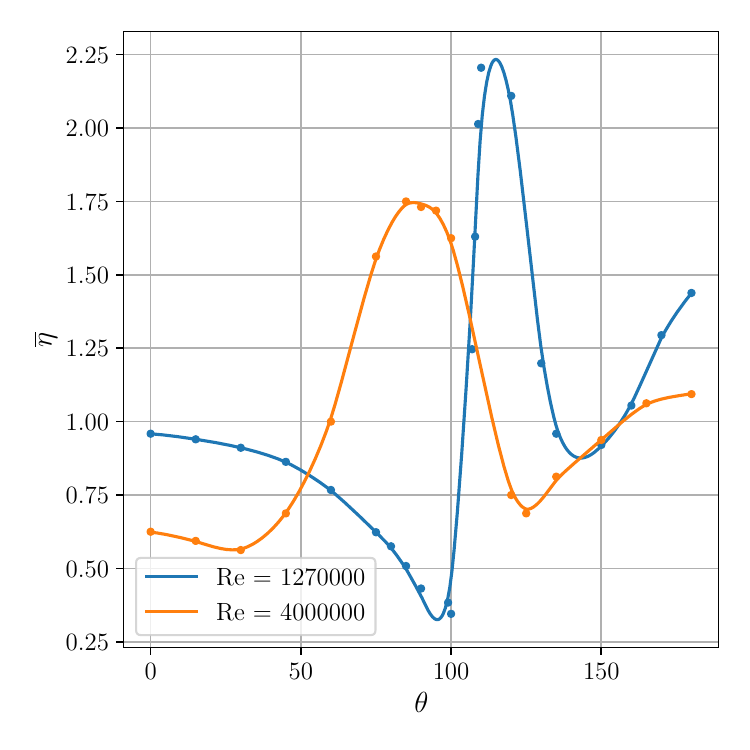}
        \caption{Achenbach~\cite{achenbach1977effect}}
        \label{fig:eta_achenbach}
    \end{subfigure}
    \hfill
    \begin{subfigure}[b]{0.48\textwidth}
        \centering
        \includegraphics[width=\textwidth]{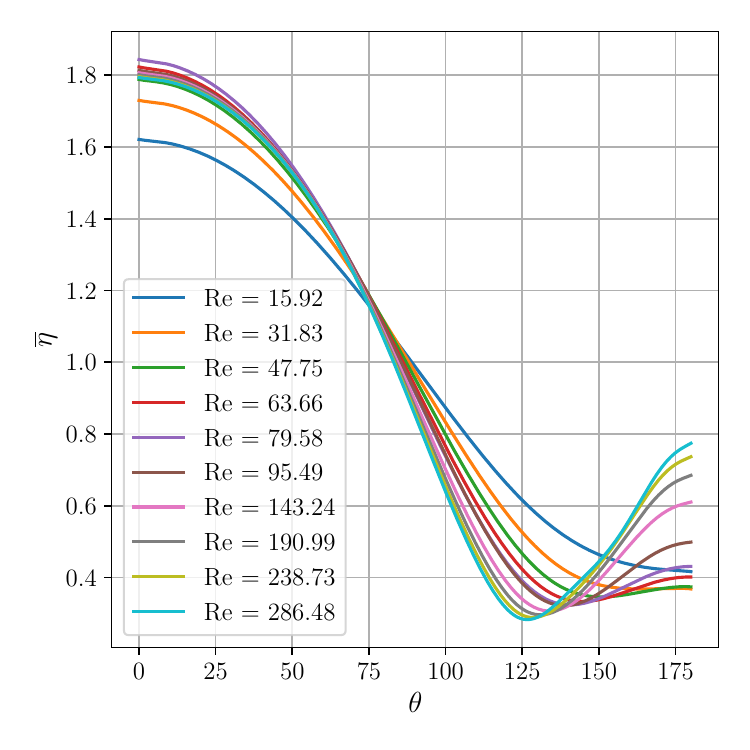}
        \caption{Nek5000}
        \label{fig:eta_nek_disk}
    \end{subfigure}
    \caption{$\etabar$ values for cross flow over a cylinder with circular cross section at various Reynolds numbers from different sources.}
    \label{fig:eta_disk}
\end{figure}

\subsubsection{Sphere}
For the available experimental data in Aufdermaur~\cite{aufdermaur1967wind} for the sphere, the trend of decreasing $\dashint_{\pOmega}(\etabar-1)^2$ with Reynolds number is again observed. The $\etabar$ values were measured along the equatorial plane, as a function of the polar angle from the forward stagnation point, corresponding to a great-circle cross section of the sphere. The resulting $\etabar$ curves are shown in~\cref{fig:eta_aufdermauer}, and the corresponding variances are reported in~\cref{tab:variance_sph}, with the maximum value highlighted in bold. 
As only values of $\etabar$ along a line on the equator are available, the variance is computed under the assumption of axisymmetry about the flow direction.
The order of magnitude of the variance is the same as for the cylinder, and the same qualitative trend emerges: after reaching a maximum at $\Reynolds=8200$, the variance decreases with increasing Reynolds number.

\begin{figure}[p]
    \centering
    \begin{subfigure}[c]{0.48\textwidth}
        \centering
        \includegraphics[width=\textwidth]{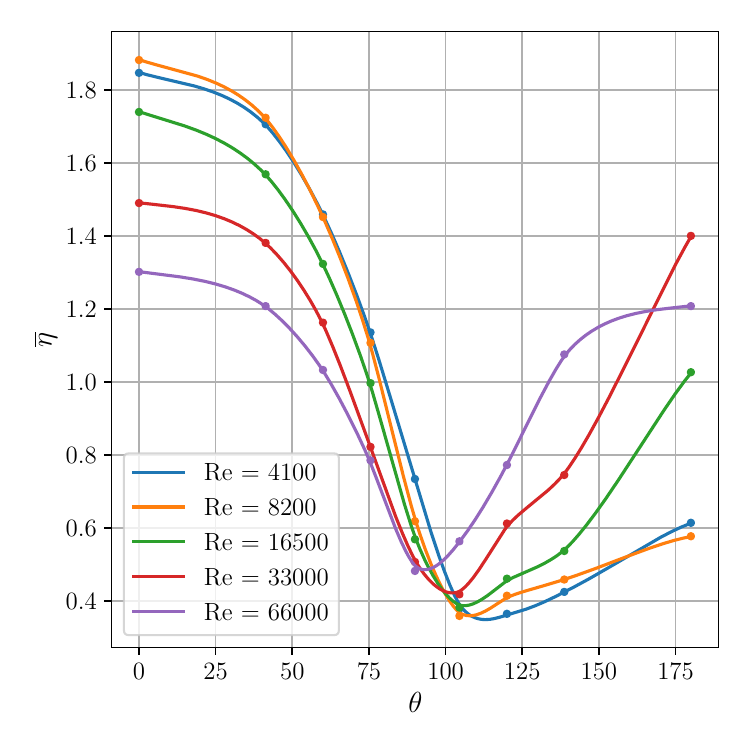}
        \caption{$\etabar$ curves.}
        \label{fig:eta_aufdermauer}
    \end{subfigure}
    \hfill
    \begin{subfigure}[c]{0.48\textwidth}
        \centering
        \begin{tabular}{l|c|c}
            Source & Re & $\dashint_{\pOmega}(\etabar-1)^2$ \\ \hline
            Aufdermaur~\cite{aufdermaur1967wind}  & 4100  & 0.338 \\
                                       & 8200  & \textbf{0.348} \\
                                       & 16500 & 0.232 \\
                                       & 33000 & 0.134 \\
                                       & 66000 & 0.070
        \end{tabular}
        \caption{Variance values.}
        \label{tab:variance_sph}
    \end{subfigure}
    \caption{Time-independent variation function $\etabar$ for cross-flow around a sphere at various Reynolds numbers from Aufdermaur~\cite{aufdermaur1967wind}. (a) $\etabar$ distributions. (b) Variance $\dashint_{\pOmega}(\etabar-1)^2$.}
    \label{fig:eta_sph}
\end{figure}

\subsubsection{Cylinder with square cross section}
We also examine the flow around a square cylinder. For this geometry, no experimental data on the local Nusselt number were found, so we conducted simulations with Nek5000. The Reynolds number is defined using a characteristic length corresponding to a perimeter of $\pi\dm{D}$, allowing direct comparison with the circular cylinder of diameter $\dm{D}$. As the Reynolds numbers are low, the simulations were performed in two dimensions, and $\etabar$ is defined as a function of the azimuthal angle measured from the forward stagnation point. The $\etabar$ curves are shown in~\cref{fig:eta_square} and the corresponding variances are reported in~\cref{tab:variance_square}. Within the limited Reynolds number range studied, the behavior is qualitatively similar to that of the disk: $\dashint_{\pOmega}(\eta-1)^2$ increases with Reynolds number up to the onset of vortex shedding at approximately $\Reynolds=90$, and decreases thereafter. The values, however, are consistently higher than for the disk, which can be attributed to stronger flow separation at the square’s corners.
\begin{figure}[p]
    \centering
    \begin{subfigure}[c]{0.48\textwidth}
        \centering
        \includegraphics[width=\textwidth]{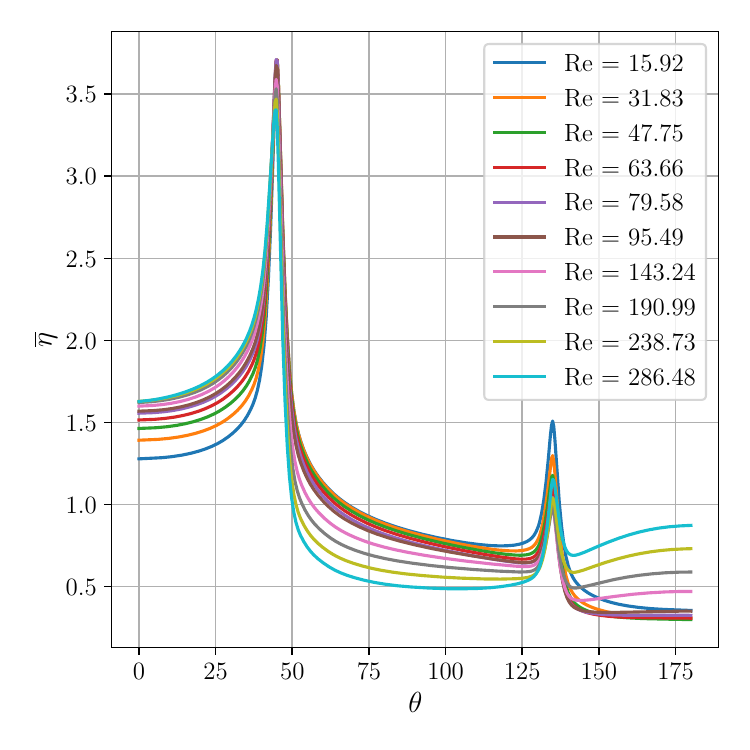}
        \caption{$\etabar$ curves}
        \label{fig:eta_square}
    \end{subfigure}
    \hfill
    \begin{subfigure}[c]{0.48\textwidth}
        \centering
        \begin{tabular}{l|c|c}
            Source & Re & $\dashint_{\pOmega}(\etabar-1)^2$ \\ \hline
            Nek5000 & 15.92   & 0.291 \\
                    & 31.83   & 0.341 \\
                    & 47.75   & 0.371 \\
                    & 63.66   & 0.391 \\
                    & 79.58   & \textbf{0.407} \\
                    & 95.49   & 0.404 \\
                    & 143.24  & 0.391 \\
                    & 190.99  & 0.387 \\
                    & 238.73  & 0.385 \\
                    & 286.48  & 0.392
        \end{tabular}
        \caption{Variance values}
        \label{tab:variance_square}
    \end{subfigure}
    \caption{Time-independent variation function $\etabar$ for cross flow over a cylinder with square cross section from Nek5000 simulations. (a)~Corresponding $\etabar$ curves. (b)~Variance $\dashint_{\pOmega}(\etabar-1)^2$ for different Reynolds numbers.}
    \label{fig:square_combined}
\end{figure}

\subsection{Interpretation}
The study, though anecdotal, suggests that the variance of heat exchange around a body first increases up to a critical value and then decreases as the flow becomes more turbulent. This indicates that a satisfactory upper bound for $\dashint_{\pOmega}(\etabar-1)^2$ can be obtained from low Reynolds number simulations. These cases are relatively simple to compute, and one could in principle construct a database of $\deltaeta$ for different geometries that could then be used to find $\phi^{\text{ub}}$ in~\cref{eq:phi_bound_result}.
\section{Numerical Treatment of Eigenvalue Problems}\label{sec:steklov_numerical}
To evaluate the upper bound for $\phi$ in~\cref{eq:phi_bound_result}, one linear PDE (\cref{eq:sensitivity_problem}) and two eigenvalue problems (\cref{eq:stability_1,eq:stability_2}) need to be solved. The numerical treatment of~\cref{eq:sensitivity_problem} has been discussed in~\cite{kaneko2024error} (for $\etabar=1$). Here, we focus on the numerical solution of the eigenvalue problems:
\begin{align*}
    \mu \coloneqq \inf_{w\in Z_0(1)} \frac{a_0(w,w)}{\int_\Omega w^2}, \quad \Lambda \coloneqq \inf_{w\in Z_0(1)} \frac{a_0(w,w)}{\int_{\pOmega} w^2},
\end{align*}
where $Z_0(1)\coloneqq \{ w\in H^1(\Omega) \ | \ \dashint_\Omega w=0 \}$, and for $u,v\in H^1(\Omega)$,
\begin{align*}
    a_0(u,v) \coloneqq \int_\Omega \delnd u \cdot \delnd v.
\end{align*}
Using a discretization $X_h \subset H^1(\Omega)$ with basis $\{\varphi_i\}_{i=1}^{N_h}$, we can approximate the eigenvalue problems by the following matrix problems:
\begin{align*}
    \mu_h \coloneqq \inf_{\substack{\mathbf{w}\in \mathbb{R}^{N_h} \\ \mathbf{c}^T\mathbf{w}=0}} \frac{\mathbf{w}^T\mathbf{A}\mathbf{w}}{\mathbf{w}^T\mathbf{M}\mathbf{w}}, \quad \Lambda_h \coloneqq \inf_{\substack{\mathbf{w}\in \mathbb{R}^{N_h} \\ \mathbf{c}^T\mathbf{w}=0}} \frac  {\mathbf{w}^T\mathbf{A}\mathbf{w}}{\mathbf{w}^T\mathbf{B}\mathbf{w}},
\end{align*}
where the matrices $\mathbf{A},\mathbf{M},\mathbf{B}\in\mathbb{R}^{N_h\times N_h}$ and vector $\mathbf{c}\in\mathbb{R}^{N_h}$ are defined as
\begin{align*}
    A_{ij} &\coloneqq a_0(\varphi_j,\varphi_i) = \int_\Omega \delnd \varphi_j \cdot \delnd \varphi_i, \\
    M_{ij} &\coloneqq \int_\Omega \varphi_j \varphi_i, \\
    B_{ij} &\coloneqq \int_{\pOmega} \varphi_j \varphi_i, \\
    c_i &\coloneqq \int_\Omega \varphi_i.
\end{align*}
To impose the constraint $\mathbf{c}^T\mathbf{w}=0$, we use a Lagrange multiplier $\lambda\in\mathbb{R}$ and obtain the following saddle point problems:
\begin{align*}
    \begin{bmatrix}
    \mathbf{A} & \mathbf{c} \\
    \mathbf{c}^T & 0
    \end{bmatrix}
    \begin{bmatrix}
    \mathbf{w} \\ \lambda
    \end{bmatrix}
    &= \mu_h
    \begin{bmatrix}
    \mathbf{M} & 0 \\
    0 & 0
    \end{bmatrix}
    \begin{bmatrix}
    \mathbf{w} \\ \lambda
    \end{bmatrix}, \\
    \begin{bmatrix}
    \mathbf{A} & \mathbf{c} \\
    \mathbf{c}^T & 0
    \end{bmatrix}
    \begin{bmatrix}
    \mathbf{w} \\ \lambda
    \end{bmatrix}
    &= \Lambda_h
    \begin{bmatrix}
    \mathbf{B} & 0 \\
    0 & 0
    \end{bmatrix}
    \begin{bmatrix}
    \mathbf{w} \\ \lambda
    \end{bmatrix}.
\end{align*}
Eigenvalues are computed with SLEPc~\cite{hernandez2005slepc} using shift-and-invert method.
\section{Convergence Study and Validation of Simulations}\label{sec:convergence_study}
For the three-dimensional simulations considered in this work, polynomial order $p=7$ is used, and a mild spectral filter is applied to maintain numerical stability over long simulation times. 
Following the recommendations of~\cite{fischer2001filter,nek5000-web-page}, the filter is applied to the highest two modes, attenuating them by 5\% and 1.25\%, respectively.
Because the filter amplitudes are small, they have a negligible influence on the solution and therefore do not affect the analysis of the boundary layer profiles discussed above, while ensuring numerical stability of the simulations.

To verify that the spatial resolution is sufficient, we examine the boundary layers near the fluid-solid interface. 
This is done by inspecting the temperature variation along lines normal to the interface within individual spectral elements, as shown in the following subsections.
A smooth variation of temperature within each element and an absence of oscillations indicate that the polynomial representation within each element is sufficient to resolve the boundary layer accurately.

\subsection{Flow Around Cylinder (CHT)}\label{subsec:convergence_study_cyl}
We show in~\cref{fig:disk_validation} the two-dimensional simulation mesh used for the cross-flow around a cylinder in~\cref{subsec:example_cht} (assumed infinitely long). Each quadrilateral represents a spectral element of polynomial order $p=7$. A very fine mesh is employed near the fluid-solid interface to resolve the boundary layers, as illustrated in~\cref{fig:disk_validation_2}. To assess the spatial resolution,~\cref{fig:disk_validation_3} shows the temperature variation along the red line at the leading edge, spanning two elements at the interface. The temperature profile along this line varies linearly, confirming that the mesh resolution is sufficient to capture the boundary layer behavior accurately. A similar behavior is observed at the trailing edge (not shown here).

\begin{figure}[ht]
    \centering
    \begin{subfigure}[b]{0.58\textwidth}
        \centering
        \includegraphics[width=\textwidth]{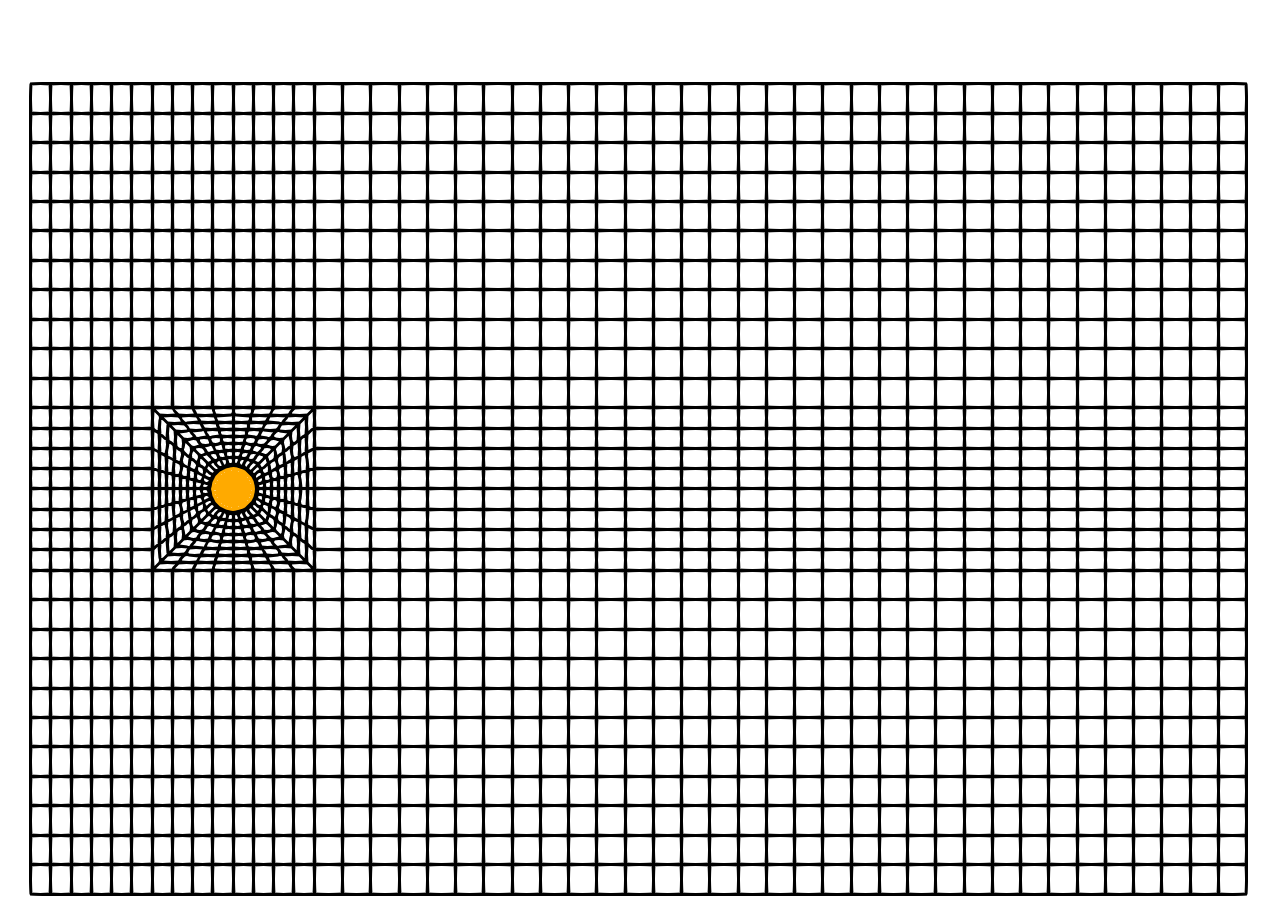}
        \caption{Entire mesh}
        \label{fig:disk_validation_1}
    \end{subfigure}
    \hfill
    \begin{subfigure}[b]{0.37\textwidth}
        \centering
        \includegraphics[width=\textwidth]{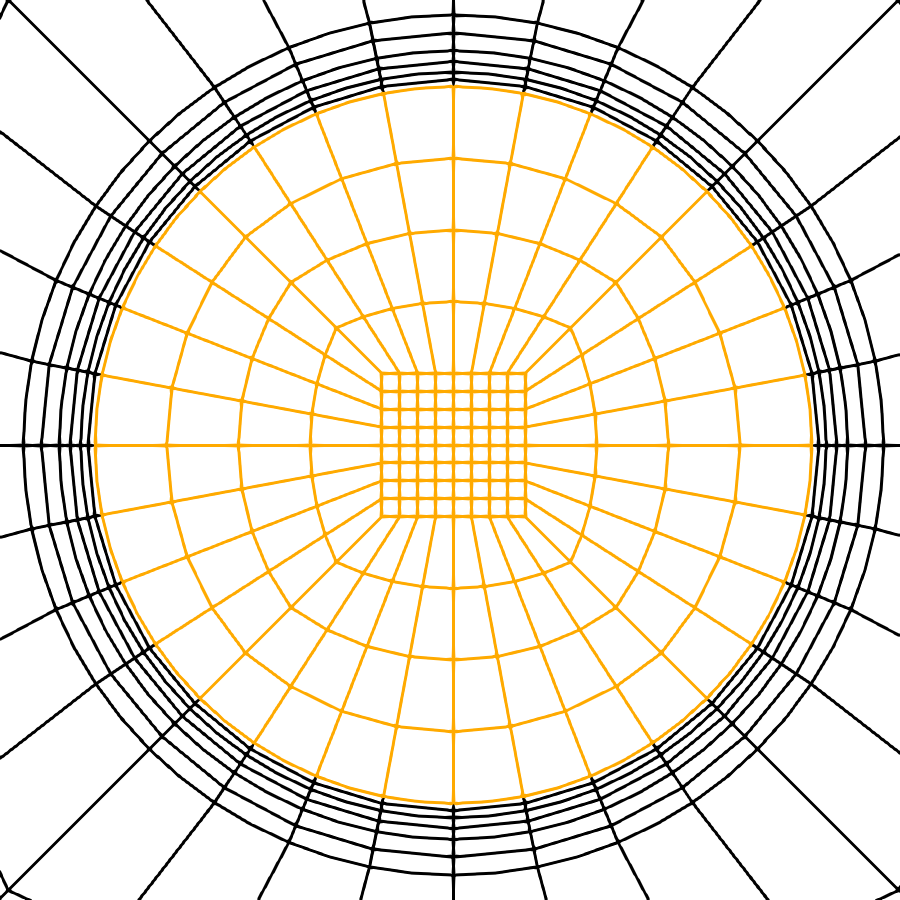}
        \caption{Mesh around cylinder}
        \label{fig:disk_validation_2}
    \end{subfigure}
    \caption{Simulation mesh for flow around a cylinder. The mesh of the fluid domain is in black and the solid domain in orange. The mesh at the interface is very fine to resolve the boundary layers.}
    \label{fig:disk_validation}
\end{figure}
\begin{figure}[ht]
    \centering
    \includegraphics[width=0.7\textwidth]{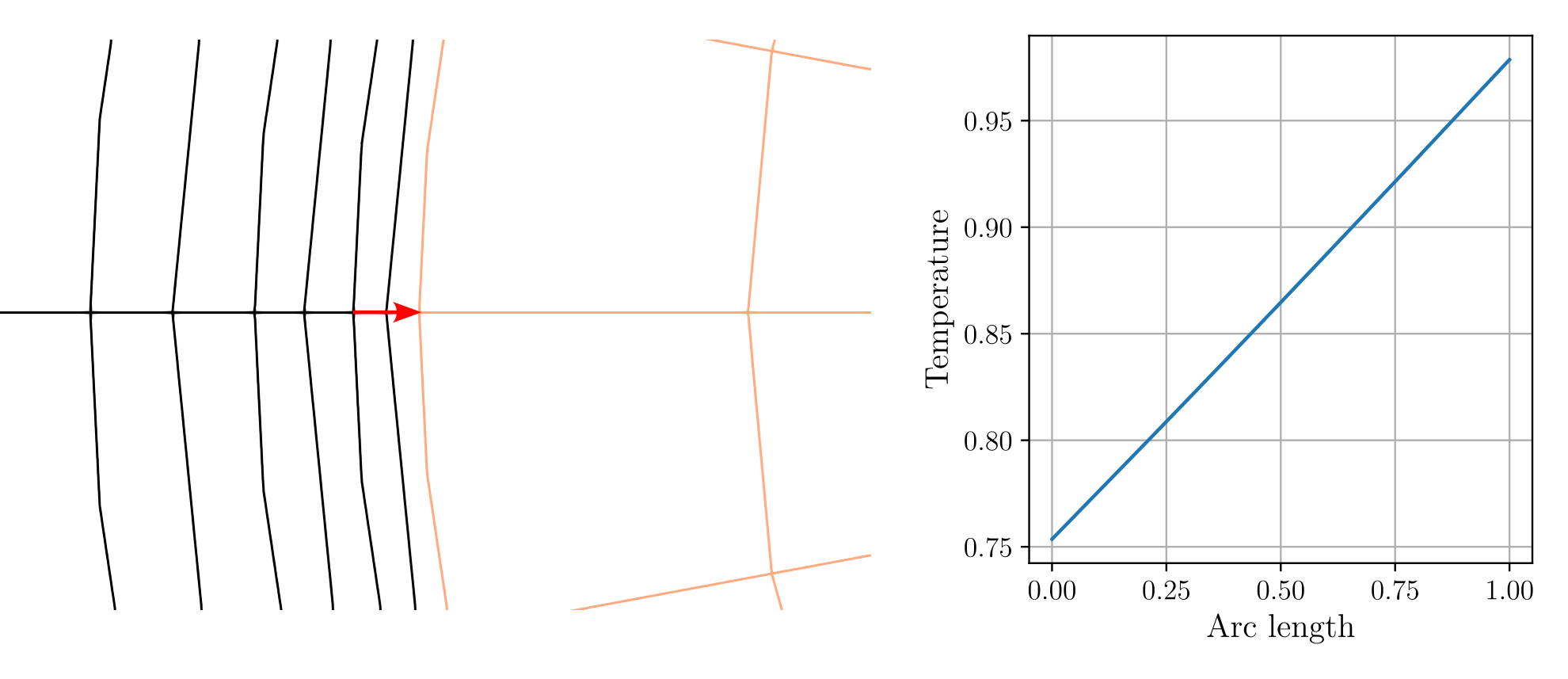}
    \caption{Temperature variation along the red line at the leading edge.}
    \label{fig:disk_validation_3}
\end{figure}

\subsection{Flow Around Sphere (ISO)}\label{subsec:convergence_study_sph}
We show in~\cref{fig:sphere_validation} the computational mesh used for the case $\Reynolds[\sqrt{\dm{A}}] = 10^4$ described in~\cref{subsubsec:proof_of_concept_sphere}, visualized in the plane $z = 0$ near the sphere surface. A close-up view of the stagnation point region is also provided. The triangular patterns visible in the slice are artefacts of the ParaView visualization and do not correspond to the actual spectral-element discretization. All grid points shown correspond to the Gauss–Lobatto integration points used within each spectral element. The temperature field along the red line within a single spectral element is plotted on the right. The temperature exhibits a slightly nonlinear variation across the element, which is accurately captured by the seventh-order polynomial representation, demonstrating that the thermal boundary layer is well resolved.

\begin{figure}[ht]
    \centering
    \includegraphics[width=\textwidth]{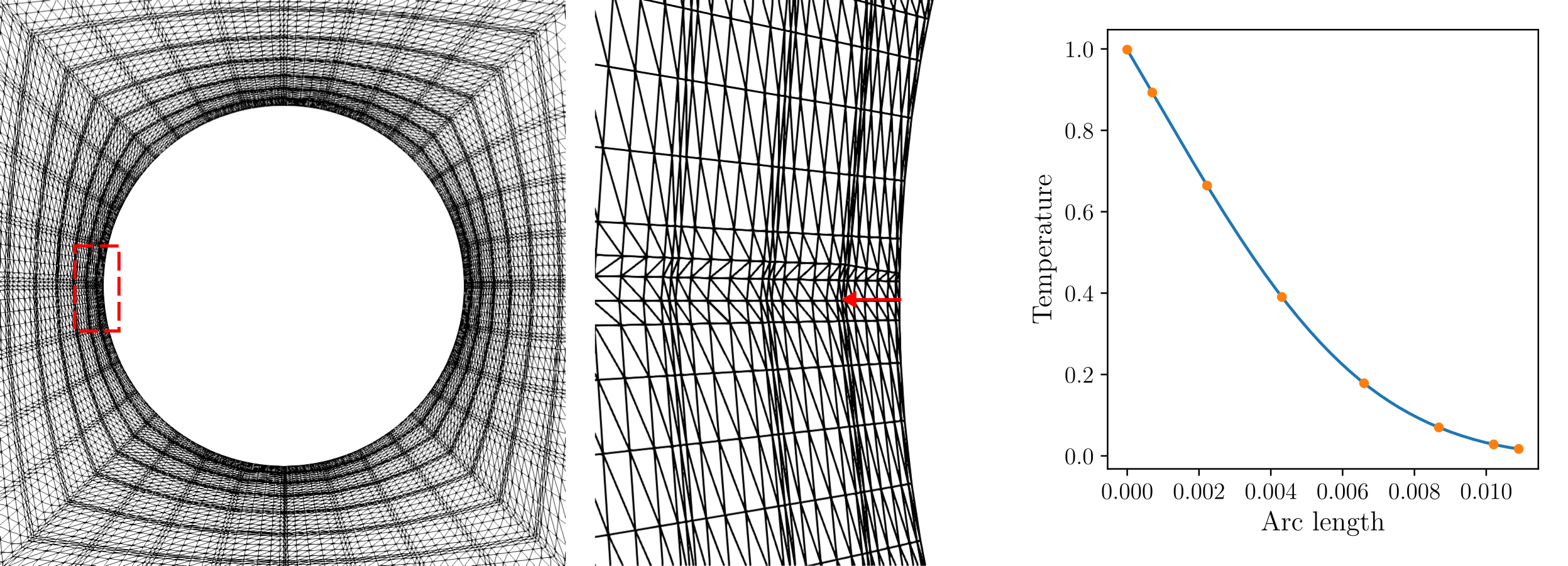}
    \caption{
        Visualization of the mesh and temperature field for the sphere case at $\Reynolds[\sqrt{\dm{A}}] = 10^4$. 
        Left: mesh around the sphere for a slice $z = 0$; center: zoomed view at the stagnation point; right: temperature variation along the red line at the leading edge.
    }
    \label{fig:sphere_validation}
\end{figure}

The same mesh was also used for a lower Reynolds number case of $\Reynolds[\sqrt{\dm{A}}] = 886$, which was independently computed in~\cite{bagchi2001direct}. In the length scale based on $\sqrt{\dm{A}}$, their reported Nusselt number is $\Nusselt[\sqrt{\dm{A}}] \approx 23.52$. Our simulation yields $\Nustavgiso = 23.60$, corresponding to a relative deviation of $0.3\%$. The spatially averaged Nusselt number as a function of time is shown in~\cref{fig:sphere_nusselt_validation}. A sliding window was applied to this time series to compute the running temporal average, and the simulation was advanced until statistical steady state was reached, both as described in~\cref{sec:steady_state_algorithm}.
\begin{figure}[ht]
    \centering
    \includegraphics[width=0.5\textwidth]{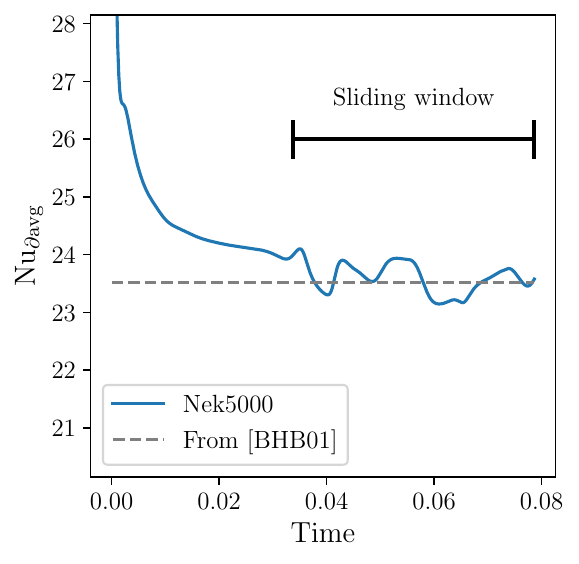}
    \caption{Spatially averaged Nusselt number for the sphere at $\Reynolds[\sqrt{\dm{A}}] = 886$. The horizontal dashed line denotes the reference value from~\cite{bagchi2001direct}, $\Nustavgiso[\sqrt{\dm{A}}] = 23.52$. The black bar indicates the sliding window used for time averaging, which yields $\Nustavgiso[\sqrt{\dm{A}}] = 23.60$. The implementation of the sliding window averaging procedure is described in~\cref{sec:steady_state_algorithm}.}
    \label{fig:sphere_nusselt_validation}
\end{figure}

\subsection{Flow Around Prolate Spheroid (ISO)}\label{subsec:convergence_study_sph_pro}
We show in~\cref{fig:prolate_spheroid_validation} the computational mesh (with Gauss--Lobatto points) used for the prolate spheroid $s=10$ in axial-flow for $\Reynolds[\sqrt{\dm{A}}] = 2000$, visualized in the plane $z = 0$ near the prolate spheroid surface. A close-up view of the stagnation point region is also provided. The temperature shows a highly nonlinear variation across the element, and it is unclear if it is converged. To validate the result, we therefore also consider a higher polynomial order $p=8$ for this case. The Nusselt numbers for both polynomial orders are found to be basically identical, with less than 0.1\% difference.
\begin{figure}[p]
    \centering
    \includegraphics[width=\textwidth]{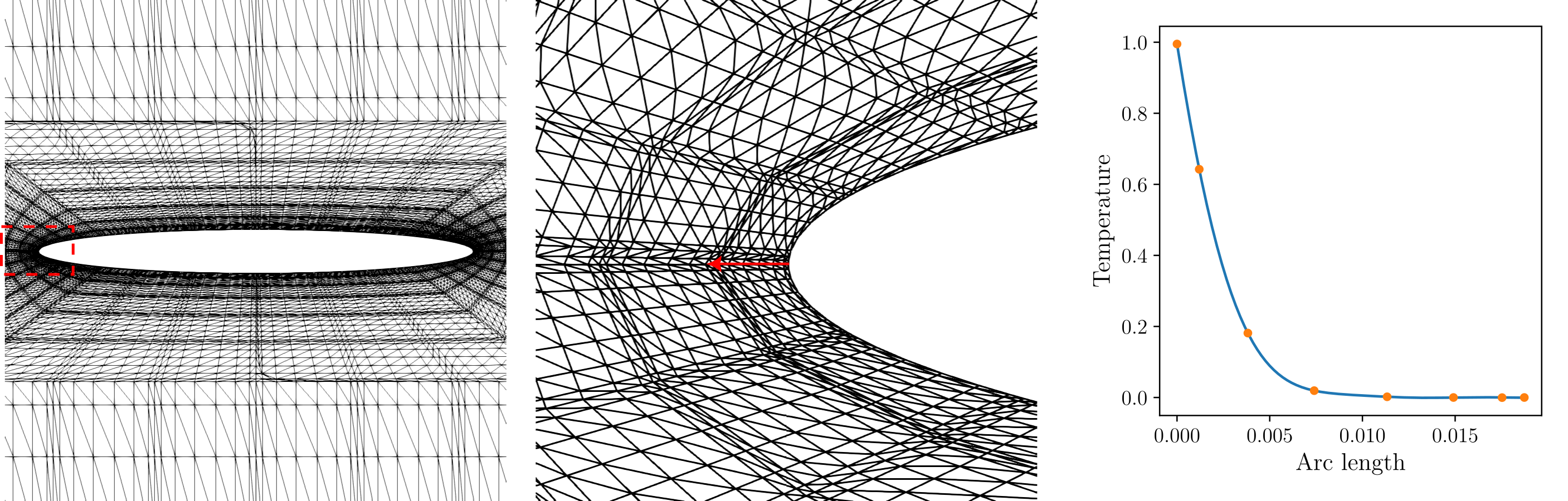}
    \caption{
        Visualization of the mesh and temperature field for the prolate spheroid case at $\Reynolds[\sqrt{\dm{A}}] = 2000$. 
        Left: mesh around the prolate spheroid; center: zoomed view at the stagnation point; right: temperature variation along the red line at the leading edge.
    }
    \label{fig:prolate_spheroid_validation}
\end{figure}

\subsection{Flow Around Oblate Spheroid (ISO)}\label{subsec:convergence_study_sph_obl}
We show in~\cref{fig:oblate_spheroid_validation} the computational mesh (with Gauss--Lobatto points) used for the oblate spheroid $s=0.1$ in cross-flow for $\Reynolds[\sqrt{\dm{A}}] = 2000$, visualized in the plane $z = 0$ near the oblate spheroid surface. A close-up view of the stagnation point region is also provided. The temperature field along the red line within a single spectral element is plotted on the right. The temperature shows a linear variation across the element, which is accurately captured by the seventh-order polynomial representation.
\begin{figure}[p]
    \centering
    \includegraphics[width=\textwidth]{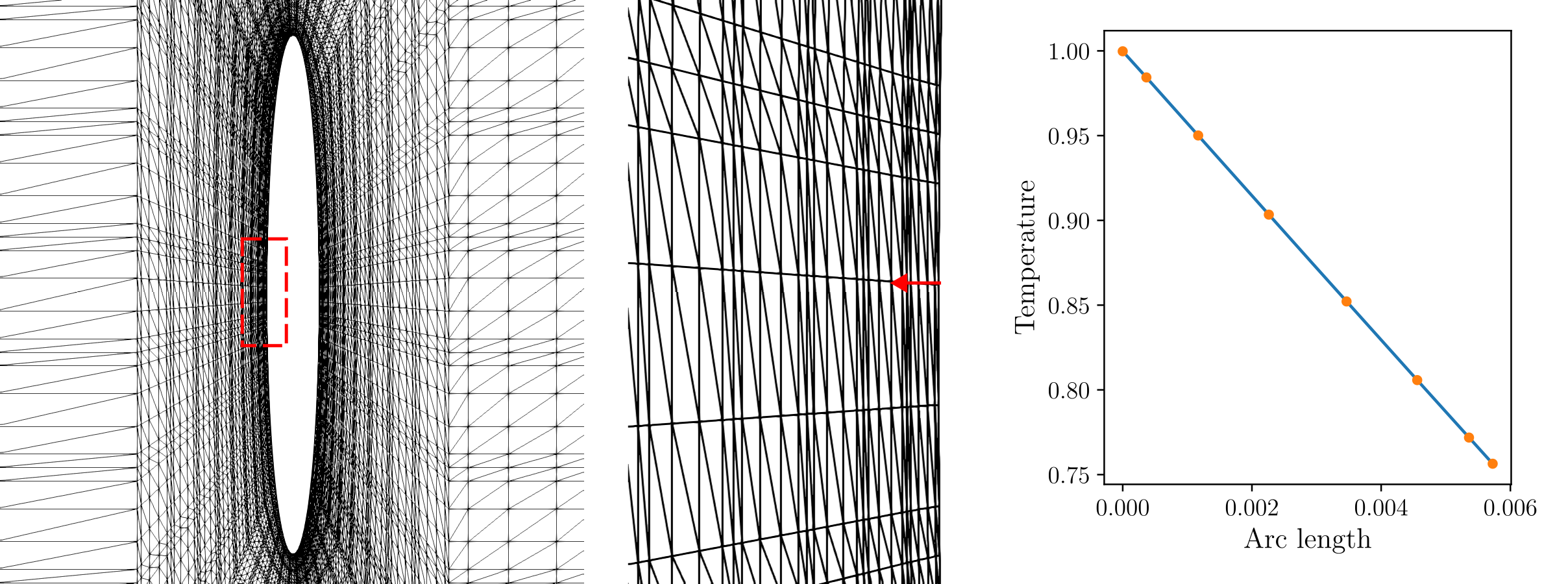}
    \caption{
        Visualization of the mesh and temperature field for the oblate spheroid case at $\Reynolds[\sqrt{\dm{A}}] = 2000$. 
        Left: mesh around the oblate spheroid; center: zoomed view at the stagnation point; right: temperature variation along the red line at the leading edge.
    }
    \label{fig:oblate_spheroid_validation}
\end{figure}

\subsection{Flow Around Cuboid (ISO)}\label{subsec:convergence_study_cuboid}
We show in~\cref{fig:cuboid_validation} the computational mesh (with Gauss--Lobatto points) used for the cuboid case as presented in~\cref{subsubsec:nusselt_cuboid}, for $\Reynolds[\sqrt{\dm{A}}] = 5000$, visualized in the plane $z = 0$ near the cuboid surface. A close-up at one of the corners is also provided. The temperature field along the red line within a single spectral element is plotted on the right. The temperature shows a slightly non-linear variation across the element, which is accurately captured by the seventh-order polynomial representation.
\begin{figure}[p]
    \centering
    \includegraphics[width=\textwidth]{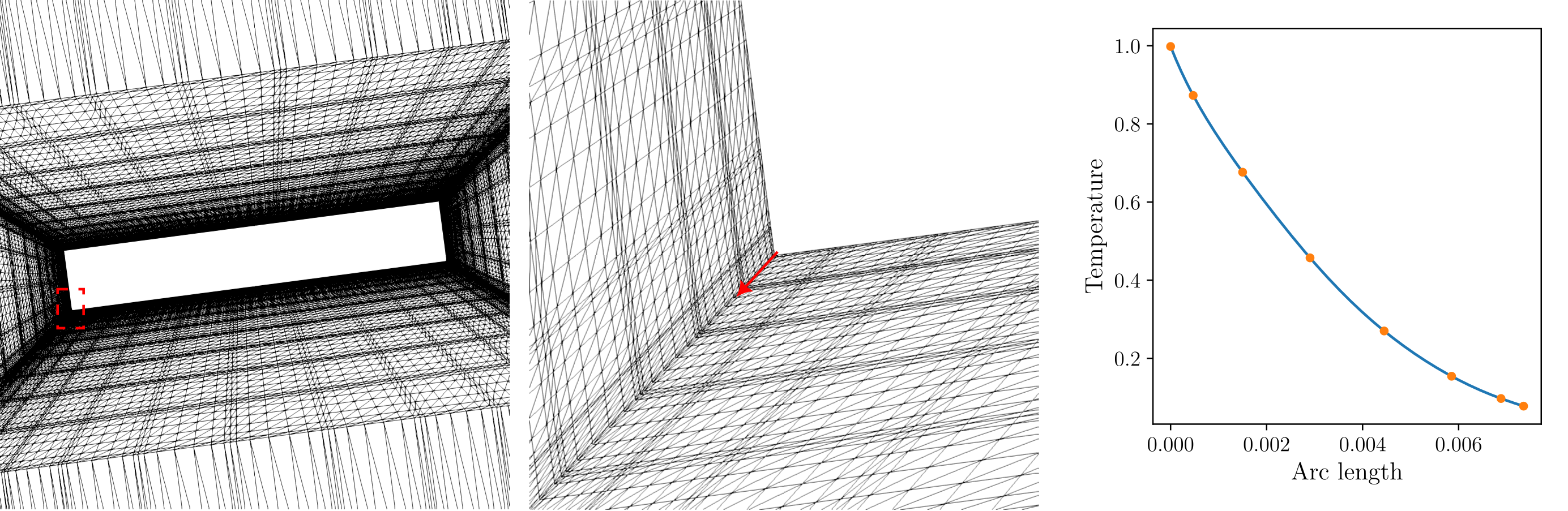}
    \caption{
        Visualization of the mesh and temperature field for the cuboid case at $\Reynolds[\sqrt{\dm{A}}] = 5000$. 
        Left: mesh around the cuboid; center: zoomed view at the bottom left corner; right: temperature variation along the red line at the leading edge.
    }
    \label{fig:cuboid_validation}
\end{figure}
\section{Time Homogenization For A Simplified Robin Heat Equation}\label{sec:time_homogenization_rhe}

In this section, we present a simplified version of the Robin heat equation from~\cref{eq:rhe_nondim_T} to illustrate the time homogenization procedure. We consider the $\varepsilon$-dependent problem over the time interval $[0,T]$:
\begin{subequations}\label{eq:rhe_simple_eps}
\begin{alignat}{3}
    \frac{\partial u_\varepsilon}{\partial t} - \Deltand u_\varepsilon &= 0 &\quad& \text{in } \Omega, \label{eq:rhe_simple_eps_1}\\
    \delnd u_\varepsilon \cdot \bm{n} + \eta\!\left(\frac{t}{\varepsilon}\right) u_\varepsilon &= 0 &\quad& \text{on } \pOmega, \label{eq:rhe_simple_eps_2}\\
    u_\varepsilon(t=0,\cdot) &= u^0(\xnd) &\quad& \text{in } \Omega, \label{eq:rhe_simple_eps_3}
\end{alignat}
\end{subequations}
where the subscript $\varepsilon$ indicates dependence on the small parameter $\varepsilon$ and $\eta$ is assumed to be periodic (of period $T_0$) in time, and $\eta\in L^\infty([0,T])$. Furthermore, $\eta$ is assumed non-negative for almost all times and not identically zero. $\Omega$ is a sufficiently regular domain. We claim, and it is a standard result for which, for the sake of consistency, we are going to provide a proof here, that the homogenized problem reads
\begin{subequations}\label{eq:rhe_simple_hom}
\begin{alignat}{3}
    \frac{\partial \overline{u}}{\partial t} - \Deltand \overline{u} &= 0 &\quad& \text{in } \Omega, \label{eq:rhe_simple_hom_1}\\
    \delnd \overline{u} \cdot \bm{n} + \etabar \,\overline{u} &= 0 &\quad& \text{on } \pOmega, \label{eq:rhe_simple_hom_2}\\
    \overline{u}(t=0,\cdot) &= u^0(\xnd) &\quad& \text{in } \Omega, \label{eq:rhe_simple_hom_3}
\end{alignat}
\end{subequations}
where $\etabar$ is constant and denotes the time average of $\eta$ over $[0,T_0]$,
\begin{align*}
    \etabar \coloneqq \dashint_0^{T_0} \eta,
\end{align*}
which is strictly positive, and the oscillatory coefficient $\eta$ converges weak-* in $L^\infty([0,T])$ to $\etabar$ as $\varepsilon \to 0$. Our goal is to show that the solution $u_\varepsilon$ of~\cref{eq:rhe_simple_eps} converges to the solution $\overline{u}$ of~\cref{eq:rhe_simple_hom} as $\varepsilon \to 0$.

Multiplying~\cref{eq:rhe_simple_eps_1} by $u_\varepsilon$ and integrating over $\Omega$ and in time over $[0,T]$ yields
\begin{align*}
    \frac{1}{2} \int_\Omega |u_\varepsilon(T,\cdot)|^2 
    + \int_0^T \int_\Omega |\delnd u_\varepsilon|^2 
    + \int_0^T \eta\!\left(\frac{t}{\varepsilon}\right) \int_{\pOmega} |u_\varepsilon|^2 
    = \frac{1}{2} \int_\Omega |u^0|^2.
\end{align*}
From this estimate we immediately obtain a bound uniform in $\varepsilon$ in space
\begin{align*}
    L^\infty\!\big([0,T],L^2(\Omega)\big) 
    \cap L^2\!\big([0,T],H^1(\Omega)\big).
\end{align*}
The same manipulations with~\cref{eq:rhe_simple_hom} show that:
\begin{align*}
    \overline{u} \in L^\infty\!\big([0,T],L^2(\Omega)\big) 
    \cap L^2\!\big([0,T],H^1(\Omega)\big).
\end{align*}
We now consider the difference
\begin{align*}
    v_\varepsilon \coloneqq u_\varepsilon - \overline{u},
\end{align*}
which is also a bounded sequence in $L^\infty\!\big([0,T],L^2(\Omega)\big) \cap L^2\!\big([0,T],H^1(\Omega)\big)$ and satisfies
\begin{subequations}\label{eq:rhe_simple_diff}
\begin{alignat}{3}
    \frac{\partial v_\varepsilon}{\partial t} - \Deltand v_\varepsilon &= 0 &\quad& \text{in } \Omega, \label{eq:rhe_simple_diff_1}\\
    \delnd v_\varepsilon \cdot \bm{n} + \eta\!\left(\frac{t}{\varepsilon}\right) v_\varepsilon &= -\Big(\eta\!\left(\frac{t}{\varepsilon}\right) - \etabar\Big)\,\overline{u} &\quad& \text{on } \pOmega, \label{eq:rhe_simple_diff_2}\\
    v_\varepsilon(t=0,\cdot) &= 0 &\quad& \text{in } \Omega. \label{eq:rhe_simple_diff_3}
\end{alignat}
\end{subequations}
Since $v_\varepsilon$ is bounded in $L^\infty([0,T],L^2(\Omega))\cap L^2([0,T],H^1(\Omega))$, there exists a function $\overline{v}$ and a subsequence (not relabeled) such that
$v_\varepsilon$ converges weak-* to $\overline{v}$ in $L^\infty([0,T],L^2(\Omega))$ and weakly to $\overline{v}$ in $L^2([0,T],H^1(\Omega))$ as $\varepsilon\to0$. We want to show that $\overline{v} = 0$, which implies that $u_\varepsilon$ converges to $\overline{u}$ in the same topology.

Because of~\cref{eq:rhe_simple_diff_1}, we infer that $\dfrac{\partial v_\varepsilon}{\partial t}$ is a bounded sequence in
\begin{align*}
    L^2([0,T],H^{-1}(\Omega)).
\end{align*}
Since $H^1(\Omega)$ is compactly embedded in $L^2(\Omega)$, and $L^2(\Omega)$ is at least continuously embedded in $H^{-1}(\Omega)$, we can apply the Aubin–Lions Lemma~\cite[Theorem II.5.16]{boyer2012mathematical},
\begin{align*}
    v_\varepsilon \to \overline{v} 
    \quad \text{in } L^2([0,T],L^2(\Omega)) 
    \quad \text{as } \varepsilon \to 0,
\end{align*}
that is, $v_\varepsilon$ converges \emph{strongly} to $\overline{v}$ in $L^2([0,T],L^2(\Omega))$.

Since $v_\varepsilon$ converges weakly to $\overline{v}$ in $L^2([0,T],H^1(\Omega))$, and strongly in $L^2([0,T],L^2(\Omega))$, an interpolation argument yields
\begin{align*}
    v_\varepsilon \to \overline{v} 
    \quad \text{in } L^2([0,T],H^{1/2}(\Omega)) 
    \quad \text{as } \varepsilon \to 0.
\end{align*}
Applying the trace theorem then gives the strong convergence
\begin{align*}
    v_\varepsilon \to \overline{v} 
    \quad \text{in } L^2([0,T],L^2(\pOmega)) 
    \quad \text{as } \varepsilon \to 0.
\end{align*}
We can now multiply~\cref{eq:rhe_simple_diff_1} by $v_\varepsilon$ and integrate over $\Omega$ and over $[0,T]$ to obtain
\begin{align*}
    \frac{1}{2} \int_\Omega |v_\varepsilon(T,\cdot)|^2 
    + \int_0^T \int_\Omega |\delnd v_\varepsilon|^2 
    + \int_0^T \eta\!\left(\frac{t}{\varepsilon}\right) \int_{\pOmega} |v_\varepsilon|^2
    = -\overline{u} \int_0^T \left(\eta\!\left(\frac{t}{\varepsilon}\right) - \etabar\right) \int_{\pOmega} \, v_\varepsilon.
\end{align*}
On the right hand side, $\eta\!\left(\tfrac{t}{\varepsilon}\right) - \etabar$ converges weak-* to $0$ in $L^\infty([0,T])$. Since $v_\varepsilon$ converges strongly to $\overline{v}$ in $L^2([0,T],L^2(\pOmega))$, it also converges strongly in $L^2([0,T],L^1(\pOmega))$, and hence the right hand side vanishes as $\varepsilon \to 0$. On the left hand side, $v_\varepsilon$ converges weakly to $\overline{v}$ in $L^2([0,T],H^1(\Omega))$, and strongly in $L^2([0,T],L^2(\pOmega))$. Observing that the leftmost term is non-negative, we may thus pass at the $\liminf$ in the left hand side and obtain:
\begin{align*}
    \int_0^T \int_\Omega |\delnd \overline{v}|^2 
    + \int_0^T \etabar \int_{\pOmega} |\overline{v}|^2
    = 0.
\end{align*}
As all terms on the left are non-negative and $\etabar>0$, this implies $\overline{v} \equiv 0$. Using uniqueness of the limit of subsequences and compactness, we conclude that the full sequence converges.
\section{Algorithm for Determination of Steady State of ISO Simulation}\label{sec:steady_state_algorithm}

To determine the space-time averaged Nusselt number $\Nustavgiso$ from the ISO simulation, it is essential to ensure that the flow has reached a statistically stationary state. To minimize computational cost, the simulation should be terminated as soon as $\Nustavgiso$ no longer varies significantly with time.

The dominant vortex shedding oscillation frequency, denoted by $\dm{f}_{\mathrm{vs}}$, is estimated using the Strouhal number. The Strouhal number is defined as
\begin{align}
    \Strouhal = \frac{\dm{f}_{\mathrm{vs}}\, \elldim}{\vinfdim},
    \label{eq:strouhal_definition}
\end{align}
where $\dm{f}_{\mathrm{vs}}$ is the dimensional vortex shedding frequency, $\elldim$ the characteristic length, and $\vinfdim$ the far-field velocity. 
In our nondimensionalization based on the solid diffusive time scale (see~\cref{eq:tdiff}), the corresponding nondimensional frequency becomes
\begin{align}
    f_{\mathrm{vs}}
    = \dm{f}_{\mathrm{vs}}\, \tdiffdim
    = \Strouhal\, \frac{r_2}{r_1}\, \Reynolds\, \Prandtl.
    \label{eq:freq_estimate}
\end{align}
A representative value of $\Strouhal = 0.2$ is adopted for all cases, consistent with experimental observations for bluff bodies such as spheres~\cite{kim1988observations} and cylinders~\cite{lienhard1973commonality} over a broad range of Reynolds numbers (based on diameter).

We use a sliding window to average in time. The sliding window must be sufficiently wide to capture several oscillations of the instantaneous Nusselt number. From~\cref{eq:freq_estimate}, the oscillation period follows as $$t_{\mathrm{vs}} = 1 / f_{\mathrm{vs}}.$$ 
The initial width of the sliding window is set to $5\,t_{\mathrm{vs}}$, ensuring that at least five oscillation cycles are included in each average. 
A new average is computed after shifting the window forward by a “step size” of $0.5\,t_{\mathrm{vs}}$; this shift defines one “step”. At each step, we compute $\Nustavgiso$ over the current window and monitor its temporal convergence. The simulation is considered statistically stationary once the average relative change in $\Nustavgiso$ over the last five consecutive steps falls below $0.1\%$, after which it is terminated. 

To exclude the influence of the initial transient phase, the convergence criterion is applied only after the simulation time exceeds $7.5\,t_{\mathrm{vs}}$. 
To further ensure convergence of the sliding-window averages, beginning at $t = 5\,t_{\mathrm{vs}}$, the window width is gradually increased by $0.05\,t_{\mathrm{vs}}$ every time the simulation time advances by $0.5\,t_{\mathrm{vs}}$, while the step size remains fixed at $0.5\,t_{\mathrm{vs}}$. 
Because the flow and thermal responses vary across geometries, the precise values of the initial window width and step size were occasionally adjusted empirically.

For validation, we compared our implementation against the results of~\cite{bagchi2001direct} for flow over a sphere at a Reynolds number (based on diameter) of $\Reynolds = 500$. The reference value reported there is $\Nusselt \approx 13.27$, while the present algorithm yielded $\Nustavgiso = 13.31$, corresponding to a relative deviation of only $0.3\%$. The temporal evolution of the spatially averaged Nusselt number, together with the final window width, were shown in~\cref{fig:sphere_nusselt_validation}. The difference in the Reynolds and Nusselt values arises from the different choices of characteristic length scale—here the diameter $\dm{D}$, and there $\sqrt{\dm{A}}$. The two length scales are related by $\sqrt{\dm{A}} = \dm{D}\sqrt{\pi}$.

\printbibliography

\end{document}